\def\poly{\operatorname{poly}}
\def\polylog{\operatorname{polylog}}
\newcommand{\email}[1]{\href{mailto:#1}{#1}}
\title{Max $s,t$-Flow Oracles and Negative Cycle Detection\\in Planar Digraphs}
\date{\vspace{-5ex}}
\author{Adam Karczmarz\thanks{University of Warsaw and IDEAS NCBR, Poland. \email{a.karczmarz@mimuw.edu.pl}. Partially supported by the ERC CoG grant TUgbOAT no 772346 and the National Science Centre (NCN) grant no. 2022/47/D/ST6/02184.}}
\newcommand{\tin}{\textrm{in}}
\newcommand{\tout}{\textrm{out}}
\newcommand{\Ot}{\ensuremath{\widetilde{O}}}
\newcommand{\eps}{\ensuremath{\epsilon}}
\newcommand{\dist}{\delta}
\newcommand{\len}{\ell}
\newcommand{\wei}{w}
\theoremstyle{plain}
\newtheorem{theorem}{Theorem}[section]
\newtheorem{lemma}[theorem]{Lemma}
\newtheorem{corollary}[theorem]{Corollary}
\newtheorem{fact}[theorem]{Fact}
\newtheorem{observation}[theorem]{Observation}
\newtheorem{definition}[theorem]{Definition}
\newtheorem{remark}[theorem]{Remark}
\newcommand{\rev}[1]{\ensuremath{#1}^{\mathrm{R}}}
\newcommand{\ds}{\ensuremath{\mathcal{D}}}
\newcommand{\exc}{\mathrm{exc}}
\newcommand{\totexc}{\Psi}
  \newcommand{\source}{\ensuremath{s}}
  \newcommand{\sink}{\ensuremath{t}}
\newcommand{\TG}{\mathcal{T}}
\newcommand{\DDG}{\mathrm{DDG}}
\newcommand{\rdiv}{\mathcal{R}}
\newcommand{\bnd}{\ensuremath{\partial}}
\begin{document}

\maketitle
  \thispagestyle{empty}

  \begin{abstract}
    We study the maximum $s,t$-flow oracle problem on planar \emph{directed} graphs
    where the goal is to design a data structure answering max $s,t$-flow value (or equivalently, min $s,t$-cut value) queries
    for arbitrary source-target pairs $(s,t)$.
    For the case of polynomially bounded integer edge capacities, we describe an \emph{exact} max $s,t$-flow
    oracle with truly subquadratic space and preprocessing, and sublinear query time.
      Moreover, if $(1-\eps)$-approximate answers are acceptable, we obtain a static oracle
    with near-linear preprocessing and $\Ot(n^{3/4})$ query time
    and a dynamic oracle supporting edge capacity updates and queries in $\Ot(n^{6/7})$ worst-case time.

    To the best of our knowledge, for \emph{directed} planar graphs, no (approximate) max $s,t$-flow oracles have been described even in the unweighted case,
    and only trivial tradeoffs involving either no preprocessing or precomputing all the $n^2$ possible answers have been known.

    One key technical tool we develop on the way
    is a sublinear (in the number of edges) algorithm for finding a negative cycle in so-called dense distance graphs.
    By plugging it in earlier frameworks, we obtain improved bounds for other fundamental problems on planar digraphs. In particular, we show:
    \begin{enumerate}[(1)]
    \item a deterministic $O(n\log(nC))$ time algorithm for negatively-weighted SSSP in planar digraphs with integer edge weights at least $-C$.
    This improves upon the previously known
        bounds in the important case of weights polynomial in $n$.
    \item an improved $O(n\log{n})$ bound on finding a perfect matching in a bipartite planar graph.
    \end{enumerate}
\end{abstract}

\clearpage
\setcounter{page}{1}

\section{Introduction}
Single-source shortest paths (with negative weights allowed) and maximum flow are among the most
fundamental computational problems on directed graphs.
The best known \emph{strongly polynomial} $\Ot(nm)$ time bounds for these problems~\cite{bellman1958routing, ford1956network, GalilN80, SleatorT83} have stood for decades.
The scaling framework \cite{EdmondsK72, Gabow85} made it apparent that this quadratic barrier can be overcome
in a relaxed \emph{weakly polynomial} model, where the magnitude $\log{C}$ of the largest numeric value $C$ (such 
as the maximum absolute weight/capacity of an edge) is also used to measure the size of an instance.
That is, a graph algorithm is weakly polynomial-time if it runs in time polynomial in $n$, $m$, and $\log{C}$.
Weakly polynomial-time algorithms typically assume integer data to guarantee that exact solutions are computed.
In particular, \cite{Gabow85} showed an $O(mn^{3/4}\log{C})$ algorithm for single-source shortest paths with integer weights
(and more general problems such as minimum-cost bipartite matching). That method has been subsequently refined
to run in time $\Ot(m\sqrt{n}\log{C})$~\cite{GabowT89, Goldberg95}. Later, a similar subquadratic bound
has been obtained also for the maximum flow problem by Goldberg and Rao~\cite{GoldbergR98}.
These algorithms, developed in the previous century, remain the best deterministic combinatorial algorithms
for these problems to date.

Recent years have brought impressive progress on single-source shortest paths and maximum flow
in the weakly polynomial regime with integer data. The usage of the interior-point method (IPM) for solving linear programs~\cite{Karmarkar84}
along with dynamic algebraic data structures proved very powerful for solving fundamental polynomial-time graph problems and
ultimately led to very efficient ${\Ot((m+n^{3/2})\log{C})}$\footnote{As usual, throughout we use the $\Ot(\cdot)$ notation to suppress $\polylog(n)$ factors.} \cite{BrandLLSS0W21} and  $O(m^{1+o(1)}\log^2{C})$~time~\cite{ChenKLPGS22,det-flow}
algorithms for
the minimum-cost flow problem, which generalizes both the SSSP and maximum flow problems.
Even more recently,~\cite{BernsteinNW22} developed a randomized algorithm for the negatively-weighted SSSP problem exclusively
that is purely combinatorial in nature and runs (with the tweaks of~\cite{sssp-logs} applied) in $O(m\log^2{n}\log{(nC)}\log\log{n})$ expected time.

The focus of this paper is on algorithms for planar digraphs. Planarity-exploiting near-optimal \emph{strongly polynomial} algorithms have been known
for both single-source shortest paths and maximum flow. 
Fakcharoenphol and Rao~\cite{FR06} gave the first near-optimal $O(n\log^3{n})$-time algorithm for negative cycle detection
and SSSP in planar digraphs.
This has been subsequently improved by~\cite{KleinMW10, MozesW10} and the current best known bound is
$O(n\log^2{n}/\log\log{n})$.
Maximum $s,t$-flow on planar digraphs can be computed in $O(n\log{n})$ time~\cite{BorradaileK09, Erickson10}.
The multiple-source multiple-sink case of max-flow is not easily reducible to the single-source single-sink
case while preserving planarity, and the best known upper bound for that case is $O(n\log^{3}{n})$~\cite{BorradaileKMNW17}.
All these state-of-the-art algorithms for planar graphs are combinatorial and deterministic.

Since planarity enables efficient strongly polynomial algorithms for shortest paths and maximum flow,
the study of \emph{weakly polynomial-time} algorithms
for these problems in the planar case has been rather limited. To the best of our knowledge, the combinatorial scaling framework has only
been applied to unit-capacity minimum-cost flow problems on planar graphs~\cite{AsathullaKLR20, KarczmarzS19, LahnR19}, but the achieved
bounds were still a polynomial factor away from linear. Recently, an algebraic IPM-based near-linear randomized
algorithm for planar min-cost flow has been proposed~\cite{DongGGLPSY22}.

In this paper we apply the combinatorial scaling framework to the maximum $s,t$-flow oracle problem 
and the negative-weights single-source shortest paths problem 
on planar digraphs.

\subsection{Our contribution}

\subsubsection{Max $s,t$-flow oracles}

We consider the following \emph{max $s,t$-flow oracle} problem. 
Given a planar digraph ${G=(V,E)}$, preprocess it
into a data structure supporting max $s,t$-flow value (or, equivalently, min $s,t$-cut capacity)
queries for arbitrary pairs $s,t\in V\times V$.
One extreme trivial solution for this problem is to
skip preprocessing and answer queries using the state-of-the-art $O(n\log{n})$-time
max $s,t$-flow algorithms for planar digraphs~\cite{BorradaileK09, Erickson10}.
Another extreme approach is to store all the possible $O(n^2)$ answers
and read the stored result in $O(1)$ time upon query.
\cite{LackiNSW12} showed that computing all-pairs max $s,t$-flow values
in planar graphs takes near-optimal $\Ot(n^2)$ time.
They also showed that in planar digraphs there might be $\Theta(n^2)$ distinct
max $s,t$-flow values (as opposed to \emph{general} undirected graphs, where there may be at most $n-1$ max-flow values) and asked whether constant-time queries are possible after subquadratic
preprocessing.
Nussbaum~\cite[Section~9.1.3]{nussbaum2014network} relaxed this open question
and asked whether \emph{sublinear} query time is possible after \emph{subquadratic} preprocessing.
Of course, ideally, we would like to support queries in $\Ot(1)$ time after $\Ot(n)$-time preprocessing.
This is in fact possible in \emph{undirected} planar graphs~\cite{BorradaileSW15}, where min $s,t$-cuts
have much more convenient structure compared to directed planar graphs.
However, for the directed case, no non-trivial preprocessing/query tradeoffs
have been described to date even for the case when the input graph is unweighted and/or
we only seek constant-factor approximate query answers.

First of all,
we give an affirmative answer to the open problem posed in~\cite{nussbaum2014network}
in the important case of integer weights which also covers unweighted graphs.
\begin{restatable}{theorem}{texactflow}\label{t:exact-flow}
  Let $G=(V,E)$ be a planar digraph with integer edge capacities in $[1,C]$ and let $r\in [1,n]$.
  In $\Ot(nr)$ time one can construct an $\Ot(nr)$-space data structure that,
  for any query pair $(s,t)\in V\times V$, computes the max $s,t$-flow (or min $s,t$-cut) value
  in $G$ in
  $\Ot(n^{3/2}/r^{3/4}\cdot \log^2{C})$ time.
\end{restatable}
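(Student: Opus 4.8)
The plan is to turn a single max $s,t$-flow query into $O(\log(nC))$ negative-cycle tests and to answer each test in sublinear time by running a scaling-based negative-cycle detector on the dense distance graph of a fixed $r$-division of $G$.

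\emph{From min $s,t$-cut to negative cycles.} By planar duality, the value of the min $s,t$-cut of a directed planar graph equals the minimum weight of an $s,t$-separating directed cycle in the planar dual $G^*$, where each dual arc carries as its length the capacity of the primal arc it crosses. As all capacities are at least $1$, $G^*$ has only positive arc lengths and therefore no negative cycle whatsoever. Fix a candidate value $\lambda$ for the cut. Choose a simple curve $\gamma$ from a face incident to $s$ to a face incident to $t$ and build the standard two-layer cover obtained by cutting $G^*$ open along $\gamma$: directed cycles of $G^*$ crossing $\gamma$ an odd number of times — i.e.\ exactly the $s,t$-separating ones — are the closed walks of the cover that switch layers. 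Adding, from the layer-$1$ copy to the layer-$0$ copy of each cover vertex that lies on $\gamma$ (and, once we move to the dense distance graph, of each boundary vertex), a return arc of weight $-\lambda$ makes the only possibly negative arcs the ones attached to $\gamma$, and a short computation shows that the resulting graph contains a negative cycle if and only if the min $s,t$-cut value is smaller than $\lambda$. Since the cut value is an integer in $\{0,1,\dots,O(nC)\}$, $O(\log(nC))$ rounds of binary search over $\lambda$ determine it exactly.

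\emph{Executing the tests on a dense distance graph.} Instead of running these tests on $G^*$ we run them on the dense distance graph of an $r$-division. The preprocessing computes an $r$-division of $G$ and, for each of its $O(n/r)$ pieces $P$, the dense distance graph over the $O(\sqrt r)$ boundary vertices of $P$ — in $\Ot(r)$ time per piece via multiple-source shortest paths — along with the auxiliary within-piece data (including all-pairs distances) needed to reattach arbitrary vertices at query time and to support FR-Dijkstra; this takes $\Ot(nr)$ time and space, and the union of the per-piece dense distance graphs has only $O(n/\sqrt r)$ vertices. Given a query $(s,t)$ and a test value $\lambda$, we route $\gamma$ so that outside the pieces $P_s \ni s$ and $P_t \ni t$ it follows inter-piece boundaries; then cutting open along $\gamma$ affects only $P_s$, $P_t$, and the boundary/DDG structure, so the cover's dense distance graph still has $O(n/\sqrt r)$ vertices and retains the piece-wise Monge structure FR-Dijkstra needs. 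We recompute the cut-open, $\{s,t\}$-augmented dense distance graphs of $P_s$ and $P_t$ in $\Ot(r)$ time each, insert the $-\lambda$ return arcs, and invoke our dense distance graph negative-cycle detector; on an $N$-vertex dense distance graph with weights of absolute value at most $W$ it runs in $\Ot(N^{3/2}\log W)$ time, which here with $N = O(n/\sqrt r)$ and $W = \poly(n)\cdot C$ equals $\Ot(n^{3/2}/r^{3/4}\log C)$. Over the $O(\log(nC))$ rounds of the binary search the total query time is $\Ot(n^{3/2}/r^{3/4}\log^2 C)$.

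The step I expect to be hardest is reconciling the $s,t$-separation constraint with a \emph{query-independent} $r$-division: since $\gamma$ is revealed only at query time, one must show it can always be rerouted along piece boundaries at a constant-factor cost in the dense distance graph size while preserving the Monge (and, for FR-Dijkstra's internal bipartite steps, bipartite-Monge) structure, and handle the degenerate case where a cheapest $s,t$-separating dual cycle touches no boundary vertex. The other nontrivial ingredient — established separately as the paper's main technical tool — is the dense distance graph negative-cycle detector itself: a FR-Dijkstra-based Bellman–Ford would need $\Ot((n/\sqrt r)^2) = \Ot(n^2/r)$ time, and shaving this to $\Ot(N^{3/2}\log W)$ requires combining FR-Dijkstra with Goldberg-style scaling, in which each of the $O(\sqrt N)$ passes of a scale performs a Dijkstra-like computation on the dense distance graph that either reduces the number of arcs violating the current vertex potentials or exhibits a negative cycle.
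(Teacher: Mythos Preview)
Your proposal has a genuine gap that prevents it from achieving the stated bound. You propose to recompute, at query time, the cut-open dense distance graphs of the two pieces $P_s$ and $P_t$ in $\Ot(r)$ time each. This adds an $\Ot(r)$ term to the query cost, so your query time is really $\Ot(r + n^{3/2}/r^{3/4}\log^2 C)$, which exceeds the claimed $\Ot(n^{3/2}/r^{3/4}\log^2 C)$ whenever $r>n^{6/7}$. The theorem is stated for the full range $r\in[1,n]$ (and the application in Corollary~\ref{t:multiple-pairs} uses $r$ all the way up to $n$), so this is not a cosmetic issue.

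The deeper problem is the step you yourself flag as hardest: making the $\lambda$-dependent modification compatible with a query-independent decomposition. Your ``route $\gamma$ along piece boundaries'' idea is not substantiated, and in fact the paper does \emph{not} try to confine the modification to two pieces. Instead it works with a recursive decomposition $\TG(G^*)$ of the \emph{dual} (not a flat $r$-division of $G$), and chooses the primal $s\to t$ path $Q_{s,t}$ so that it decomposes into $O(\log n)$ subpaths, each one of only $\Ot(1)$ \emph{precomputed} per-piece paths (Lemma~\ref{l:path-decomp}). Thus the $\lambda$-shift hits essentially every piece, but in each piece it hits one of a bounded number of known curves. The key technical lemma (Lemma~\ref{l:exact-piece}) then shows that after $\Ot(|H|^2)$ preprocessing of a piece $H$, one can assemble the closest-pair and near-neighbor data structures for $\DDG(G^*_\lambda[H](x^*,y^*))$ for \emph{any} $\lambda$ in $\Ot(|\partial H|)$ time. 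This is what makes the binary search over $\lambda$ cheap; it relies on a crossing-number argument (Lemmas~\ref{l:crossing1}, \ref{l:distinct-holes}) showing that all relevant shortest paths between two holes cross the per-piece curve essentially the same number of times, together with a finite ``universal cover'' construction (Lemma~\ref{l:k-paths}) and a partial-Monge decomposition (Lemma~\ref{l:partial-monge}). None of this machinery appears in your sketch, and there is no evident way to obtain it from the boundary-routing idea.
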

Note that the data structure of Theorem~\ref{t:exact-flow} achieves both truly subquadratic
preprocessing/space and sublinear query time whenever $r=n^\delta$ satisfies $\delta\in (2/3,1)$.
It can also be used for computing max $s,t$-flow values for some $k$ specified source-sink pairs
faster than using the aforementioned trivial tradeoffs following from \cite{BorradaileK09, Erickson10, LackiNSW12}
unless $k=\Ot(n^{2/3})$ or $k=\widetilde{\Omega}(n^{5/4})$.
By suitably choosing the parameter $r=n^{2/7}k^{4/7}$, we obtain the following.
\begin{corollary}\label{t:multiple-pairs}
  Let $G=(V,E)$ be a planar digraph with integer edge capacities in $[1,C]$ and let\linebreak $k\in [n^{2/3},n^{5/4}]$
  be an integer.
  One can compute exact max $s_i,t_i$-flow values for $k$ source/sink pairs $(s_1,t_1),\ldots,(s_k,t_k)\in V\times V$
  in $G$ in $\Ot(n^{9/7}k^{4/7}\log^2{C})$ time.
\end{corollary}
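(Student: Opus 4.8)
The plan is to invoke \cref{t:exact-flow} as a black box with a single, well-chosen value of its space/preprocessing parameter $r$, build the oracle once, and then answer the $k$ queries one at a time. Concretely, I would set $r:=\lceil n^{2/7}k^{4/7}\rceil$ (rounding to an integer changes everything by at most constant factors). The first thing to verify is that this is a legal choice for \cref{t:exact-flow}, i.e.\ that $r\in[1,n]$: the lower bound is immediate, while $r=O(n)$ is equivalent to $k^{4/7}=O(n^{5/7})$, that is, $k=O(n^{5/4})$ --- exactly the upper end of the assumed range of $k$.

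With this $r$, the oracle is constructed in $\Ot(nr)=\Ot(n\cdot n^{2/7}k^{4/7})=\Ot(n^{9/7}k^{4/7})$ time and space, and each of the $k$ max-flow queries is answered in
\[
  \Ot\!\left(\frac{n^{3/2}}{r^{3/4}}\log^2 C\right)=\Ot\!\left(\frac{n^{3/2}}{n^{3/14}k^{3/7}}\log^2 C\right)=\Ot\!\left(n^{9/7}k^{-3/7}\log^2 C\right),
\]
where I used $r^{3/4}=n^{3/14}k^{3/7}$ and $\tfrac32-\tfrac{3}{14}=\tfrac97$. Summing the query cost over all $k$ pairs gives $\Ot(n^{9/7}k^{4/7}\log^2 C)$, which dominates the preprocessing cost (the two match up to the $\log^2 C$ factor), yielding the claimed running time.

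The only genuine content beyond this bookkeeping is the choice of exponent in $r$: it is obtained by balancing the preprocessing term $nr$ against the total query term $k\cdot n^{3/2}r^{-3/4}$, i.e.\ solving $nr=k\,n^{3/2}r^{-3/4}$, whose solution is $r=n^{2/7}k^{4/7}$. It is also worth recording why the stated range of $k$ is the natural one: the lower bound $k=\Omega(n^{2/3})$ is precisely the threshold above which $\Ot(n^{9/7}k^{4/7})$ is no larger than the trivial $\Ot(kn)$ bound obtained by running an $O(n\log n)$-time planar max-flow algorithm~\cite{BorradaileK09, Erickson10} once per pair, and the upper bound $k=O(n^{5/4})$ (which coincides with the feasibility constraint $r\le n$) is the threshold below which it stays under the all-pairs bound $\Ot(n^2)$ of~\cite{LackiNSW12}; outside $[n^{2/3},n^{5/4}]$ one of these trivial approaches already wins. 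Since the argument is a one-line reduction to \cref{t:exact-flow}, there is no real obstacle --- the main thing to get right is the exponent arithmetic together with the feasibility check $r\le n$.
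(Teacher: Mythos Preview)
Your proposal is correct and matches the paper's approach exactly: the paper states just before the corollary that it follows ``by suitably choosing the parameter $r=n^{2/7}k^{4/7}$,'' and you have carried out precisely that substitution and the accompanying arithmetic, including the feasibility check $r\le n\Leftrightarrow k\le n^{5/4}$ and the justification for the range of $k$.
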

It is worth noting that~\cite{LackiNSW12} in fact showed a near-linear time algorithm computing
max $s,t$-flow values for $n$ pairs $(s,t)$, where the source $s$ is \emph{fixed} (that is, common to all pairs) and $t\in V$
is arbitrary. In comparison,
the algorithm of Corollary~\ref{t:multiple-pairs} can handle $n$ \emph{arbitrary} source-sink pairs in $\Ot(n^{13/7})$ time in the
case of $\poly(n)$-bounded integer edge capacities.
Achieving subquadratic time for general integer-weighted sparse graphs is unlikely (conditionally on SETH) even
in the single-source all-sinks case~\cite{AbboudWY18, KrauthgamerT18}.

On the way to obtaining the exact oracle of Theorem~\ref{t:exact-flow}, we show a significantly
more efficient \emph{feasible $\lambda$-$s,t$-flow oracle}. The oracle is given a parameter $\lambda\in [1,nC]$
upon construction and is only required to support querying whether there exists
a flow between a specified source/sink pair of value precisely (or, equivalently, at least) $\lambda$.
Note that for $\lambda$ fixed to $1$, the problem is equivalent to the \emph{reachability oracle} problem
on planar digraphs, for which an optimal solution has been shown in~\cite{HolmRT15}.
In unweighted digraphs, if $\lambda$ is fixed to be an integer more than $1$, the problem is called the $\lambda$-reachability oracle problem.
A near-optimal $2$-reachability oracle for planar digraphs is known~\cite{ItalianoKP21}.
To the best of our knowledge, no non-trivial $\lambda$-reachability oracle for planar digraphs
and arbitrary $\lambda\geq 3$ has been described so far.
We show an oracle with \emph{near-linear} preprocessing (and space) and \emph{sublinear} query time
for an arbitrary parameter $\lambda$.

\begin{restatable}{theorem}{tfeasibleflow}\label{t:feasible-flow}
  Let $G$ be a planar digraph with integer edge capacities in $[1,C]$ and let $\lambda\in [1,nC]$
  be an integer. In $\Ot(n)$ time one can construct a data structure that
  can test, for any given query source/sink pair $(s,t)$ whether there exists an $s,t$-flow of
  value $\lambda$ in $G$ in $\Ot(n^{3/4}\log{C})$ time.
\end{restatable}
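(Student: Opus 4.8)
The plan is to reduce the query "is there an $s,t$-flow of value $\lambda$ in $G$?" to a *negative cycle detection* question in a dense distance graph (DDG), exploiting planar duality together with the negative cycle detection subroutine advertised in the abstract. First I would recall the standard duality fact (going back to Hassin, Johnson, and Venkatesan, and refined by Borradaile--Klein) for max $s,t$-flow in planar digraphs: after adding an infinite-capacity arc from $t$ to $s$, the graph becomes $s$-free in the sense that the max flow equals $\lambda$ iff in the dual graph, with edge $e$ assigned *length* equal to its capacity $c(e)$ and the arc dual to $(t,s)$ assigned length $-\lambda$, there is no negative cycle. Equivalently, a feasible circulation of value $\lambda$ exists iff a certain dual potential function (shortest distances from an arbitrary fixed root in the dual) is well defined; when $s$ and $t$ vary, the only thing that changes is the identity and length $-\lambda$ of one dual arc (the "cut arc" corresponding to the $t\to s$ edge), while $\lambda$ is fixed at construction time.

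Next I would set up the recursive $r$-division machinery. Pick $r = \sqrt{n}$ (or more precisely tune it at the end; the $n^{3/4}$ in the statement suggests a two-level decomposition with pieces of size $\approx \sqrt{n}$ and boundary $\approx n^{1/4}$ per piece, so that the total DDG over all $O(\sqrt{n})$ pieces has $\Ot(n)$ vertices but $\Ot(n^{3/2}\cdot\text{something})$ edges, balanced against query cost). During preprocessing, compute an $r$-division with few holes of the dual of $G$, and for every piece $P$ build its DDG: the complete weighted digraph on $\partial P$ whose arc $(u,v)$ has weight equal to the shortest $u\to v$ distance inside $P$ using capacities as lengths. This takes $\Ot(n)$ time via the standard multiple-source shortest path (MSSP) algorithm of Klein; store all these DDGs, using $\Ot(n)$ space since the boundaries are small. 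Crucially, these DDGs do *not* depend on $s$, $t$, or $\lambda$ (well, $\lambda$ is baked in only through the single extra cut arc), so they can all be precomputed.

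Upon a query $(s,t)$, I would (i) locate $s$ and $t$, (ii) form the "query DDG" $\mathcal{D}_{s,t}$ by taking the union of all precomputed piece-DDGs together with a small number of extra arcs encoding the dual of the $t\to s$ edge with length $-\lambda$ and patching in the boundary vertices near $s$ and $t$ (a standard $O(1)$-piece modification, as in the Fakcharoenphol--Rao / Klein framework), and (iii) run the sublinear DDG negative-cycle-detection algorithm (the key technical tool of this paper) on $\mathcal{D}_{s,t}$; answer "yes, flow $\lambda$ exists" iff no negative cycle is found. The total number of DDG vertices is $\Ot(\sqrt{n})$ and the claimed $\Ot(n^{3/4}\log C)$ query time should fall out of the running time guarantee of that subroutine applied to a DDG of this size (the $\log C$ factor coming from the bit-length of the capacities/lengths, which the negative-cycle algorithm's scaling component pays for). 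One must also argue the *reduction* is valid: a negative cycle in $\mathcal{D}_{s,t}$ corresponds, after expanding DDG arcs back into genuine dual paths, to a negative cycle in the dual of $G+(t\to s,\,-\lambda)$, hence (by duality) to a violated flow-feasibility constraint; and conversely. The only subtlety here is the usual one — ensuring DDG arcs composed along a cycle really do concatenate into a closed walk in the original dual, and that a negative *closed walk* implies a negative *simple cycle* — which is routine for DDGs built from shortest-path distances inside pieces of an $r$-division.

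The main obstacle I anticipate is purely bookkeeping rather than conceptual: correctly handling the boundary/interface between the fixed precomputed DDGs and the query-dependent cut arc of length $-\lambda$, so that exactly the right dual graph is represented and no spurious short arcs are introduced (this is where planar embeddings, hole structure of the $r$-division, and the placement of $s,t$ on piece boundaries must all be reconciled). A secondary point to get right is the parameter balancing: the $\Ot(n)$ preprocessing constrains the total DDG edge count, while the negative-cycle subroutine's dependence on the number of DDG vertices versus edges dictates the exponent; I expect that choosing the recursion depth / piece sizes so that the per-query DDG has $\Theta(n^{1/2})$ vertices and the negative-cycle routine runs in time roughly (number of DDG vertices)$^{3/2}$ times polylog gives the stated $n^{3/4}$, but this is exactly the kind of calculation to carry out carefully rather than hand-wave.
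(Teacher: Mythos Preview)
There is a genuine gap in your reduction. You treat the Venkatesan/Miller--Naor reduction as introducing a \emph{single} dual arc of length $-\lambda$ (``the cut arc corresponding to the $t\to s$ edge''), so that only the $O(1)$ pieces containing $s$ and $t$ need patching at query time. But adding an arc $t\to s$ preserves planarity only when $s$ and $t$ share a face. In general the reduction (Lemma~\ref{l:venkatesan}) modifies the capacities along an \emph{entire} primal $s\to t$ path $Q_{s,t}$, equivalently shifts the weights on a dual edge set $Q_{s,t}^*$ that may have length $\Theta(n)$ and pass through essentially every piece of the decomposition. Hence the piece DDGs are \emph{not} independent of $(s,t)$: for a piece $H$, the relevant graph is $G^*_\lambda[H]$ with weights perturbed along $Q_{s,t}^*\cap E(H)$, and this intersection varies with the query.

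The paper's actual mechanism for handling this is the missing idea. One works with a full recursive decomposition $\TG(G^*)$ (not a flat $r$-division), chooses $Q_{s,t}$ via a specific recursive rule using per-node ``split faces'' (equation~\eqref{eq:path}), and proves (Lemma~\ref{l:path-decomp}) that every such $Q_{s,t}$ decomposes into $O(\log n)$ subpaths, each of the form $\pth(H,x^*,y^*)$ for a piece $H$ and a pair $x^*,y^*$ drawn from $H$'s $\Ot(1)$-size set of ancestor split faces. This gives only $\Ot(1)$ possible ``local'' perturbation patterns per piece, so one can precompute, for each piece $H$ and each such pair, the DDG of $G^*_\lambda[H](x^*,y^*)$ in total $\Ot(n)$ time. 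At query time the procedure $\detect$ walks the decomposition, assembling $O(\log n)$ such precomputed DDGs into a graph on $\Ot(\sqrt n)$ vertices, and applies Theorem~\ref{t:negcyc} to get the $\Ot(n^{3/4}\log C)$ bound. Your ``standard $O(1)$-piece modification'' step is precisely where this machinery is required and where your sketch would fail.
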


Using standard methods, the feasible flow oracle can be extended in two natural ways.
First, by using $O(\log{(nC)}/\eps)$ copies of the data structure,
it can be turned into an $(1-\eps)$-approximate max $s,t$-flow oracle
with near-linear preprocessing and sublinear query time. Formally, we have:

\begin{restatable}{corollary}{capproxoracle}
  Let $G$ be a planar digraph with integer capacities in $[1,C]$.
  In $\Ot(n\log{(C)}/\eps)$ time one can construct an $\Ot(n\log{(C)}/\eps)$-space data structure that,
  for any query source/sink pair $(s,t)$, estimates the max $s,t$-flow value
  in $G$ up to a factor $1-\eps$ in $\Ot(n^{3/4}\cdot \log{C}\cdot \log{(\log(C)/\eps)})$ time.
\end{restatable}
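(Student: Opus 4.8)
The plan is to build $O(\log(nC)/\eps)$ instances of the feasible $\lambda$-$s,t$-flow oracle of \Cref{t:feasible-flow}, one for each threshold in a geometric progression, and answer an approximate max-flow query by binary search over these instances. Since every edge capacity is an integer in $[1,C]$, the max $s,t$-flow value $F$ is an integer in $\{0,1,\ldots,(n-1)C\}$. Fix $\eps\in(0,1)$ and let $\Lambda=\bigl(\{1,2,\ldots,\lceil 2/\eps\rceil\}\cup\{\lceil\lceil 2/\eps\rceil(1+\eps)^i\rceil : i\ge 0\}\bigr)\cap[1,nC]$; note that $|\Lambda|=O(1/\eps)+O(\log_{1+\eps}(nC))=O(\log(nC)/\eps)$. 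For each $\lambda\in\Lambda$ we build a copy of the data structure of \Cref{t:feasible-flow} with parameter $\lambda$; by that theorem each copy takes $\Ot(n)$ preprocessing time and $\Ot(n)$ space, so the total is $\Ot(n\cdot|\Lambda|)=\Ot(n\log(C)/\eps)$ time and space (the $\Ot(\cdot)$ absorbs the extra $\log n$ coming from writing $\log(nC)=\log n+\log C$).

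To answer a query $(s,t)$ with $s\neq t$ (the case $s=t$ being trivial), recall that the oracle of \Cref{t:feasible-flow} with parameter $\lambda$ equivalently reports whether $F\ge\lambda$; since this predicate is monotone in $\lambda$, we binary-search over the sorted list $\Lambda$ to find the largest $\lambda^\star\in\Lambda$ with $F\ge\lambda^\star$, and if even the $\lambda=1$ copy answers ``no'' we report $F=0$ exactly. The binary search performs $O(\log|\Lambda|)=O\bigl(\log(\log(nC)/\eps)\bigr)$ oracle calls, each costing $\Ot(n^{3/4}\log C)$; using $\log(\log(nC)/\eps)=O\bigl(\log\log n+\log(\log(C)/\eps)\bigr)$ and absorbing $\log\log n$ into $\Ot(\cdot)$, the query time is $\Ot\bigl(n^{3/4}\cdot\log C\cdot\log(\log(C)/\eps)\bigr)$, as claimed. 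We return $\lambda^\star$.

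It remains to argue that $\lambda^\star$ is a $(1-\eps)$-approximation of $F$ after rescaling $\eps$ by a constant. Clearly $\lambda^\star\le F$. If $F\le\lceil 2/\eps\rceil$ then $F\in\Lambda$, so $\lambda^\star=F$ exactly. Otherwise $\lambda^\star$ equals some geometric threshold $\lceil\lceil2/\eps\rceil(1+\eps)^i\rceil\ge 2/\eps$ and the next element of $\Lambda$ above it is $\lceil\lceil2/\eps\rceil(1+\eps)^{i+1}\rceil$, so $F<\lceil2/\eps\rceil(1+\eps)^{i+1}+1\le(1+\eps)\lambda^\star+\tfrac{\eps}{2}\lambda^\star\le(1+2\eps)\lambda^\star$, hence $\lambda^\star>F/(1+2\eps)\ge(1-2\eps)F$. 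Running the construction with $\eps'=\eps/2$ then gives the stated $(1-\eps)$ guarantee while changing only constants. This corollary uses nothing beyond \Cref{t:feasible-flow} and elementary bookkeeping; the only mildly delicate point — and the reason we store the $O(1/\eps)$ smallest thresholds explicitly — is that for flow values near $1/\eps$ the $\pm1$ rounding inherent in integer geometric thresholds would otherwise spoil the multiplicative bound.
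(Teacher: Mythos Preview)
Your proposal is correct and follows essentially the same approach as the paper: instantiate $O(\log(nC)/\eps)$ copies of the feasible-flow oracle of \Cref{t:feasible-flow} at geometrically spaced thresholds and binary-search among them upon query. The paper's version is slightly simpler in that it uses the pure geometric sequence $\lambda_j=\lceil(1+\eps)^j\rceil$ from $j=0$ and observes directly that $\lambda^*<\lceil(1+\eps)^{\ell+1}\rceil$ implies $\lambda^*\le(1+\eps)^{\ell+1}\le(1+\eps)\lambda_\ell$, so your extra block of $O(1/\eps)$ small linear thresholds is not actually needed (though it is harmless for the bounds).
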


By following the same approach that has been previously used to convert the
static distance oracles
into dynamic distance oracles
for planar graphs~\cite{FR06, Klein05, KaplanMNS17}, we obtain a dynamic
approximate max $s,t$-flow oracle supporting edge capacity updates.

\begin{theorem}\label{t:dynamic-flow-approx}
  Let $G$ be a planar digraph with integer edge capacities in $[1,C]$.
  There exists a dynamic $(1-\eps)$-approximate max $s,t$-flow oracle
  with $\Ot(n^{6/7}\cdot \log{C}\cdot \log{(\log{(C)}/\eps)})$ query time that
  supports edge capacity updates in $\Ot(n^{6/7}\log{(C)}/\eps)$ worst-case time.
\end{theorem}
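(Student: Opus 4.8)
The plan is to make the static feasible-$\lambda$-flow oracle of Theorem~\ref{t:feasible-flow} dynamic using the by-now-standard recipe for turning $r$-division-based planar oracles into dynamic ones~\cite{FR06, Klein05, KaplanMNS17}, and then to lift the dynamic feasible-flow oracle to a dynamic $(1-\eps)$-approximate max-flow oracle exactly as in the static case. Recall that the static oracle stores an $r$-division of $G$ together with the dense distance graph of every piece, and answers a query by reducing the feasible-$\lambda$ test (equivalently, a min $s,t$-cut comparison) to a shortest-path-type computation carried out by a Fakcharoenphol--Rao--style Dijkstra over the dense distance graphs of the pieces, plus $\Ot(r)$ local work inside the $O(1)$ pieces that carry the $(s,t)$-specific part of the reduction. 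The key point for dynamization is that the per-piece data — the dense distance graph of a piece $P$, which is a function of $P$ and its edge capacities alone — is \emph{oblivious to the query pair} $(s,t)$; only the final FR-Dijkstra and the surgery near the pieces of $s$ and $t$ are $(s,t)$-dependent, and those are recomputed at query time anyway.

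Consequently, changing the capacity of a single edge affects the stored structure only inside the one piece containing that edge at each level of a recursive $r$-division. To maintain the oracle it therefore suffices to recompute the dense distance graphs of the $O(1)$ (or $O(\log n)$) pieces on the root-to-leaf path of the recursion, proceeding bottom-up: via multiple-source shortest paths~\cite{Klein05} at the leaf level, and via FR-Dijkstra over the children's dense distance graphs at the higher levels. I would then choose the parameter(s) of the recursive $r$-division so that the cost of refreshing these dense distance graphs (which grows with the piece sizes) is balanced against the query cost of the top-level FR-Dijkstra over the coarsest pieces' dense distance graphs (which shrinks as the pieces grow); this balance is what yields $\Ot(n^{6/7})$ for both update and query.

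Finally, to pass from the feasible-$\lambda$ primitive to a $(1-\eps)$-approximation I would keep $O(\log(nC)/\eps)$ independent copies of the dynamic feasible-flow oracle, one for each value of $\lambda$ in a geometric grid over $[1,nC]$ with ratio $1+\eps$ (this is exactly the reduction already used for the static approximate oracle). An edge-capacity update is forwarded to all copies, which accounts for the extra $\log(C)/\eps$ factor in the update bound, and a query binary-searches over the copies for the largest feasible $\lambda$, which accounts for the extra $\log(\log(C)/\eps)$ factor in the query bound.

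The main obstacle is verifying that the max-flow-specific reduction underlying the static oracle is genuinely decomposable in the sense above: that the planar-duality / $s,t$-cut transformation turning a feasible-flow query into a shortest-path query can be layered on top of a single, $(s,t)$-oblivious recursive $r$-division, in such a way that (i) a capacity update touches exactly one piece per level, (ii) the dense distance graph of an untouched piece remains valid across updates, and (iii) the $(s,t)$-dependent modifications at query time stay confined to the pieces containing $s$ and $t$ and are cheap enough not to dominate. Once this structural compatibility is established, the update/query accounting, the choice of the $r$-division parameters, and the approximation reduction are all routine.
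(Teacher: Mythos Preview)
Your high-level plan---make the feasible-$\lambda$ oracle dynamic via an $r$-division and then layer on the geometric grid in $\lambda$---is exactly what the paper does. But your description of \emph{what} is stored per piece and \emph{where} the $(s,t)$-dependence lives is off, and this is not a cosmetic issue.

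The per-piece data is \emph{not} just the plain dense distance graph of the piece. The reduction works in the dual: answering a feasible-$\lambda$ query means testing whether $G^*_\lambda$ has a negative cycle, where $G^*_\lambda$ is obtained by shifting weights by $\pm\lambda$ along a dual $s^*\!\to t^*$ path $Q_{s,t}$. That path is not confined to the pieces of $s$ and $t$; it threads through many pieces. The whole point of the static construction (Lemma~\ref{l:path-decomp} and the preprocessing in Section~\ref{s:feasible-flow-oracle}) is that $Q_{s,t}$ can be decomposed into $O(\log n)$ subpaths, each of the form $\pth(H,x^*,y^*)$ with $x^*,y^*$ among the $\Ot(1)$ \emph{ancestor split faces} of $H$. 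So the preprocessed data per piece $H$ is not one DDG but $\Ot(1)$ DDGs of the modified graphs $G^*_\lambda[H](x^*,y^*)$, one for each such pair. In the dynamic version the recursion of $\detect$ is forced to stop at the $r$-division level, giving $\Ot(n/r)$ such pieces and a union of DDGs on $\Ot(n/\sqrt{r})$ vertices; an edge update invalidates one $\rdiv$-piece and \emph{all its descendants} (a subtree, not the root-to-leaf path), which is rebuilt in $\Ot(r)$ time.

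Second, the query-time computation is not an FR-Dijkstra: the shifted weights along $Q_{s,t}$ are negative, so the test is \emph{negative cycle detection} on the union of modified DDGs. This is precisely where Theorem~\ref{t:negcyc-rdiv} (the $\Ot(|V|^{3/2}\log C)$ negative-cycle algorithm on DDGs) enters and yields the $\Ot((n/\sqrt{r})^{3/2}\log C)=\Ot(n^{3/2}r^{-3/4}\log C)$ query cost; balancing against $\Ot(r)$ gives $r=n^{6/7}$. Your sentence ``the $(s,t)$-dependent modifications \dots stay confined to the pieces containing $s$ and $t$'' is therefore the wrong picture; without the split-face path decomposition and the per-pair modified DDGs, the plan as you wrote it does not go through.
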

To the best of our knowledge, no other dynamic max $s,t$-flow oracles with sublinear update and query bounds
for planar \emph{directed} graphs have been described so far.
\emph{Exact} dynamic max $s,t$-flow oracles with sublinear update and query time
have been previously developed only for planar \emph{undirected} graphs~\cite{ItalianoNSW11}.
It is worth noting that \cite{Karczmarz21} gave a dynamic algorithm maintaining $(1-\eps)$-approximate
max $s,t$-flow value in a planar digraph for a single fixed pair $s,t$ in $\Ot(n^{2/3}\log{(C)}/\eps)$ amortized
time per update. Whereas the algorithm of~\cite{Karczmarz21} also exploits the connection between
the feasible flow problem and negative cycle detection, it crucially relies
on the fact that a \emph{single} value is maintained and its update bound is inherently amortized.

Let us remark that all the proposed max $s,t$-flow oracles can be easily extended
to report a certifying $s,t$-cut-set $F\subseteq E$ after answering the query (i.e., an $s,t$-cut-set of capacity less than $\lambda$
in the case of the feasible flow oracle, an $(1+\eps)$-approximate min $s,t$-cut-set in the approximate
oracle, and exact min $s,t$-cut-set in the exact oracle) in near-optimal $\Ot(|F|)$ extra time.

\newcommand{\ddg}{\ensuremath{\mathrm{DDG}}}
\paragraph{Reduction to negative cycle detection in DDGs.} The key tool that we develop to enable sublinear-time queries in max-flow oracles
is a new algorithm for finding a negative cycle in so-called \emph{dense distance graphs} (DDGs)~\cite{FR06} 
that constitute~a~fundamental concept in the area of planar graph algorithms. 

Roughly speaking, a dense distance~graph $\ddg(\rdiv)$ associated with an $r$-division $\rdiv$ with few holes~\cite{KleinMS13} of a plane digraph $G$ is a certain graph with $O(n/\sqrt{r})$ vertices and $O(n)$ edges allowing efficient shortest paths computations.
\cite{FR06} famously developed an implementation of Dijsktra's algorithm on a dense distance graph running in $\Ot(n/\sqrt{r})$ time, and an implementation of the Bellman-Ford algorithm (accepting also negative edges) running
in $\Ot(n^2/r)$ time. While both are faster than their respective general-graph implementations for any $r=\poly{n}$, the latter never runs in time sublinear in the graph size since $r\leq n$.

Based on scaling techniques~\cite{GabowT89, GoldbergHKT17, KarczmarzS19} we give a deterministic
algorithm for finding a negative cycle in $\ddg(\rdiv)$ in $\Ot(n^{3/2}/r^{3/4})$ time
in the case of polynomial integer weights (see Theorem~\ref{t:negcyc-rdiv}). Crucially, this bound is truly sublinear in $n$ for a sufficiently large~choice~of~$r$.

Our max $s,t$-flow oracles use negative cycle detection on dense distance graphs in a fairly black-box manner. As a result, obtaining a faster algorithm for that subproblem (ideally, matching the $\Ot(n/\sqrt{r})$ Dijkstra bound of~\cite{FR06}) -- formulated with no connection to flows -- will lead
to more efficient max $s,t$-flow oracles for planar digraphs. This is an especially promising further research direction
given the recent randomized near-linear combinatorial SSSP algorithm~\cite{BernsteinNW22,sssp-logs}.\footnote{A very recent manuscript~\cite{abs-2303-00811} generalizes~\cite{BernsteinNW22} into a black-box reduction of negatively-weighted SSSP on a digraph $G$ to non-negatively weighted SSSP instances on subgraphs of $G$ (not necessarily vertex-induced, and with edge weights potentially changed) of total size $mn^{o(1)}$. It is not clear how to apply that framework here, though. First, even describing the subgraphs might require $\Theta(m)$ space. Moreover, a dense distance graph with some edges removed does not, in general, preserve the properties leveraged by the efficient Dijkstra implementation of~\cite{FR06}.}

\subsubsection{Negatively-weighted SSSP}
Our new algorithm finding a negative cycle in a dense distance graph, plugged in the earlier framework~\cite{KleinMW10}, leads to 
the fastest known algorithm for negative cycle detection and
single-source shortest paths in integer-weighted planar digraphs. Formally, we prove:

\begin{restatable}{theorem}{tshpath}\label{t:shpath}
  Let $G$ be a planar digraph with integer edge weights in $[-C,\infty)$.
  One can either detect a negative cycle in $G$ or compute
  a feasible price function of~$G$ in $O(n\log{(nC)})$ time.
\end{restatable}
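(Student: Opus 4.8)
\emph{Proof plan.}
The plan is to run the recursive $r$-division framework of Klein, Mozes, and Weimann~\cite{KleinMW10} for negative-cycle detection and feasible-price-function computation in planar digraphs, but to replace the subroutine that framework uses to process the associated dense distance graphs --- previously the $\Ot(n^2/r)$-time Bellman--Ford implementation of~\cite{FR06} --- by the scaling-based $\Ot(n^{3/2}/r^{3/4})$-time algorithm of Theorem~\ref{t:negcyc-rdiv}. As a preliminary step I would cap every arc weight at $nC$; this changes neither whether $G$ has a negative cycle nor which functions $V\to\mathbb{Z}$ are feasible price functions of $G$, and it guarantees that every quantity handled by the scaling (in particular every dense-distance-graph arc weight, which is a sum of at most $n$ capped arc weights) has magnitude $\poly(n)\cdot C$, so that $O(\log(nC))$ scales suffice throughout.

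Next I would compute an $r$-division $\rdiv$ of $G$ with few holes~\cite{KleinMS13} in $\Ot(n)$ time and, for each piece $P\in\rdiv$, recursively compute a feasible price function $\phi_P$ for $P$ with its inherited weights, or detect a negative cycle inside $P$ --- which is already a negative cycle of $G$ (once pieces are small enough one may instead finish with the strongly polynomial algorithm of~\cite{KleinMW10, MozesW10}). Reweighting $P$ by $\phi_P$ makes every intra-$P$ arc nonnegative, so one multiple-source shortest-paths (MSSP)~\cite{Klein05} computation per piece, sourced at $\bnd P$, produces all distances $\dist_P^{\phi_P}(u,v)\ge 0$ for $u,v\in\bnd P$ in $\Ot(r)$ time per piece, hence $\Ot(n)$ in total. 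From these I would assemble $\ddg(\rdiv)$: it has vertex set $\bigcup_{P}\bnd P$ of size $O(n/\sqrt r)$ and, for every $P$ and $u,v\in\bnd P$, an arc $u\to v$ of weight $\dist_P(u,v)=\dist_P^{\phi_P}(u,v)-\phi_P(u)+\phi_P(v)$. These arc weights need not be nonnegative, precisely because the per-piece price functions are mutually inconsistent on shared boundary vertices, which is exactly why a genuine negative-cycle detector for $\ddg(\rdiv)$ is needed.

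I would then invoke Theorem~\ref{t:negcyc-rdiv} on $\ddg(\rdiv)$. If it reports a negative cycle of $\ddg(\rdiv)$, unrolling each of its arcs into the underlying intra-piece shortest path (recoverable from the MSSP data structures) yields a closed walk in $G$ of the same, negative, weight, hence a negative cycle of $G$. Otherwise it returns a feasible price function $\psi$ on $\bigcup_P\bnd P$, which extends to a feasible price function $p$ of $G$ by one further per-piece relaxation pass that initializes the boundary of each piece to its $\psi$-values (a standard operation in this line of work~\cite{FR06, KleinMW10}), again in $\Ot(n)$ time. Choosing $r=\Theta(n^{2/3})$ makes the call to Theorem~\ref{t:negcyc-rdiv} cost $\Ot(n)$; since every other step is $\Ot(n)$ and the recursion passes from an instance of size $n$ to instances of size $n^{2/3}$ (so there are $O(\log\log n)$ levels), the whole algorithm runs in $\Ot(n)$ time, with an extra $O(\log(nC))$ factor coming from the scaling performed inside Theorem~\ref{t:negcyc-rdiv}. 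The refined accounting needed to turn this into the precise $O(n\log(nC))$ bound is the one already carried out in the framework of~\cite{KleinMW10}, which is near-linear apart from the dense-distance-graph step we have accelerated.

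I expect the main obstacle to be twofold. First, the faithful translation between $G$ and $\ddg(\rdiv)$: I would need to attach to each arc of $\ddg(\rdiv)$ a handle into the MSSP structure of its piece, so that a negative $\ddg(\rdiv)$-cycle unrolls into an actual negative cycle of $G$, and I would need the $\psi\mapsto p$ extension to run in near-linear time while being provably feasible on every arc of $G$, including the inter-piece ones. Second, the logarithmic accounting: making the $r$-division construction, the MSSP work summed over all recursion levels, and the $O(\log(nC))$-scale scaling inside Theorem~\ref{t:negcyc-rdiv} compose to a single $\log(nC)$ factor rather than a larger polylogarithm. For this I would lean on the fact that~\cite{KleinMW10} already engineered the surrounding framework to be near-linear, so that replacing only its one super-near-linear ingredient --- dense-distance-graph Bellman--Ford --- by Theorem~\ref{t:negcyc-rdiv} is precisely what yields the improvement in the regime of weights polynomial in $n$.
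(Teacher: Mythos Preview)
Your high-level plan is exactly the paper's: recurse on an $r$-division with few holes, obtain per-piece feasible price functions, build $\ddg(\rdiv)$ via MSSP, run Theorem~\ref{t:negcyc-rdiv} on it, and propagate the resulting prices back with one Dijkstra per piece. The gap is in the parameter choice and the final accounting, and it is not repaired by deferring to~\cite{KleinMW10}.

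With $r=\Theta(n^{2/3})$, the DDG step at the top level costs $\Ot(n^{3/2}/r^{3/4}\cdot\log C)=\Ot(n\log C)$, and the $\Ot(\cdot)$ hides a genuine $\polylog(n)$ factor (from the Monge heaps, the near-neighbor structure of~\cite{MozesNW18}, and the compositions in Lemmas~\ref{l:nn-sum}--\ref{l:ssp-sum}). Since the total piece size at every recursion level remains $\Theta(n)$, this $n\cdot\polylog(n)\cdot\log C$ cost recurs at every level and cannot be summed to $O(n\log(nC))$; you obtain only $\Ot(n\log(nC))$. Your appeal to~\cite{KleinMW10} for the ``refined accounting'' does not help: that framework's own bound is $O(n\log^2 n/\log\log n)$, so it carries extra log factors rather than removing them.

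The paper's fix is to pick $r$ large enough that the DDG step becomes \emph{strictly sublinear} in the piece size. With $r=n^{8/9}$ one gets $\Ot(n^{5/6}\log C)\le n^{0.9}\log C$ for large $n$. An induction of the form $T(n)\le cn\log(nC)-dn^{0.9}\log C$ then closes: the negative slack $-dn^{0.9}\log C$ pays for the top-level DDG step and is more than regenerated by the recursive calls (there are $\Theta(n^{1/9})$ pieces, each contributing slack $\Theta(|P|^{0.9}\log C)\approx n^{0.8}\log C$, summing to $\Theta(n^{0.91}\log C)$). The $O(n\log n)$ MSSP cost per level telescopes to $O(n\log n)$ total because $\log|P|\le\tfrac{8}{9}\log n$, yielding a geometric series; this part would also work with your $r=n^{2/3}$, but the DDG accounting is the obstruction. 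So the missing idea is precisely: make the DDG step sublinear (any $r=n^{\alpha}$ with $\alpha>2/3$ works), and carry an explicit negative slack term through the induction to absorb it.
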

\noindent It is well-known that computing a feasible price function of a graph reduces the SSSP problem
to a non-negatively weighted instance, which can be in turn solved in $O(m+n\log{n})$ time
using Dijkstra's algorithm. For planar digraphs, an even faster linear-time algorithm is known~\cite{HenzingerKRS97}.
  
For the important case $C=\poly{n}$, our SSSP algorithm runs in $O(n\log{n})$ time. This improves upon the state-of-the-art
strongly polynomial
$O(n\log^2{n}/\log\log{n})$ bound of~\cite{MozesW10}.
Like all the previous shortest paths algorithms for planar graphs -- and in contrast to the recent
breakthrough combinatorial result for general graphs~\cite{BernsteinNW22} -- our algorithm is also deterministic.

As shown by~\cite{MillerN95}, routing a \emph{feasible flow} respecting given vertex demands in a plane digraph reduces to solving
a (negatively-weighted) SSSP problem in the dual graph. We thus obtain:

\begin{corollary}
  Let $G$ be a planar digraph with integer edge capacities no larger than $C$. Let $d:V(G)\to\mathbb{Z}$
  be a demand vector. Then, in $O(n\log{(nC)})$ time one can compute a flow function $f:E(G)\to \mathbb{Z}_{\geq 0}$ on $G$
  such that the excess of a vertex $v\in V(G)$ equals $d(v)$ (if such $f$ exists).
\end{corollary}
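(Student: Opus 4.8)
The plan is to invoke the classical reduction of Miller and Naor~\cite{MillerN95}: feasibility of a flow $f:E(G)\to\mathbb{Z}_{\ge 0}$ with prescribed excesses $d$, subject to the capacity bounds $0\le f(e)\le c(e)$, is equivalent to the existence of a feasible price function (equivalently, the absence of a negative cycle) in an auxiliary instance living on the planar dual $G^\ast$. Concretely, first I would fix an arbitrary feasible (possibly capacity-violating, possibly negative) pre-flow $f_0$ meeting the demands $d$ exactly --- this can be read off from a spanning tree of $G$ in $O(n)$ time by pushing flow along tree paths. Writing $f = f_0 + g$ where $g$ is a circulation, the capacity constraints $0\le f_0(e)+g(e)\le c(e)$ become interval constraints $a(e)\le g(e)\le b(e)$ on the circulation $g$. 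A circulation on a plane graph is exactly a potential difference across the dual: $g(e) = \phi(\mathrm{head}^\ast(e)) - \phi(\mathrm{tail}^\ast(e))$ for some $\phi:V(G^\ast)\to\mathbb{Z}$, where the dual orientation is taken consistently (e.g.\ rotating each primal edge $90^\circ$). Hence the interval constraints translate into $\phi(y)-\phi(x)\le b(e)$ and $\phi(x)-\phi(y)\le -a(e)$ for the two dual endpoints $x,y$ of each $e$ --- that is, a system of difference constraints on the dual vertex set, i.e.\ a shortest-path feasibility problem on a digraph $G'$ with vertex set $V(G^\ast)$ and two (oppositely directed) arcs per primal edge, with integer weights bounded in absolute value by $\max_e c(e) + \|f_0\|_\infty = O(nC)$.

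Next I would observe that $G'$ is itself planar: it has the same underlying structure as $G^\ast$ with each edge doubled (an antiparallel pair), which does not destroy planarity, and it has $O(n)$ vertices and edges. Therefore I can apply Theorem~\ref{t:shpath} to $G'$ directly. In $O(n\log(nC))$ time this either returns a negative cycle in $G'$ --- which, unwound through the reduction, certifies that no feasible flow with the given demands exists (the "if such $f$ exists" escape clause of the statement) --- or returns a feasible price function $\phi$ of $G'$. In the latter case $\phi$ satisfies all the difference constraints, so setting $g(e) = \phi(y) - \phi(x)$ and then $f(e) = f_0(e) + g(e)$ yields an integral flow with $0\le f(e)\le c(e)$ and excess exactly $d(v)$ at every vertex $v$; the per-vertex excess condition holds because $g$ is a circulation (it contributes zero net excess everywhere) and $f_0$ was chosen to realize $d$. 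All of the translation steps --- building $f_0$, dualizing, forming $G'$, and reconstructing $f$ from $\phi$ --- are linear-time and dominated by the $O(n\log(nC))$ call to Theorem~\ref{t:shpath}, giving the claimed running time.

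The only genuinely delicate point is bookkeeping the dual orientation and sign conventions so that "primal circulation $\leftrightarrow$ dual potential difference" holds with the correct signs and so that a negative cycle in $G'$ maps back to a genuine infeasibility certificate rather than an artifact of the encoding; this is standard (it is exactly the setup in~\cite{MillerN95}) but must be stated carefully, in particular handling self-loops in $G^\ast$ (bridges of $G$, which force $a(e)\le 0\le b(e)$ and are otherwise free) and parallel dual edges. I do not expect any real obstacle beyond this routine care, since once $G'$ is in hand the result is an immediate consequence of Theorem~\ref{t:shpath}.
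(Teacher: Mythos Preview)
Your proposal is correct and takes essentially the same approach as the paper: the paper simply cites the Miller--Naor reduction~\cite{MillerN95} of the feasible flow problem to negatively-weighted SSSP on the dual and then invokes Theorem~\ref{t:shpath}, without spelling out any details. Your write-up is in fact a more careful exposition of that reduction than the paper itself provides; the only small gap is that the claim $\|f_0\|_\infty = O(nC)$ tacitly assumes you have first discarded instances where some $|d(v)|$ exceeds $\deg(v)\cdot C$ (which are trivially infeasible), after which $\sum_v |d(v)| = O(nC)$ in a planar graph and the spanning-tree pre-flow satisfies the bound.
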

An important special case of the above feasible flow problem is that of finding a perfect matching in
a bipartite graph (use unit-capacity edges, and demands $\pm 1$ depending on the side of a vertex).
\begin{corollary}
  A perfect matching in a bipartite planar graph $G$ can be found in $O(n\log{n})$ time.
\end{corollary}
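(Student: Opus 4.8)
The plan is to reduce bipartite perfect matching to the feasible-flow problem solved by the preceding corollary. Let $G$ be bipartite with parts $A$ and $B$; if a combinatorial planar embedding is not supplied, compute one in linear time. Orient every edge of $G$ from its endpoint in $A$ towards its endpoint in $B$ (orienting edges does not affect planarity), give every edge capacity $1$, and define the demand vector $d$ by $d(a)=-1$ for $a\in A$ and $d(b)=+1$ for $b\in B$. Note that $\sum_{v}d(v)=|B|-|A|$, so if $|A|\neq|B|$ the instance is already infeasible and we report that no perfect matching exists; otherwise $\sum_v d(v)=0$ as required.

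The key observation is that integral feasible flows $f:E(G)\to\mathbb{Z}_{\geq 0}$ for this instance are exactly the perfect matchings of $G$. Since every edge has capacity $1$, we have $f(e)\in\{0,1\}$, so $M_f:=\{e:f(e)=1\}$ is well defined. A vertex $a\in A$ has no incoming edges, so its excess is $-f(\text{out-flow at }a)$; the constraint that this equals $d(a)=-1$ forces exactly one edge incident to $a$ to lie in $M_f$. Symmetrically, each $b\in B$ has exactly one incident edge in $M_f$. Hence $M_f$ is a perfect matching, and conversely the indicator vector of any perfect matching is an integral feasible flow; in particular a feasible flow exists iff $G$ has a perfect matching. (Isolated vertices and parallel edges are handled automatically: an isolated vertex in $A$ or $B$ cannot satisfy its demand, and among parallel $A$–$B$ edges the flow simply selects one.)

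It then suffices to invoke the preceding corollary with $C=1$: in $O(n\log(n\cdot 1))=O(n\log{n})$ time it either certifies that no feasible flow exists — in which case we report that no perfect matching exists — or returns an integral $f$, from which we read off the perfect matching $M_f=\{e:f(e)=1\}$ in $O(m)=O(n)$ additional time. This is essentially a textbook reduction, so there is no real obstacle; the only points needing care are keeping the instance planar (which holds since we merely orient edges and set unit capacities) and ensuring the numeric parameters stay bounded so that the bound is genuinely $O(n\log{n})$ — here all capacities and demands are $\pm1$, so the dual negatively-weighted shortest-path instance produced inside the feasible-flow routine has integer weights polynomially bounded in $n$ and $\log(nC)=O(\log{n})$.
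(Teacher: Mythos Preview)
Your proposal is correct and takes essentially the same approach as the paper: the paper simply remarks that bipartite perfect matching is the special case of the feasible-flow corollary with unit-capacity edges and demands $\pm 1$ on the two sides, and you have spelled out this reduction in full detail (including the bijection between integral feasible flows and perfect matchings and the observation that $C=1$ yields the $O(n\log n)$ bound).
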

The previous best time bound for the planar bipartite perfect matching problem
was that of negative cycle detection, i.e., $O\left(n\frac{\log^2{n}}{\log\log{n}}\right)$~\cite{MozesW10}.
Our obtained bound also looks much cleaner.

\subsubsection{A more general framework for solving structured transportation-like problems.}
Even though all our developments are motivated by shortest paths and maximum flow applications in planar graphs,
on the way to proving Theorem~\ref{t:shpath}, we develop a fairly general framework for
solving the minimum-cost circulation problem (up to additive error $\delta$)
on networks $G$ with a relatively small (say $O(n)$) sum of vertex capacities 
and whose most of the defined edges are uncapacitated and structured in a way.
One can easily observe that this setting captures the transportation problem,
the minimum-cost perfect matching problem, and -- crucially for our applications -- the negative cycle detection
problem (as observed by~\cite{Gabow85}).
The goal is to be able to solve the problem in time that is \emph{sublinear}
in the number of uncapacitated edges defined implicitly.

This general idea has been previously used by Sharathkumar and Agarwal~\cite{SharathkumarA12}
to obtain an $\Ot(n^{3/2}\cdot \Phi(n)\cdot \log{(C/\delta)})$ algorithm (based on~\cite{GabowT89})
for $\delta$-additive approximate transportation problem
with sum of demands $O(n)$ on \emph{geometric networks} with diameter $C$,
where $\Phi(n)$ is the update/query time of a \emph{dynamic weighted nearest neighbor
data structure} on the input point set wrt. the used metric.
For example, if the points lie on the plane, have integer coordinates, and the metric
used is $L_1$ or $L_\infty$,~\cite{SharathkumarA12} give
an exact $\Ot(n^{3/2}\log{C})$-time algorithm for such a problem,
even though the network in consideration has $\Theta(n^2)$ defined edges.

We generalize this approach to arbitrary digraphs, where the edge costs
need not form a metric. 
Our framework is an extension of the minimum-cost circulation algorithms in~\cite{GoldbergHKT17, KarczmarzS19}.
We analyze the performance of the framework
wrt. the efficiency of dynamic \emph{closest pair} and \emph{near neighbor} graph data structures (defined precisely in Section~\ref{s:abstractions})
on the subgraph $G_\infty\subseteq G$ containing the uncapacitated edges.
Intuitively, networks $G_\infty$ with good graph closest pair data structures are those
that allow very efficient (e.g., sublinear in $|E(G_\infty)|$) implementations of Dijkstra's algorithm,
whereas fast near neighbor data structures allow very efficient computation
of blocking flows.
These data-structural abstractions are flexible enough to compose
them efficiently by taking graph unions or performing the vertex-splitting
transformation (Section~\ref{s:circ-overview}) which is crucial for applying the framework to the negative-cycle detection problem.
We note that the idea of analyzing the performance of graph algorithms
in terms of abstract dynamic closest pair data structures is not new and has been described before e.g. in~\cite{ChanE01}.
See Section~\ref{s:circ-overview} and Theorem~\ref{t:circ} for details.

We believe that our generalized min-cost circulation framework is of independent interest, and can be also
applied beyond planar graphs, e.g. in some more complex geometric settings.
To support this claim, we now give an example of a problem where our framework is applicable.
Suppose the network $G=(V,E)$ is a \emph{weighted unit disk graph} on $n$ points in the plane,
where every vertex
except the source $s$ and sink~$t$ has \emph{unit vertex capacity} and $||s-t||>1$.
For each $u,v\in V$, there is an implicit uncapacitated edge $uv$
with cost $||u-v||$ if $||u-v||\leq 1$. The vertices can also have costs assigned.
The goal is to compute the minimum cost $s\to t$ flow of a specified value $k\leq n$.
With our framework, and using known dynamic weighted nearest neighbor structures~\cite{AgarwalES99}\footnote{Which also imply dynamic closest pair data structures in a black-box way~\cite{Chan20}.}, we can compute
an $\delta$-additive approximation of the min-cost $s,t$-flow of value $k$ in $G$ in truly subquadratic
time (also dependent on $\log(1/\delta)$), even though $G$ may have $\Theta(n^2)$ edges defined implicitly.
\subsection{Further related work}
For general undirected graphs, the max $s,t$-flow oracle problem can be solved by computing
the so-called \emph{cut-equivalent tree} (or, the \emph{Gomory-Hu tree}) that encodes all the pairwise max-flow information~\cite{gomory1961multi}.
Recently, there has been a lot of effort to compute cut-equivalent trees efficiently (e.g.,~\cite{AbboudKT21, AbboudKT22, LiPS21, Zhang22}) which
culminated in an almost-optimal algorithm for unweighted graphs and an $\Ot(n^2)$-time algorithm on weighted graphs~\cite{AbboudK0PST22, ChenKLPGS22}.

For general directed graphs, there is a $n^{3-o(1)}$ conditional lower bound~\cite{KrauthgamerT18} for the problem
of computing all-pairs max-flows in weighted \emph{sparse} graphs, which almost matches
the obvious upper bound that follows from the recent near-optimal max-flow algorithms~\cite{ChenKLPGS22,det-flow}.

\section{Preliminaries}\label{s:prelims}
In this paper we deal with \emph{directed} graphs.
We write $V(G)$ and $E(G)$ to denote the sets of vertices and edges of $G$, respectively. We omit $G$ when the graph in consideration is clear from the context.
We write $e=uv\in E(G)$ when referring to edges of $G$.
We call $u$ the tail of $e$, and $v$ the head of $e$.

The graphs we deal with are \emph{weighted}, i.e., the edges have either weights or costs assigned (but not both).
Following the literature, when dealing with minimum cost flow problems, we stick to the
term \emph{cost}, and when talking about shortest paths or negative cycles, we use the term \emph{weight}.
We use $\wei_G(uv)$ to denote the weight of an edge $uv$, and $c_G(uv)$ to denote its cost.
The domain of possible edge weights/costs may vary, but will be always either stated explicitly or clear from the context.
The subscript is dropped when it is clear which graphs's weights/costs we are referring to.

The \emph{cost} or \emph{length} of the path $\len(P)$ is defined as $\len(P)=\sum_{i=1}^k c(e_i)$ or $\len(P)=\sum_{i=1}^k\wei(e_i)$,
depending on whether edges have costs or weights assigned.

The \emph{distance} $\dist_G(u,v)$ between the vertices $u,v\in V(G)$ is the length
of the \emph{shortest}, i.e., minimum-cost/minimum-length $u\to v$ path in $G$ or $\infty$, if no $u\to v$ path
exists in $G$.
Note that the distance is well-defined only if $G$ contains no negative cycles.

  We call any function $p:V\to \mathbb{R}$ a \emph{price function} on $G$.
  The \emph{reduced cost} (\emph{reduced weight}) of an edge $uv=e\in E$ wrt. $p$ is defined
  as $c_p(e):=c(e)-p(u)+p(v)$ ($\wei_p(e)=\wei(e)-p(u)+p(v)$, resp.).
  We call $p$ a \emph{feasible price function}\footnote{In the shortest paths literature~(e.g.,~\cite{Goldberg95}), the price of the edge's tail is usually added to $w(e)$, and the price of the head is subtracted. But in the min-cost flow literature~(e.g.,~\cite{GoldbergHKT17}), it is the other way around. We stick to the latter definition even when talking about shortest paths.} of $G$ if each $e\in E$
  has non-negative reduced cost/weight wrt.~$p$.

  It is known that $G$ has no negative-cost/weight cycles (negative cycles, in short) iff some feasible
  price function $p$ for $G$ exists.
  If $G$ has no negative cycles, distances in $G$ are well-defined.

  \begin{fact}\label{f:distanceto}
    Suppose $G$ has no negative cycles. Suppose $s$ can reach all vertices of $G$.
    Then the \emph{minus distance} function
    $p(v):=-\dist_{G}(s,v)$ is a feasible price function of~$G$.
  \end{fact}
\section{Data structural abstractions and negative cycle detection}\label{s:abstractions}
In this section we define the interface of abstract graph data structures
whose operations constitute the bottleneck of our min-cost circulation algorithm
that is in turn used to efficiently find negative cycles in dense distance graphs built upon $r$-divisions of planar graphs (Theorem~\ref{t:negcyc-rdiv}).
The missing proofs from this section can be found in Section~\ref{s:omitted}.

\subsection{Graph near neighbor data structure}\label{s:graph-near}
Let $G=(V,E)$ be a directed graph with edge costs given by $\wei:E\to \mathbb{R}$.
After some preprocessing of $G$, 
a \emph{graph near neighbor data structure} on $G$ has the following interface.
\begin{enumerate}[itemsep=-2pt]
  \item \emph{Initialization} with a subset $T\subseteq V$ and two mappings $p:T\to \mathbb{R}$ (vertex prices), and $\tau:V\to \mathbb{R}$ (vertex thresholds).
  \item \emph{Near neighbor query}: given $v\in V$, compute an incident edge $vt=e\in E$ with $t\in T$, such that
    \begin{equation*}
      \wei(e)+p(t)<\tau(v).
    \end{equation*}
    If no such $e$ exists, $\perp$ is returned.
  \item \emph{Deactivation} of $t\in T$ removes $t$ from the subset $T$.
\end{enumerate}
The \emph{total update time} of the graph near neighbor data structure is defined
as the total time needed for initialization and processing an arbitrary sequence of
deactivations resulting in $T=\emptyset$.
The \emph{query time} is the time needed for processing a near neighbor query.

\newcommand{\nn}{{\mathrm{nn}}}
\newcommand{\cp}{\mathrm{cp}}
\newcommand{\fin}{{\mathrm{fin}}}

For a graph $H$, we denote by
$T_\nn(H)=\Omega(|V(H)|)$ its total update time and by $Q_\nn(H)=\Omega(1)$ its query time.
We assume that these bounds hold only after suitably preprocessing $H$ (possibly given implicitly)
in time $P_\nn(H)$. The reason why preprocessing is not included in the total update time
is that the preprocessing can be shared by many instantiations of the data structure.

\begin{restatable}{observation}{sspdummyclo}
  For a digraph $G=(V,E)$ with $n$ vertices and $m$ edges given explicitly,
  there exists a near neighbor data structure
  with $T_\nn(G)=O(m+n)$ and $Q_\nn(G)=O(1)$.
\end{restatable}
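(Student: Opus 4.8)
The plan is to sidestep priority queues entirely, exploiting the fact that the price function $p$ and the threshold function $\tau$ are frozen at initialization: for every edge $vt=e\in E$ the predicate $\wei(e)+p(t)<\tau(v)$ is then a \emph{static}, query-independent property, computable once in $O(1)$ time. Call an edge $vt$ \emph{good} if it has this property. For each vertex $v$ I would maintain a doubly linked list $L_v$ holding precisely the good out-edges $vt$ of $v$ whose head $t$ is currently active (i.e.\ still in $T$), and for each good edge a back-pointer to its node in the list of its tail. As one-time preprocessing (counted toward $P_\nn(G)=O(m+n)$, not toward the total update time) I would, in addition to the given out-adjacency lists, build the in-adjacency lists of $G$.

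On initialization with $(T,p,\tau)$, first mark $v$ active iff $v\in T$; then scan all edges and, for each $vt$ with $t\in T$, decide in $O(1)$ whether it is good and, if so, append it to $L_v$ and store its back-pointer. This costs $O(m+n)$, matching the budget allotted to the total update time. A near neighbor query on $v$ simply returns the head of $L_v$ when $L_v\neq\emptyset$ and $\perp$ otherwise, so $Q_\nn(G)=O(1)$; this is correct because every element of $L_v$ is, by construction, an out-edge $vt$ with $t\in T$ satisfying $\wei(vt)+p(t)<\tau(v)$, whereas an empty $L_v$ certifies that $v$ has no good out-edge with active head at all --- exactly the situation in which $\perp$ is the required answer. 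A deactivation of $t$ marks $t$ inactive (a no-op if it already is) and walks the in-adjacency list of $t$, splicing every good edge $vt$ out of $L_v$ in $O(1)$ via its back-pointer; since a good edge is removed at most once over a sequence of deactivations (namely when its head $t$ leaves $T$), all deactivations leading to $T=\emptyset$ together cost $O(m)$, plus $O(n)$ for touching the at most $n$ deactivated vertices. Adding the initialization cost yields $T_\nn(G)=O(m+n)$.

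The only point that needs care --- and the reason the naive solution falls short --- is attaining $O(1)$ query time and $O(m+n)$ (rather than $O(m\log n)$) deactivation time simultaneously: a min-heap over $v$'s active out-neighbors keyed by $\wei(vt)+p(t)$ would answer queries in $O(\log n)$ but force $\Theta(\log n)$ per heap deletion under deactivations and $\Omega(m\log m)$ just to build. Observing that ``good'' is a fixed flag once $p$ and $\tau$ are frozen is precisely what lets us replace each heap by a plain doubly linked list of good edges and eliminate the logarithmic factor.
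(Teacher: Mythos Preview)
Your proof is correct and complete. It takes a genuinely different route from the paper's, though, so a brief comparison is in order.

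The paper's argument is lazier and shorter: for each vertex $v$ it keeps a single forward-only pointer into the out-adjacency list of $v$; on a query about $v$, it advances that pointer past edges $vt$ that are either not good or have $t\notin T$, and returns the first edge that survives (or $\perp$). Deactivation of $t$ is then a no-op beyond marking $t$ inactive. Because each pointer only moves forward, the total pointer-advancement work over all queries is $O(m)$, so the query cost is $O(1)$ \emph{amortized}. No in-adjacency lists, no doubly linked lists, no back-pointers.

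Your approach instead does the work eagerly at deactivation time, walking the in-neighbors of $t$ and splicing out the affected edges. This buys you a genuine \emph{worst-case} $O(1)$ query bound, at the price of building in-adjacency lists and maintaining doubly linked structures with back-pointers. Since the paper only ever uses $Q_\nn$ in aggregate (e.g.\ terms like $n\cdot Q_\nn(G)$ in Theorem~\ref{t:negcyc} and $|V_i|\cdot Q_\nn(G_i)$ in Lemma~\ref{l:nn-sum}), the amortized guarantee suffices for all downstream applications, and the paper opts for the simpler implementation. Your stronger worst-case bound is a perfectly valid and arguably cleaner alternative.
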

The following lemma allows composing graph near neighbor data structures efficiently.

\begin{restatable}{lemma}{nnsum}\label{l:nn-sum}
  Let $G_1=(V_1,E_1),\ldots,G_k=(V_k,E_k)$ be weighted digraphs with preprocessed near neighbor data structures.
  There exists a near neighbor data structure for $\bigcup_{i=1}^k G_i$ that requires no further preprocessing and
  satisfies 
  \begin{align*}
    T_\nn\left(\bigcup_{i=1}^k G_i\right)&=\sum_{i=1}^k \left(T_\nn(G_i)+|V_i|\cdot Q_\nn(G_i)+O(|V_i|)\right),\\
    Q_\nn\left(\bigcup_{i=1}^k G_i\right)&=\max_{i=1}^k\{Q_\nn(G_i)\}+O(1).
  \end{align*}
\end{restatable}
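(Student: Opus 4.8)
The plan is to build the combined data structure by keeping, for each vertex $v$, a pointer to which of the individual structures $G_i$ "owns" a usable near neighbor for $v$, and to lazily advance through the list $G_1,\ldots,G_k$ whenever the current owner reports $\perp$. Concretely, during initialization of $\bigcup_i G_i$ with $T\subseteq V$, $p$, and $\tau$, I would initialize each $G_i$ on the restriction of $T$ to $V_i$, together with the restrictions of $p$ and $\tau$ — this costs $\sum_i (T_\nn(G_i)+O(|V_i|))$, the $O(|V_i|)$ accounting for copying the restricted mappings. For each vertex $v$, maintain a "current index" $\iota(v)$, initially the smallest $i$ with $v\in V_i$, together with a cached near neighbor query answer $q(v)$ obtained by calling the near neighbor query of $G_{\iota(v)}$ at $v$. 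I would eagerly compute all these initial answers, at a cost of $\sum_i |V_i|\cdot Q_\nn(G_i)$, which is exactly the middle term in the claimed $T_\nn$ bound.

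The invariant I would maintain is: for every $v$, the edge $q(v)$ (if not $\perp$) is a valid near neighbor of $v$ in $G_{\iota(v)}$, every $G_j$ with $j<\iota(v)$ currently has no valid near neighbor for $v$, and whenever $q(v)=\perp$ we in fact have $\iota(v)$ equal to the largest index with $v\in V_i$ and $G_{\iota(v)}$ reporting $\perp$. A near neighbor query on $\bigcup_i G_i$ at $v$ then just reads the cached $q(v)$: if it is non-$\perp$ it is a valid answer for the union (any edge valid in some $G_i$ is an edge of $\bigcup_i G_i$ satisfying the same inequality), and if it is $\perp$ then by the invariant no $G_j$ has a valid near neighbor for $v$, so $\perp$ is correct for the union too. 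Thus $Q_\nn(\bigcup_i G_i)=O(1)$ plus the cost of one query — but wait, the cached answer already incorporates that; to be safe and to match the stated $\max_i Q_\nn(G_i)+O(1)$ bound, I would instead recompute $q(v)$ at query time from $G_{\iota(v)}$ rather than caching, and only cache the index $\iota(v)$; then a union query costs $Q_\nn(G_{\iota(v)})+O(1)\le \max_i Q_\nn(G_i)+O(1)$.

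For deactivation of $t\in T$: deactivate $t$ in every $G_i$ with $t\in V_i$; this is where the subtlety lies. Deactivating $t$ may invalidate the cached index $\iota(v)$ for some vertices $v$ — namely those $v$ whose current owner $G_{\iota(v)}$ had $t$ as its near neighbor for $v$. The key point is that $\iota(v)$ is monotone non-decreasing over the execution: once $G_j$ stops having a valid near neighbor for $v$ (because deactivations only shrink $T$, a $\perp$ answer is permanent for that $G_j$), we never need to revisit $j$ for $v$. So each time we detect that $G_{\iota(v)}$ now returns $\perp$ for $v$, we advance $\iota(v)$ to the next larger index with $v\in V_i$, issuing one fresh near neighbor query there, and repeat until either a non-$\perp$ answer appears or we run off the end. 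To detect "$G_{\iota(v)}$ now returns $\perp$", I would, upon deactivating $t$ in $G_i$, reissue the query at each $v\in V_i$ with $\iota(v)=i$ whose previously returned edge was incident to $t$; a cleaner implementation is simply, after handling the whole deactivation, to lazily re-query $G_{\iota(v)}$ for any $v$ the next time $v$ is queried and advance if needed — but that would charge re-queries to union queries, violating the $O(1)$-plus-one-subquery bound. So I would instead charge advancement work to the total update time: the index $\iota(v)$ advances at most $k_v:=|\{i: v\in V_i\}|$ times over the lifetime, each advancement costing one near neighbor query on some $G_i$, i.e. $O(Q_\nn(G_i))$, and $\sum_v \sum_{i: v\in V_i} Q_\nn(G_i)=\sum_i |V_i|\cdot Q_\nn(G_i)$. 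Combining, the total update time is $\sum_i T_\nn(G_i)$ (deactivations in the components) $+\sum_i |V_i|\cdot Q_\nn(G_i)$ (initial queries and all advancements) $+\sum_i O(|V_i|)$ (bookkeeping, restricting mappings, maintaining the per-component lists of $v$'s with $\iota(v)=i$), matching the statement.

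The main obstacle I expect is the accounting of \emph{which} vertices need their owner re-checked after a deactivation without paying more than once per (vertex, component) pair: a naive scan of all $v\in V_i$ on every deactivation of a $t\in V_i$ would cost $\Theta(|V_i|)$ per deactivation and blow up the bound. The resolution is to have each component, when it answers a near neighbor query for $v$ with edge $vt$, record $v$ on an adjacency list of $t$ within that component's bookkeeping (this is $O(1)$ per query, already paid for), so that on deactivating $t$ we only touch the $v$'s that actually pointed at $t$, and each such touch either confirms a still-valid (different) answer — but here we must be careful: re-querying to confirm also costs $Q_\nn$, so we should only re-query when the recorded edge was incident to the just-deactivated $t$, and we must argue that after a re-query giving a new edge $vt'$ with $t'\neq t$, that edge is still valid (it is, since validity of $vt'$ in $G_i$ depends only on $t'\in T$, $p(t')$, $\tau(v)$, none of which changed). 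Making this charging airtight — in particular showing the total number of re-queries over all deactivations is $\sum_i |V_i|\cdot Q_\nn(G_i)$-bounded rather than merely $\sum_i (\text{number of deactivations in }G_i)\cdot(\ldots)$ — is the crux, and it follows because a re-query for $v$ in $G_i$ either makes progress (advances $\iota(v)$ past $i$, happening $\le 1$ time per $(v,i)$) or returns a new valid edge $vt'$ in $G_i$ and re-registers $v$ on $t'$'s list, and since $t'$ is only ever deactivated once, the number of such "productive" re-queries for $(v,i)$ is also $O(1)$ amortized against the single deactivation of $t'$. Hence the grand total of re-queries is $O(\sum_i |V_i|)$ many, each costing $Q_\nn(G_i)$, giving the $\sum_i |V_i|\cdot Q_\nn(G_i)$ term, and the lemma follows.
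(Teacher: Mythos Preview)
Your overall architecture—maintaining a per-vertex index $\iota(v)$ into the list of components containing $v$, and advancing it monotonically—is exactly what the paper does. The gap is in \emph{when} you do the re-querying: you opt for eager re-querying during deactivation, and your charging argument for that choice is wrong.

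Consider a single component $G_i$ with $n$ ``source'' vertices $v_1,\ldots,v_n$ and $n$ ``target'' vertices $t_1,\ldots,t_n$, with every edge $v_jt_\ell$ present and valid. Initially all $v_j$ point to $t_1$. Deactivating $t_1$ triggers $n$ re-queries, after which all $v_j$ point to (say) $t_2$; deactivating $t_2$ triggers $n$ more; and so on. Deactivating all targets costs $\Theta(n^2)$ re-queries, yet $|V_i|=2n$, so your claimed $|V_i|\cdot Q_\nn(G_i)$ budget is only $\Theta(n)$ re-queries. Your sentence ``the number of such `productive' re-queries for $(v,i)$ is also $O(1)$ amortized against the single deactivation of $t'$'' is where the error sits: a single deactivation of $t'$ can be charged by \emph{many} vertices $v$, so the charge-to-deactivation scheme does not give an $O(1)$-per-$(v,i)$ bound.

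The fix is precisely the lazy variant you considered and rejected. On a union query at $v$, issue a subquery to $\ds_{\iota(v)}$; if it returns $\perp$, advance $\iota(v)$ and repeat; otherwise return the edge. You rejected this because ``that would charge re-queries to union queries, violating the $O(1)$-plus-one-subquery bound''—but it does not. Every subquery that returns $\perp$ causes a permanent advancement of $\iota(v)$, and the total number of advancements over the whole run is $\sum_v |\{i:v\in V_i\}|=\sum_i |V_i|$. Those $\perp$-subqueries, totalling at most $\sum_i |V_i|\cdot Q_\nn(G_i)$, are charged to the \emph{total update time} (this is an amortized bound, and the definition of $T_\nn$ allows it). Each union query then incurs at most one additional subquery—the final one that either succeeds or confirms $\iota(v)=\infty$—giving $Q_\nn\le \max_i Q_\nn(G_i)+O(1)$. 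This is exactly the paper's argument.
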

\subsection{Graph closest pair data structure}\label{s:graph_closest}
Let $G=(V,E)$ be as in Section~\ref{s:graph-near}. Let $S,T\subseteq V$
and let $\alpha:S\to \mathbb{R}$ and $\beta:T\to \mathbb{R}$ be vertex weights.
A \emph{graph closest pair data structure} on $G$ explicitly maintains an edge $st=e^*\in E\cap (S\times T)$ (if exists) such
that
\begin{equation*}
  d(e^*):=\wei(e^*)+\alpha(s)+\beta(t)
\end{equation*}
is minimized.
After preprocessing $G$ (possibly given implicitly), it has the following interface.
\begin{enumerate}[itemsep=-2pt]
  \item \emph{Initialization} with specified sets $S,T\subseteq V$, and mappings $\alpha,\beta$.
  \item \emph{Activation} of $s\in V\setminus S$ so that $s$ gets inserted into $S$ (with a given value $\alpha(s)$).
  \item \emph{Extraction} of $t\in T$, so that $t$ gets removed from $T$.
\end{enumerate}
The \emph{total update time} of the graph closest pair data structure is defined
as the total time needed for initialization and processing an arbitrary sequence of
(at most $n-|S_0|$, where $S_0$ is the initial set $S$) activations and (at most $|T_0|$, where $T_0$ is the initial set $T$)
extractions. 

We denote by $T_\cp(H)=\Omega(|V(H)|)$ the total update time of a graph closest pair data structure on an
appropriately preprocessed graph $H$.

Below we state some useful observations about the above closest pair abstraction.
\begin{fact}\label{l:ssp-dummy-clo}
  For a single-edge digraph $G$, there exists a graph closest pair data structure
  with total update time $T_\cp(G)=O(1)$.
\end{fact}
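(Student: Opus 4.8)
The statement to prove is Fact~\ref{l:ssp-dummy-clo}: for a single-edge digraph $G$, there exists a graph closest pair data structure with total update time $T_\cp(G)=O(1)$.

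The plan is to simply observe that a single-edge graph $G=(\{u,v\},\{e\})$ with $e=uv$ admits an essentially trivial implementation of the closest pair interface, because there is only one candidate edge $e^*$, namely $e$ itself, and it is eligible to be output precisely when $u\in S$ and $v\in T$. Concretely, I would maintain two boolean flags recording whether $u$ is currently in $S$ and whether $v$ is currently in $T$, together with the stored values $\alpha(u)$ (if set) and $\beta(v)$. On \emph{Initialization} with sets $S_0,T_0\subseteq\{u,v\}$ and mappings $\alpha,\beta$, we read off the two flags and the two relevant values in $O(1)$ time; the maintained closest pair is $e$ with key $\wei(e)+\alpha(u)+\beta(v)$ if both flags are set, and ``none'' otherwise. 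On \emph{Activation} of a vertex $x\in\{u,v\}\setminus S$: if $x=u$ we set the flag for $u\in S$ and record $\alpha(u)$, then recompute the (at most one) maintained edge; if $x=v$ the activation is a no-op with respect to $S$ since $v\ne u$ can only be the head, but by the interface it is still allowed and costs $O(1)$. On \emph{Extraction} of $t\in T$, i.e. $t\in\{u,v\}\cap T$: if $t=v$ we clear the flag for $v\in T$ and set the maintained pair to ``none''; if $t=u$ it is again an $O(1)$ bookkeeping operation. Each operation touches $O(1)$ words, and by the definition of total update time there are at most $n-|S_0|=O(1)$ activations and at most $|T_0|=O(1)$ extractions (here $n=2$), so the total update time, including initialization, is $O(1)$; the explicitly maintained edge is kept correct after every operation by the direct case analysis above.

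There is essentially no obstacle here --- the only thing to be slightly careful about is to honor the interface faithfully even for the ``degenerate'' activations/extractions involving the wrong endpoint (activating $v$, extracting $u$), and to note that the bound $T_\cp(H)=\Omega(|V(H)|)$ required of the abstraction is met since $|V(G)|=2=O(1)$. I would therefore present this as a one-paragraph argument: describe the $O(1)$-word state, specify how each of the three operations updates it and recomputes the unique candidate edge $e$ in $O(1)$ time, and conclude that the total update time is $O(1)$. This Fact is the atomic base case that, combined with the composition lemmas for graph closest pair data structures (analogous to Lemma~\ref{l:nn-sum} for near neighbor structures), will later let us build closest pair structures for more complex graphs edge by edge.
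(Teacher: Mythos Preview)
Your proposal is correct and is exactly the trivial implementation the paper has in mind; indeed, the paper states this as a \emph{Fact} without proof, so your explicit description of the $O(1)$-word state and constant-time handling of each operation is precisely what is being asserted.
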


\begin{restatable}{lemma}{dijkstraadd}\label{l:dijkstra-add}
  {\normalfont\cite{ChanE01}} Given a feasible price function $p$ and a preprocessed graph closest pair data structure on $G$,
  the 
  distances $\dist_{G}(s,\cdot)$ from any source $s$ can be computed in $\Ot\left(T_\cp(G)\right)$ time.
\end{restatable}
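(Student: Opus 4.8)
The plan is to run Dijkstra's algorithm ``abstractly'', using the graph closest pair data structure to repeatedly extract the next vertex to settle, rather than an explicit priority queue over adjacency lists. First I would set up the standard potential transformation: since $p$ is a feasible price function, every edge $uv$ has non-negative reduced weight $\wei_p(uv)=\wei(uv)-p(u)+p(v)\ge 0$, and $\dist_G(s,v)=\dist_{G,\wei_p}(s,v)+p(v)-p(s)$, so it suffices to compute reduced-weight distances and then undo the shift in $O(|V|)$ time. Working with $\wei_p$, all edge weights are non-negative, so Dijkstra is correct.

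The core step is to simulate Dijkstra with the closest pair structure. I would instantiate it with $S=\{s\}$, $T=V\setminus\{s\}$, $\alpha(s)=0$ (and we will maintain $\alpha(v)$ to be the tentative reduced-weight distance of each settled vertex $v$), and $\beta(v)=0$ for all $v\in T$; the weight function fed to the data structure is $\wei_p$. The invariant is that $S$ is exactly the set of already-settled vertices with $\alpha(v)=\dist_{G,\wei_p}(s,v)$, and the maintained minimizer $e^*=st$ has $d(e^*)=\wei_p(e^*)+\alpha(s)+\beta(t)=\wei_p(st)+\dist_{G,\wei_p}(s,s)$ equal to the smallest tentative distance of any unsettled vertex reachable in one hop from $S$ — which is precisely the quantity Dijkstra's algorithm next needs. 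So in each of the $\le n-1$ rounds I read off $e^*=st$, conclude $\dist_{G,\wei_p}(s,t)=d(e^*)$, perform an \emph{extraction} of $t$ from $T$, and then an \emph{activation} of $t$ into $S$ with value $\alpha(t):=d(e^*)$; the data structure then automatically updates its maintained minimizer to account for edges leaving $t$. Once $T$ becomes empty (or the maintained edge is $\perp$, meaning the remaining vertices are unreachable and get distance $\infty$), we are done. The total work is one initialization plus at most $n-1$ activations and $n-1$ extractions, i.e. $O(T_\cp(G))$ operations against the abstract structure, hence $\Ot(T_\cp(G))$ time overall, and the final $O(|V|)$ pass to convert reduced distances back via $\dist_G(s,v)=\alpha(v)+p(v)-p(s)$ is absorbed since $T_\cp(G)=\Omega(|V(G)|)$.

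The main obstacle — and the place where care is needed — is verifying the Dijkstra invariant against the precise semantics of the closest pair structure, in particular that the structure's minimizer, taken over \emph{all} edges from $S$ to $T$ rather than only over the ``frontier'' edges Dijkstra would relax, still yields the correct next vertex. This works because with non-negative $\wei_p$ and $\alpha$ equal to true distances on $S$, the quantity $\wei_p(st)+\alpha(s)$ is a valid upper bound on $\dist_{G,\wei_p}(s,t)$ for every edge $st$ with $s\in S$, so its minimum over such edges equals $\min_{t\in T}\dist_{G,\wei_p}(s,t)$ by the usual cut argument; I would spell this out as the induction step. A secondary subtlety is the handling of multiple parallel edges / the ``at most one activation per vertex'' and ``at most one extraction per vertex'' budget constraints in the definition of the total update time: each vertex is extracted exactly once (when settled) and activated at most once, so the charged sequence length is legitimate. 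Finally, one should note the structure maintains the minimizer over $E\cap(S\times T)$ only, so an edge $st$ with $t$ already settled never interferes, and edges internal to $S$ are irrelevant to correctness; this matches the abstraction exactly and needs only a sentence. These checks are routine, so I expect the proof to be short once the simulation is set up.
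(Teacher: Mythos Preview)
Your approach is correct and essentially identical to the paper's: both simulate Dijkstra by activating each newly settled vertex with its (reduced) distance as $\alpha$-value and extracting it from $T$, letting the maintained minimizer play the role of the priority-queue minimum. One small interface fix: the closest pair structure is \emph{preprocessed} with the original weights $\wei$, so you cannot ``feed it $\wei_p$'' at initialization; instead absorb the potential into the vertex offsets by taking $\beta(t):=p(t)$ and $\alpha(v):=\dist_G(s,v)-p(v)$ (equivalently, your reduced distance minus a constant), which is exactly what the paper does and gives the same key $\wei(e)+\alpha(u)+\beta(t)=\dist_G(s,u)+\wei_p(e)$.
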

\begin{proof}
  We initialize a graph closest pair data structure $\ds$ on $G$
  with $S=\{s\}$, $\alpha(s):=-p(s)$, $T=V\setminus S$, $\beta(t):=p(t)$.

  We simulate Dijkstra's algorithm on $G$ using $\ds$. Recall that Dijkstra's algorithm
  maintains a growing set $S$ (initially $S=\{s\}$) of vertices $v$ for which distances $\dist_G(s,v)$
  are known and repeatedly picks vertices $z$ minimizing
  \begin{equation}\label{eq:dijkstra-min}
    g(z):=\min_{vz=e\in E\cap (S\times (V\setminus S))}\{\dist_G(s,v)+(\wei_G(e)-p(v)+p(z))\},
  \end{equation}
  and subsequently establishes that $\dist_G(s,z)$ equals $g(z)$ and
  extends $S$ with $z$.
  We can indeed achieve the same using the data structure $\ds$ as follows. Whenever Dikstra's
  algorithm establishes some distance $\dist_G(s,a)$,
  we activate $a$ in $\ds$ with $\alpha(a):=\dist_G(s,a)-p(a)$.
  For finding next $z\in T$ minimizing $g(z)$, observe that the edge $e^*=vz$
  maintained by $\ds$ is the one minimizing $g(z)$ over the set $T$.
  To maintain the invariant that $T=V\setminus S$ in $\ds$, we simply extract
  $z$ from $\ds$ after activating $z$ in $S$.
  
  Clearly, such an implementation of Dijkstra's algorithm runs
  in $O(n)$ time plus the total update time of $\ds$, which is $\Omega(n)$.
\end{proof}

\begin{restatable}{lemma}{sspsum}\label{l:ssp-sum}
  Let $G_1=(V_1,E_1),\ldots,G_k=(V_k,E_k)$ be weighted digraphs with preprocessed closest
  pair
  data structures.
  There exists a graph closest pair data structure for $\bigcup_{i=1}^k G_i$ with no further
  preprocessing required 
  and total update time
  \begin{equation*}
    T_\cp\left(\bigcup_{i=1}^k G_i\right)=\sum_{i=1}^k T_\cp(G_i)+\sum_{i=1}^k O(|V_i|\log{k}).
  \end{equation*}
\end{restatable}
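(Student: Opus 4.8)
The plan is to maintain, for the union graph $G := \bigcup_{i=1}^k G_i$, one closest pair structure $\ds_i$ per subgraph $G_i$, together with a global priority queue that tracks the current best candidate edge $e^*$ across all $\ds_i$. The sets $S, T$ and the weight functions $\alpha, \beta$ for $G$ restrict to each $V_i$ in the obvious way: on initialization with $S, T, \alpha, \beta$, I would initialize each $\ds_i$ with $S \cap V_i$, $T \cap V_i$, and $\alpha|_{S \cap V_i}$, $\beta|_{T \cap V_i}$. Each $\ds_i$ then maintains its own minimizer $e_i^*$ over $E_i \cap (S \times T)$, and the minimizer $e^*$ for $G$ is simply the $e_i^*$ of smallest $d$-value, since $E = \bigsqcup_i E_i$ (vertices may be shared, but we may treat the edge sets as disjoint; if an edge happens to lie in several $G_i$ with different weights we just keep all copies). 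To find $e^*$ quickly I keep the values $d(e_i^*)$ in a binary heap keyed by subgraph index; the minimum of this heap is $d(e^*)$.

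The operations are then routed as follows. An \emph{activation} of $s$ with value $\alpha(s)$: for every $i$ with $s \in V_i$, call activation of $s$ in $\ds_i$, then update the heap entry for that $i$ with the new $d(e_i^*)$. An \emph{extraction} of $t$: for every $i$ with $t \in V_i$, call extraction of $t$ in $\ds_i$ and refresh the heap entry for $i$. A vertex of $G$ can belong to several $V_i$, so one global operation may trigger one local operation in each of those structures plus one heap update each. Charging: across the whole sequence, the number of (local activation or extraction) calls made to $\ds_i$ is $O(|V_i|)$ — at most one activation and one extraction per vertex of $V_i$ — so the total work inside the $\ds_i$'s is $\sum_i T_\cp(G_i)$, which already accounts for initialization as well. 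Each such local call is accompanied by one heap operation of cost $O(\log k)$, giving the extra $\sum_i O(|V_i| \log k)$ term; initialization itself contributes $k$ heap insertions, i.e. $O(k \log k) = \sum_i O(\log k) \cdot O(k/k)$, absorbed into the same bound since $k \le \sum_i |V_i|$ (assuming each $V_i$ nonempty, which we may). Summing, the total update time is $\sum_{i=1}^k T_\cp(G_i) + \sum_{i=1}^k O(|V_i|\log k)$, as claimed. The structure explicitly maintains $e^*$ at all times as the top of the heap, so the "explicitly maintains" requirement is met with no query cost.

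The only subtle point — the one I would treat as the main obstacle — is bookkeeping the vertex multiplicities correctly. Because the $V_i$ overlap, a single global activation/extraction must be broadcast to all relevant $\ds_i$, and the accounting "$O(|V_i|)$ local operations on $\ds_i$" must be justified per-structure rather than globally; the point is that the \emph{total} number of local operations is $\sum_i O(|V_i|)$, not $O(|V|)$, which is exactly why the bound is stated as a sum over $i$. One should also check the degenerate cases (a vertex activated in $G$ that lies in no $V_i$, an edge present in multiple $G_i$) but these are handled by the convention that edge sets are disjoint and that operations on structures not containing the vertex are no-ops. No further preprocessing of $G$ is needed because each $\ds_i$ is assumed already preprocessed and the heap is built on the fly during initialization.
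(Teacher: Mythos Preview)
Your proposal is correct and essentially identical to the paper's proof: maintain one closest pair data structure $\ds_i$ per $G_i$ with $S_i=S\cap V_i$, $T_i=T\cap V_i$, broadcast each global activation/extraction to the relevant $\ds_i$'s, and keep a priority queue over the indices $i$ keyed by $d(e_i^*)$ so the global minimizer is always at the top. The accounting is the same as well---at most $O(|V_i|)$ local operations (hence heap updates) per structure, giving the claimed $\sum_i T_\cp(G_i)+\sum_i O(|V_i|\log k)$ bound.
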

\subsection{Negative cycle detection}
Having defined the data structural abstractions, we are now ready to state
the following result giving an upper bound on the running time of a negative
cycle detection algorithm in terms of the performance characteristics
of the assumed graph data structures.

\begin{restatable}{theorem}{tnegcyc}\label{t:negcyc}
  Let $G$ be a graph with integer edge weights in $[-C,\infty)$.
  Suppose closest pair and near neighbor data structures have been preprocessed for $G$.
  Then, one can test whether~$G$
  contains a negative cycle (and possibly find one) in
  \begin{equation*}
    \Ot\left(\sqrt{n}\cdot (T_\cp(G)+T_\nn(G)+n\cdot Q_\nn(G))\log{C}\right)
  \end{equation*}
  time. If no negative cycle is detected, a feasible price function $p$ of $G$ is returned.
\end{restatable}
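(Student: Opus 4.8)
The plan is to reduce negative cycle detection to approximate min-cost circulation via a scaling argument, in the spirit of Gabow's observation \cite{Gabow85} that negative cycle detection is a special case of the transportation/circulation problem, and then solve each scale using the abstract data structures. First I would transform the weights. Given integer weights in $[-C,\infty)$, after $O(\log C)$ scaling phases where we handle one bit of precision at a time, we maintain a price function $p$ such that the reduced weights $\wei_p(e)$ are all at least $-1$ (or some small constant), except possibly on a set of edges we can afford to fix up; a negative cycle exists iff the scaling fails to terminate, i.e., we cannot drive the reduced weights non-negative. The standard trick is that in each scaling step we double the current prices, which makes the reduced weights lie in $\{-1\} \cup \mathbb{Z}_{\geq -1}$ after halving, and we must ``repair'' the $-1$ edges. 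This is exactly a min-cost circulation subproblem: we build an auxiliary network where we want to route flow to cancel the negative reduced-weight edges at minimum cost, and the optimum is bounded iff there is no negative cycle.

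Next I would invoke the circulation framework promised as Theorem~\ref{t:circ} (referenced in the overview). The key point is that in the auxiliary network, the ``interesting'' capacitated part has only $O(n)$ total vertex capacity, while the bulk of the edges — the uncapacitated ones encoding the original graph $G$ — are exactly the edges on which we have closest pair and near neighbor data structures. The circulation algorithm (an extension of \cite{GoldbergHKT17, KarczmarzS19}) runs in $\Ot(\sqrt{n})$ rounds — this $\sqrt{n}$ is the standard bound on the number of phases for unit-capacity-style min-cost flow — and each round performs one Dijkstra-like computation (costing $\Ot(T_\cp(G))$ by Lemma~\ref{l:dijkstra-add}) plus one blocking-flow-like computation, which is where the near neighbor structure is used: each blocking flow augmentation issues near neighbor queries and deactivations, costing $\Ot(T_\nn(G) + n \cdot Q_\nn(G))$ amortized over a round. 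Multiplying the $\Ot(\sqrt{n})$ rounds by the per-round cost $\Ot(T_\cp(G)+T_\nn(G)+n\cdot Q_\nn(G))$ and by the $O(\log C)$ scaling phases gives the claimed bound. If the circulation optimum is ever unbounded (detected because the algorithm fails to make progress / finds an augmenting structure of negative cost that cannot be saturated), we extract a negative cycle from the residual graph; otherwise, accumulating the price adjustments across all scales yields the feasible price function $p$.

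The main obstacle — and the part deserving the most care — is the interface between the scaling framework and the data-structural abstractions: showing that a single scaling step's repair problem genuinely fits the ``small total vertex capacity, structured uncapacitated edges'' template, and that the blocking-flow procedure can be implemented using only near neighbor queries and deactivations (each vertex of $G$ deactivated once per round, each query charged appropriately) rather than touching all $|E(G)|$ edges explicitly. In particular one must argue that the vertex-splitting transformation used to model the circulation constraints composes with the closest pair / near neighbor structures via Lemmas~\ref{l:nn-sum} and~\ref{l:ssp-sum} without blowing up $T_\cp$ or $T_\nn$ by more than polylogarithmic factors, and that the total number of near neighbor queries over a round is $O(n)$ rather than larger. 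The scaling correctness itself (that $O(\log C)$ phases suffice and that bounded optima certify no negative cycle) is classical and I would cite \cite{GabowT89, GoldbergHKT17, KarczmarzS19} for it; the genuinely new bookkeeping is the per-round data-structure accounting.
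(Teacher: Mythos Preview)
Your proposal captures the essential ideas and is close to the paper's approach, but two points deserve correction.

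First, a structural clarification: the paper does \emph{not} do outer bit-scaling. It performs Gabow's reduction \emph{once}---vertex-split $G$, add a unit-capacity cost-$0$ edge $u_\tin u_\tout$ for each $u$, obtaining a network $G'$ with $\Lambda=n$ and $m_\fin=O(n)$---and then invokes Theorem~\ref{t:circ} a single time with $\delta=\Theta(1/n)$. The $\log C$ factor comes from the successive-approximation loop \emph{inside} Theorem~\ref{t:circ}, not from an outer ``one bit at a time'' loop. Your bit-scaling variant could be made to work as well, but mixing the two pictures (outer bit-scaling that then ``invokes Theorem~\ref{t:circ}'') muddles the accounting. Also, the vertex-splitting compatibility you need is Observation~\ref{l:nn-split} and Lemma~\ref{l:ssp-split}; Lemmas~\ref{l:nn-sum} and~\ref{l:ssp-sum} are for unions.

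Second, and more substantively, ``accumulating the price adjustments across all scales yields the feasible price function'' is a genuine gap. Theorem~\ref{t:circ} returns a price function $\pi$ that is only approximately feasible on the \emph{residual network} $G'_{f_\delta}$ (namely $c_\pi(e)\geq -\delta/\Lambda$ for $e\in E(G'_{f_\delta})$), not on $G$. The paper spends real work here: it first argues, via a case analysis on whether $f_\delta(u_\tout v_\tin)$ and $f_\delta(v_\tin v_\tout)$ are $0$ or $1$ (using that the integral circulation $f_\delta$ decomposes into zero-cost cycles), that $p(v):=\pi(v_\tin)$ satisfies $\wei_p(e)\geq -4\delta$ for every original edge $e$ of $G$. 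Then it runs one additional Dijkstra via Lemma~\ref{l:dijkstra-add} on $G$ with all weights shifted up by $4\delta$, and rounds the resulting distances down to the nearest integer; with $\delta=1/(4n)$ this recovers exact integer distances in $G$ and hence an exact feasible price function. This post-processing is where the choice $\delta=\Theta(1/n)$ (rather than $\delta=\Theta(1)$) is forced, and it is not covered by ``accumulating adjustments.''
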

We give an overview of the techniques behind the algorithm of Theorem~\ref{t:negcyc} and compare it with
the previous scaling approaches in
Section~\ref{s:circ-overview}. A complete proof can be found in Section~\ref{s:circ0}.

\section{Technical overview}\label{s:overview}
In this section we give an overview of the techniques used in (1) our results for
planar graphs regarding negatively-weighted SSSP and max $s,t$-flow oracles 
and (2) our min-cost circulation framework for general graphs with efficient
graph near neighbor and closest pair data structures.

\subsection{Negative cycle detection in DDGs and planar digraphs}\label{s:overview-negcyc}

All the previous near-optimal algorithms for negative cycle detection in planar digraphs~\cite{FR06, KleinMW10, MozesW10}
rely on an efficient
implementation of the Bellman-Ford algorithm on so-called \emph{dense distance graphs}
built upon \emph{$r$-divisions with few holes}.

Let us now introduce these and some other standard planar graph tools in more detail.
  Once again, an \emph{$r$-division}~\cite{DBLP:journals/siamcomp/Frederickson87} $\rdiv$ of a planar graph, for $r \in [1,n]$,
is a decomposition of a planar graph~$G$ into $O(n/r)$ \emph{pieces} $P\subseteq G$ of size $O(r)$ such that each piece $P$ shares $O(\sqrt{r})$ vertices with other
pieces. The shared vertices of a piece $P$, denoted $\bnd{P}$, are called its \emph{boundary vertices}.
We denote by $\bnd{\rdiv}$ the set $\bigcup_{P\in\rdiv}\bnd{P}$.
If additionally all pieces of $\rdiv$ are connected, and the boundary vertices of each piece $P\in\rdiv$ are distributed
among $O(1)$ faces of $P$ (containing boundary vertices exclusively, also called the holes\footnote{This definition is slightly more general than usually. Namely, the definition of an $r$-division does not assume a fixed embedding of the entire $G$; it only assumes some fixed embeddings of individual pieces.}
of $P$), we call $\rdiv$ an \emph{$r$-division with few holes}.
\begin{theorem}[\cite{KleinMS13}]\label{t:rdiv}
  Let $G$ be a simple triangulated connected plane graph with $n$ vertices.
  For any $r\in [1,n]$, an $r$-division with few holes of $G$
  can be computed in $O(n)$ time.
\end{theorem}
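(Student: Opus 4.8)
The plan is to build $\rdiv$ by a recursive decomposition of $G$ along small cycle separators, stopping when pieces have size $O(r)$, and then to post-process each piece so that it has $O(\sqrt{r})$ boundary vertices lying on $O(1)$ holes. I would assume as a black box the classical planar separator theorem together with its linear-time cycle-separator implementation: given an embedded connected planar graph $H$ on $m$ vertices (with faces of bounded size, which triangulation guarantees), one can compute in $O(m)$ time a simple cycle $C$ of length $O(\sqrt{m})$ whose interior and exterior each contain at most $\tfrac{2}{3}m$ vertices of $H$. Applying this repeatedly yields a decomposition tree whose leaves are the candidate pieces; the boundary $\bnd{P}$ of a piece $P$ is the union of the separator cycles on the root-to-$P$ path that touch $P$.

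The first subgoal is to guarantee the size and boundary bounds, which is Frederickson's two-phase scheme. Phase one recurses on total size until every piece has $O(r)$ vertices; since separator cycles have length $O(\sqrt{\text{piece size}})$ and piece sizes decay geometrically down the tree, the number of pieces is $O(n/r)$ and the total boundary over all pieces is $O(n/\sqrt{r})$. Phase two repairs pieces that still meet $\omega(\sqrt{r})$ boundary vertices: inside such a $P$ one applies the separator routine to a vertex-weighted version of $P$ in which only boundary vertices carry weight, splitting $P$ until every sub-piece meets $O(\sqrt{r})$ boundary vertices while its size stays $O(r)$. Carried out naively both phases cost $O(n\log n)$, since each vertex is touched once per level of the tree; to remove the $\log$ factor I would bootstrap in the style of a separator hierarchy — first compute the decomposition for a coarse parameter (say $r' = n/\polylog n$), then recurse \emph{inside} each resulting $O(r')$-size piece to refine down to $r$ — so that the meta-recursion has only $O(\log\log n)$ levels (in fact a constant number suffice once pieces are polylogarithmic) and the per-level cost telescopes to $O(n)$. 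Making this bootstrapping preserve all of the above invariants across meta-levels is the most delicate bookkeeping in the argument.

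The second subgoal is the few-holes property: a priori a piece produced above can have $\Theta(\log n)$ holes, one per separator cycle enclosing it. The natural fix is to merge holes: in a piece $P$ with holes $f_1,f_2,\dots$, take a short path $\pi$ inside $P$ from a vertex of $f_1$ to a vertex of $f_2$, cut $P$ open along $\pi$ (duplicating the vertices of $\pi$), which merges $f_1$ and $f_2$ into one face at the cost of $|\pi|$ new boundary vertices; iterating brings the count down to $O(1)$. The delicate point — and where I expect the main difficulty — is bounding the total length of all cutting paths, so that the new boundary vertices still sum to $O(\sqrt{r})$ per piece and $O(n/\sqrt{r})$ overall and the shortest-path computations together cost $O(n)$. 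I would arrange this by being careful about which cycle separators are chosen in the first place: route each separator so that it passes through (hence immediately coalesces with) an already-existing hole of the piece being split, maintaining the invariant ``every piece has $O(1)$ holes'' inductively throughout the decomposition rather than restoring it only at the end, which also obviates a separate merging phase. Finally I would check that the simplicity, connectivity, and triangulation hypotheses are used only to make the separator subroutine and the embedding surgery clean, and that the object produced is exactly an $r$-division with few holes in the sense defined above.
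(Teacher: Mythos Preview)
The paper does not prove this theorem: it is stated with a citation to \cite{KleinMS13} and used as a black box, so there is no in-paper proof to compare against. Your sketch is broadly in the spirit of the cited work, but since the present paper offers no argument of its own here, the appropriate ``proof'' is simply the reference.

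That said, a few remarks on your outline relative to the actual \cite{KleinMS13} construction. Your second approach to the few-holes property---maintaining the $O(1)$-holes invariant inductively by routing each new cycle separator through an existing hole---is indeed the right idea and is what \cite{KleinMS13} do; the first approach you describe (post-hoc hole merging by cutting along short paths) is not how they proceed and would, as you yourself flag, require a nontrivial bound on total cutting-path length that you do not supply. Your description of the $O(n)$-time bootstrapping is too loose to be a proof: \cite{KleinMS13} do not use a $\log\log n$-level meta-recursion but rather build the entire recursive decomposition tree in linear time via a careful amortization that precomputes, in a single linear-time pass, the auxiliary structures (BFS levels, spanning trees) needed to find cycle separators at every level without re-touching vertices; the geometric decay of piece sizes alone gives only $O(n\log n)$, and your bootstrapping paragraph does not actually close this gap. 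If you want a self-contained argument you would need to spell out that amortization, which is the technical heart of \cite{KleinMS13}.
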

A dense distance graph $\DDG(\rdiv)$ built upon $\rdiv$ is obtained by unioning, for each piece $P\in\rdiv$,
a complete weighted graph $\DDG(P)$ on $\bnd{P}$ encoding distances between vertices $\bnd{P}$ in $P$.
The graph $\DDG(\rdiv)$ has $O(n/\sqrt{r})$ vertices and $O(n)$ edges. Dense distance graphs are typically constructed using the following \emph{multiple-source shortest paths} data structure.

\begin{theorem}[MSSP~\cite{CabelloCE13, Klein05}]\label{t:mssp}
  Let $G=(V,E)$ be a plane digraph with a distinguished face $f$.
  Suppose a feasible price function on $G$ is given.
  Then, in $O(n\log{n})$ time one can construct a data
  structure that can compute $\dist_G(s,t)$ for any query vertices
  $s\in V(f)$, $t\in V$ in $O(\log{n})$ time.
\end{theorem}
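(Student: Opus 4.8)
The plan is to prove Theorem~\ref{t:mssp} via Klein's ``rotating source'' technique: conceptually move the source around the distinguished face $f$ and maintain the shortest-path tree as it evolves, recording all its intermediate states persistently.

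\textbf{Step 1 (reduce to non-negative weights).} First I would use the given feasible price function $p$ to replace every edge cost by its reduced cost $c_p$, which is non-negative by feasibility. For any $s\to t$ path $P$ one has $c_p(P)=c(P)-p(s)+p(t)$, so $\dist_G(s,t)=\dist_{G,p}(s,t)+p(s)-p(t)$; since every query source lies on $f$ and $p$ is known at all vertices, it suffices to build a structure returning reduced distances $\dist_{G,p}(s,\cdot)$ and then correct each answer by the additive term $p(s)-p(t)$ in $O(1)$ time. Because all reduced costs are non-negative, shortest-path trees are well-defined and Dijkstra-style arguments apply in the sequel.

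\textbf{Step 2 (rotate the source).} Let $s_0,s_1,\dots,s_{k-1}$ be the vertices of $f$ in cyclic order along $f$. I would first compute the shortest-path tree $T_0$ rooted at $s_0$ (with reduced costs), e.g.\ in $O(n)$ time via the linear-time planar SSSP algorithm of~\cite{HenzingerKRS97}, or in $O(n\log n)$ with a heap-based Dijkstra. Then imagine sliding the source continuously along the boundary of $f$ from $s_0$ to $s_1$ to $\dots$ and back to $s_0$. As the source moves, the shortest-path tree changes only through a sequence of \emph{pivots}: a non-tree edge $uv$ becomes tight, replaces the current tree edge entering $v$, and subtrees are re-attached. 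The crucial combinatorial claim --- the heart of the argument --- is that the total number of pivots occurring during one full rotation of the source around $f$ is $O(n)$. This relies on planarity: shortest paths from sources lying on a common face to a common target are pairwise non-crossing, which forces a monotone (``sweeping'') structure on the set of edges that can enter or re-enter the tree as the source advances, and an amortization over $f$ then yields the linear bound.

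\textbf{Step 3 (dynamic trees and persistence).} I would maintain the tree $T$ under pivots using a link-cut tree augmented so that (i) the reduced distance from the current root to any vertex $t$ can be read off in $O(\log n)$ time, and (ii) the next pivot --- the non-tree edge minimizing the appropriate slack, which by the non-crossing structure stays within a small ``active'' region near $f$ --- can be located and performed in $O(\log n)$ amortized time. Over the $O(n)$ pivots this costs $O(n\log n)$ total. To serve all $k$ sources I would make the structure \emph{persistent} (the update sequence has bounded local complexity, so a persistent link-cut tree incurs only $O(\log n)$ overhead), recording the tree version at each of the $k$ moments when the source coincides with a vertex $s_i\in V(f)$. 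A query $(s_i,t)$ then navigates to version $i$, reads the stored reduced distance of $t$ in $O(\log n)$ time, and adds $p(s_i)-p(t)$ to obtain $\dist_G(s_i,t)$.

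\textbf{Main obstacle.} The hard part will be Step~2: proving the $O(n)$ bound on the number of pivots during a full rotation and the accompanying fact that ``only pivots occur'' and the active region of candidate pivots remains small --- both hinge delicately on the planar non-crossing property of shortest paths from co-facial sources --- together with the augmentation of link-cut trees that simultaneously locates each successive pivot in polylogarithmic time and supports persistence. This is precisely what is established in~\cite{Klein05} and generalized in~\cite{CabelloCE13}.
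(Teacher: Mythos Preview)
The paper does not prove Theorem~\ref{t:mssp}; it is stated as a cited black-box result from~\cite{Klein05, CabelloCE13} and used as a tool throughout. Your proposal is a faithful high-level sketch of Klein's original MSSP construction (reduction to non-negative weights via the price function, rotating the source around~$f$, the $O(n)$ total-pivot bound from planar non-crossing of shortest paths, and a persistent link-cut tree to record all versions), so there is nothing to compare against in the paper itself---your write-up simply expands the citation.
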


\newcommand{\dc}{\ensuremath{\text{DC}}}

A naive implementation of the Bellman-Ford algorithm on $\DDG(\rdiv)$ runs in
$O(n^2/\sqrt{r})$ time. \cite{FR06, KleinMW10, MozesW10} showed how to implement the Bellman-Ford algorithm
on $\DDG(\rdiv)$ in $\Ot(|V(\DDG(\rdiv))|^2)=\Ot(n^2/r)$ time.
Using this implementation for $r=\Theta(n)$ or even\linebreak $r=n/\polylog{n}$, in combination with recursion,
one obtains near-linear (strongly polynomial) running time, which~\cite{MozesW10} manage to optimize to $O(n\log^2{n}/\log\log{n})$.
To obtain Theorem~\ref{t:shpath} and all the other results for planar graphs in this paper, we prove the following.
\begin{theorem}\label{t:negcyc-rdiv}
  Let $\rdiv$ be an $r$-division with few holes of a planar digraph $G$ whose
  edge weights are integers in $[-C,\infty)$.
  Given $\DDG(\rdiv)$, one can either detect a negative cycle in $\DDG(\rdiv)$ or compute
  its feasible price function in $\Ot\left(|V(\DDG(\rdiv))|^{3/2}\log{C}\right)=\Ot(n^{3/2}/r^{3/4}\cdot \log{C})$ time.
\end{theorem}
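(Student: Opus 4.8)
The plan is to apply Theorem~\ref{t:negcyc} with $G$ replaced by $\DDG(\rdiv)$, so the whole task reduces to exhibiting sufficiently efficient graph closest pair and graph near neighbor data structures on $\DDG(\rdiv)$. Write $N=|V(\DDG(\rdiv))|=O(n/\sqrt{r})$. Since $\DDG(\rdiv)=\bigcup_{P\in\rdiv}\DDG(P)$, the composition lemmas (Lemma~\ref{l:nn-sum} and Lemma~\ref{l:ssp-sum}) reduce everything further to building, for each piece $P$, a closest pair and near neighbor data structure on the single \emph{complete} weighted graph $\DDG(P)$ on the $O(\sqrt{r})$ boundary vertices of $P$, where the edge weights are the pairwise distances inside $P$. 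The edges of $\DDG(P)$ are \emph{not} given explicitly; they are accessed through the MSSP structure of Theorem~\ref{t:mssp} (built once per piece in $O(r\log r)$ time, and $O(n\log n)$ total over all pieces — this is the preprocessing cost $P_\cp,P_\nn$ that Theorem~\ref{t:negcyc} allows to be shared). The key point, exploited in all planar DDG algorithms since~\cite{FR06}, is that because the boundary of $P$ lies on $O(1)$ holes, the distance matrix of $\DDG(P)$ is a constant union of \emph{Monge} (staircase-Monge) matrices, and with the MSSP structure one can answer the relevant range-minimum / SMAWK-style subqueries in $\Ot(1)$ time each.

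Concretely, for the \textbf{closest pair} structure on $\DDG(P)$ we need to maintain $\min\{\wei(st)+\alpha(s)+\beta(t): s\in S, t\in T, st\in E\}$ under activations of $s$'s into $S$ and extractions of $t$'s from $T$. Since on each hole the relevant submatrix $\wei(st)$ is Monge, and adding $\alpha(s)$ to rows and $\beta(t)$ to columns preserves the Monge property, this is exactly the setting handled by the FR-style ``Monge heap'' / online-Monge-matrix machinery (as used in the $\Ot(N^2)$ Bellman–Ford of~\cite{FR06, KleinMW10, MozesW10}): one can support each activation and each extraction in $\Ot(1)$ amortized time while keeping the global minimum available, giving $T_\cp(\DDG(P))=\Ot(|\bnd P|)=\Ot(\sqrt r)$. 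Summing over the $O(n/r)$ pieces via Lemma~\ref{l:ssp-sum} yields $T_\cp(\DDG(\rdiv))=\Ot(n/\sqrt r)=\Ot(N)$. For the \textbf{near neighbor} structure we need, given $v$ and thresholds $\tau$, an incident edge $vt$ with $t\in T$ and $\wei(vt)+p(t)<\tau(v)$, under deactivations of $t$'s. Again using Monge structure: for a fixed row $v$ the function $t\mapsto \wei(vt)+p(t)$ restricted to a hole is amenable to the same staircase-Monge search, so one can locate the column minimizer over the current active set $T$ in $\Ot(1)$ time per query, and deactivations are handled by the same online-Monge data structure, giving $T_\nn(\DDG(P))=\Ot(\sqrt r)$ and $Q_\nn(\DDG(P))=\Ot(1)$; composing via Lemma~\ref{l:nn-sum} gives $T_\nn(\DDG(\rdiv))=\Ot(N)$ and $Q_\nn(\DDG(\rdiv))=\Ot(1)$.

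Plugging $T_\cp(\DDG(\rdiv))=\Ot(N)$, $T_\nn(\DDG(\rdiv))=\Ot(N)$, $Q_\nn(\DDG(\rdiv))=\Ot(1)$ into Theorem~\ref{t:negcyc} gives running time
\begin{equation*}
  \Ot\!\left(\sqrt N\cdot\bigl(T_\cp+T_\nn+N\cdot Q_\nn\bigr)\log C\right)=\Ot\!\left(\sqrt N\cdot N\cdot\log C\right)=\Ot\!\left(N^{3/2}\log C\right),
\end{equation*}
and substituting $N=O(n/\sqrt r)$ yields $\Ot(n^{3/2}/r^{3/4}\cdot\log C)$, and Theorem~\ref{t:negcyc} either reports a negative cycle or returns a feasible price function of $\DDG(\rdiv)$, as claimed. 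One technical wrinkle is that the number of vertices $n$ in the statement refers to the original graph $G$, not to $\DDG(\rdiv)$; but since $\DDG(\rdiv)$ has $O(n)$ edges with weights whose absolute values are $O(nC)$ (distances in $G$), and the algorithm of Theorem~\ref{t:negcyc} runs in time that depends only on $N$, the edge count, and $\log$ of the maximum weight magnitude, replacing $C$ by $nC$ only changes the bound by a $\log n$ factor absorbed into $\Ot(\cdot)$.

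The main obstacle I expect is \textbf{not} the asymptotic bookkeeping above but the verification that the abstract closest pair / near neighbor interfaces of Section~\ref{s:abstractions} — with their specific activation/extraction/deactivation orders and their requirement that a feasible price function be supplied — can genuinely be served by the online Monge-matrix machinery of~\cite{FR06, KleinMW10} at $\Ot(1)$ per operation. In particular one must check (i) that the sequence of operations generated by the circulation algorithm underlying Theorem~\ref{t:negcyc} is compatible with the amortization of the Monge heaps (e.g.\ that the same column is not extracted and re-inserted, and that activations are monotone in the sense the Monge structure needs), and (ii) that the staircase-Monge decomposition into $O(1)$ pieces-per-hole behaves correctly under the row/column shifts by $\alpha,\beta,p$. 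These are exactly the points where one has to invoke, and possibly slightly adapt, the existing planar-DDG toolkit; the rest is routine composition via Lemmas~\ref{l:nn-sum} and~\ref{l:ssp-sum}.
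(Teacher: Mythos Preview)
Your proposal is correct and follows essentially the same route as the paper: reduce to Theorem~\ref{t:negcyc}, build the per-piece closest pair and near neighbor structures on $\DDG(P)$ from the Monge decomposition of the boundary-distance matrix (using the Monge heap of~\cite{FR06} for closest pair and the subrow-minimum structure of~\cite{MozesNW18} for near neighbor), and then compose across pieces via Lemmas~\ref{l:nn-sum} and~\ref{l:ssp-sum}. Your anticipated obstacle~(i) is in fact a non-issue because the abstract interfaces in Section~\ref{s:abstractions} are deliberately one-directional ($S$ only grows, $T$ only shrinks), which is exactly the regime the Monge heap supports; the paper spells out the verification you sketch in Lemma~\ref{l:ddg-ds}.
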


\newcommand{\mmat}{\mathcal{M}}
Let us first explain how Theorem~\ref{t:negcyc} implies Theorem~\ref{t:negcyc-rdiv}.
Let $\rdiv$ be an $r$-division with few holes of $G$.
It is well-known~\cite{FR06, MozesW10, GawrychowskiK18} that for $P\in\rdiv$, $\DDG(P)$,
seen as a matrix with rows and columns $\bnd{P}$, can be expressed
as an element-wise minimum of a number of \emph{full Monge matrices}\footnote{An $n\times m$ matrix~$\mmat$ is called \emph{Monge}, if for any
rows $a\preceq b$ and columns $c\preceq d$ we have $\mmat_{a,c}+\mmat_{b,d}\leq \mmat_{a,d}+\mmat_{b,c}$.}
$\mmat_1,\ldots,\mmat_k$
with a total of $\Ot(|\bnd{P}|)$ rows and columns (counting with multiplicities).
For
$m_1\times m_2$ Monge  matrices,
\cite[Lemma 1]{MozesNW18} showed an efficient data structure that allows to deactivate columns
and supports queries for a minimum
element in a row (limited to the remaining active columns only).
The data structure has total update time $\Ot(m_1+m_2)$ and supports queries in $\Ot(1)$ time.
Such a data structure for $\mmat_i$ can be used in a trivial way
  to implement a \emph{near neighbor} data structure on a graph $G_{\mmat_i}$
  whose edges correspond to the elements of $\mmat_i$,
  so that
  $T_\nn(G_{\mmat_i})=\Ot(|V(G_{\mmat_i})|)=\Ot(|\bnd{P}|)$ and $Q_\nn(G_{\mmat_i})=\Ot(1)$.
  That data structure, combined with the \emph{Monge heap} of~\cite{FR06},
  can be used to obtain a closest pair data structure on $G_{\mmat_i}$ with
  $T_\cp(G_{\mmat_i})=\Ot(|\bnd{P}|)$.
But the graphs $\bigcup_{i=1}^k G_{\mmat_k}$ and $\DDG(P)$ have the
same distances between the vertices $\bnd{P}$,
and thus by Lemmas~\ref{l:ssp-sum}~and~\ref{l:nn-sum},
we also have $T_\cp(\DDG(P)),T_\nn(\DDG(P))\in \Ot(|\bnd{P}|)=\Ot(\sqrt{r})$,
and $Q_\nn(\DDG(P))=\Ot(1)$.
Applying Lemmas~\ref{l:ssp-sum}~and~\ref{l:nn-sum} once again, we obtain
$T_\cp(\DDG(\rdiv)),T_\nn(\DDG(\rdiv))\in \Ot(n/\sqrt{r})$ and $Q_\nn(\DDG(\rdiv))=\Ot(1)$.
By plugging these bounds in Theorem~\ref{t:negcyc}, we obtain Theorem~\ref{t:negcyc-rdiv}.

Given an efficient negative cycle detection algorithm on $r$-divisions, for negative
cycle detection in planar digraphs, we apply
the same recursive strategy as previous works~\cite{FR06, KleinMW10, MozesW10} did.
However, we reduce the graph size in the recursive calls much more aggressively.
First, an $r$-division with few holes is computed in linear time for $r=n^{8/9}$.
Then, the algorithm is run recursively on the individual pieces of $\rdiv$.
If no piece $P$ has a negative cycle fully contained in $P$, we only need
to look for negative cycles going through $\bnd{\rdiv}=\bigcup_{P\in\rdiv}\bnd{P}$.
To this end, $\DDG(\rdiv)$ is first built using the obtained per-piece feasible price functions
and the MSSP algorithm~\cite{Klein05}.
This takes $O(n\log{n})$ time.
Then, the algorithm of Theorem~\ref{t:negcyc-rdiv} is run on $\DDG(\rdiv)$,
which takes $\Ot(n^{3/2}/r^{3/4}\log{C})=\Ot(n^{5/6}\log{C})$ time.
A feasible price function of $G$ can be computed out of the price function
on $\DDG(\rdiv)$ and the individual per-piece price functions in $O(n\log{n})$ time.

The time cost $T(n)$ of this algorithm satisfies
$T(n)=O(n\log{n})+\Ot(n^{5/6}\log{C})+ \sum_{P\in \rdiv} T(|P|)$.
Intuitively, the bound $T(n)=O(n\log{n}+n\log{C})$ holds because (a) at each level of recursion, except possibly the leafmost levels,
the total sum of $\Ot(n^{5/6}\log{C})$ terms is $o(n\log {C})$,
and (b) by summing the $O(n\log{n})$ terms in subsequent levels of the recursion tree, we have:
\begin{equation*}
  n\log{n}+n\log{n^{8/9}}+n\log{n^{(8/9)^2}}+\ldots\leq n\log{n}\cdot \sum_{i=0}^{\infty}(8/9)^i=O(n\log{n}).
\end{equation*}

\subsection{Max $s,t$-flow oracles}

Our developments for max-flow oracles build upon a well-known reduction of the decision~variant of the max $s,t$-flow problem on a planar digraph
to the negative cycle detection problem.
In the recent literature~\cite{Erickson10, nussbaum2014network}, this reduction is generally attributed to an unpublished manuscript by Venkatesan (and also appears in~\cite{JohnsonV83}).
The reduction has been generalized to arbitrary vertex demands in~\cite{MillerN95}.

Assume that each edge of $G$ comes with a reverse edge of capacity~$0$.
This does not influence the amounts of flow that one can send in this graph.
Let $\lambda\geq 0$.
Pick an arbitrary $s\to t$ path $Q_{s,t}$ in $G$. 
Let $G_\lambda$ be obtained by decreasing the capacity
of each edge $e$ of $Q_{s,t}$ by $\lambda$, and increasing the capacity
of the reverse of $e$ by~$\lambda$.
\begin{restatable}{lemma}{venkatesan}\label{l:venkatesan}
  There exits an $s,t$-flow of value $\lambda$ in $G$ iff the dual
  $G_\lambda^*$ of $G_\lambda$ has no negative cycles.
\end{restatable}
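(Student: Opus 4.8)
The plan is to prove the equivalence via the standard duality between flows in a planar graph and shortest paths/distances in its dual, using the fact that $G_\lambda^*$ has no negative cycles iff it admits a feasible price function, and translating that price function back into a flow of value $\lambda$ in $G$.

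First I would set up the dual graph carefully. Embed $G$ in the plane; since each edge $e$ of $G$ carries a reverse edge $\rev{e}$ of capacity $0$ drawn alongside it, the dual $G^*$ has one vertex per face of $G$, and for each primal edge $e=uv$ there is a dual edge $e^*$ crossing it, oriented (by the usual convention) so that it goes from the face to the left of $e$ to the face to the right of $e$; the reverse primal edge $\rev{e}$ gives the oppositely-oriented dual edge. Assign to $e^*$ the weight $\wei_{G_\lambda^*}(e^*):=\text{cap}_{G_\lambda}(e)$, i.e. the (possibly modified) capacity of the corresponding primal edge in $G_\lambda$; note capacities in $G_\lambda$ can be negative only on edges of $Q_{s,t}$ or their reverses, which is exactly why negative cycles are a concern. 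The key combinatorial observation is the bijection between $s,t$-cuts in $G$ and certain cycles (or cuts) in the dual: a minimum $s,t$-cut in $G_\lambda$ corresponds to a shortest cycle in $G_\lambda^*$ separating the two faces incident to the "infinite" path, and more precisely, the value $\lambda$ is feasible iff the min-cut in $G_\lambda$ is $\ge 0$ iff there is no closed dual walk of negative total weight.

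The main steps, in order, are: (1) Show that sending $\lambda$ units of flow along $Q_{s,t}$ in $G$ and then looking for a feasible circulation in the residual is equivalent to the original feasibility question — this is where $G_\lambda$ comes from: $G_\lambda$ is (essentially) the residual network after pushing $\lambda$ along $Q_{s,t}$, and an $s,t$-flow of value $\lambda$ exists in $G$ iff $G_\lambda$ admits a feasible \emph{circulation} (a flow with zero excess everywhere respecting the capacities). (2) Invoke the Hoffman-circulation / flow-feasibility criterion in its planar-dual form: a planar network admits a feasible circulation iff its dual has no negative cycle — concretely, given a feasible price function $p$ on the vertices of $G_\lambda^*$ (which exists iff $G_\lambda^*$ has no negative cycle, by the standard fact recalled just before Fact~\ref{f:distanceto}), the reduced weights $\wei_p(e^*)=\text{cap}_{G_\lambda}(e)-p(\text{left}(e))+p(\text{right}(e))\ge 0$ translate, via $f(e):= p(\text{left}(e))-p(\text{right}(e))$ (clamped appropriately against the reverse edges), into a valid flow $0\le f(e)\le \text{cap}_{G_\lambda}(e)$; flow conservation at each primal vertex corresponds exactly to the telescoping of $p$ around the faces bounding that vertex. (3) Conversely, a feasible circulation in $G_\lambda$ yields, by integrating around the dual, a price function certifying no negative cycle in $G_\lambda^*$. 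Combining (1)–(3) gives the claimed "iff". I would also remark that the choice of $Q_{s,t}$ does not matter, since any two $s\to t$ paths differ by a circulation, which is absorbed into the price-function freedom.

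The hard part will be step (2): getting the orientation conventions and the clamping against the zero-capacity reverse edges exactly right, so that the price function on $G_\lambda^*$ produces a flow that is simultaneously nonnegative and capacity-respecting \emph{and} conserves flow at every primal vertex including $s$ and $t$ — the latter requiring that the net excess created at $s$ and destroyed at $t$ is precisely $\lambda$, which is forced by the fact that $Q_{s,t}$ was the unique path whose capacities were shifted. A secondary subtlety is handling the outer face and ensuring $G_\lambda^*$ is a genuine (connected) graph on which "no negative cycle $\Leftrightarrow$ feasible price function" applies; this is routine but must be stated. Everything else — the residual-network reformulation in step (1) and the converse direction in step (3) — is standard bookkeeping once the dual correspondence is pinned down.
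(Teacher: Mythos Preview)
Your approach is correct and follows the Miller--Naor/Hassin route: reduce $s,t$-flow feasibility to a feasible-circulation question in $G_\lambda$, then use the face-potential $\leftrightarrow$ circulation correspondence to translate ``no negative cycle in $G_\lambda^*$'' into ``feasible circulation exists in $G_\lambda$.'' The paper does not give its own proof (it defers to~\cite{Erickson10}), but the intuition it sketches takes a different path: it goes through max-flow--min-cut and characterizes directed $s,t$-cuts as precisely the simple dual cycles $C^*$ with crossing number $\pi_{Q_{s,t}}(C^*)=1$, then observes $w_{G^*_\lambda}(C^*)=u(C)-\lambda\cdot\pi_{Q_{s,t}}(C^*)$, so a negative dual cycle exists iff some $s,t$-cut has capacity below $\lambda$. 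Your argument avoids both max-flow--min-cut and crossing numbers in favor of the (equally standard) fact that every planar circulation is a face-potential difference; the paper's route, by contrast, introduces the crossing-number machinery early, which it then reuses heavily throughout Section~\ref{s:flow-oracle}.

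Two small muddles in your writeup, neither a real gap: (i) no ``clamping'' is needed---the formula $f(e)=p(\text{left}(e))-p(\text{right}(e))$ already yields an exact circulation, and the capacity constraints on $e$ and on $\rev{e}$ are precisely the two feasibility inequalities $\wei_p(e^*)\ge 0$ and $\wei_p((\rev{e})^*)\ge 0$ for $p$; (ii) that circulation lives in $G_\lambda$ and has zero excess at \emph{every} vertex including $s$ and $t$---the $\pm\lambda$ excess appears only after you undo the shift along $Q_{s,t}$ to recover the flow in $G$.
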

For some intuition behind this reduction, see Section~\ref{s:flow-reduction}. A full proof can be found, e.g., in~\cite{Erickson10}.
Note that the reduction allows binary searching for the maximum $s,t$-flow value.
For example, using Theorem~\ref{t:shpath},
for integral capacities in $[1,C]$, one can compute a maximum $s,t$-flow
value in $O(n\log^2(nC))$ time.
However, this bound is worse and less general 
than the best known strongly polynomial $O(n\log{n})$ bound~\cite{BorradaileK09, Erickson10}
achieved without applying this reduction directly.
Nevertheless, the reduction is quite powerful, e.g., in the parallel setting~\cite{MillerN95,KarczmarzS21}.
\paragraph{Feasible flow oracle.} Let us first consider the $\lambda$-feasible flow oracle problem, where $\lambda$ is fixed
and we only need to support queries about whether for a given pair $(s,t)$
the max $s,t$-flow value is at least $\lambda$.
Clearly, we cannot simply find, upon query, a path $Q_{s,t}\in G$,
and explicitly adjust its capacities since such a path can have length $\Theta(n)$.
In order to make use of Lemma~\ref{l:venkatesan} for many distinct pairs $(s,t)$, we use
it combined with a \emph{recursive decomposition} $\TG(G^*)$ of the dual graph $G^*$
using cycle separators~\cite{Miller84}. Such a decomposition can be computed in linear time~\cite{KleinMS13}.
$\TG(G^*)$ is a tree of connected subgraphs of $G^*$ rooted at $G^*$, has $O(\log{n})$ depth,
and the total size of pieces of $\TG(G^*)$ is $\Ot(n)$.
Moreover, every piece $H\in \TG(G^*)$ has at most $\Ot(\sqrt{n})$ boundary vertices $\bnd{H}$
(i.e., vertices shared with pieces of $\TG(G^*)$ that are not ancestors nor descendants of $H$),
and $\sum_{H\in \TG(G^*)}|\bnd{H}|^2=\Ot(n\log{n})$.

Given the decomposition, we construct a
set $\mathcal{P}$ of paths in $G$ 
such that for every $(s,t)$, there
exists an $s\to t$ path in $G$ that can be decomposed
into $O(\log{n})$ paths from~$\mathcal{P}$.
Specifically, the set $\mathcal{P}$ consists of small individual per-piece sets
$\mathcal{P}_H$, $H\in \TG(G^*)$, such that $|\mathcal{P}_H|=\Ot(1)$
and for every path $Q\in \mathcal{P}_H$, its dual edges $Q^*\subseteq G^*$ satisfy $Q^*\subseteq H$.
Moreover, for every pair $(s,t)$, $G^*$ can be efficiently decomposed
into a collection $\mathcal{H}_{s,t}$ of $O(\log{n})$ pieces $H_1,\ldots,H_\ell\in \TG(G^*)$
such that:
\begin{enumerate}[itemsep=-2pt]
  \item $H_1\cup\ldots\cup H_\ell=G$, and no two of these pieces are in an ancestor-descendant relationship,
  \item there exist paths $Q_1\in \mathcal{P}_{H_1},\ldots,Q_\ell\in \mathcal{P}_{H_\ell}$ such that $Q_1\cup\ldots \cup Q_\ell$ forms an $s\to t$ path in $G$, which
    we fix as the chosen path $Q_{s,t}$ (as required by the reduction) for that pair $(s,t)$.
\end{enumerate}

Next, roughly speaking, for every $H\in \TG(G^*)$ and $Q\in \mathcal{P}_H$, we preprocess a
dense distance graph $\DDG(G^*_\lambda[H],Q)$ (defined, again, as a distance clique on $\bnd{H}$) of the induced subgraph $G^*_\lambda[H]$ under the
assumption that $G^*_\lambda[H] \cap Q^*_{s,t}=Q^*$.
Building one such $\DDG(G^*_\lambda[H],Q)$ (after either finding a feasible price function of $G^*_\lambda[H]$ or detecting a negative cycle)
costs $\Ot(|H|+|\bnd{H}|^2)$ time using the strongly polynomial negative cycle detection algorithm~\cite{MozesW10}
and the MSSP algorithm~\cite{Klein05}.
Since $|\mathcal{P}_H|=\Ot(1)$, the total time spent on preprocessing
is $\Ot\left(\sum_{H\in\TG(G^*)} |H|+|\bnd{H}|^2\right)=\Ot(n)$.

Given the preprocessing, the intuition behind the query algorithm is as follows.
First, we compute the collection $\mathcal{H}_{s,t}=\{H_1,\ldots,H_\ell\}$ as described above. 
For each $H_i\in \mathcal{H}_{s,t}$, let $Q_i\in \mathcal{P}_{H_i}$ be the required
subpath of $Q_{s,t}$.
Recall that to answer the query, it is enough to check if $G^*_\lambda$ contains
a negative cycle.
We first check whether any of the graphs $G^*_\lambda[H_i]$
(with $G^*_\lambda[H_i] \cap Q^*_{s,t}=Q^*_i$) contains a negative cycle, which
is an information that we have precomputed.
If not, then any negative cycle in $G^*_\lambda$ has to  pass through a vertex from
$\bigcup_{i=1}^\ell \bnd{H_i}$.
To check if such a negative cycle exists, we run (a variant of) 
the algorithm of Theorem~\ref{t:negcyc-rdiv} on the graph $\bigcup_{i=1}^\ell \DDG(G^*_\lambda[H_i],Q_i)$.
Since this graph has only $\Ot(n^{1/2})$ vertices, 
negative cycle detection takes $\Ot(n^{3/4}\log{C})$ time as desired.

\paragraph{Dynamic feasible/approximate max $s,t$-flow oracle.} Turning the static feasible flow and approximate max $s,t$-flow oracles into corresponding dynamic
oracles requires using an \emph{$r$-division-aware} version of the decomposition
$\mathcal{H}_{s,t}$, in which the pieces $\mathcal{H}_{s,t}$ are only allowed
to be weak descendants of some fixed $r$-division $\rdiv$ drawn from the nodes of $\TG(G^*)$ (such $\rdiv$ always exists~\cite{KleinMS13}).
In such a case, $\mathcal{H}_{s,t}$ may have size $\widetilde{\Theta}(n/r)$, and the corresponding
graph $\bigcup_{i=1}^\ell \DDG(G^*_\lambda[H_i],Q_i)$ may have $\widetilde{\Theta}(n/\sqrt{r})$ vertices,
so the query time gets increased to $\Ot(n^{3/2}/r^{3/4}\log{C})$.
However, if some edge capacity is updated, we need to recompute the data
structures for only one piece of $\rdiv$ (and its descendants in $\TG(G^*)$), which takes $\Ot(r)$ time.
Setting $r=n^{6/7}$ balances the worst-case update and query times.

\paragraph{Exact max $s,t$-flow oracle.}
For answering exact max $s,t$-flow queries, we would like to find the maximum
$\lambda\in [0,nC]$ such that $G^*_\lambda$ does not contain a negative cycle
via binary search.
The $r$-division-aware decomposition $\mathcal{H}_{s,t}$, nor the implied
path $Q_{s,t}$, do not depend on the parameter~$\lambda$. Therefore, we would be able to perform binary search
in $\Ot(n^{3/2}/r^{3/4}\log^2{C})$ time if
only we had the dense distance graphs $\DDG(G^*_\lambda[H_i],Q_i)$
for $i=1,\ldots,\ell$ precomputed for all parameters $\lambda$
generated on-line by binary search.
Unfortunately, we cannot afford to simply precompute these graphs
for all the $nC$ possible parameters $\lambda$.
This would be too costly even for unweighted graphs.

Recall from the reduction of Theorem~\ref{t:negcyc-rdiv} to Theorem~\ref{t:negcyc}, that for detecting a negative cycle, we actually do not need the
graphs $\DDG(G^*_\lambda[H_i],Q_i)$ themselves, but rather the associated graph closest pair and near
neighbor data structures.
We prove that after preprocessing
a piece \linebreak $H\in \TG(G^*)$ in $\Ot(|H^2|)$ time, one can construct
the required data structures for $\DDG(G^*_\lambda[H],Q)$ (where $Q\in P_H$) for any \emph{given} 
$\lambda$, so that
$T_\cp(\DDG(G^*_\lambda[H],Q)),T_\nn(\DDG(G^*_\lambda[H],Q))\in\Ot(|\bnd{H}|)$,
and $Q_\nn(\DDG(G^*_\lambda[H],Q))=\Ot(1)$.
As we need this preprocessing only for the pieces $H\in \TG(G^*)$
that are weak descendants of $\rdiv$,
this costs $\sum_{H\in\TG(G^*)}\Ot(|H|^2)\leq \Ot(r)\cdot \sum_{H}\Ot(|H|)=\Ot(nr)$ time.

Recall that $\bnd{H}$ lies on $O(1)$ holes $h^1,\ldots,h^k$ of $H$.
We first decompose $\DDG(G^*_\lambda[H],Q)$ into
subgraphs that capture only shortest paths between $O(1)$ fixed pairs of holes $h^i,h^j$.
We prove that for given $\lambda$, $h^i$, and $h^j$, the shortest paths between the
vertices $V(h^i)$ of the hole $h^i$ and vertices $V(h^j)$ of the hole $h^j$
all cross the path~$Q$ (seen as a curve in $G^*_\lambda[H]$) \emph{almost the same number of times}.
More specifically, the individual net numbers of crossings (or simply the \emph{crossing numbers}) of these shortest paths wrt.~$Q$
differ by at most $O(1)$.
Tracking crossing numbers is useful, since the length of a path $P$ with crossing number $\pi$ in 
$G^*_\lambda[H]$ can be seen to be the length of $P$ in $G^*_0[H]$
shifted by $\lambda\cdot \pi$.
This allows us to limit our attention to constructing the required data structures
for auxiliary graphs that encode minimum lengths of such paths 
between $V(h^i)$ and $V(h^j)$ in $G^*_0[H]$ that additionally have their crossing numbers wrt. $Q$ equal precisely $\pi$,
where $\pi$ is a parameter chosen in $O(1)$ possible ways (which can be determined efficiently).

To solve the modified problem, we consider a graph $H'$ obtained from $H$ by first cutting $H$ along $Q$ and then
gluing $2n+1$ copies of the cut graph, numbered $-n,\ldots,n=|V(H)|$. The graph $H'$ is conceptually similar
to the infinite ``universal cover'' graph from~\cite[Section 2.4]{Erickson10}.
The crucial property of $H'$ is that the distance between the $0$-th copy of $u\in V(h^i)$
and the $\pi$-th copy of $v\in V(h^j)$ in $H'$ equals
the minimum length of a \emph{simple} $u\to v$ path in $H$ with crossing number wrt. $Q$ precisely~$\pi$.
This way, we reduce our modified problem to constructing the required data structures for a graph
encoding distances between some two holes of the much larger graph $H'$.
  
Using an MSSP data structure~\cite{Klein05} built on $H'$, we can access distances
from the $0$-th copy $h_0^i$ of $h^i$ to any specified copy $h_l^j$ of $h^j$ in $\Ot(1)$ time.
However, in order to construct efficient closest pair/near neighbor data structures
for a graph encoding distances from $h_0^i$ to $h_\pi^j$, we need to be able
to organize these distances into Monge matrices.
While this is easy if $h_0^i=h_\pi^j$, it is more problematic
if these holes are distinct. \cite{MozesW10}~deal with this general problem using
additional \emph{near-linear} preprocessing \emph{per each pair of holes} of interest,
that allows to decompose the respective distance matrix into element-wise minimum of two \emph{full} (i.e., rectangular) Monge matrices.
Unfortunately, in our case we need to handle $\Theta(|H|)$ pairs of holes $(h_0^i,h_l^j)$ (for all $j=-n,\ldots,n$),
so using their approach would lead to $O(|H|^3)$-time preprocessing per piece, which in
our case would eventually prevent us from achieving subquadratic preprocessing
and sublinear query time.

We develop a more involved method which allows to achieve the goal
for single \emph{source hole} and all \emph{target holes} at once using near-linear preprocessing.
However, the produced decomposition involves element-wise minimum of two so-called \emph{partial} Monge matrices
which are slightly more difficult to handle.
Fortunately,
any $m\times m$ partial Monge matrix can be decomposed into full Monge matrices 
with $\Ot(m)$ total rows and columns~\cite{GawrychowskiMW20}. Consequently, efficient closest pair
and near neighbor data structures can be constructed as was the case for piecewise
dense distance graphs in Theorem~\ref{t:negcyc-rdiv}.

\subsection{Min-cost circulations and negative cycle detection in general graphs}\label{s:circ-overview}

We use the term \emph{vertex splitting} to refer to the following graph transformation:
for each $v\in V$, create two vertices $v_\tin,v_\tout$, and for each non-loop edge $uv=e\in E$ (i.e., $u\neq v$),
replace it with an edge $u_\tout v_\tin$ of the same weight.
As shown by the following lemmas (proved in Section~\ref{s:omitted}), graph near-neighbor and closest pair data structures
are well-behaved under vertex splitting.

\begin{restatable}{observation}{nnsplit}\label{l:nn-split}
  Suppose a graph near neighbor data structure has been preprocessed for $G$.
  If the graph $G'=(V',E')$ is obtained from $G=(V,E)$ by vertex splitting, then with no additional preprocessing we have
  $T_\nn(G')=O(T_\nn(G))$ and $Q_\nn(G')=O(Q_\nn(G))$.
\end{restatable}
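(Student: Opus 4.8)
The plan is to implement a near neighbor data structure for $G'$ as a thin wrapper around the one already preprocessed for $G$, exploiting the fact that vertex splitting makes $G'$ bipartite between the $\cdot_\tout$ and $\cdot_\tin$ copies: every edge of $G'$ has the form $u_\tout v_\tin$ where $uv$ is a non-loop edge of $G$ (of the same weight). In particular, in $G'$ no vertex $v_\tin$ has an outgoing edge and no vertex $v_\tout$ has an incoming edge, and this is what lets us route every operation through the $G$-structure with only $O(1)$ per-operation overhead and no additional preprocessing.

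First I would describe the initialization. Given $T'\subseteq V'$ together with prices $p':T'\to\mathbb{R}$ and thresholds $\tau':V'\to\mathbb{R}$, initialize the underlying structure on $G$ with $T:=\{v\in V:v_\tin\in T'\}$, with $p(v):=p'(v_\tin)$ for $v\in T$, and with $\tau(u):=\tau'(u_\tout)$ for all $u\in V$; the values $p'(v_\tout)$ and $\tau'(v_\tin)$ are simply discarded, and translating the inputs costs $O(|V'|)=O(|V|)$ time while reusing the preprocessing of $G$ verbatim. A near neighbor query from $w'\in V'$ is answered as follows: if $w'=v_\tin$, return $\perp$ (no outgoing $G'$-edge exists); if $w'=u_\tout$, forward the query to the underlying structure at $u$, return $\perp$ if it does, and otherwise, if it returns $uv=e\in E$ with $v\in T$ and $\wei_G(e)+p(v)<\tau(u)$, return $u_\tout v_\tin\in E'$ — this edge is well-defined since $v\in T$ gives $v_\tin\in T'$ and $uv$ is a non-loop edge of $G$, and it satisfies $\wei_{G'}(u_\tout v_\tin)+p'(v_\tin)=\wei_G(uv)+p(v)<\tau(u)=\tau'(u_\tout)$; conversely, any valid answer at $u_\tout$ in $G'$ pulls back to a valid answer at $u$ in $G$, so correctness (including the $\perp$ cases) follows. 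A deactivation of $w'\in T'$ triggers a deactivation of $v$ in the underlying structure when $w'=v_\tin$, and does nothing (beyond $O(1)$ bookkeeping) when $w'=v_\tout$, which is sound because such a vertex is never the head of a $G'$-edge and hence never influences any query answer.

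For the bounds: a deactivation sequence on $G'$ exhausting $T'$ induces a deactivation sequence on $G$ exhausting $T$, so $T_\nn(G')\leq T_\nn(G)+O(|V'|)=O(T_\nn(G))$ using $T_\nn(G)=\Omega(|V(G)|)$, and each query costs $Q_\nn(G)+O(1)=O(Q_\nn(G))$ using $Q_\nn(G)=\Omega(1)$. The one point requiring care — which I would flag as the (minor) main obstacle — is that the underlying query at $u$ could in principle report a self-loop $uu\in E$ with $u\in T$, for which $G'$ has no counterpart since self-loops are dropped by the transformation. This is dispatched either by noting that the graphs to which we actually apply vertex splitting have no self-loops, or, in general, by having the near neighbor structure for $G$ never return a self-loop (stripping self-loops, when $G$ is given explicitly, is subsumed by the assumed preprocessing). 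Once this is settled, the construction above establishes both the correctness and the claimed time bounds.
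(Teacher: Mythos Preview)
Your proposal is correct and follows essentially the same approach as the paper: wrap the $G$-structure, discard $v_\tout$ from $T'$, answer $\perp$ for queries at $v_\tin$, and translate queries at $u_\tout$ to queries at $u$ with the obvious price/threshold mapping. Your write-up is in fact more thorough than the paper's, which does not explicitly discuss the self-loop corner case you flag; your dispatch of that case is adequate for how the observation is used.
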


\begin{restatable}{lemma}{sspsplit}\label{l:ssp-split}
  Suppose a graph closest pair data structure has been preprocessed for $G$.
  If the graph $G'=(V',E')$ is obtained from $G=(V,E)$ by vertex splitting, then with no additional preprocessing we have
  and $T_\cp(G')=\Ot(T_\cp(G))$.
\end{restatable}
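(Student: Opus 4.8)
The plan is to reduce the closest pair problem on the split graph $G'$ to the closest pair problem on $G$ by carefully routing the activations/extractions through the original vertices, while using an auxiliary priority queue (or a few of them) to keep track of which half of each split vertex is currently relevant. Recall that in $G'$ each original vertex $v$ becomes $v_\tin$ and $v_\tout$, an edge $uv=e$ becomes $u_\tout v_\tin$, and loops are discarded. For the closest pair interface on $G'$ we are given $S',T'\subseteq V'$ with weights $\alpha':S'\to\mathbb{R}$, $\beta':T'\to\mathbb{R}$, and we must maintain the minimizer of $\wei(e)+\alpha'(s')+\beta'(t')$ over $e=s't'\in E'\cap(S'\times T')$. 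Since every edge of $G'$ has its tail in the "out" copies and its head in the "in" copies, only vertices of the form $u_\tout$ in $S'$ and $v_\tin$ in $T'$ can ever participate in a maintained edge; so we may discard $S'\cap\{v_\tin\}$ and $T'\cap\{u_\tout\}$ at initialization. First I would initialize a closest pair structure $\ds$ on $G$ with $S:=\{u : u_\tout\in S'\}$, $T:=\{v : v_\tin\in T'\}$, $\alpha(u):=\alpha'(u_\tout)$, $\beta(v):=\beta'(v_\tin)$; this is valid because edge $u_\tout v_\tin$ in $G'$ corresponds exactly to edge $uv$ in $G$ with the same weight.

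The subtlety is that activations and extractions in $G'$ act on the split copies, and in principle both $u_\tout$ and $u_\tin$ could be activated/extracted over the lifetime of the structure, but only $u_\tout$ matters on the $S$ side and only $v_\tin$ on the $T$ side; moreover an activation of $u_\tout$ in $S'$ simply maps to an activation of $u$ in $S$ in $\ds$, and likewise an extraction of $v_\tin$ from $T'$ maps to an extraction of $v$ from $T$ in $\ds$. Activations of $u_\tin$ or extractions of $u_\tout$ are no-ops as far as the maintained edge is concerned (no edge has such an endpoint), so we just ignore them. Hence the sequence of at most $n'-|S_0'|$ activations and $|T_0'|$ extractions on $G'$ translates into at most that many operations on $\ds$ (actually fewer, since $|V(G)|=|V(G')|/2$), each of which $\ds$ handles within its amortized budget $T_\cp(G)$. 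After every operation, the minimizer edge $e^*=uv$ maintained by $\ds$ is reported as $u_\tout v_\tin$; its $d$-value is identical in both graphs. The total update time is therefore $O(T_\cp(G))$ plus $O(n)$ bookkeeping, which is $\Ot(T_\cp(G))$ since $T_\cp(G)=\Omega(|V(G)|)$.

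One point to double-check is the accounting of the operation bounds: the abstraction in Section~\ref{s:graph_closest} promises the stated total update time over \emph{any} admissible sequence of activations/extractions, so as long as our translated sequence is admissible for $G$ (it is: each original vertex is activated at most once into $S$ and each is extracted at most once from $T$, since each split copy is touched at most once in $G'$), we inherit the bound directly. The main obstacle — really the only place needing care — is arguing that no relevant edge is ever missed or double-counted when we drop the "wrong-side" copies and ignore the corresponding operations; this follows cleanly from the structural fact that $E(G')\subseteq (\{v_\tout\}\times\{v_\tin\})$, so the bipartite-like orientation of $G'$ makes the reduction exact rather than merely approximate. No new preprocessing is needed because $\ds$ was already preprocessed for $G$, which is exactly what the statement asserts.
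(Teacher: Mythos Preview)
Your reduction is cleaner than the paper's and is correct whenever $G$ has no self-loops, but as written it has a gap precisely there. Vertex splitting discards loops, so every edge $u_\tout v_\tin$ of $G'$ corresponds to a non-loop edge $uv$ of $G$; the converse, however, fails: a self-loop $uu\in E(G)$ has no image in $G'$. In your translation, if both $u_\tout\in S'$ and $u_\tin\in T'$ (which the interface permits), then $u$ lands in $S\cap T$, and the closest-pair structure on $G$ may legitimately return the loop $uu$ as $e^*$. You would then report $u_\tout u_\tin$, which is not an edge of $G'$, and the true minimum over $E'\cap(S'\times T')$ could be strictly larger. Since the interface offers no way to suppress a single edge, you cannot simply skip past the loop; your claim that the correspondence is ``exact'' is therefore only one-directional.

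This is exactly why the paper pays a $\log n$ factor. It maintains $O(\log n)$ copies $\ds_{b,i}$ of the closest-pair structure on $G$, where $\ds_{b,i}$ uses $S_{b,i}=S\cap A_{b,i}$ and $T_{b,i}=T\setminus A_{b,i}$, with $A_{b,i}$ the set of vertices whose $b$-th bit equals $i$. In every copy $S_{b,i}$ and $T_{b,i}$ are disjoint by construction, so loops can never appear as the minimizer; at the same time, any non-loop edge $uv$ is covered by the copy indexed by a bit on which $u$ and $v$ differ. Lemma~\ref{l:ssp-sum} then combines the copies. If you add the hypothesis that $G$ is loop-free (harmless in all applications here: the DDGs can be taken without diagonal entries, and a negative self-loop is a negative cycle detectable in $O(n)$ time up front), your single-copy argument goes through and in fact yields $T_\cp(G')=O(T_\cp(G))$ without the tilde.
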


Gabow~\cite{Gabow85} gave an elegant vertex splitting-based reduction of the integral negative cycle detection problem
to the minimum-cost perfect matching problem with acceptable additive error $O(1/n)$.
In the proof of Theorem~\ref{t:negcyc}, we use a similar reduction\footnote{We use this approach, instead of e.g., adapting the more direct Goldberg's scaling algorithm~\cite{Goldberg95} for negative cycle detection because it is easily described in terms of repeated non-negative shortest paths computations~\cite{KarczmarzS19}, for which very efficient algorithms on dense distance graphs (see Section~\ref{s:overview-negcyc}) are known~\cite{FR06}.}, but instead of considering matchings, we find it
more convenient to reduce to the minimum-cost circulation problem in a network satisfying the following:
for each vertex $v$, the minimum $\lambda_v$ of the total capacity
of $v$'s incoming edges and the total capacity of $v$'s outgoing edges
satisfies $\lambda_v= 1$. This resembles so-called ``type~2'' networks from~\cite{EvenT75, GoldbergHKT17},
for which classical combinatorial unit-capacity flow algorithms~\cite{EvenT75, GoldbergHKT17}
run in $O(\sqrt{n}m)$ time as opposed to $O(m^{3/2})$ time.
In the reduction that we apply, the transformed graph for which we want to solve the min-cost
circulation instance consists of all the edges of $G$ (uncapacitated) and~$O(n)$ auxiliary unit-capacity edges.

The reduction allows us to focus on the more general min-cost circulation problem
with the additional parameter $\Lambda:=\sum_{v\in V}\lambda_v$.
For arbitrary values of $\Lambda\geq n$, in Section~\ref{s:circ} (Theorem~\ref{t:circ}) we prove that a $\delta$-additive approximation
of the min-cost circulation can be computed in time

\begin{equation}\label{eq:ov1}
  \Ot\left(\sqrt{\Lambda}\log{(\Lambda C/\delta)}\cdot \left(T_\cp(G_\infty)+T_\nn(G_\infty)+m_\fin+\Lambda\cdot Q_\nn(G_\infty)\right) \right).
\end{equation}
Here, $G_\infty\subseteq G$ is the subgraph of $G$ containing all the uncapacitated edges,
and $m_\fin$ is the number of finite-capacity edges in $G$.
Recall that in the instance arising from the reduction, we have $G_\infty=G$, $\Lambda=n$ and $m_\fin=O(n)$,
and we need additive error $\delta=O(1/n)$. This is how Theorem~\ref{t:negcyc} follows from the bound~\eqref{eq:ov1} proved in Theorem~\ref{t:circ}.

The minimum-cost circulation algorithm of Section~\ref{s:circ} is a modification
of a simple successive shortest augmenting path algorithm for min-cost circulation in unit-capacity networks,
as described in~\cite{KarczmarzS19}. 
The outer loop of the algorithm implements the successive approximation framework
of~\cite{GoldbergT90}: each iteration (called \emph{refinement}) is supposed to convert a $2\eps$-approximate
min-cost circulation into an $\eps$-approximate circulation.
Since a trivial zero circulation is always a \linebreak $\Theta(\Lambda\cdot C)$-approximate min-cost circulation,
using $O(\log{(\Lambda C/\delta)})$ refinement steps, it can be turned
into a circulation whose cost differs from the optimum by at most $\delta$.

In~\cite{KarczmarzS19}, the refinement step first 
converts the input circulation into a flow $f$ that is trivially
minimum-cost, but violates conservation, i.e. $f$ is not a circulation.
Next, it suitably \emph{rounds} the edge costs $c$ 
in the residual network $G_f$ up to nearest multiple of $\eps$, this way obtaining a modified \emph{discretized} cost function $c'$. 
Then, it gradually converts $f$ into a circulation by sending flow from excess to deficit vertices
in $G_f$
while maintaining
that $f$ is minimum-cost wrt. costs~$c'$. 
More specifically, it simply runs $O(\sqrt{\Lambda})$ \emph{augmentation} steps\footnote{In~\cite{KarczmarzS19}, only the $O(\sqrt{m})$ bound
on the number of augmentation steps is proven for unit-capacity networks. For such networks, only the bound $\Lambda=O(m)$ can be proven
with no additional assumptions.
We extend their analysis (based on~\cite{GoldbergHKT17}) to networks with integer or infinite capacities and arbitrary values $\Lambda$.} 
sending
flow along a maximal set of edge-disjoint shortest paths (wrt. $c'$) in $G_f$.
Each such step in turn can be split into substeps (1): a single-source distances computation
and (2): finding a maximal set of edge-disjoint
paths consisting of edges with reduced cost~$0$ using a DFS-like procedure (both can be implemented in $\Ot(m)$ time).

Compared to~\cite{KarczmarzS19}, our refinement procedure avoids rounding for the following reason:
we want to keep the subgraph $G_\infty$ with unchanged costs in the residual network at all times, since
we want to benefit from closest pair and near neighbor data structures for $G_\infty$.
These data structures could have, in principle, much worse performance if applied
to e.g., a subgraph of $G_\infty$ or $G_\infty$ with perturbed costs.
On the other hand, without rounding, the bound $O(\sqrt{\Lambda})$ on the number of augmentation steps
fails to hold due to lack of discretization of possible lengths of shortest augmenting paths.
Nevertheless, we show that such a bound still holds if we (i) only increase the cost of $O(\Lambda)$ edges in $G_f\setminus G$
by $\eps$, and (ii) after computing distances in substep (1), we augment the flow through a maximal
set of \emph{nearly-tight} edges in substep (2), with reduced costs less than $\eps/2$.
In substep (1), the distances can be computed using Lemma~\ref{l:dijkstra-add} in
$\Ot(T_\cp(G_f))$ time. But since graph closest pair data structures
can be efficiently composed under unions (Lemma~\ref{l:ssp-sum}) we~have:
\begin{equation*}
  T_\cp(G_f)=T_\cp(G_\infty)+T_\cp(G_f\setminus G_\infty)+\Ot(n)=T_\cp(G_\infty)+\Ot(m_\fin+\Lambda).
\end{equation*}
Similarly, to implement substep~(2), we show a simple procedure whose running time is
dominated by the total update time $T_\nn(G_f)$ of a graph near neighbor data structure on $G_f$, and
performing $O(\Lambda)$ queries on such a data structure. Again, by Lemma~\ref{l:nn-sum},
we get:
\begin{equation*}
  T_\nn(G_f)+Q_\nn(G_f)\cdot O(\Lambda)=T_\nn(G_\infty)+Q_\nn(G_\infty)\cdot O(\Lambda)+\Ot(m_\fin+\Lambda).
\end{equation*}

\section{Negative-weight shortest paths in planar graphs}\label{s:negcyc}

In this section we describe a recursive algorithm that,
given a simple connected plane digraph $G$ with integral edge
weights at least $-C$,
either detects a negative weight cycle in $G$ or computes a feasible
price function of $G$. Assume that $C\geq 2$, so that $\log{C}\geq 1$.

If $G$ has $n\leq n_0$ vertices, where $n_0$ is a constant to be set later, we solve the problem in $O(n^2)=O(n_0^2)=O(1)$ time using any strongly polynomial algorithm, e.g., Bellman-Ford. So, in the following, assume that $n$ is at least a sufficiently large constant.

Suppose $n>n_0$. We start by computing an embedding of $G$. We then triangulate $G$ 
by adding bidirectional edges of weight $nC$ inside faces whose bounding cycles have more than $3$ edges.
Note that this cannot introduce any new negative cycles to $G$
and the lower bound $-C$ on the smallest weight still holds.
Next, for $r<n$ to be set later, we build an $r$-division with few holes $\rdiv$ of $G$.
This can be done in linear time by Theorem~\ref{t:rdiv}.
In the following, for $P\in \rdiv$ we will use the notation $|P|$ to refer to $|V(P)|$.
Note that we have $\sum_{P\in \rdiv}|P|=n+O(n/\sqrt{r})$
by the definition of boundary vertices.
The individual pieces $P\in \rdiv$ may have non-simple (that is, whose bounding cycles are not simple cycles) holes.
For simplicity, in the following part of this section we assume this is not the case and discuss how to deal
with non-simple holes in Section~\ref{s:non-simple}.

We then solve the problem recursively for each piece $P\in \rdiv$.
If any of the recursive calls finds a negative cycle, $G$ has a negative cycle
and we can stop.
Otherwise, let $p_P$ be the feasible price function of the piece $P$.
Observe that if $G$ has a negative cycle $O$ and none of the individual pieces $P\in\rdiv$ has one, then by splitting $O$ into a sequence
of maximal subpaths fully contained in some single piece, such a sequence will have length at
least $2$, and thus each of the subpaths will connect two vertices of $\bnd{R}$.
Consequently, we can focus our attention on cycles of that kind only.
The following lemma is the key to achieving sublinear time for detecting such negative cycles.
\begin{lemma}\label{l:ddg-ds}
  Suppose a feasible price function $p_P$ of $P\in\rdiv$ is given.
  Then, for $\DDG(P)$, there exist:
  \begin{itemize}
    \item a closest pair data structure with total update time $\Ot(|\bnd{P}|)$.
    \item a near neighbor data structure with total update time $\Ot(|\bnd{P}|)$ and query time $\Ot(1)$.
  \end{itemize}
  Both data structures require $O((|P|+|\bnd{P}|^2)\log{|P|})$ preprocessing time.
\end{lemma}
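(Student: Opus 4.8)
The plan is to prove Lemma~\ref{l:ddg-ds} by reducing $\DDG(P)$ to an element-wise minimum of a few \emph{full Monge matrices}, exactly as sketched for Theorem~\ref{t:negcyc-rdiv} in Section~\ref{s:overview-negcyc}, and then assembling the two data structures from per-matrix gadgets via Lemmas~\ref{l:nn-sum} and~\ref{l:ssp-sum}. Concretely, recall that $\bnd{P}$ lies on $O(1)$ holes $h^1,\ldots,h^k$ of $P$. For each ordered pair of holes $(h^i,h^j)$ we consider the submatrix of $\DDG(P)$ with rows $V(h^i)$ and columns $V(h^j)$. The first step is to invoke the standard fact~\cite{FR06, MozesW10, GawrychowskiK18} that, after cutting $P$ along a suitable path between the two holes (or using the Klein/Cabello-Chambers-Erickson MSSP machinery with the hole cyclically ordered), this submatrix can be written as an element-wise minimum of $\Ot(1)$ full Monge matrices whose total dimension is $\Ot(|\bnd{P}|)$; the price function $p_P$ is needed here only to make all the entries of these matrices nonnegative (or at least finite and well-defined), so that the Monge heaps and the SMAWK-type row-minimum structures behave correctly. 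Summing over the $O(1)^2$ pairs of holes, $\DDG(P)$ itself is an element-wise minimum of Monge matrices $\mmat_1,\ldots,\mmat_q$ with $q=\Ot(1)$ and $\sum_i (\text{rows}(\mmat_i)+\text{cols}(\mmat_i))=\Ot(|\bnd{P}|)$.

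Next I would turn each $\mmat_i$ into the two kinds of data structures on the associated graph $G_{\mmat_i}$ (a complete bipartite graph whose edge $uv$ has weight $(\mmat_i)_{u,v}$). For the near neighbor structure, I use the Monge matrix data structure of \cite[Lemma~1]{MozesNW18}: it supports column deactivations with total update time $\Ot(\text{rows}+\text{cols})$ and row-minimum queries over the surviving columns in $\Ot(1)$ time, which is exactly the near neighbor interface after folding the prices $p(t)$ into the column values and comparing the returned minimum against $\tau(v)$. Hence $T_\nn(G_{\mmat_i})=\Ot(|\bnd{P}|)$ and $Q_\nn(G_{\mmat_i})=\Ot(1)$. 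For the closest pair structure, I combine that same column-deletable row-minimum structure with the \emph{Monge heap} of~\cite{FR06}: activating a row $s$ with weight $\alpha(s)$ inserts a candidate, the heap tracks the globally minimum $\wei(e^*)+\alpha(s)+\beta(t)$ over active rows and surviving columns, and column extractions are handled by the deactivation primitive; this gives $T_\cp(G_{\mmat_i})=\Ot(|\bnd{P}|)$. Then, since $\bigcup_i G_{\mmat_i}$ has exactly the same $S$-to-$T$ distances as $\DDG(P)$ (element-wise minimum of matrices $=$ union of graphs), Lemma~\ref{l:ssp-sum} yields a closest pair structure for $\DDG(P)$ with $T_\cp=\sum_i T_\cp(G_{\mmat_i})+\Ot(|\bnd{P}|\log q)=\Ot(|\bnd{P}|)$, and Lemma~\ref{l:nn-sum} yields a near neighbor structure with $T_\nn=\Ot(|\bnd{P}|)$ and $Q_\nn=\Ot(1)$ (here I use $Q_\nn(G_{\mmat_i})=\Ot(1)$ so the $|V_i|\cdot Q_\nn(G_i)$ term in Lemma~\ref{l:nn-sum} is $\Ot(|\bnd{P}|)$).

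For the preprocessing bound, the work is: (a) embedding and preparing $P$; (b) computing a feasible price function is \emph{given}, so no cost there; (c) building the Monge decomposition of $\DDG(P)$, which is done by constructing the needed MSSP data structures on (a constant number of) cut copies of $P$ in $O(|P|\log|P|)$ time each~\cite{Klein05, CabelloCE13} and then extracting the $\Ot(1)$ full Monge matrices; explicitly materializing these matrices takes $O(|\bnd{P}|^2)$ time since they have $\Ot(|\bnd{P}|)$ rows and columns. Together this is $O((|P|+|\bnd{P}|^2)\log|P|)$, as claimed. The data structures of~\cite{FR06, MozesNW18} themselves are built on top of the materialized (or implicitly Monge-queried) matrices within the same bound. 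One subtlety worth a careful sentence: the standard Monge decomposition is usually stated for a hole that is a \emph{simple} cycle; the excerpt already flags that non-simple holes are deferred to Section~\ref{s:non-simple}, so here I assume holes are simple cycles and quote the decomposition as a black box.

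The main obstacle, and the only place requiring genuine care, is the Monge decomposition step: making precise that the $V(h^i)\times V(h^j)$ block of $\DDG(P)$ is an element-wise minimum of $\Ot(1)$ \emph{full} (rectangular) Monge matrices of total size $\Ot(|\bnd{P}|)$, with the right sign normalization coming from $p_P$ so that all downstream heaps see finite, comparable entries. Everything after that — plugging into \cite[Lemma~1]{MozesNW18}, wrapping with the Monge heap of~\cite{FR06}, and composing via Lemmas~\ref{l:nn-sum} and~\ref{l:ssp-sum} — is bookkeeping, and the resulting bounds $T_\cp,T_\nn=\Ot(|\bnd{P}|)$, $Q_\nn=\Ot(1)$, with $O((|P|+|\bnd{P}|^2)\log|P|)$ preprocessing, fall out directly.
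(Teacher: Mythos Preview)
Your proposal is correct and follows essentially the same route as the paper: decompose $\DDG(P)$ into Monge matrices with $\Ot(|\bnd P|)$ total rows and columns (the paper packages this step as a black-box Lemma citing~\cite{FR06,MozesW10}), build per-matrix near-neighbor and closest-pair gadgets using~\cite[Lemma~1]{MozesNW18} and the Monge heap of~\cite{FR06}, and then compose via Lemmas~\ref{l:nn-sum} and~\ref{l:ssp-sum}. One small clarification: the feasible price function $p_P$ is needed so that MSSP (Theorem~\ref{t:mssp}) can compute the matrix entries in the first place, not for the correctness of the Monge heap or the SMAWK-style row-minimum structure, both of which handle arbitrary real entries.
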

\begin{proof}
To obtain efficient closest pair and near neighbor data structures for $\DDG(P)$ we need it decomposed into
Monge matrices.
The following lemma states the properties of such a decomposition proved in~\cite{FR06, MozesW10}.

\begin{lemma}[\cite{FR06, MozesW10}]\label{l:decomp}
  Given a feasible price function $p_P$ of $P$,
  the graph $\ddg(P)$ (seen as a distance matrix) can be computed and decomposed into a set $D_P$ of Monge matrices whose sum of numbers of rows and columns (that are subsets of $\bnd{P})$ is $O(|\bnd{P}|)$.
  For all $u,v\in \bnd{P}$ we have:
  \begin{itemize}
  \item for each $\mmat\in D_P$ such that
  $\mmat_{u,v}$ is defined, $\mmat_{u,v}\geq \dist_{P}(u,v)$.
  \item there exists $\mmat\in D_P$ such that
  $\mmat_{u,v}$ is defined and $\mmat_{u,v}=\dist_{P}(u,v)$.
  \end{itemize}
  The decomposition can be computed in $O((|P|+|\bnd{P}|^2)\log{|P|})$ time.
\end{lemma}

  Computing $\DDG(P)$ and the decomposition $D_P$ 
  of Lemma~\ref{l:decomp} constitutes the only preprocessing for both desired data structures.
  As a result, preprocessing takes $O((|P|+|\bnd{P}|^2)\log{|P|})$ time.

A Monge matrix $\mmat\in D_P$ can be also interpreted as
a directed graph $G_{\mmat}$ with vertices corresponding
to the union of rows and columns of $\mmat$, and a directed edge
from $r$ to $c$ of weight $\mmat_{r,c}$ if the entry $\mmat_{r,c}$ is defined.

  \cite[Lemma 1]{MozesNW18} showed a data structure for $m_1\times m_2$ Monge matrices
that allows to deactivate columns and supports queries for a minimum
element in a row (limited to active columns only) and a contiguous range of columns, given only black-box oracle access to the entries
of the input matrix.
  The data structure has total update time $\Ot(m_1+m_2)$ and supports queries in $\Ot(1)$ time.
  Observe that such a data structure for the matrix $\mmat'$, obtained from $\mmat$ by
  shifting all entries in each column $c$ by some offset $p(c)$, can be used to implement
  a near neighbor data structure on $G_{\mmat}$
  with $T_\nn(G_\mmat)=\Ot(|V(G_{\mmat})|)$ total update time
  and $Q_\nn(G_\mmat)=\Ot(1)$ query time.
  To see this, note first that $\mmat'$ is also a Monge matrix and can be accessed in $O(1)$ time.
  The deactivations of vertices in the near neighbor data structure correspond to deactivations
  of columns of $\mmat'$. A near neighbor query can be handled using a single query
  for a minimum element in a row of $\mmat'$.

Let us now discuss a closest pair data structure of $G_\mmat$, where $\mmat\in D_P$.
We essentially use a variant of the Monge heap of~\cite{FR06}, which we now sketch for completeness.
Again, let $\mmat''$ be a Monge matrix obtained from $\mmat$ by shifting every element in a row $r$
by $\alpha(r)$, and every element in a column $c$ by $\beta(c)$.
To get a desired closest pair data structure, it is enough to have a data structure
  for $\mmat''$ that:
  \begin{enumerate}[(1)]
    \item starts with a subset of \emph{active} rows $S$ and columns~$T$ for which the offsets $\alpha$ and $\beta$ are known,
    \item maintains some minimum entry in $\mmat''$ (limited to the active rows and columns),
    \item supports activations of rows $s\notin S$ revealing
  $\alpha(s)$ and deactivations of columns $t\in T$.
  \end{enumerate}

  It is well-known (see, e.g.,~\cite{FR06}) that for a Monge matrix $\mmat''$, there exist such a sequence $\mathcal{S}=(r_i,c_i,d_i)_{i=1}^k$,
  that:
  \begin{enumerate}[(a)]
    \item for each $i$, $c_i\preceq d_i$, the row $r_i$ is active and contains some column minimum of each of the active columns
  between $c_i$ and $d_i$ (inclusive),
    \item the intervals $[c_i,d_i]$ are disjoint and cover all active columns of $\mmat''$,
  and
    \item the sequences $(r_i)_{i=1}^k$, $(c_i)_{i=1}^k$, $(d_i)_{i=1}^k$ are monotonous wrt. natural
  orders $\prec$ on rows and columns of $\mmat''$.
  \end{enumerate}
  The data structure maintains such a sequence $\mathcal{S}$ subject to row activations in $\mmat''$
  in a balanced binary search tree.
  We can even allow that $c_i,d_i$ come from an initial set $T_0$ of columns so that a column
  deactivation does not break the invariants posed on $\mathcal{S}$.
  Additionally, for each $i$ we explicitly maintain a currently active column $\gamma_i$ such that
  $c_i\preceq \gamma_i\preceq d_i$ and $\mmat''_{r_i,\gamma_i}$ is the minimum entry in the subrow of $r_i$
  spanned by (active) columns in $[c_i,d_i]$.
  The values $\mmat''_{r_i,\gamma_i}$ are all stored in a sorted multiset, so that their minimum $\psi$
  is maintained efficiently. This way, whenever $\gamma_i$ gets updated or deleted, $\psi$ is updated
  in $\Ot(1)$ time. Observe that $\psi$ equals, in fact, the sought minimum value
  in $\mmat''$ at all times.

  Finally, we also build and maintain the data structure of \cite[Lemma 1]{MozesNW18}
  for the matrix $\mmat'$ (as in the near neighbor data structure, with no row offsets $\alpha$ and all rows active, but including the column offsets $\beta$) to allow subrow minimum queries and column deactivations in $\mmat'$.

  Let us now describe how the data structure operates. The initial $\mathcal{S}$
  can be constructed in $O(|V(G_\mmat)|)$ time by finding the initial column
minima of $\mmat''$ using the algorithm of~\cite{AggarwalKMSW87}.

When a column $t\in T$ is deactivated, $\mathcal{S}$ needs no updates.
However, the value $\gamma_j$ for the unique~$j$ such that $c_i\leq t\leq d_j$ might change.
Such a $j$ can be located in $\Ot(1)$ time since $\mathcal{S}$ is stored in a BST.
The new value $\gamma_j$ can be found using a single subrow minimum query to the stored data structure of 
\cite[Lemma 1]{MozesNW18} after processing the deactivation of~$t$.
Note that a subrow minimum value in $\mmat''$ can be obtained from a subrow minimum value in $\mmat'$
by shifting the result by $\alpha(r_j)$.

Now suppose a row $s$ is activated. As in~\cite{FR06}, the sequence $\mathcal{S}$
is updated as follows. First, one needs to identify such columns $c^*,d^*$, $c^*\preceq d^*$, that
$s$ contains the minima in columns in the range $[c^*,d^*]$. To find such a $c^*$ in $\Ot(1)$ time,
one can perform binary search using the previous $\mathcal{S}$: we need to find a minimum $c^*$
such that if $(r_j,c_j,d_j)$ satisfies $c_j\preceq c^*\preceq d_j$, then $\mmat_{r_j,c^*}>\mmat_{s,c^*}$.
As in~\cite{FR06}, the Monge property can be used to prove that binary search indeed works in this case.
Similarly, $d^*$ can be found in $\Ot(1)$ time.
With $c^*,d^*$ computed, if they exist, one needs to remove some number of elements 
$(r_i,c_i,d_i)$ with $[c_i,d_i]\subseteq [c^*,d^*]$ from $\mathcal{S}$, insert $(t,c^*,d^*)$
at some index $l$ to~$\mathcal{S}$, and possibly update the values $d_{l-1}$ and $c_{l+1}$
of the two neighboring elements of $(t,c^*,d^*)$ in~$\mathcal{S}$.
For each updated element of $\mathcal{S}$, we recompute $\gamma_i$ as before (or remove it).
While a row activation can remove many elements from the sequence, it can cause at most one insertion into the sequence, and thus the amortized update time is $\Ot(1)$.
As a result, for the closest pair data structure we get $T_{\cp}(G_\mmat)=\Ot(|V(G_\mmat)|)$.

Consider a graph $H_P=\bigcup_{\mmat\in D_P}G_\mmat$.
 By Lemma~\ref{l:ssp-sum}, we obtain a closest pair
  data structure for $H_P$ with
  \begin{equation*}
    T_\cp(H_P)=\Ot\left(\sum_{\mmat\in D_P}T_\cp(G_\mmat)\right)=\Ot\left(\sum_{\mmat\in D_P}|V(G_\mmat)|\right)=\Ot(|\bnd{P}|).
  \end{equation*}
  Similarly, by Lemma~\ref{l:nn-sum} we obtain $T_\nn(H_P)=\Ot(|\bnd{P}|)$ and
  $Q_\nn(H_P)=\Ot(1)$.

  By Lemma~\ref{l:decomp}, we have that
  $\DDG(P)\subseteq H_P$,
  and for each $u,v\in V(H_P)$, the minimum weight of an edge $uv=e$
  in $H_P$ satisfies $w_{H_P}(e)=w_{\DDG(P)}(uv)$.
  Since parallel edges with non-minimal weight are effectively ignored
  by the closest pair and near neighbor data structures,
  the respective data structures for $H_P$ can be used as corresponding
  data structures for $\DDG(P)$.
  Hence, we obtain $T_\cp(\DDG(P))=\Ot(|\bnd{P}|)$, $T_\nn(\DDG(P))=\Ot(|\bnd{P}|)$
  and $Q_\nn(\DDG(P))=\Ot(1)$, as desired.
 \end{proof}

By the definition of boundary vertices, we have
that $\DDG(G):=\bigcup_{P\in \rdiv}\ddg(P)$ preserves distances between the
vertices of $\bnd{\rdiv}$ in $G$.
As a result, to detect a negative cycle that goes through at least two vertices of $\bnd{\rdiv}$ in $G$,
we can equivalently test whether the graph $\DDG(G)$ has a negative cycle.
Note that we have
$|V(\DDG(G))|\leq \sum_{P\in\rdiv}|\bnd{P}|=O(n/\sqrt{r})$.

Observe that by Lemma~\ref{l:ddg-ds}, the total time needed for preprocessing
closest pair and near neighbor data structures through all $\DDG(P)$ is:
\begin{equation*}
  \sum_{P\in\rdiv} O\left(|P|+|\bnd{P}|^2\log{|P|}\right)=O(n/r)\cdot O\left(\left(r+(\sqrt{r})^2\right)\log{r}\right)=O(n\log{n}).
\end{equation*}
By applying Lemmas~\ref{l:ssp-sum}~and~\ref{l:nn-sum} to the data structures of Lemma~\ref{l:ddg-ds},
we obtain:
\begin{corollary}
  Given preprocessed closest pair and near neighbor data structures for all $P\in\rdiv$, there exist
  closest pair and near neighbor data structures for $\DDG(G)$ such that:
  \begin{align*}
    T_\cp(\DDG(G))&=\Ot\left(\sum_{P\in\rdiv}|\bnd{P}|\right)=
  \Ot(n/\sqrt{r}),\\
    T_\nn(\DDG(G))&=\Ot\left(\sum_{P\in\rdiv}|\bnd{P}|\right)=
  \Ot(n/\sqrt{r}),\\
    Q_\nn(\DDG(G))&=\Ot(1).
  \end{align*}
\end{corollary}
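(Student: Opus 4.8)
The plan is to apply the composition Lemmas~\ref{l:ssp-sum} and~\ref{l:nn-sum} to the family of per-piece data structures supplied by Lemma~\ref{l:ddg-ds}, and then simply sum the resulting bounds over all pieces $P\in\rdiv$. First, recall that $\DDG(G)=\bigcup_{P\in\rdiv}\ddg(P)$ is literally a union of the graphs $\ddg(P)$, so Lemma~\ref{l:ddg-ds} gives us, for each $P$, a closest pair data structure with $T_\cp(\ddg(P))=\Ot(|\bnd{P}|)$ and a near neighbor data structure with $T_\nn(\ddg(P))=\Ot(|\bnd{P}|)$ and $Q_\nn(\ddg(P))=\Ot(1)$. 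These are exactly the hypotheses needed to invoke the two composition lemmas with $G_i=\ddg(P)$ ranging over the $k=O(n/r)$ pieces.

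Next I would plug these into Lemma~\ref{l:ssp-sum}, which yields
\begin{equation*}
  T_\cp(\DDG(G))=\sum_{P\in\rdiv}T_\cp(\ddg(P))+\sum_{P\in\rdiv}O(|\bnd{P}|\log(n/r))=\Ot\left(\sum_{P\in\rdiv}|\bnd{P}|\right),
\end{equation*}
and similarly into Lemma~\ref{l:nn-sum}, which yields $T_\nn(\DDG(G))=\sum_P\left(T_\nn(\ddg(P))+|\bnd{P}|\cdot Q_\nn(\ddg(P))+O(|\bnd{P}|)\right)=\Ot\left(\sum_{P}|\bnd{P}|\right)$, and $Q_\nn(\DDG(G))=\max_P Q_\nn(\ddg(P))+O(1)=\Ot(1)$. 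Note that the extra $|V_i|$ and $|V_i|\cdot Q_\nn(G_i)$ terms in the composition bounds are precisely why one needs $Q_\nn(\ddg(P))=\Ot(1)$ rather than merely some polynomial bound: an $\Ot(1)$ per-piece query time keeps $|\bnd{P}|\cdot Q_\nn(\ddg(P))$ within $\Ot(|\bnd{P}|)$.

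Finally I would observe that $\sum_{P\in\rdiv}|\bnd{P}|=O(n/\sqrt{r})$, which is a standard property of an $r$-division (already noted in the excerpt just before the statement: $|V(\DDG(G))|\leq\sum_{P\in\rdiv}|\bnd{P}|=O(n/\sqrt{r})$, since there are $O(n/r)$ pieces each with $O(\sqrt{r})$ boundary vertices). Substituting this into the three displayed bounds gives $T_\cp(\DDG(G))=\Ot(n/\sqrt{r})$, $T_\nn(\DDG(G))=\Ot(n/\sqrt{r})$, and $Q_\nn(\DDG(G))=\Ot(1)$, as claimed. There is no real obstacle here — the corollary is a mechanical consequence of the two composition lemmas together with the $r$-division boundary-size bound; the only thing to be a little careful about is that the logarithmic overheads from Lemma~\ref{l:ssp-sum} (a $\log k$ factor) and from the near-neighbor queries are absorbed into the $\Ot(\cdot)$ notation, which is legitimate since $k=O(n/r)=O(n)$.
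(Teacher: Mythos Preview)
Your proposal is correct and matches the paper's own reasoning exactly: the corollary is stated immediately after the sentence ``By applying Lemmas~\ref{l:ssp-sum}~and~\ref{l:nn-sum} to the data structures of Lemma~\ref{l:ddg-ds}, we obtain,'' with no further proof given. Your write-up simply makes explicit the summation and the absorption of the $\log k$ overhead into $\Ot(\cdot)$, which is precisely what the paper leaves implicit.
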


As the edges of $\DDG(G)$ represent distances in individual pieces of $\rdiv$,
edge weights in $\DDG(G)$ are integral and their absolute values can
be bounded by $O(rC)=O(nC)$.
By applying Theorem~\ref{t:negcyc} to $\mathcal{D}:=\DDG(G)$, we obtain that one can test
for a negative cycle in $\DDG(G)$ (or find a feasible price function $q$ of $\DDG(G)$)
in
\begin{equation*}
  \Ot\left(\sqrt{|V(\mathcal{D})|}\cdot (T_\cp(\mathcal{D})+T_\nn(\mathcal{D})+|V(\mathcal{D})|\cdot Q_\nn(\mathcal{D}))\cdot \log{C}\right)=\Ot\left(\frac{n^{3/2}}{r^{3/4}}\log{C}\right)
\end{equation*}
time.
If no negative cycle in $\DDG(G)$ is found, we still need to find a feasible price
function of $G$.
\begin{lemma}
  Given feasible price functions $p_P$ of all pieces $P\in\rdiv$, and a
  feasible price function $q$ of $\DDG(G)$, one can compute a feasible price
  function $p$ of $G$ in $O(n\log{n})+\Ot(n/\sqrt{r})$ time.
\end{lemma}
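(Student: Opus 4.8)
The plan is to combine the per-piece feasible price functions $p_P$ with the feasible price function $q$ of $\DDG(G)$ into a single feasible price function $p$ of $G$, using the fact that $\DDG(G)$ already ``summarizes'' all inter-boundary distances. First I would define $p$ on the boundary vertices $\bnd{\rdiv}$ directly using $q$: since $q$ is feasible for $\DDG(G)$ and $\DDG(G)$ preserves exactly the distances between boundary vertices in $G$, setting $p(v) := q(v)$ for $v \in \bnd{\rdiv}$ should make every $G$-path between two boundary vertices have non-negative reduced length with respect to $p$ (because such a path, projected to its maximal in-piece subpaths, corresponds to a walk in $\DDG(G)$ of the same length, and $q$-reduced lengths of $\DDG(G)$-walks are non-negative). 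In particular there is no negative cycle through two or more boundary vertices — which is consistent with what the algorithm has already verified.

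Next I would extend $p$ to the interior vertices of each piece $P$ one piece at a time. Fix a piece $P$. We already know $p$ on $\bnd{P} \subseteq \bnd{\rdiv}$, and we have a feasible price function $p_P$ of $P$. The natural choice is to set, for an interior vertex $x \in V(P)\setminus\bnd{P}$,
\[
  p(x) := \min_{v \in \bnd{P}} \bigl( p(v) + \dist_P^{p_P}(v, x) \bigr),
\]
where $\dist_P^{p_P}$ denotes $p_P$-reduced distances in $P$ (equivalently, shift back: $p(x) = \min_v (p(v) + \dist_P(v,x) - p_P(v) + p_P(x))$, interpreting $\infty$ appropriately if $x$ is unreachable from any boundary vertex, in which case one can fall back to a large value consistent with $p_P$). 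Because $p_P$ is feasible, $\dist_P^{p_P} \geq 0$ and these reduced distances are well-defined; because we took a min over all boundary vertices, for any edge $xy$ inside $P$ with both endpoints interior we get $p(y) \le p(x) + w_P^{p_P}(xy)$, i.e. the reduced weight $w(xy) - p(x) + p(y) \ge 0$. The boundary-to-interior and interior-to-boundary edges need separate checks, which is where the min-over-boundary formula and the feasibility of $p$ on $\bnd{P}$ (established in the previous paragraph via $q$) come together: an edge from $v \in \bnd{P}$ to interior $x$ is handled because $p(x) \le p(v) + w_P^{p_P}(vx)$ by the definition of the min; an edge from interior $x$ to $v \in \bnd{P}$ requires knowing $p(v) \le p(x) + w(xv)$, which follows because $\DDG(P)$ contains (a matrix entry bounding) $\dist_P$ between any two boundary vertices, and the min defining $p(x)$ picked some boundary vertex $u$ with $p(x) = p(u) + \dist_P^{p_P}(u,x)$, so $p(x) + w_P^{p_P}(xv) \ge p(u) + \dist_P^{p_P}(u,v) \ge p(v)$ by feasibility of $q$ on the $\DDG(G)$-edge $uv$.

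For the running time: computing $p$ on each interior vertex requires, per piece $P$, a single-source-style shortest path computation from the (super-)source over $\bnd{P}$ in $P$ using the non-negative reduced weights $w_P^{p_P}$; by adding an artificial source connected to all of $\bnd{P}$ with edge weights $p(v)$ and running Dijkstra (or the linear-time planar non-negative SSSP algorithm of~\cite{HenzingerKRS97}), this costs $O(|P|\log|P|)$, or $O(|P|)$ with the planar algorithm, so summing over all pieces gives $O(n\log{n})$ (or $O(n)$) total. Reading off $q$ to initialize $p$ on $\bnd{\rdiv}$ takes $O(|V(\DDG(G))|) = O(n/\sqrt{r})$ time, and the final feasibility verification, if desired, is linear. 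Hence the claimed $O(n\log{n}) + \Ot(n/\sqrt{r})$ bound. The main obstacle I anticipate is the careful case analysis for edges with exactly one endpoint on $\bnd{P}$ (especially the interior-to-boundary direction), and making sure the interpretation is consistent when an interior vertex cannot be reached from any boundary vertex of its piece — one must argue such a vertex participates in no cycle crossing the boundary and can safely be priced using $p_P$ alone (shifted by an appropriate constant) without violating feasibility of any incident edge.
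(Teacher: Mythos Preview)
Your overall strategy—fix $p=q$ on the boundary and extend into each piece by a single-source Dijkstra using the per-piece price function $p_P$—is exactly the paper's approach. However, the concrete formula you wrote has a sign error that makes the argument fail as stated. Under the paper's convention the reduced weight of an edge $xy$ is $w(xy)-p(x)+p(y)$, and feasibility reads $p(x)\le p(y)+w(xy)$; Fact~2.1 says $p=-\dist(s,\cdot)$ (not $+\dist$) is feasible. Your definition $p(x)=\min_{v\in\bnd P}\bigl(q(v)+\dist_P^{p_P}(v,x)\bigr)$ yields, as you correctly derive, $p(y)\le p(x)+w_P^{p_P}(xy)$; but that is $w_P^{p_P}(xy)+p(x)-p(y)\ge 0$, which is \emph{not} the required $w(xy)-p(x)+p(y)\ge 0$: both the sign on $p$ is flipped and the extraneous $p_P$-shift remains. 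A two-edge example already breaks: take $a\to x\to b$ with unit weights, $q(a)=0$, $q(b)=-2$ (feasible on the $\DDG$ edge $ab$ of weight $2$); your formula gives $p(x)=1$, and then $w(xb)-p(x)+p(b)=1-1-2=-2<0$.

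The fix is to negate: set $d(v)=-q(v)$ on $\bnd P$, compute $d(x)=\min_{v\in\bnd P}\bigl(-q(v)+\dist_P(v,x)\bigr)$ by Dijkstra (using $p_P$ to get non-negative reduced edge weights, and extending $p_P$ to the super-source so the source edges are also non-negative), and return $p:=-d$. Then all three edge types check out, including the interior-to-boundary case via feasibility of $q$ on the $\DDG(P)$ edge $u^{*}v$. This is precisely what the paper does, except that it first runs Dijkstra on $\DDG'(G)$ to replace $-q(v)$ by $\dist_{\DDG'(G)}(s,v)$ (whence the $\Ot(n/\sqrt{r})$ term), and it handles unreachable interior vertices cleanly by also adding $nC$-weight source edges to non-boundary vertices—your ``fall back to a large value consistent with $p_P$'' needs exactly such a device to avoid $p(x)=-\infty$.
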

\begin{proof}
  Let $G'$ ($\DDG'(G)$) be obtained from $G$ ($\DDG(G)$, resp.) by adding a super-source~$s$
  and connecting it to all vertices $v$ of $G$ ($\DDG(G)$, resp.) using an edge of weight $0$
if $v\in \bnd{\rdiv}$ and of weight $nC$ if $v\notin \bnd{\rdiv}$ (i.e., the auxiliary edges of weight $nC$
appear only in $G'$).
We will compute single-source distances in $G'$ from $s$, as $p(v):=-\dist_{G'}(s,v)$
gives a feasible price function of $G'$, and thus also of $G$.

Observe that by Fact~\ref{l:ssp-dummy-clo}, Lemma~\ref{l:ssp-sum} and Lemma~\ref{l:dijkstra-add} combined,
  using the price function $q$ of $\DDG(G)$ (extended to $s$ so that $-q(s)$ is large enough and $q$ becomes a feasible price function of $\DDG(G)'$)
  we can compute single-source distances from $s$ on $\DDG'(G)$ in $\Ot(T_\cp(\DDG'(G)))=\Ot(n/\sqrt{r})$ time.

Now, for each piece $P\in \rdiv$, consider a graph $P'$ obtained from $P$
by adding the super-source~$s$ with edges of weight $nC$ to all $v\in V(P)\setminus \bnd{P}$,
  and with edges of weight $\dist_{\DDG'(G)}(s,v)$ to all $v\in \bnd{P}$.
Let us extend the feasible price function $p_P$ of $P$ to a feasible price function of $P'$
similarly as we did for~$q$.
Using $p_P$, using standard Dijkstra's algorithm,
we compute single-source distances $\dist_{P'}(s,v)$ to all $v\in V(P)$ and set
$p(v)=-\dist_{P'}(s,v)$.
This takes $O(|P|\log{|P|})$ time. Through all pieces, we spend $O(n\log{n})$ time on this.

To prove the above algorithm computing a feasible price function $p$ of $G'$ correct,
it is enough to argue that for a piece $P\in \rdiv$, we have $\dist_{P'}(s,v)=\dist_{G'}(s,v)$
for all $v\in V(P)$.
It is clear that $\dist_{P'}(s,v)\geq \dist_{G'}(s,v)$ since $P'$ is a subgraph of $G'$
with some edges shortcutting paths in $G'$ added.
To prove $\dist_{P'}(s,v)\leq \dist_{G'}(s,v)$, consider a shortest path $R=s\to v$ in $G'$.
If $R$ contains no vertices of $\bnd{P}$, then we have $R\subseteq P'$
so the inequality holds. Otherwise, let $z$ be the last vertex of $\bnd{P}$
appearing on $R$. Let us write $R=R_1R_2$, where $R_1=s\to z$ and $R_2=z\to v$.
Note that we have $R_2\subseteq P'$ so it is enough
to argue that $\dist_{P'}(s,z)$ is no more than the length of $R_1$.
If the first edge of $R_1$ is $sb$, where $b\in \bnd{\rdiv}$,
  then $\dist_{\DDG'(G)}(s,z)\leq \dist_{\DDG'(G)}(b,z)=\dist_{G'}(b,z)=\len(R_1)$.
  But there is an edge $sz$ of weight $\dist_{\DDG'(G)}(s,z)$ in $P'$ which proves
that $\dist_{P'}(s,z)\leq \len(R_1)$.
So suppose the first edge of $R_1$ is $sy$, where $y\notin\bnd{\rdiv}$.
But then $\len(R_1)=nC+\dist_{G}(y,z)\geq nC-(n-1)C= C>0$.
  However, we have $\dist_{P'}(s,z)\leq \dist_{\DDG'(G)}(s,z)\leq 0$ since there is
  a direct $0$-weight edge $sz$ in $\DDG'(H)$. So $\dist_{P'}(s,z)\leq \len(R_1)$ in this case as well.
\end{proof}

Let us analyze the running time $T(n)$ of the given algorithm for our choice of $r=n^{8/9}$.
We have
\begin{equation}\label{eq:ssp-time}
  T(n)=O(n\log{n})+\Ot(n^{5/6}\log{C})+ \sum_{P\in \rdiv} T(|P|).
\end{equation}

We now proceed with a formal proof of the bound $T(n)=O(n\log{(nC)})$.
Let $z\geq 1$ be a constant simultaneously larger than that hidden in the $O(n\log{n})$ term in~\eqref{eq:ssp-time},
the one in the $O(r)$ piece size bound, 
the one in the $O(n/r)$ bound on $|\rdiv|$,
and the one in the $O(\sqrt{r})$ bound on $|\bnd{P}|$.

Note that $\rdiv$ has at least $n/(2zr)$ pieces of size at least $r/(2z)$,
as otherwise the total number of vertices in pieces would be less then $(n/(2zr))\cdot zr+(zn/r)\cdot (r/2z)=n$, a contradiction.

We now prove that $T(n)\leq cn\log{(nC)}-dn^{0.9}\log{C}$ for some constants $c\geq 2$, $d>0$
by induction on~$n$.
Let $n_0$ be a constant such that for all $n\geq n_0$ we have
(1) $16z^2<n^{0.01}$,
(2) $z^2(n/\sqrt{r})(\log{n}+1)=z^2n^{5/9}(\log{n}+1)\leq n^{0.9}$,
(3) the term $\Ot(n^{5/6}\log{C})$ is no more than $n^{0.9}\log{C}$ in the right-hand side of~\eqref{eq:ssp-time}.
Then for any $P\in\rdiv$ we also have $|P|\leq zr\leq zn^{8/9}\leq n^{8/9+0.01}\leq n^{0.9}$.

Suppose first that $n>n_0$ and the bound holds for all $n'<n$. Then we have:
\begin{align*}
  T(n)&\leq O(n\log{n})+\Ot(n^{5/6}\log{C})+\sum_{P\in \rdiv}\left(c\cdot |P|(\log{|P|}+\log{C})-d\cdot|P|^{0.9}\log{C}\right)\\
  &\leq zn\log{n}+n^{0.9}\log{C}+\sum_{P\in \rdiv}c\cdot |P|\left(\log n^{9/10}+\log{C}\right)-d\cdot \frac{n}{2zr}\left(\frac{r}{2z}\right)^{0.9}\log{C}\\
  &\leq zn\log{n}+n^{0.9}\log{C}+c\left(\frac{9}{10}\log{n}+\log{C}\right)\cdot \sum_{P\in\rdiv}|P|-\frac{d}{4z^2}\frac{n}{r^{0.1}}\log{C}\\
  &\leq zn\log{n}+n^{0.9}\log{C}+c\left(\frac{9}{10}\log{n}+\log{C}\right)\cdot \left(n+z^2n/\sqrt{r}\right)-\frac{d}{4z^2}\frac{n}{r^{0.1}}\log{C}\\
  &\leq \left(z+\frac{9}{10}c\right)n\log{n}+cn\log{C}+n^{0.9}\log{C}+c\left(\frac{9}{10}\log{n}+\log{C}\right)\cdot \frac{z^2n}{\sqrt{r}}-\frac{d}{4z^2}\cdot n^{0.91}\log{C}\\
  &\leq \left(z+\frac{9}{10}c\right)n\log{n}+cn\log{C}+\left(c+1-\frac{d}{4z^2}n^{0.01}\right)n^{0.9}\log{C}.\\
\end{align*}
To finish the proof we first set the constant $c=\Omega(n_0)$ so that $z+\frac{9}{10}c\leq c$, and $cn/2$ is for $n\leq n_0$ at least as large as
the $O(n^2)$ bound on the running time of Bellman-Ford algorithm.
We would also like to have $(c+1-\frac{d}{4z^2}n^{0.01})\leq -d$, which would imply the
desired bound $cn\log{(nC)}-dn^{0.9}\log{C}$ for $n>n_0$.
For that to hold, it is enough to put $d=\frac{c+1}{(n_0^{0.01}/4z^2)-1}$. This way, by $n_0^{0.01}/4z^2\geq 4$,
 we also have $d\leq \frac{c+1}{3}\leq c/2$.
We can use this to prove the base of the induction.
Indeed, we have $cn\log{(nC)}-dn^{0.9}\log{C}\geq (c-d)n\log{C}\geq cn/2$, which, by the definition of $c$,
is enough to cover the cost of running Bellman-Ford for $n\leq n_0$.
We have thus proved the following.

\tshpath*
\section{Max $s,t$-flow oracles for planar digraphs}\label{s:flow-oracle}

\subsection{Reduction to a parametric negative cycle detection problem}\label{s:flow-reduction}

We now refer to a well-known reduction of the decision variant of the max $s,t$-flow problem on a planar digraph
to the negative cycle detection problem.
In the recent literature~\cite{Erickson10, nussbaum2014network}, this reduction is generally attributed to an unpublished manuscript by Venkatesan (see also~\cite{JohnsonV83}).

Let us denote by $u(e)\geq 0$ the capacity of an edge $e$.
Let us assume that each edge $e\in G$ has also its reverse $\rev{e}$ in $G$,
with the same embedding as~$e$, of capacity $u(\rev{e})=0$.
This does not change the amounts of flow that can be sent between the vertices
but enables the correspondence between cuts and dual cycles, and also
guarantees that each connected subgraph of $G$ (and also the dual graph) is strongly connected.
For a subset $F\subseteq E$, let $u(F)=\sum_{e\in F}u(e).$

Let $\lambda\geq 0$.
Let us embed some $s\to t$ path $Q_{s,t}$ in $G$ -- this path can be chosen in an arbitrary
way, for example it can be embedded ``almost parallel'' to some of the existing $s\to t$ paths $P$ in~$G$.
The ``capacity'' of each edge of $Q_{s,t}$ is set to $-\lambda$, whereas its reverse capacity is set to $\lambda$.
The path $Q_{s,t}$ can be likewise embedded \emph{on} $P$
in $G$ without introducing new edges or faces -- it is enough to increase the capacity
of each edge $e\in P$ by $-\lambda$, and the capacity of $\rev{e}$ by $\lambda$.

Let $G_\lambda$ be the graph obtained this way, i.e., we have some $s\to t$ path $Q_{s,t}\subseteq G$ fixed,
and $G_\lambda$ differs from $G$ only by edge weights on the path $Q_{s,t}$ and its reverse.

\venkatesan*
Whereas in the primal graph edges have capacities (as typical for max-flow problems), in the dual graph, we call them weights
(as typical in the negative cycle detection problem).

The intuition behind Lemma~\ref{l:venkatesan} is as follows (see e.g.,~\cite{Erickson10} for a formal proof).
By the max-flow-min-cut theorem, there exists a flow of value $\lambda$ in $G$
iff the minimum $s,t$-cut in $G$ has capacity~$\lambda$ or more.
It is well known that simple undirected $s,t$-cuts in $G$ correspond to simple
cycles in the dual $G^*$ that contain one of the faces $s^*,t^*$ inside
and the other outside.
However, not all directed cycles in $G^*$ with this property correspond to 
simple directed $s,t$-cuts in $G$.
Directed cycles in the dual have the ``left'' and the ``right'' side (defined
naturally once we fix what is the left side of a directed edge in~$G^*$).
Only simple cycles with $s^*$ on the left and $t^*$ on the right correspond
to simple directed $s,t$-cuts in~$G$.
For any directed \emph{simple} cycle $C^*$ in the dual, 
consider counting how many times $Q_{s,t}$ crosses it in the following way:
if $Q_{s,t}$ crosses $C^*$  left to right (i.e., an edge $e\in Q_{s,t}$ satisfies $e^*\in E(C^*)$), increment the counter
by $1$, and if $Q_{s,t}$ crosses $C^*$ right to left (i.e., an edge $e\in Q_{s,t}$ satisfies $(\rev{e})^*\in E(C^*)$) decrement the counter by $1$.
One can prove (see~\cite{Erickson10}):
\begin{fact}\label{f:crossing}
  The final value of the counter, called the \emph{crossing number} of $C^*$ wrt $Q_{s,t}$,
is in the set $\{-1,0,1\}$.
\end{fact}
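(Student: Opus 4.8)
The plan is to prove Fact~\ref{f:crossing} by a homological / topological argument about curves on the sphere obtained by adding a point at infinity to the plane. First I would view $G$ as embedded on the sphere $S^2$, so that $C^*$ is a simple closed curve on $S^2$ and $Q_{s,t}$ is a simple arc from (the face containing) $s$ to (the face containing) $t$; crossing $C^*$ with an edge $e \in Q_{s,t}$ corresponds exactly to $Q_{s,t}$ transversally crossing the curve $C^*$ at that point, and the left-to-right vs.\ right-to-left distinction records on which side of $C^*$ the crossing moves $Q_{s,t}$. Since $C^*$ is a simple closed curve on $S^2$, by the Jordan curve theorem it separates $S^2$ into exactly two open disks, which I will call its ``left'' side and ``right'' side (matching the orientation convention fixed in the text). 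The crossing counter, as defined, is precisely the signed count of how many times $Q_{s,t}$ passes from one side of $C^*$ to the other.

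The key observation is then the following: after a sequence of transversal crossings of a Jordan curve, a curve ends on the left side if and only if the signed crossing count (left-to-right minus right-to-left) differs from the number of times it ``should'' be on the left by an even amount — more simply, the endpoints' sides are determined by the parity and the signed count collapses: the net signed count equals $+1$ if the curve starts on the right and ends on the left, $-1$ if it starts on the left and ends on the right, and $0$ if it starts and ends on the same side. Concretely, I would walk along $Q_{s,t}$ from $s$ to $t$, maintaining which side of $C^*$ the current point lies on; each left-to-right crossing flips the side and adds $+1$, each right-to-left crossing flips the side and adds $-1$. Hence at every prefix, the running counter is $+1$ when we are currently on the left having started on the right, $-1$ when currently on the right having started on the left, and $0$ when currently on the same side as the start. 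This telescoping/parity argument shows the final value lies in $\{-1,0,1\}$, the three cases corresponding to $(s^* \text{ side}, t^* \text{ side}) \in \{(\text{right},\text{left}),(\text{left},\text{right}),(\text{same},\text{same})\}$.

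The one subtlety to handle carefully is that $s^*$ and $t^*$ are faces (two-dimensional regions), not points, and $Q_{s,t}$ is a path in the primal graph $G$, so I must be precise about what ``$Q_{s,t}$ starts at $s$'' means relative to the dual curve $C^*$: the endpoints of $Q_{s,t}$ are primal vertices $s$ and $t$, and $C^*$ is a cycle in $G^*$ whose edges are duals of a subset of primal edges; $Q_{s,t}$ does not pass through any vertex of $C^*$ in the primal-dual sense because it only interacts with $C^*$ by crossing edges $e$ with $e^* \in E(C^*)$ or $(\rev{e})^* \in E(C^*)$. I would make this rigorous by drawing $Q_{s,t}$ as a curve in the plane that stays in the interior of faces and edges except at its endpoints $s,t$, perturbing it to be transversal to $C^*$, and noting that the ``side of $C^*$ containing $s$'' is well-defined since $s$ is a vertex of $G$ not lying on the curve $C^*$ (which lives in the dual).

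I expect the main obstacle to be not the topological content itself but stating it cleanly in the paper's combinatorial vocabulary of primal/dual planar graphs, embeddings of $Q_{s,t}$, and the left/right conventions for directed dual cycles — that is, making the intuitive ``signed crossing count of an arc with a Jordan curve telescopes to $\{-1,0,1\}$'' argument airtight given that the curve and the arc live in dual incarnations of the same embedding. Since the text explicitly says a formal proof can be found in~\cite{Erickson10}, I would keep the presentation here at the level of the telescoping-sides argument above, referring to Erickson for the routine but notationally heavy verification that the combinatorial crossing counter coincides with the topological signed crossing number.
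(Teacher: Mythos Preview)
Your argument is correct and is the standard one: the Jordan curve $C^*$ separates the sphere into two sides, each crossing of $Q_{s,t}$ with $C^*$ flips the current side, and the sign of a crossing is determined solely by the direction of the flip, so consecutive crossings alternate in sign and the running counter never leaves $\{-1,0,1\}$. The paper itself does not give a proof of this fact at all; it simply states it and refers the reader to~\cite{Erickson10}, so there is no ``paper's approach'' to compare against beyond that citation. Your telescoping-sides sketch is exactly the argument one finds in Erickson's paper.

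One small internal inconsistency worth cleaning up: you write that a ``left-to-right crossing \ldots\ adds $+1$'' but then say the net count is $+1$ when the curve ``starts on the right and ends on the left''. These two sentences use opposite sign conventions. It does not affect the conclusion (the set $\{-1,0,1\}$ is symmetric), but if you keep the argument in the paper you should pick one convention and stick with it, matching whichever dual-edge orientation convention the paper fixes for $e^*$.
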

The above actually holds for any Jordan curve $C^*$,
not only simple cycles in the dual.
Formally, for any subsets $A\subseteq E(G)$, $B\subseteq E(G^*)$, we will
use:
\begin{align*}
  \pi_B(A)&=|\{e\in A:e^*\in B\}|-|\{e\in A:(e^R)^*\in B\}|,\\
  \pi_A(B)&=|\{e^*\in B:e\in A\}|-|\{e^*\in B:e^R\in A\}|.
\end{align*}

With this definition, a directed dual $C^*$ corresponds to a simple directed $s,t$-cut if and only if $\pi_{Q_{s,t}}(C^*)=\pi_{C^*}(Q_{s,t})=1$.
Note that for any cycle $C^*$ in $G^*_\lambda$, its weight $w_{G^*_\lambda}(C^*)$ equals $u(C)-\lambda\cdot\pi_{Q_{s,t}}(C^*)$ (where $C=\{e\in E(G):e^*\in C^*\}$).
Thus, we have $w_{G^*_\lambda}(C^*)\geq u(C)\geq 0$ if $\pi_{Q_{s,t}}(C^*)\leq 0$
and $w_{G^*_\lambda}(C^*)=u(C)-\lambda$ if $\pi_{Q_{s,t}}(C^*)=1$.
Since simple $s,t$-cuts in~$G_\lambda$ correspond precisely to simple cycles $C^*$ with $\pi_{Q_{s,t}}(C^*)=1$ in $G^*_\lambda$,
one can prove that there exists a negative cycle in $G^*_\lambda$ if and only if there is an $s,t$-cut with
capacity less than $\lambda$ in $G$.

Note that the reduction allows binary searching for the maximum $s,t$-flow via
using negative cycle detection-based tests.
For example, using Theorem~\ref{t:shpath},
for integral capacities in $[0,C]$ one can compute a maximum $s,t$-flow
value in $O(n\log^2(nC))$ time.
However, this bound is worse and less general 
than the best known strongly polynomial $O(n\log{n})$ bound~\cite{BorradaileK09, Erickson10}.

\subsection{Combining the reduction with a decomposition}

Even though the reduction from~Section~\ref{s:flow-reduction} is not the most efficient method to compute
max $s,t$-flow in a planar graph, it turns out to be robust enough
to allow combining it with a recursive decomposition to answer queries
about max $s,t$-flow between arbitrary $(s,t)$ pairs.
For any $(s,t)$ pair, we will be able
to reduce the problem to $O(\log{n})$ negative cycle detection instances on graphs with $\Ot(\sqrt{n})$ vertices.
In each of these graphs, we will search for a negative cycle in sublinear time (in the number of edges)
even though the graph will have $\Theta(n)$ edges defined.

We will use the following recursive decomposition of a simple plane graph.
Miller~\cite{Miller84} showed how to compute, in a triangulated plane graph with $n$ vertices, a simple cycle of size $2\sqrt{2}\sqrt{n}$ that separates the graph into two subgraphs, each with at most $2n/3$ vertices. Simple cycle separators  can be used to recursively separate a planar graph until pieces have constant size.
\cite{KleinMS13} show how to obtain a complete recursive decomposition tree $\TG(G)$ of a triangulated graph~$G$
using cycle separators in $O(n)$ time. 
$\TG(G)$ is a binary tree whose nodes correspond to subgraphs of $G$ (pieces), with the root being all of~$G$ and the leaves being pieces of constant size.
We identify each piece $P$ with the node representing it in $\TG(G)$. We can thus abuse notation and write $P\in \TG(G)$.
The \emph{boundary vertices} $\bnd{P}$ of a \emph{non-leaf piece} $P$ are vertices that $P$ shares
with some other piece $Q\in \TG(G)$ that is neither $P$'s ancestor nor its descendant.
We assume $P$ to inherit the embedding from $G$. The faces of $P$
that are faces of $G$ are called \emph{natural}, whereas the faces of $P$ that
are not natural are the \emph{holes} of~$P$.
The construction of~\cite{KleinMS13} additionally guarantees
that for each piece $H\in\TG(G)$:
\begin{enumerate}[(a)]
  \item $H$ is connected and contains at least one natural face,
  \item if $H$ is a non-leaf piece, then each natural face $f$ of $H$ is a face of
precisely one child of~$H$,
  \item $H$ has $O(1)$ holes containing precisely the vertices~$\bnd{H}$.
\end{enumerate}

\newcommand{\hex}{\ensuremath{h^{\mathrm{ex}}}}
\newcommand{\bex}{\ensuremath{\partial^{\mathrm{ex}}}}

Note that since the natural faces of a piece are partitioned among its children, and there are $O(1)$ holes per piece,
the subtree of $\TG(G)$ rooted at $H$ has $O(|V(H)|)$ nodes.
Throughout, to avoid confusion, we use \emph{nodes} when referring to $\TG(G)$ and \emph{vertices} when referring to $G$
or its subgraphs.
It is well-known~\cite{BorradaileSW15, vorexact, KleinMS13} that by suitably choosing
cycle separators that alternate between balancing vertices, boundary vertices, and holes,
one can also guarantee that (1) $\sum_{H\in\TG(G)}|H|=O(n\log{n})$, (2) $\sum_{H\in\TG(G)}|\bnd{H}|^2=O(n\log{n})$, and (3) $|\bnd{H}|=O(\sqrt{n})$.

To proceed, we first need to guarantee that $G$ has degree~$3$. This is easily
achieved by standard embedding-respecting transformations. First, we repeatedly
introduce parallel $0$-capacity edges incident to vertices of degree less than $3$,
until there are none.
Afterwards, we replace each vertex by a cycle of sufficiently large-capacity edges. The graph $G$
grows by a constant factor only, and clearly for any pair of vertices,
the max $s,t$-flow value in $G$ before the transformation equals
the max $s',t'$-flow values between some copies $s'$ of $s$ and $t'$ of $t$
after the transformation. Therefore, in the following we assume that each vertex in $G$ has degree $3$.

Observe that by the degree-$3$ assumption, the dual graph $G^*$ is triangulated.
We can thus build a recursive decomposition $\TG(G^*)$ of $G^*$ as described above.

For simplicity, let also additionally assume that the obtained decomposition
satisfies the following: for each node $H\in \TG(G^*)$
the holes of $H$ are \emph{simple and pairwise vertex-disjoint}.
We discuss how to drop this assumption in Section~\ref{s:non-simple}.

Recall that the faces of $G^*$ correspond to the vertices of $G$.
As a result, for a node $H\in\TG(G^*)$, every natural face of $H$
corresponds to some vertex of $G$. Let us denote by $W(H)$ the vertices
of~$G$ whose duals in $G^*$ are natural faces of $H$.
We use $W^*(H)$ when referring to natural faces of $H$.

Since a piece $H\in \TG(G^*)$ has simple and vertex-disjoint holes, for any of $H$'s two natural faces $f=u^*,g=v^*$,
corresponding to vertices $u,v\in V$ respectively,
there is a chain of distinct natural faces $f=f_1,\ldots,f_{k+1}=g$ in~$H$
such that $f_i$ and $f_{i+1}$ share an edge $e_i^*$ in~$H$.
To see this, note that for each hole $h$, any two of its neighboring (natural) faces
can be connected with a face chain consisting of the neihboring (natural) faces of $h$ only.
Therefore, any path between two natural faces of $H$
can be transformed to avoid the holes.
A chain $f_1,\ldots,f_{k+1}$ in question corresponds to a \emph{simple} path between $u\to v$ in $G$.
If $H$ is a \emph{leaf piece}, let us fix an arbitrary one such chain and denote by $Y_H[f,g]$ the corresponding $u\to v$ path
$e_1\ldots e_k$ (where $e_i^*$ is the dual of $e_i$) in $G$. 

\paragraph{Split faces and edges.}
For each non-leaf node $H\in \TG(G^*)$, let us arbitrarily pick two natural \emph{split faces} $f_{H,1},f_{H,2}$ of the children nodes $H_1,H_2$ of $H$ satisfying the following:
$f_{H,i}$ is a face of $H_i$ but
not of $f_{H,{3-i}}$ and $f_{H,1},f_{H,2}$ are adjacent faces of $H$.
To see that such a pair exists, recall
that both $H_1$ and $H_2$ contain at least one natural face and the natural faces of $H$ are partitioned among $H_1$ and $H_2$. Consider a simple chain of faces of $H$ connecting a natural
face of $H_1$ to a natural face of $H_2$ (that we have already argued to exist).
Such a chain has to contain a natural face of $H_1$ followed by a natural face of $H_2$.
We also say that $f_{H,i}$ is a \emph{sibling split face} of $f_{H,3-i}$.

Denote by $e_H^*$ the common edge of $f_{H,1}$ and $f_{H,2}$, called a \emph{dual split edge}.
The \emph{primal split edge} $e_H$ is such that $e_H^*$ is the dual of $e_H$.

\paragraph{Choosing and decomposing the primal $s,t$-path.} Having defined split faces, we will now fix, for any pair $(s,t)\in V\times V$,
a particularly convenient simple $s\to t$ path $Q_{s,t}\subseteq G$ which is required by the reduction of Section~\ref{s:flow-reduction}.

\newcommand{\pth}{\Phi}

We will set $Q_{s,t}$ to be the path $\pth(G^*,s^*,t^*)$ constructed using a recursive function $\pth(H,x^*,y^*)$
that, given a piece $H\in\TG(G^*)$ and its two natural faces $x^*, y^*\in W^*(H)$, produces a simple path $x\to y$
in~$G$ whose edges' duals come from $E(H)$.
Let $H_1,H_2$ denote the children of $H$, if $H$ is non-leaf.
We set:
\begin{equation}\label{eq:path}
  \pth(H,x^*,y^*)=\begin{cases} Y_H[x^*,y^*] &\text{ if }H\text{ is a leaf piece of }\TG(G^*),\\
    \pth(H_i,x^*,y^*) &\text{ if $x,y\in W(H_i)$},\\
    \pth(H_i,x^*,f_{H,i})\cdot e_H\cdot \pth(H_{3-i},f_{H,{3-i}},y^*) &\text{ $x\in W(H_i)$ and $y\in W(H_{3-i})$}.
  \end{cases}
\end{equation}

Since the sets $W(H_1),W(H_2)$ are disjoint if $H_1,H_2$ are children of the same parent, we have:
\begin{fact}
  For any $H\in\TG(G^*)$, and $x,y\in W(H)$, $\pth(H,x^*,y^*)$ is a simple $x\to y$ path in $G$.
\end{fact}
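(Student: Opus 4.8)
The plan is to argue by structural induction on $H$ within $\TG(G^*)$ (equivalently, on the height of the subtree of $\TG(G^*)$ rooted at $H$), strengthening the inductive claim so that it also records that every vertex appearing on $\pth(H,x^*,y^*)$ lies in $W(H)$; the further property, already part of the description of $\pth$, that the duals of the path's edges lie in $E(H)$ can be carried along for free but is not needed here. For the base case, when $H$ is a leaf, I would recall that $\pth(H,x^*,y^*)=Y_H[x^*,y^*]$ was built as a walk $e_1\cdots e_k$ through a chain of \emph{distinct} natural faces $x^*=f_1,\dots,f_{k+1}=y^*$ of $H$ with $f_i$ and $f_{i+1}$ sharing a dual edge $e_i^*$. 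Since two natural faces of $H$ sharing $e_i^*$ are exactly the two faces of $G^*$ incident to $e_i^*$, they correspond to the two endpoints of $e_i$ in $G$; hence the walk visits precisely the vertices $v_1,\dots,v_{k+1}$ with $v_j^*=f_j$, which are pairwise distinct because the $f_j$ are, and all lie in $W(H)$. So $Y_H[x^*,y^*]$ is a simple $x\to y$ path.

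For the inductive step I would take $H$ non-leaf with children $H_1,H_2$ and lean on two structural facts: every natural face of $H$ is a face of exactly one child, whence $W(H)=W(H_1)\sqcup W(H_2)$ (this is the disjointness quoted just before the statement), and every natural face of $H_i$ is again a natural face of $H$, whence $W(H_i)\subseteq W(H)$. If $x,y\in W(H_i)$ for a single $i$, then $\pth(H,x^*,y^*)=\pth(H_i,x^*,y^*)$, the recursive call is legitimate since $x^*,y^*\in W^*(H_i)$, and the induction hypothesis for $H_i$ yields a simple $x\to y$ path with all vertices in $W(H_i)\subseteq W(H)$. If instead $x\in W(H_i)$ and $y\in W(H_{3-i})$, let $a_i$ and $a_{3-i}$ be the vertices of $G$ with $a_i^*=f_{H,i}$ and $a_{3-i}^*=f_{H,3-i}$, so that $a_i\in W(H_i)$, $a_{3-i}\in W(H_{3-i})$, and in particular $a_i\ne a_{3-i}$. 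The split edge $e_H$, in the orientation that is available because $G$ carries every reverse edge, joins $a_i$ to $a_{3-i}$, so $\pth(H,x^*,y^*)=\pth(H_i,x^*,f_{H,i})\cdot e_H\cdot\pth(H_{3-i},f_{H,3-i},y^*)$ is a well-formed $x\to y$ walk; by the induction hypothesis its first factor is a simple $x\to a_i$ path with vertices in $W(H_i)$ and its third factor a simple $a_{3-i}\to y$ path with vertices in $W(H_{3-i})$. Since $W(H_i)\cap W(H_{3-i})=\emptyset$, the two factors share no vertex, and $e_H$ reuses only the already-present endpoints $a_i,a_{3-i}$; hence the concatenation is a simple $x\to y$ path with all vertices in $W(H_i)\cup W(H_{3-i})=W(H)$, closing the induction.

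I do not expect a genuine obstacle here: the statement is essentially forced once the definitions are in place. The only points deserving a line of care are (a) the well-definedness of the recursive calls — chiefly that $f_{H,i}$ really is a natural face of $H_i$, which is precisely how the split faces were selected — and (b) the decomposition $W(H)=W(H_1)\sqcup W(H_2)$, which one obtains from property (b) of the recursive decomposition $\TG(G^*)$ together with the observation that a face of $G^*$ surviving as a face of a child also survives as a face of its parent. Simplicity of $\pth$ then rests entirely on this disjointness: a recursively assembled path can repeat a vertex only if one of its two halves already does (excluded inductively) or the two halves share a vertex, and the halves live in the disjoint natural-face vertex sets of sibling pieces.
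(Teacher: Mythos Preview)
Your proposal is correct and follows essentially the same approach as the paper, which offers only a one-line justification (``Since the sets $W(H_1),W(H_2)$ are disjoint if $H_1,H_2$ are children of the same parent'') before stating the fact. You have simply spelled out the structural induction that this sentence is gesturing at, including the strengthened hypothesis that all vertices of $\pth(H,x^*,y^*)$ lie in $W(H)$, which is exactly what is needed to make the disjointness argument go through in the two-child case.
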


Recall that for the purpose of using the reduction of Section~\ref{s:flow-reduction}, ideally, we would like to be able to
increase the capacities of the edges on $Q_{s,t}$ by $-\lambda$, and of the reverse edges on $Q_{s,t}$ by $\lambda$.
Unfortunately, the path $Q_{s,t}$ can be in general of length $\Theta(n)$
which seems prohibitive if we want to explicitly adjust the weights upon source/sink query.

We will now prove that the possible paths $Q_{s,t}$ constructed via $\pth(G^*,s^*,t^*)$ have many subpaths in common.
For any $H\in\TG(G^*)$, define the \emph{split set} $Z(H)$ to contain
all the natural faces $f$ of $H$ such that~$f$ is a split face of
one of $H$'s ancestors in $\TG(G^*)$.
We would like each element of $Z(H)$ to also record
the ancestor $A$ where the split face originates from,
so technically $Z(H)$ might be a multiset.
Additionally, it will be convenient to assume that the faces
of $Z(H)$ are not considered ordinary natural faces of $H$ since they carry more information (about the ancestor).
As a result, we will assume that $W^*(H)\cap Z(H)=\emptyset$.
However, if $x^*\in W^*(H)$, we might still say that $x^*$ is an \emph{ancestor split face}
if $Z(H)$ contains at least one element based on the face $x^*$.
Similarly, for $z^*\in Z(H)$, $z^*$ can be projected on $W^*(H)$ (converted to an element of $W^*(H)$)
by dropping the ancestor information in a trivial way.

Observe that for any $H$ we have $|Z(H)|=O(\log{n})$ since $H$ has $O(\log{n})$ ancestors
and each of them contributes at most a single split face.
As a result, $W^*(H)$ contains $O(\log{n})$ ancestor split faces.
Moreover, note that for any $f_{A,i}\in Z(H)$, where $A$
is some ancestor of $H$, $f_{A,3-i}$ is not a natural face of $H$
and the dual split edge $e_A^*$ lies on the boundary of some hole of $H$.
As a result, the primal split edge $e_A$ crosses the boundary of that hole of $H$.

The following lemma gives a decomposition of any of $\Omega(n^2)$ possible $Q_{s,t}$ paths into 
a small number of subpaths coming from a set of only $\Ot(n)$ distinct paths in $G$.

\begin{lemma}\label{l:path-decomp}
  Any path $\pth(G^*,s^*,t^*)$ can be expressed as a concatenation
  of $O(\log{n})$ paths, each of which is either
  \begin{enumerate}
    \item a split edge,
    \item or is of the form $\pth(H,x^*,y^*)$, where $x^*,y^*\in W^*(H)$ are ancestor split faces of $H$,
    \item or is of the form $\pth(L,x^*,y^*)$, where $L$ is a leaf of $\TG(G^*)$.
  \end{enumerate}
    Such a decomposition can be computed in $O(\log{n})$ time.
\end{lemma}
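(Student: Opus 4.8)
The plan is to unfold the recursion~\eqref{eq:path} along the root-to-leaf path that the recursion actually follows when computing $\pth(G^*,s^*,t^*)$, and to argue that the $O(\log n)$ pieces encountered are exactly of the three listed types. First I would trace the recursion: starting at $H=G^*$ with $x^*=s^*$, $y^*=t^*$, at each non-leaf node $H$ with children $H_1,H_2$ one of two things happens. Either $s,t\in W(H_i)$ for one child, and the recursion simply descends to $(H_i,x^*,y^*)$ with the \emph{same} endpoints; or $x\in W(H_i)$ and $y\in W(H_{3-i})$ for different children, in which case the call splits into $\pth(H_i,x^*,f_{H,i})$, the single split edge $e_H$, and $\pth(H_{3-i},f_{H,3-i},y^*)$. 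The key structural observation is that the ``split'' (third) case happens at most once along any path from the root of $\TG(G^*)$ downward, because once it happens the two recursive subcalls live in the two children's subtrees and from then on each of them has as one endpoint a fixed split face $f_{H,i}$ of that subtree's root, which by definition lies in $Z$ of every descendant.

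Concretely, I would split the analysis into a \emph{descent phase} and two \emph{tail phases}. In the descent phase we follow single-child descents (case~2 of~\eqref{eq:path}) from $G^*$ until we first reach a node $H$ where the split case applies (or a leaf, handled directly by type~3). This emits the split edge $e_H$ (type~1). Then we are left with two subproblems $\pth(H_i,x^*,f_{H,i})$ and $\pth(H_{3-i},f_{H,3-i},y^*)$, each of which I would handle by the same recursive structure but now observing that \emph{one endpoint is always an ancestor split face} of the current piece: in $\pth(H_i,x^*,f_{H,i})$ the face $f_{H,i}$ is recorded in $Z$ of $H_i$ and all its descendants, and it keeps being carried as an endpoint until either (i) we hit a leaf (emit a type~3 piece), (ii) we hit another split case (emit another split edge of type~1, and recurse into the two halves — now the non-carried endpoint in the ``outer'' half is $x^*$, still not yet an ancestor split face, but the ``inner'' half has \emph{both} endpoints equal to split faces and is a type~2 piece), or (iii)\ldots — the bookkeeping here is the crux. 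The cleanest way to phrase it: define the invariant that every recursive subproblem $\pth(H',x'^*,y'^*)$ generated after the first split has $y'^*$ an ancestor split face of $H'$; when such a subproblem itself hits the split case at node $H''$, the subcall carrying $y'^*$ continues to satisfy the invariant, while the subcall $\pth(H''_j, x'^*, f_{H'',j})$ has its \emph{new} endpoint $f_{H'',j}$ an ancestor split face of every descendant of $H''_j$, so after \emph{one more} level it has both endpoints being ancestor split faces, i.e.\ it matches type~2 (we stop recursing there and output it whole).

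The depth bound follows because $\TG(G^*)$ has depth $O(\log n)$ and we produce at most $O(1)$ pieces per level of the recursion tree (each split adds one edge and branches into two, but on each branch the number of ``active'' subproblems is bounded: the carrying branch stays a single subproblem, and the other branch becomes a type~2 leaf of the decomposition after one level). Summing over the $O(\log n)$ levels gives $O(\log n)$ pieces total, and since the recursion makes a bounded amount of work per level (a constant-time test ``is $x\in W(H_i)$?'' plus following one or two child pointers), the decomposition is computed in $O(\log n)$ time; the per-node test $x\in W(H_i)$ can be answered in $O(1)$ after the linear-time preprocessing of $\TG(G^*)$ that records, for each vertex of $G$, the root-to-leaf path of pieces containing its dual face. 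The main obstacle I anticipate is making the ``at most one split edge per root-to-leaf branch, and the non-carrying half becomes type~2 within one level'' claim fully rigorous — in particular verifying that after a split at $H''$ the subcall $\pth(H''_j,x'^*,f_{H'',j})$ cannot itself enter the split case at the very next level in a way that produces a non-type-2 fragment, which needs the observation that $f_{H'',j}$ is a natural face of \emph{every} node on one root-to-leaf path inside the subtree of $H''_j$, so the endpoint $x'^*$ and $f_{H'',j}$ either already lie in different children of $H''_j$ (one more split, both resulting fragments having a split-face endpoint, hence type~2 or a further split edge) or lie in the same child and we keep descending with $f_{H'',j}$ now playing the role of the carried ancestor split face.
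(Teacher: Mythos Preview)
Your overall strategy is the same as the paper's---unfold the recursion~\eqref{eq:path}, track how many endpoints are ancestor split faces, and observe that each split emits $O(1)$ pieces---but you have the two subcalls swapped in the tail phase, and this makes your argument as written incorrect. Concretely: suppose in the tail phase you have $\pth(H',x'^*,y'^*)$ with $y'^*$ an ancestor split face, and it splits at some descendant $H''$ with $x'^*\in W^*(H''_i)$, $y'^*\in W^*(H''_{3-i})$. Then the subcall carrying $y'^*$ is $\pth(H''_{3-i},f_{H'',3-i},y'^*)$, and \emph{both} its endpoints are ancestor split faces of $H''_{3-i}$ (the new one $f_{H'',3-i}$ by definition, and $y'^*$ inherited from above). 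So this subcall is already type~2 and terminates. Meanwhile the subcall $\pth(H''_i,x'^*,f_{H'',i})$ that you claim becomes type~2 ``after one more level'' does not: $x'^*$ is one of $s^*,t^*$, which is never an ancestor split face of anything, so this subcall still has exactly one ancestor-split-face endpoint and is the one that \emph{continues} the tail phase. Your worry at the end is thus aimed at the wrong subcall.

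Once you swap the roles, the counting is straightforward: each split in a tail phase emits one split edge and one type-2 piece and leaves one continuing subproblem, so each of the two tail phases contributes $O(\log n)$ pieces. The paper packages this more cleanly as an induction on the height $d$ of the subtree rooted at $H$, proving size bounds $4d+1$, $2d+1$, $1$ according to whether zero, one, or two of $x^*,y^*$ are ancestor split faces; this three-case induction makes the role of each subcall explicit and avoids the narrative bookkeeping that tripped you up.
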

\begin{proof}
  We prove that a desired decomposition exists for all paths $\pth(H, x^*, y^*)$, where $H\in \TG(G^*)$ and $x^*,y^*\in W^*(H)$, by induction
  on the height $d$ of the subtree of $\TG(G^*)$ rooted at $H$.
  The sought decomposition will have at most $4d+1$ elements if none of $x^*,y^*$ are ancestor split faces of $H$,
  at most $2d+1$ elements if at least one of $x^*,y^*$ is an ancestor split face of $H$,
  and precisely $1$ element if both $x^*,y^*$ are ancestor split faces of $H$.
  The lemma will follow as the height of $\TG(G^*)$ is $O(\log{n})$.
  
  If $H$ has no children, i.e., $d=0$, then $H$ is a leaf and thus $\pth(H,x^*,y^*)$ is a valid decomposition
  of size $1$, so the desired bound holds in all cases.
  So in the following assume $d\geq 1$.

  If both $x^*$ and $y^*$ are ancestor split faces, we don't need to decompose $\pth(H, x^*, y^*)$, 
  so a decomposition of size $1\leq 2d+1$ exists.
  Otherwise, we apply Equation~\eqref{eq:path} and recursively decompose
  the right-hand side of~\eqref{eq:path}.

  If neither $x^*$ nor $y^*$ is an ancestor split face, then there is either one recursive call
  of the same kind that yields a decomposition of size $4(d-1)+1\leq 4d+1$,
  or two recursive calls with at least one ancestor split face
  that, combined, yield a decomposition of size $2\cdot (2(d-1)+1)+1=4d-1\leq 4d+1$.
  
  Finally, let us assume that exactly one of $x^*,y^*$ is an ancestor split face.
  Suppose that $H$ has children $H_1,H_2$. If both $x^*,y^*\in W^*(H_i)$ for some $i$,
  then $\pth(H,x^*,y^*)=\pth(H_i,x^*,y^*)$.
  Moreover, if $x^*$ or $y^*$ is an ancestor split face of $H$, then it is an ancestor split face of $H_i$ as well.
  As a result, among $x^*,y^*$ there is at least one ancestor split face of $H_i$, so we can apply the
  inductive assumption
  and the decomposition of $\pth(H,x^*,y^*)$ has length at most $2d-1\leq 2d+1$.

   Otherwise, $\pth(H,x^*,y^*)=\pth(H_i,x^*,f_{H,i})\cdot e_H\cdot \pth(H_{3-i},f_{H,{3-i}},y^*)$.
  Suppose wlog. that $x^*$ is an ancestor split face of $H$.
  Then, both $x^*,f_{H,i}$ are ancestor split faces of $H_i$, so the decomposition of $\pth(H_i,x^*,f_{H,i})$
  has size $1$. By the fact that $f_{H,{3-i}}$ is an ancestor split face of $H_{3-i}$ and the inductive assumption, the decomposition of 
  $\pth(H_{3-i},f_{H,{3-i}},y^*)$ has size at most $2d-1$. 
  So indeed there exists a decomposition of 
  $\pth(H,x^*,y^*)$ of size at most $1+1+(2(d-1)+1)=2d+1$.

  Clearly, the above inductive proof can be turned into an $O(\log{n})$ time algorithm expressing
  $\pth(G^*,s^*,t^*)$ as desired.
\end{proof}

\subsection{A feasible flow oracle}\label{s:feasible-flow-oracle}

Let us first consider the problem of designing a data structure that supports queries $(s,t)$ whether
the max $s,t$-flow is at least $\lambda$, where $\lambda$ is a fixed parameter
given upon initialization.
\subsubsection{Preprocessing}
For each $H\in \TG(G^*)$, we will explicitly compute the $O(|Z(H)|^2)=\Ot(1)$ paths of the
form $\pth(H,x^*,y^*)$, where $x^*,y^*$ are ancestor split faces of $H$.
Note that computing $\pth(H,x^*,y^*)$ from the definition takes $O(|V(H)|)$ time.
As a result, computing all these required paths explicitly
takes $\sum_{H\in \TG(G)}|Z(H)|^2\cdot |H|=\Ot(1)\cdot \sum_{H\in \TG(G)}|H|=\Ot(n)$ time.

To proceed, we first need to extend the definition of $\pth(H,x^*,y^*)$ from $x^*,y^*\in W^*(H)$
to all \linebreak $x^*,y^*\in W^*(H)\cup Z(H)$.
If $x^*\in Z(H)$ ($y^*\in Z(H)$), then let $(x')^*$ ($(y')^*$, resp.) be the sibling split face of $x^*$ ($y^*$, resp).
We extend the definition naturally as follows:
\begin{equation*}
  \pth(H,x^*,y^*)=\begin{cases}
    x'x\cdot \pth(H,x^*,y^*) \cdot yy' & \text{ if $x^*\in Z(H)$, $y^*\in Z(H)$, }\\
    \pth(H,x^*,y^*) \cdot yy' &\text{ if $x^*\in W^*(H)$, $y^*\in Z(H)$, }\\
    x'x\cdot \pth(H,x^*,y^*) &\text{ if $x^*\in Z(H)$, $y^*\in W^*(H)$. }
  \end{cases}
\end{equation*}
In the right-hand side above, where needed, the faces $x^*,y^*$ are projected naturally onto $W^*(H)$
as explained before.

For any $H\in\TG(G^*)$ and $x^*,y^*\in Z(H)$, where $x^*\neq y^*$, we apply the following additional preprocessing.
Observe that $\pth(H,x^*,y^*)$ is a simple $x'\to y'$ path in $G$
whose all vertices but its endpoints correspond to natural faces of $H$.
Moreover, the dual edges $(\pth(H,x^*,y^*))^*$ of $\pth(H,x^*,y^*)$
satisfy $(\pth(H,x^*,y^*))^*\subseteq E(H)$.
From the point of view of $H$, $\pth(H,x^*,y^*)$
is a simple directed curve in the plane
\begin{enumerate}[label=(\roman*)]
  \item originating in a hole $h_x$ of $H$ that
contains the primal vertex $x'$ (as embedded in~$G$),
  \item ending in the hole $h_y$
that contains the primal vertex $y'$,
  \item going
only through natural faces of~$H$ between leaving $h_x$
and entering $h_y$.
\end{enumerate}

Let $G^*_\lambda[H](x^*,y^*)$ denote the edge-induced subgraph $G^*[H]$
such that the weight of edges $(\pth(H,x^*,y^*))^*$ have been increased
by $-\lambda$, and their reverse edges by $\lambda$.

First, we test whether $G^*_\lambda[H](x^*,y^*)$ contains a negative cycle and 
potentially also compute a feasible price function $p$ of that graph.
For this, we can
either use Theorem~\ref{t:negcyc} or the strongly polynomial
algorithm of~\cite{MozesW10} run on that graph
whose size is $O(|H|)$.
Moreover, with the help of the feasible price function $p$ and the MSSP data structure,
we compute the dense distance graph
$\DDG(G^*_\lambda[H](x^*,y^*))$ (with vertex set $\bnd{H}$, defined analogously as in Section~\ref{s:negcyc})
in $\Ot(|H|+|\bnd{H}|^2)$ time.
Since Lemma~\ref{l:ddg-ds} only required the boundary vertices to lie on $O(1)$ faces
of the piece, using Lemma~\ref{l:ddg-ds}, we preprocess the closest pair and near
neighbor data structures for $\DDG(G^*_\lambda[H](x^*,y^*))$ in
$\Ot(|H|+|\bnd{H}|^2)$ time and space.

Finally, for each piece $H\in\TG(G^*)$, we also precompute and store
the dense distance graph $\DDG(H)$ along with the closest pair
and near neighbor data structures using MSSP and Lemma~\ref{l:ddg-ds}.
This requires $\Ot(|H|+|\bnd{H}|^2)$ time and space as well.

\begin{lemma}\label{l:flow-preproc}
  Preprocessing takes $\Ot(n)$ time.
\end{lemma}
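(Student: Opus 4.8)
The plan is to bound the total preprocessing cost by summing, over all nodes $H\in \TG(G^*)$, the per-node work described in the preprocessing paragraphs, and then to use the two structural guarantees $\sum_{H\in\TG(G^*)}|H|=\Ot(n)$ and $\sum_{H\in\TG(G^*)}|\bnd{H}|^2=\Ot(n)$ (both of which follow from the construction of~\cite{KleinMS13} as recalled above). First I would itemize the per-node contributions: (i) explicitly computing the $O(|Z(H)|^2)=\Ot(1)$ paths $\pth(H,x^*,y^*)$ for ancestor split faces, at $O(|H|)$ time each, so $\Ot(|H|)$ in total for node $H$; (ii) for each of the $\Ot(1)$ pairs $(x^*,y^*)$ with $x^*,y^*\in Z(H)$, running negative cycle detection on $G^*_\lambda[H](x^*,y^*)$ — a graph of size $O(|H|)$ — which by~\cite{MozesW10} takes $\Ot(|H|)$ time, and then building $\DDG(G^*_\lambda[H](x^*,y^*))$ via the MSSP data structure (Theorem~\ref{t:mssp}) together with the closest pair / near neighbor data structures of Lemma~\ref{l:ddg-ds}, at cost $\Ot(|H|+|\bnd{H}|^2)$; (iii) the analogous cost $\Ot(|H|+|\bnd{H}|^2)$ for precomputing $\DDG(H)$ and its associated data structures for each $H$.

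Summing contributions of type (i) gives $\sum_{H}\Ot(|H|)=\Ot(n)$. Summing contributions of types (ii) and (iii), and using $|Z(H)|=O(\log n)$ so that there are $\Ot(1)$ relevant pairs per node, gives
\[
  \sum_{H\in\TG(G^*)}\Ot(1)\cdot\Ot\!\left(|H|+|\bnd{H}|^2\right)
  =\Ot\!\left(\sum_{H\in\TG(G^*)}|H|\right)+\Ot\!\left(\sum_{H\in\TG(G^*)}|\bnd{H}|^2\right)
  =\Ot(n)+\Ot(n)=\Ot(n).
\]
One subtlety I would spell out is that the step "extend the definition of $\pth(H,x^*,y^*)$ to $Z(H)$" only prepends/appends the single edges $x'x$ and $yy'$, so it does not change the $\Ot(|H|)$ per-path cost, and that the feasible price function produced by negative-cycle detection is exactly what Lemma~\ref{l:ddg-ds} requires as input, so the $\Ot(|H|+|\bnd{H}|^2)$ bound there genuinely applies. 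I would also note that if some $G^*_\lambda[H](x^*,y^*)$ (or the ambient instance) contains a negative cycle, this is simply recorded and does not affect the asymptotic preprocessing bound — it only matters at query time.

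The main obstacle is not any single calculation but making sure the accounting is airtight on two fronts: first, that the number of $(x^*,y^*)$ pairs handled per node really is $\Ot(1)$ (this rests on $|Z(H)|=O(\log n)$, which in turn rests on each node having $O(\log n)$ ancestors each contributing one split face); and second, that every per-pair task — negative cycle detection, MSSP construction, and the Monge-decomposition-based data structures — is confirmed to run in $\Ot(|H|+|\bnd{H}|^2)$ on a graph of size $O(|H|)$ with $O(1)$ holes, which is precisely the regime covered by Theorem~\ref{t:mssp} and Lemma~\ref{l:ddg-ds}. Once those two points are in place, the bound $\Ot(n)$ follows immediately from the two $\TG(G^*)$ summation identities, and the proof is essentially a one-line sum.
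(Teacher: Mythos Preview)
Your proposal is correct and follows essentially the same approach as the paper: the paper's proof is a one-line summation $\sum_{H\in\TG(G^*)}\Ot(|Z(H)|^2\cdot(|H|+|\bnd{H}|^2))=\Ot(1)\cdot\sum_H(|H|+|\bnd{H}|^2)=\Ot(n)$, and you have simply unpacked the per-node contributions more carefully before arriving at the same sum. The additional subtleties you flag (the $\Ot(1)$ bound on pairs via $|Z(H)|=O(\log n)$, and the applicability of Lemma~\ref{l:ddg-ds}) are exactly the points implicit in the paper's compressed argument.
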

\begin{proof}
  The cost of preprocessing (through all pieces and pairs of split faces) can be bounded by:
  \begin{equation*}
    \sum_{H\in \TG(G)}\Ot\left(|Z(H)|^2\cdot \left(|H|+|\bnd{H}|^2\right)\right)=\Ot(1)\cdot \sum_{H\in \TG(G)}\left(|H|+|\bnd{H}|^2\right)=\Ot(n).\qedhere
  \end{equation*}
\end{proof}
\subsubsection{Query procedure}
We are now ready to give an algorithm detecting a negative
cycle in $G^*_\lambda$ given the source/sink pair~$s,t$. By the reduction of Section~\ref{s:flow-reduction}, this is equivalent
to deciding whether the max $s,t$-flow in $G$ is less than $\lambda$.
Recall that we have fixed the path $Q_{s,t}$ in $G$ (as required by the reduction) to be specifically the path $\pth(G^*,s^*,t^*)$.

\newcommand{\detect}{\mathtt{detect}}

Similarly as before, 
for $H\in\TG(G)$, $x^*,y^*\in W^*(H)\cup Z(H)$, let $G^*_\lambda[H](x^*,y^*)$ denote the edge-induced subgraph $G^*[H]$
such that the weights of edges $(\pth(H,x^*,y^*))^*$ have been increased
by $-\lambda$, and the weights of their reverse edges by $\lambda$.

We now define a recursive procedure $\detect(\lambda, H, x^*,y^*)$ that 
tests whether $G^*_\lambda[H](x^*,y^*)$ has a negative cycle.
Equipped with this procedure, by calling $\detect(\lambda,G^*,s^*,t^*)$, we will
achieve our goal of testing whether $G^*_\lambda$ contains a negative cycle.

The procedure $\detect(\lambda,H,x^*,y^*)$ will decompose the problem in exactly the same way
as definition~\eqref{eq:path} deconstructs the path $\pth(H,x^*,y^*)$ (with $x^*,y^*$ projected to $W^*(H)$)
and stop recursing when we have $x^*,y^*\in Z(H)$ or $H$ is a leaf.
By Lemma~\ref{l:path-decomp}, the recursion tree will have $O(\log{n})$ leaves and $O(\log{n})$ depth,
and thus the number of recursive calls will be $O(\log{n})$.

If $\detect(\lambda,H,x^*,y^*)$ does not detect a negative cycle, it produces (pointers to) preprocessed closest
pair and near neighbor data structures of a graph $D(H,x^*,y^*)$
with $\bnd{H}\subseteq V(D(H,x^*,y^*))$ such that 
for any $u,v\in \bnd{H}$ we have
\begin{equation*}
  \dist_{D(H,x^*,y^*)}(u,v)=\dist_{G^*_\lambda[H](x^*,y^*)}(u,v),
\end{equation*}
and for all $u,v\in V(D(H,x^*,y^*))$ we have
\begin{equation*}
  \dist_{D(H,x^*,y^*)}(u,v)\geq \dist_{G^*_\lambda[H](x^*,y^*)}(u,v).
\end{equation*}
In other words, $D(H,x^*,y^*)$ preserves the boundary-to-boundary distances of $G^*_\lambda[H](x^*,y^*)$,
and does not underestimate any other distances in $G^*_\lambda[H](x^*,y^*)$.
The graphs $D(H,x^*,y^*)$ are not constructed explicitly in principle;
we define them for the purpose of analysis and we operate on the associated data structures on these graph instead.

We now proceed with describing how the procedure $\detect(\lambda,H,x^*,y^*)$ works.
\begin{enumerate}
  \item  If $H$ is a leaf, we run any negative cycle detection
algorithm on $G^*_\lambda[H](x^*,y^*)$. Since $H$ has $O(1)$ size, this takes $O(1)$ time.
    If no negative cycle is found, we can put\linebreak ${D(H,x^*,y^*):=G^*_\lambda[H](x^*,y^*)}$ 
    and preprocess the required data structures for $D(H,x^*,y^*)$ in $O(1)$ time.

\item If $x^*,y^*\in Z(H)$, we simply look up the precomputed
  information whether $G^*_\lambda[H](x^*,y^*)$ has a negative cycle. If so,
we have found a negative cycle in $G^*_\lambda$ and the procedure terminates globally.
Otherwise, we return the appropriate preprocessed data structures
    for $D(H,x^*,y^*)$ set to $\DDG(G^*_\lambda[H](x^*,y^*))$.

\item Otherwise, if $x^*,y^*\in W^*(H_i)\cup Z(H_i)$ for some $i$, we recurse
using $\detect(\lambda,H_i,x^*,y^*)$.
    Note that in this case we have
    \begin{equation}\label{eq:case3}
      G^*_\lambda[H](x^*,y^*)=G^*_\lambda[H_i](x^*,y^*)\cup H_{3-i}.
    \end{equation}
If the recursive call does not find a negative cycle, let us set $D_i$
to be the graph $D(H_i,x^*,y^*)$ associated with the returned data structures.
Moreover, let us set $D_{3-i}:=\DDG(H_{3-i})$.

\item Finally, if $x^*\in W^*(H_i)\cup Z(H)$ and $y^*\in W^*(H_{3-i})\cup Z(H_{3-i})$,
then we perform recursive calls
$\detect(H_i,\lambda,x^*,f_{H,i})$ and $\detect(H_{3-i},\lambda,f_{H,3-i},y^*)$,
where $f_{H,i}\in Z(H_i)$ and \linebreak $f_{H,3-i}\in Z(H_{3-i})$.
Note that in this case we have
    \begin{equation}\label{eq:case4}
      G^*_\lambda[H](x^*,y^*)=G^*_\lambda[H_i](x^*,f_{H,i})\cup G^*_\lambda[H_{3-i}](f_{H,3-i},y^*).
    \end{equation}

If none of the recursive calls detects a negative cycle, let us set
    $D_i:=D(H_i,x^*,f_{H,i})$ and $D_{3-i}:=D(H_{3-i},f_{H,3-i},y^*)$.
\end{enumerate}

In cases $3$ and $4$, we proceed with additional computation: even though the negative
cycle does not exist in the parts of $G^*_\lambda[H](x^*,y^*)$ processed
in the recursive calls, it might still exist in $G^*_\lambda[H](x^*,y^*)$.
But then, observe that such a cycle $C$ has to have 
parts in both graphs on the respective right-hand sides of~\eqref{eq:case3} (in case 3)
and~\eqref{eq:case4} (in case 4).
Since both graphs intersect only in the vertices
of $\bnd{H_1}\cup \bnd{H_2}$, it is enough to search for such a negative cycle
in $G^*_\lambda[H](x^*,y^*)$ that goes through at least one vertices of $\bnd{H_1}\cup \bnd{H_2}$.
Since the graphs $D_1,D_2$ preserve distances between $\bnd{H_1}$ and $\bnd{H_2}$, resp., in
these graphs, similarly as in Section~\ref{s:negcyc}, we can look
for a negative cycle in $D(H,x^*,y^*):=D_1\cup D_2$ instead.
Note that $D(H,x^*,y^*)$ defined this way meets the requirement
that it preserves distances between $\bnd{H}\subseteq \bnd{H_1}\cup \bnd{H_2}$
in $G^*_\lambda[H](x^*,y^*)$ and
does not underestimate other distances $G^*_\lambda[H](x^*,y^*)$.
The correctness of this approach follows.

Recall that from the recursive calls we get the closest pair and near neighbor
data structures for $D_1,D_2$, so the respective data structures
for $D(H,x^*,y^*)$ can be obtained using Lemmas~\ref{l:ssp-sum}~and~\ref{l:nn-sum}.
Let us put $D^*:=D(H,x^*,y^*)$. We then have:
\begin{align*}
  T_\cp(D^*)&=T_\cp(D_1)+T_\cp(D_2)+\Ot\left(|\bnd{H_1}|+|\bnd{H_2}|\right),\\
    T_\nn(D^*)&=T_\nn(D_1)+T_\nn(D_2)+Q_\nn(D_1)\cdot |\bnd{H_1}|+Q_\nn(D_2)\cdot |\bnd{H_2}|+\Ot\left(|\bnd{H_1}|+|\bnd{H_2}|\right),\\
    Q_\nn(D^*)&=\max(Q_\nn(D_1),Q_\nn(D_2))+O(1).
\end{align*}
Consequently, running the algorithm of Theorem~\ref{t:negcyc} on the graph $D(H,x^*,y^*)$
takes
\begin{equation}\label{eq:piece-negcyc}
  \Ot\left(\sqrt{|V(D^*)|}\cdot (T_\cp(D^*)+T_\nn(D^*)+|V(D^*)|\cdot Q_\nn(D^*))\log{C}\right)
\end{equation}
time. The following lemma analyzes the running time of the call $\detect(G^*,s^*,t^*)$.
\begin{lemma}\label{l:detect-time}
  $\detect(\lambda,G^*,s^*,t^*)$ runs in $\Ot(n^{3/4}\log{C})$ time.
\end{lemma}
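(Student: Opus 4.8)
The plan is to charge the whole running time to the $O(\log n)$ recursive calls guaranteed by Lemma~\ref{l:path-decomp} and to show that each call performs $\Ot(n^{3/4}\log C)$ work. First I would observe that, since the recursion of $\detect$ unfolds $\pth(G^*,s^*,t^*)$ in exactly the way Lemma~\ref{l:path-decomp} decomposes it, the recursion tree has depth $O(\log n)$, has $O(\log n)$ nodes, and reaches its base cases (cases~1 and~2) $O(\log n)$ times. Hence the graph $D(G^*,s^*,t^*)$ assembled along the recursion is a union of $O(\log n)$ ``base'' graphs: each base case contributes either the $O(1)$-size leaf graph $G^*_\lambda[L](x^*,y^*)$ or a precomputed $\DDG(G^*_\lambda[H](x^*,y^*))$ with $x^*,y^*\in Z(H)$; each case-3 node additionally folds in the precomputed $\DDG(H_{3-i})$ of the off-path sibling; and each split edge is a single-edge graph. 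Combining data structures at every internal node of this assembly is done with Lemmas~\ref{l:ssp-sum} and~\ref{l:nn-sum}, which require no preprocessing beyond what Lemma~\ref{l:flow-preproc} has already paid for.

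Next I would bound the parameters of $D^*:=D(H,x^*,y^*)$ at an arbitrary node. By property~(3) of the decomposition $|\bnd{H}|=O(\sqrt n)$ for every piece, so every base DDG has $O(\sqrt n)$ vertices; as $O(\log n)$ base graphs compose $D^*$, we get $|V(D^*)|=\Ot(\sqrt n)$. By Lemma~\ref{l:ddg-ds} — which only needs the boundary to lie on $O(1)$ faces, hence applies to both $\DDG(P)$ and $\DDG(G^*_\lambda[H](x^*,y^*))$ — every base DDG $\mathcal D$ comes preprocessed with $T_\cp(\mathcal D),T_\nn(\mathcal D)=\Ot(|\bnd{\cdot}|)=\Ot(\sqrt n)$ and $Q_\nn(\mathcal D)=\Ot(1)$, while leaf and single-edge graphs have $O(1)$-time data structures (Fact~\ref{l:ssp-dummy-clo} and the near neighbor observation of Section~\ref{s:graph-near}). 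Unrolling the repeated pairwise unions along the $O(\log n)$-node, $O(\log n)$-depth assembly tree, the correction terms of Lemmas~\ref{l:ssp-sum} and~\ref{l:nn-sum} sum to $\Ot(\sum_{\text{base }\mathcal D}|V(\mathcal D)|)=\Ot(\sqrt n)$, and each base structure is (re)initialized $O(\log n)$ times, so we still get $T_\cp(D^*),T_\nn(D^*)=\Ot(\sqrt n)$ and $Q_\nn(D^*)=\Ot(1)$ at every node.

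Finally, I would substitute these bounds into~\eqref{eq:piece-negcyc}: since $|V(D^*)|=\Ot(n^{1/2})$ and $T_\cp(D^*)+T_\nn(D^*)+|V(D^*)|\cdot Q_\nn(D^*)=\Ot(n^{1/2})$, the call to Theorem~\ref{t:negcyc} on $D^*$ at a case-3 or case-4 node costs
\[
  \Ot\!\left(\sqrt{|V(D^*)|}\cdot\bigl(T_\cp(D^*)+T_\nn(D^*)+|V(D^*)|\cdot Q_\nn(D^*)\bigr)\cdot\log C\right)=\Ot\!\left(n^{3/4}\log C\right).
\]
The base cases cost $\Ot(1)$ (an $O(1)$-time negative-cycle test on a leaf, or a precomputed-answer lookup) and navigating the recursion costs $O(\log n)$; summing over the $O(\log n)$ recursive calls yields the claimed $\Ot(n^{3/4}\log C)$ bound.

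The part that will need the most care is the union bookkeeping in the second paragraph: one must check that (re)initializing each of the $O(\log n)$ base DDGs $O(\log n)$ times, together with the $O(|V_i|\log k)$ overhead terms of Lemma~\ref{l:ssp-sum}, costs only $\Ot(\sqrt n)$ in aggregate and never more, and that $O(\log n)$ base DDGs genuinely suffice — which is exactly where Lemma~\ref{l:path-decomp} and the observation that a case-3 step contributes just one extra sibling DDG per level come in.
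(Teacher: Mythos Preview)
Your proof is correct and follows essentially the same approach as the paper: bound the number of base graphs feeding into each $D^*$, use Lemma~\ref{l:ddg-ds} and the composition Lemmas~\ref{l:ssp-sum},~\ref{l:nn-sum} to get $T_\cp(D^*),T_\nn(D^*)=\Ot(\sqrt n)$ and $Q_\nn(D^*)=\Ot(1)$, and plug into~\eqref{eq:piece-negcyc}. One minor imprecision: the recursion tree has $O(\log n)$ \emph{leaves} but, because case-3 nodes are unary and can form chains of length up to the depth $O(\log n)$ of $\TG(G^*)$, the total number of recursive calls (and hence of sibling DDGs folded in) is $O(\log^2 n)$, not $O(\log n)$; the paper accordingly writes $\Ot(1)$ rather than $O(\log n)$. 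This does not affect your final bound since everything is stated under $\Ot(\cdot)$.
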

\begin{proof}
  Consider the leaves of the recursion tree. Each leaf corresponds precisely
  to a single element $\pth(H,x^*,y^*)$ of the decomposition of $\pth(G^*,s^*,t^*)$ produced by Lemma~\ref{l:path-decomp}
  that is not a split edge. Consequently, there are $O(\log{n})$ leaves in the recursion tree.
  As the height of $\TG(G)$ is $O(\log{n})$, there are $\Ot(1)$ recursive calls in total.

  The leaf calls (i.e., cases 1~and~2 of $\detect$) are processed in $O(1)$ time since we have the necessary information
  precomputed. 
  Note that the associated precomputed graph $D(H,x^*,y^*$) (equal to either $\DDG(H,x^*,y^*)$ or $G^*_\lambda[H](x^*,y^*)$)
  has $O(|\bnd{H}|)$ vertices.

  Each leaf call can be seen to
  contribute the graph $D(H,x^*,y^*)$ to the graph $D^*$ of each of its $O(\log{n})$ ancestor calls.
  Each non-leaf call with a single child call (case 3), in turn, additionally contributes a
  single dense distance graph $\DDG(H_{3-i})$ to the graph $D^*$ of each of its $O(\log{n})$ ancestor calls.
  In other words, the graph $D(H,x^*,y^*)$ can be seen to be the union of the
  respective graphs $D^*_L=D(L,\cdot,\cdot)$ from $O(\log{n})$ descendant leaf calls $L$,
  and dense distance graphs from $\Ot(1)$ non-leaf descendant calls.
  By the properties of $\TG(G^*)$, each of these precomputed graphs $X$ whose union forms $D(H,x^*,y^*)$
  satisfies $T_\cp(X)=\Ot(\sqrt{n})$, $T_\nn(X)=\Ot(\sqrt{n})$ and $Q_\nn(X)=\Ot(1)$.
  Therefore, we conclude that $D(H,x^*,y^*)$ has $\Ot(\sqrt{n})$ vertices.
  
  Let us now analyze the performance of closest pair and near neighbor data structures
  on the used graphs $D(H,x^*,y^*)$.
  By Lemma~\ref{l:nn-sum}, the quantity $Q_\nn(D(H,x^*,y^*))$ can be seen to be the maximum of $Q_\nn(X)$
  over all precomputed graphs $X$ contributing to $D(H,x^*,y^*)$, plus $\Ot(1)$.
  As a result, we have $Q_\nn(D(H,x^*,y^*))=\Ot(1)$.
  $T_{\cp}(D(H,x^*,y^*))$ can be bounded by the sum of $T_{\cp}(X)$ over such graphs $X$
  plus a term near-linear in the total size of boundaries of all graphs in the
  $\Ot(1)$ descendant calls of $\detect(\lambda,H,x^*,y^*)$.
  As a result, we also have $T_{\cp}(D(H,x^*,y^*))=\Ot(\sqrt{n})$.
  By $Q_\nn(D(H,x^*,y^*))=\Ot(1)$ and a similar argument, we obtain
  $T_{\nn}(D(H,x^*,y^*))=\Ot(\sqrt{n})$ as well.
  Consequently, by plugging in these bounds into~\eqref{eq:piece-negcyc}, detecting a negative
  cycle in $D(H^*,x^*,y^*)$ takes $\Ot(n^{3/4}\log{C})$ time.
  Since the total number of recursive calls is $\Ot(1)$, the lemma follows.
\end{proof}

By combining Lemmas~\ref{l:flow-preproc} and~\ref{l:detect-time}, we obtain the following.

\tfeasibleflow*

\begin{remark}\label{r:report-cut}
  The data structure of Theorem~\ref{t:feasible-flow} can be also extended to report an $s,t$-cut-set
$F$
  of capacity less than~$\lambda$ in $\Ot(|F|)=\Ot(\lambda)$ additional time, if such a cut-set exists.
\end{remark}
\begin{proof}[Proof sketch]
If the algorithm of Theorem~\ref{t:negcyc} used
inside $\detect(\lambda,H,x^*,y^*)$ detects a negative cycle in $D(H,x^*,y^*)$,
it can also report one such cycle $C$ in $\Ot(n^{3/4}\log{C})$ time.
Recall that $D(H,x^*,y^*)$ is a union of dense distance graphs,
so each edge of $C$ corresponds to some path between boundary vertices
  of a graph (say $G^*_\lambda[H](x^*,y^*)$ or $H_{3-i}$) from which the respective dense distance
graphs is derived from. Since a DDG
is constructed using MSSP, we can reuse MSSP to expand
each edge $e\in C$ into an actual path $P\subseteq G^*$.
We can even extend the dynamic tree used by MSSP (in a standard way,
e.g., by enabling reporting a maximum weight edge on a tree path) so that only the~$p$ positive-weight
edges of $P$ are reported in $\Ot(p)$ time (recall that we added some
zero-capacity reverse edges to $G$ to make use of duality).
This way, an $s,t$ cut-set $F\subseteq G$ of capacity less than $\lambda$ can
be reported in $\Ot(n^{3/4}\log{C}+|F|)$ time.
Observe that since each edge in~$F$ has capacity at least $1$, we have $|F|\leq \lambda$.
\end{proof}

\subsection{An approximate max $s,t$-flow oracle}\label{s:flow-approx}

Let $\eps\in (0,1)$ be given. The feasible flow oracle from Theorem~\ref{t:feasible-flow} can be easily converted into
a $(1-\eps)$-approximate max $s,t$-flow oracle (or a $(1+\eps)$-approximate min $s,t$-cut oracle).

We simply set up feasible flow oracles for each of the values $\lambda_j=\lceil(1+\eps)^j\rceil$, where
\linebreak $j=0,\ldots,\log_{1+\eps}(nC)=O(\log{(nC)}/\eps)$.
To find $(1-\eps)$-approximate max $s,t$-flow value, we simply binary search for (and return) the largest
$\ell$ such that there exists an $s,t$-flow of value $\lambda_\ell$ in~$G$.
To this end, $O(\log(\log{(nC)}/\eps))=O(\log\log{(nC)}+\log{(1/\eps)})$ queries
to feasible flow oracles are enough.
If the max $s,t$-flow value is $\lambda^*$, then we have
$\lambda^*< \lceil (1+\eps)^{\ell+1}\rceil$, which implies
\begin{equation*}
  \lambda^*\leq (1+\eps)^{\ell+1}\leq (1+\eps)\cdot \lceil (1+\eps)^\ell\rceil=(1+\eps)\cdot \lambda_\ell.
\end{equation*}
So $\lambda^*\geq \lambda_\ell\geq \frac{1}{1+\eps}\cdot \lambda^*\geq (1-\eps)\cdot \lambda^*$.

\capproxoracle*

For approximating the min $s,t$-cut value,
note that there exists an $s,t$-cut set $F$ with capacity less than $\lambda_{\ell+1}$,
i.e., at most $\lceil (1+\eps)^{\ell+1}\rceil -1\leq \lfloor (1+\eps)^{\ell+1}\rfloor\leq (1+\eps)\cdot \lambda_{\ell}\leq (1+\eps)\cdot \lambda^*$.
By Remark~\ref{r:report-cut}, we can also find such $F$ in additional $\Ot(n^{3/4}\log{C}+|F|)=\Ot(n^{3/4}\log{C}+\lambda^*)$ time.

\subsection{A dynamic feasible flow oracle}\label{s:dynamic-oracle}

In this section we explain how to convert the static feasible flow oracle into a dynamic one
that supports edge capacity updates. For this, we use a standard trick that have also been used
to convert an $\Ot(\sqrt{n})$-query time static exact distance oracle into an $\Ot(n^{2/3})$
update/query time dynamic distance oracle~\cite{FR06, Klein05}.

Let $r$ be a parameter to be set later.
As proved by~\cite{KleinMS13}, the recursive decomposition $\TG(G^*)$ \emph{admits an $r$-division}. This
means that one can pick a subset $\rdiv$ of pieces from $\TG(G^*)$,
such that (1) each face of $G^*$ is a natural face of precisely one piece $H\in \rdiv$,
(2) $|\rdiv|=O(n/r)$, (3) for every $P\in\TG(G)$, $|P|=O(r)$ and $|\bnd{P}|=O(\sqrt{r})$.
This can be done in linear time~\cite{KleinMS13} during the preprocessing phase.
Since the bounds $|\bnd{H}|=O(\sqrt{n})$ and $\sum_{H\in\TG(G^*)}|H|+|\bnd{H}|^2=\Ot(n)$
still hold even if we start the decomposition process
with $O(\sqrt{n})$ boundary vertices $\bnd{G^*}$ on $O(1)$ holes of $G^*$,
and each $P\in\rdiv$ is decomposed using exactly the same procedure as $G^*$,
for descendants $B$ of $P$ we have
$|\bnd{B}|=O(\sqrt{r})$, and $\sum_{P\supseteq B\in\TG(G^*)}|B|+|\bnd{B}|^2=\Ot(r)$.

Suppose the decomposition of $\pth(G^*,s^*,t^*)$ from Lemma~\ref{l:path-decomp}
is not allowed to have elements of the form $\pth(H,x^*,y^*)$, where $x^*,y^*\in Z(H)$,
unless $H$ is a descendant of some piece $P\in \rdiv$.
If $H$ is not a descendant of some piece in $\rdiv$, by further decomposing $\pth(H,x^*,y^*)$ using the recursive definition~\eqref{eq:path}
until we arrive at descendants $\pth(H',z^*,w^*)$ with $H'\in \rdiv$ (note that $z^*,w^*\in Z(H')$ necessarily holds
in all such descendants),
we end up with a decomposition of size $\Ot(n/r)$ instead of $O(\log{n})$:
there are $O(n/r)$ distinct pieces in $\rdiv$ and to every piece $H'\in \rdiv$ in the decomposition one can apply Lemma~\ref{l:path-decomp}
to see that it decomposes into $O(\log{n})$ elements.

By modifying $\detect$ easily so that it obeys the above $\rdiv$-aware decomposition of $\pth(G^*,s^*,t^*)$,
we obtain that $\Ot(n/r)$ preprocessed graphs
of the form either $\DDG(H',z^*,w^*)$ or $\DDG(H')$, where $H'\in \rdiv$,
contribute to $O(\log{n})$ negative cycle detection algorithm applications on some
graph $X=D(H,x^*,y^*)$.
As a result, the total number of vertices in such
graphs $X$ is  $\Ot(n/r)\cdot O(\sqrt{r})=\Ot(n/\sqrt{r})$,
and by a similar analysis we have 
$T_\cp(X)=\Ot(|V(X)|),T_\nn(X)=\Ot(|V(X)|)$, and $Q_\nn(X)=\Ot(1)$.
Since the bottleneck of $\detect$ lies in using
the algorithm behind Theorem~\ref{t:negcyc} on such graphs $X$, $\detect$ requires
\begin{equation*}
  \sum_{X} \Ot\left(\sqrt{|V(X)|}\cdot |V(X)|\log{C}\right)=\Ot\left(\sqrt{\frac{n}{\sqrt{r}}}\log{C}\right) \cdot \sum_X \Ot(|V(X)|)=\Ot\left(\frac{n^{3/2}}{r^{3/4}}\log{C}\right).
\end{equation*}

Observe that although choosing a small parameter $r$ increases the running time of $\detect$, it causes that we only use the preprocessed data
for pieces that constitute weak descendants of the pieces in $H\in \rdiv$.
When an edge capacity is updated, only $O(1)$ pieces of $\rdiv$ containing that edge are affected, i.e., 
the associated preprocessed data structures of these pieces and their descendants might no longer contain correct data.
At the same time, the preprocessed data structures 
for other pieces of $\rdiv$ remain valid.
Consequently, after a single edge capacity update, the possibly invalid data for an affected piece $H\in\rdiv$ containing
that edge and all its descendants
can be recomputed from scratch in $\Ot\left(\sum_{H\supseteq B\in \TG(G^*)}|B|+|\bnd{B}|^2\right)=\Ot(r)$ time.
By choosing $r=n^{6/7}$, we obtain the following.

\begin{theorem}
  Let $G$ be a planar digraph with integer edge capacities in $[1,C]$ and let $\lambda\in [1,nC]$
  be an integer. 
  There exists a dynamic data structure with $\Ot(n)$ preprocessing time that can test whether
  there exists an $s,t$-flow of value $\lambda$
  for any given query source/sink pair $(s,t)$ in $\Ot(n^{6/7}\log{C})$ time,
  and supports edge capacity updates in $\Ot(n^{6/7})$ time.
\end{theorem}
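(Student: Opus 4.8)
The plan is to reuse the static feasible-flow oracle of Section~\ref{s:feasible-flow-oracle} (Theorem~\ref{t:feasible-flow}) essentially unchanged, replacing only the path-decomposition step by an \emph{$r$-division-aware} variant that trades query time for update time. First I would invoke the result of~\cite{KleinMS13} that $\TG(G^*)$ admits an $r$-division: in $O(n)$ time one can select a subset $\rdiv\subseteq\TG(G^*)$ of $O(n/r)$ pieces, each of size $O(r)$ and with $O(\sqrt r)$ boundary vertices, whose natural faces partition the faces of $G^*$. The preprocessing is then exactly as in Section~\ref{s:feasible-flow-oracle}, but restricted to the pieces that are weak descendants of $\rdiv$: for each such $H$ and each pair $x^*,y^*$ of ancestor split faces of $H$, we test $G^*_\lambda[H](x^*,y^*)$ for a negative cycle and, if none exists, build via MSSP and Lemma~\ref{l:ddg-ds} the closest-pair and near-neighbor data structures for $\DDG(G^*_\lambda[H](x^*,y^*))$ and for $\DDG(H)$. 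Since each $P\in\rdiv$ is decomposed by the same procedure as $G^*$ starting from $O(\sqrt r)$ boundary vertices on $O(1)$ holes, its subtree satisfies $\sum_{P\supseteq B\in\TG(G^*)}(|B|+|\bnd{B}|^2)=\Ot(r)$, so summing over $\rdiv$ the total preprocessing cost is $\Ot((n/r)\cdot r)=\Ot(n)$, exactly as in Lemma~\ref{l:flow-preproc}.

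Next I would modify the query procedure. The decomposition of $\pth(G^*,s^*,t^*)$ produced by Lemma~\ref{l:path-decomp} is no longer allowed to stop at a piece $H$ with $x^*,y^*\in Z(H)$ unless $H$ is a weak descendant of some $P\in\rdiv$; otherwise we keep unrolling~\eqref{eq:path} until we reach pieces $H'\in\rdiv$ (where $x^*,y^*\in Z(H')$ holds automatically). Since $|\rdiv|=O(n/r)$ and Lemma~\ref{l:path-decomp} applied inside each such $H'$ contributes only $O(\log n)$ further elements, the resulting decomposition has $\Ot(n/r)$ pieces. Running the correspondingly modified $\detect(\lambda,G^*,s^*,t^*)$, each graph $X=D(H,x^*,y^*)$ passed to the algorithm of Theorem~\ref{t:negcyc} becomes a union of $\Ot(n/r)$ precomputed graphs of the form $\DDG(G^*_\lambda[H'](x^*,y^*))$ or $\DDG(H')$ with $H'$ a descendant of $\rdiv$, each on $O(\sqrt r)$ vertices; hence, arguing as in the proof of Lemma~\ref{l:detect-time}, the total number of vertices over all such $X$ is $\Ot((n/r)\cdot\sqrt r)=\Ot(n/\sqrt r)$, and Lemmas~\ref{l:ssp-sum}~and~\ref{l:nn-sum} give $T_\cp(X),T_\nn(X)=\Ot(|V(X)|)$ and $Q_\nn(X)=\Ot(1)$. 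Summing~\eqref{eq:piece-negcyc} over these $X$ then bounds the query time by
\[
  \sum_X \Ot\!\left(\sqrt{|V(X)|}\cdot|V(X)|\cdot\log C\right)
  \;=\; \Ot\!\left(\sqrt{n/\sqrt r}\cdot\log C\right)\cdot\sum_X\Ot(|V(X)|)
  \;=\; \Ot\!\left(\frac{n^{3/2}}{r^{3/4}}\,\log C\right).
\]

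For updates I would observe that changing the capacity of an edge $e$ of $G$ only alters the weight of $e^*$ in $G^*$, leaving $\TG(G^*)$, the split faces and edges, and all the combinatorial paths $\pth(H,x^*,y^*)$ intact; moreover $e^*$ lies in $O(1)$ pieces of $\rdiv$ (its two incident faces are natural to at most two of them), and only the preprocessed weighted objects attached to descendants of those $O(1)$ affected pieces become stale. Recomputing them from scratch — re-running, exactly as in the preprocessing, the strongly polynomial negative-cycle algorithm of~\cite{MozesW10} and MSSP on an affected $P\in\rdiv$ and its whole subtree — costs $\Ot\!\left(\sum_{P\supseteq B\in\TG(G^*)}(|B|+|\bnd{B}|^2)\right)=\Ot(r)$ worst-case time. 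Balancing the query bound $n^{3/2}/r^{3/4}$ against the update bound $r$, i.e. choosing $r=n^{6/7}$, yields query time $\Ot(n^{6/7}\log C)$ and update time $\Ot(n^{6/7})$, with preprocessing still $\Ot(n)$.

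I expect the main obstacle to be verifying that the $\rdiv$-aware decomposition of $\pth(G^*,s^*,t^*)$ remains correct and still supports the analysis of Lemma~\ref{l:detect-time}: that each $X=D(H,x^*,y^*)$ is now a union of $\Ot(n/r)$ (rather than $\Ot(1)$) precomputed dense distance graphs coming from descendants of $\rdiv$, that it still preserves the boundary-to-boundary distances of $G^*_\lambda[H](x^*,y^*)$ without underestimating any other distance, and that the case-3/case-4 correctness argument of $\detect$ — a negative cycle missed by the recursive calls must traverse $\bnd{H_1}\cup\bnd{H_2}$ — carries over verbatim. The remaining care is the update bookkeeping: confirming that ``stale'' means precisely ``descendant in $\TG(G^*)$ of an affected $P\in\rdiv$'', which holds because every preprocessed object attached to a piece $H$ depends only on $G^*[H]$ together with the fixed, capacity- and $\lambda$-independent choices of split faces and ancestor-split paths.
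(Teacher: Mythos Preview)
Your proposal is correct and follows essentially the same approach as the paper: restrict the preprocessed data to pieces that are weak descendants of an $r$-division $\rdiv\subseteq\TG(G^*)$, modify the path decomposition (and hence $\detect$) to bottom out only at such pieces, obtain query time $\Ot(n^{3/2}/r^{3/4}\log C)$ and update time $\Ot(r)$, and balance with $r=n^{6/7}$. The paper's proof is slightly terser but makes exactly the same moves, including the observation that an edge update affects only $O(1)$ pieces of $\rdiv$ and that recomputing the data for one such piece and all its descendants costs $\Ot(r)$.
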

By proceeding analogously as in Section~\ref{s:flow-approx}, the above data structure can be converted
to a dynamic $(1+\eps)$-approximate max $s,t$-flow oracle with $\Ot(n^{6/7}\log{(C)}/\eps)$ worst-case
update time and $\Ot(n^{6/7}\log{C}\cdot  \log(\log{(C)}/\eps))$ query time, which proves Theorem~\ref{t:dynamic-flow-approx}.

\subsection{An exact max $s,t$-flow oracle}
In this section we show that using subquadratic space and preprocessing time one can
still support exact max $s,t$-flow queries in sublinear time.

Consider the feasible flow oracle of Section~\ref{s:feasible-flow-oracle} along with the $r$-division-aware modification
of the procedure $\detect$ of Section~\ref{s:dynamic-oracle}. Note that the procedure  $\detect$  runs
in $\Ot(n^{3/2}/r^{3/4}\cdot \log{C})$ time as long as we guarantee
that the relevant closest-pair and near neighbor data structures
for the graph $D(H,x^*,y^*)$, that preserves distances between $\bnd{H}$ in $G^*_\lambda[H](x^*,y^*)$,
can be constructed in $\Ot(|\bnd{H}|)$ time
whenever we end up in a leaf call. In the following, we concentrate on case 2 of $\detect$, since case 1 can be handled
in $O(1)$ time even with no preprocessing.
The main goal in this section is to prove the following key lemma.

\begin{lemma}\label{l:exact-piece}
  Let $H\in T(G^*)$ be a piece and let $x^*,y^*\in Z(H)$.
  After preprocessing $H$ in $\Ot(|H|^2)$ time, for any given $\lambda$, one can construct closest pair
  and near neighbor data structures for  $D(H,x^*,y^*):=\DDG(G^*_\lambda[H](x^*,y^*))$, as required by
  the procedure $\detect$, satisfying $T_\cp(D(H,x^*,y^*))=\Ot(|\bnd{H}|)$, $T_\nn(D(H,x^*,y^*))=\Ot(|\bnd{H}|)$,
  and $Q_\nn(D(H,x^*,y^*))=\Ot(1)$.
\end{lemma}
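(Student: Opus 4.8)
The plan is to build the required closest-pair and near-neighbor data structures by peeling off the dependence on $\lambda$ in stages. Write $Q:=\pth(H,x^*,y^*)$; recall $Q$ is a simple curve in the embedded piece $H$ that enters through a hole $h_x$, leaves through a hole $h_y$, and between these passes only through natural faces of $H$, and recall that $\bnd{H}$ lies on $O(1)$ holes $h^1,\dots,h^k$ of $H$. Consequently $D(H,x^*,y^*)=\DDG(G^*_\lambda[H](x^*,y^*))$, a distance clique on $\bnd{H}$, is the element-wise minimum of the $O(1)$ blocks $M^{i,j}:=[\dist_{G^*_\lambda[H](x^*,y^*)}(u,v)]_{u\in V(h^i),\,v\in V(h^j)}$, so by Lemmas~\ref{l:ssp-sum} and~\ref{l:nn-sum} it suffices to produce, for each ordered pair $(h^i,h^j)$, closest-pair and near-neighbor data structures for $M^{i,j}$ with total update time $\Ot(|\bnd{H}|)$ and query time $\Ot(1)$, built in $\Ot(|\bnd{H}|)$ time. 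I would carry out the generic case $h^i\ne h^j$ with $h^i,h^j\notin\{h_x,h_y\}$ in detail and dispatch the $O(1)$ remaining (boundary) cases by minor variants.

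The first reduction expresses $M^{i,j}$ through crossing numbers. Exactly as in Section~\ref{s:flow-reduction}, for any $u\to v$ path $P$ in $G^*[H]$ one has $w_{G^*_\lambda[H](x^*,y^*)}(P)=w_{G^*_0[H]}(P)-\lambda\cdot\pi_{Q}(P)$; hence, writing $\ell_\pi(u,v)$ for the minimum $w_{G^*_0[H]}$-length of a simple $u\to v$ path with crossing number exactly $\pi$ wrt $Q$, we get $\dist_{G^*_\lambda[H](x^*,y^*)}(u,v)=\min_{\pi}\bigl(\ell_\pi(u,v)-\lambda\pi\bigr)$. The key structural fact I would establish, by a topological analysis of taut paths in the disk-with-$O(1)$-holes $H$ (intuitively: in the absence of a negative cycle, the cost of one more winding around a hole of $H$ matches the $\lambda$-benefit of the extra crossing of $Q$ it produces, so beyond a point additional windings are neutral), is that for a \emph{fixed} $\lambda$ the crossing numbers of shortest $u\to v$ paths, over all $u\in V(h^i)$, $v\in V(h^j)$, lie in one interval of length $O(1)$. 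Since the minimizing $\pi$ for a fixed reference pair is non-decreasing in $\lambda$, I would precompute, per hole pair, the lower envelope of the $O(|V(H)|)$ lines $\lambda\mapsto\ell_\pi(u_0,v_0)-\lambda\pi$ (whose $\ell_\pi(u_0,v_0)$ coefficients come from single MSSP queries below), so that a binary search on $\lambda$ returns the $O(1)$ relevant values of $\pi$ in $\Ot(1)$ time at query time; the case where the minimizer wants to exceed the layer range signals exactly that $G^*_\lambda[H](x^*,y^*)$ has a negative cycle, in which case $\detect$ reports one globally as in its case~2. It then remains, for each of the $O(1)$ relevant $\pi$, to build the data structures for $N_\pi:=[\ell_\pi(u,v)]_{u\in V(h^i),v\in V(h^j)}$; indeed $M^{i,j}=\min_\pi(N_\pi-\lambda\pi)$, an $O(1)$-fold element-wise minimum of shifts of the $N_\pi$, so Lemmas~\ref{l:ssp-sum},~\ref{l:nn-sum} finish the block.

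To realize $\ell_\pi$ as distances in an ordinary planar graph I would use a bounded version of the ``universal cover'' of~\cite[\S2.4]{Erickson10}: cut $H$ along $Q$ and glue $2|V(H)|+1$ copies of the result in a row, numbered $-|V(H)|,\dots,|V(H)|$, identifying the right bank of copy $l$ with the left bank of copy $l+1$. The resulting graph $H'$ is planar, has $\Theta(|H|^2)$ vertices, carries non-negative weights (the $\lambda$-perturbation is absent from $G^*_0[H]$), and satisfies $\dist_{H'}(u_0,v_\pi)=\ell_\pi(u,v)$, where $u_0,v_\pi$ are the copies of $u,v$ in layers $0$ and $\pi$ ($2|V(H)|+1$ layers suffice since a simple path has fewer than $|V(H)|$ edges). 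As preprocessing I would build, for each of the $O(1)$ source holes $h^i$, an MSSP data structure (Theorem~\ref{t:mssp}, with the trivial price function) on $H'$ with distinguished face $h^i_0$; this costs $O(|V(H')|\log|V(H')|)=\Ot(|H|^2)$, within budget, and affords $\Ot(1)$-time access to every entry $\dist_{H'}(a,b)$ with $a\in V(h^i_0)$. Since $V(h^i)\subseteq\bnd{H}$, the matrices $N_\pi$ — namely $[\dist_{H'}(a,b)]$ with $a\in V(h^i_0)$, $b\in V(h^j_\pi)$ — are $O(|\bnd{H}|)\times O(|\bnd{H}|)$.

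The main obstacle is the last step: turning $\Ot(1)$-time oracle access to the entries of $N_\pi$ into a Monge decomposition cheaply enough. If $h^i_0=h^j_\pi$ this is the classical decomposition of~\cite{FR06} into $O(1)$ Monge matrices; but for distinct faces, the distance matrix between two faces of a planar graph admits a decomposition into an element-wise minimum of two \emph{full} Monge matrices only after $\Omega(|V(H')|)$ preprocessing \emph{per pair of faces}~\cite{MozesW10}, and we must cover $\Theta(|V(H)|)$ target faces $h^j_l$, so that route would cost $\Ot(|H|^3)$ per piece. Instead I would develop a decomposition produced once for a single source hole $h^i$ and all target copies $h^j_l$ simultaneously: using the layered structure of $H'$ together with the non-crossing property of shortest-path trees in a planar graph — the points where the tree rooted at $a$ meets a fixed glued copy of $Q$ advance monotonically as $a$ runs in cyclic order around $h^i_0$ — one obtains, for each target $h^j_\pi$, a decomposition of $N_\pi$ into the element-wise minimum of two \emph{partial} Monge matrices, whose staircase shapes are read off from these monotone sequences in $\Ot(|\bnd{H}|)$ time. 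Finally, by~\cite{GawrychowskiMW20} each $m\times m$ partial Monge matrix is itself an element-wise minimum of full Monge matrices with $\Ot(m)$ total rows and columns; so — exactly as in the proof of Lemma~\ref{l:ddg-ds} (via Lemma~\ref{l:decomp}, \cite[Lemma~1]{MozesNW18}, the Monge heap of~\cite{FR06}, and the composition Lemmas~\ref{l:ssp-sum},~\ref{l:nn-sum}) — we obtain closest-pair and near-neighbor data structures for each block $M^{i,j}$, hence for $D(H,x^*,y^*)$, with $T_\cp,T_\nn=\Ot(|\bnd{H}|)$ and $Q_\nn=\Ot(1)$, all constructed in $\Ot(|\bnd{H}|)$ time. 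I expect this single-source/all-targets partial-Monge decomposition to be the crux of the argument.
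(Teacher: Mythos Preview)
Your plan matches the paper's proof closely: decompose by ordered hole pairs, express the $\lambda$-dependence through crossing numbers, build the bounded universal cover $H'$ (the paper's $Z'$, Lemma~\ref{l:k-paths}), build MSSP on $H'$ from each source-hole face, observe that only $O(1)$ crossing numbers $\pi$ matter and locate them via a lower envelope (Lemma~\ref{l:distinct-prep}), and finally decompose the between-face distance matrices into \emph{partial} Monge matrices handled via~\cite{GawrychowskiMW20} (Lemma~\ref{l:partial-monge}). You also correctly single out this last step as the crux.

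Two places deserve sharper arguments than you sketch. First, your ``key structural fact'' (all shortest-path crossing numbers within an $O(1)$ window) is right, but your parenthetical intuition about neutral windings does not prove it: that argument bounds how large $\pi$ can be for a \emph{single} pair, not that different pairs $(u,v)$ share a common window. The paper's proof (Lemma~\ref{l:distinct-holes}) instead fixes one simple shortest $s\to t$ path $P$ with $s\in V(h^1)$, $t\in V(h^2)$ that meets no other boundary vertices, cuts the piece open along $P$, and shows that any other shortest $u\to v$ path can deviate in crossing number from $\pi_Q(P)$ by at most $4$; this comparison is genuinely topological (non-crossing of shortest paths in the cut graph plus Lemma~\ref{l:crossing1}) and is what your ``taut paths'' remark should become. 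Second, because Lemma~\ref{l:distinct-holes} needs that reference path $P$ to be \emph{simple} and to avoid $V(h^1)\cup V(h^2)\setminus\{s,t\}$, the lower-envelope computation must be augmented to extract, for each $\lambda$, the endpoints and crossing number of such a path --- not merely the minimizing $\pi$ for an arbitrary fixed pair $(u_0,v_0)$. The paper does this (Lemma~\ref{l:distinct-prep}) by a preorder traversal of a single shortest-path tree in $Z'$, maintaining the current simple projection and recording the relevant subpath data; this costs $\Ot(|H|^2)$ preprocessing and $\Ot(1)$ per query, consistent with your budget.
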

  
Before proving Lemma~\ref{l:exact-piece}, let us explain how it leads to an exact max $s,t$-flow oracle
with subquadratic preprocessing/space and sublinear query time.
We apply the preprocessing of Lemma~\ref{l:exact-piece}
to every piece $H$ that is a weak descendant of some piece in $T(G^*)\cap \rdiv$. Recall that the sum of sizes
of descendant pieces of $H$ is $\Ot(|H|)$ since $T(G^*)$ has depth $O(\log{n})$
and no two siblings share a natural face. 
As a result, the preprocessing time is $\sum_{H\in\rdiv}\Ot(|H|^2)=\Ot(r)\cdot \sum_{H\in \rdiv}\Ot(|H|)=\Ot(nr)$.

Observe that using the preprocessing as given in Lemma~\ref{l:exact-piece}, $\detect(\lambda,G^*,s^*,t^*)$ runs in \linebreak
$\Ot(n^{3/2}/r^{3/4}\log{C})$ time
even if $\lambda$ is a query parameter.
In order to compute the exact max $s,t$-flow value $\lambda^*$, we simply find it
via binary search (over the interval $[0,nC]$) whose decisions are driven by the result of 
$\detect(\lambda,G^*,s^*,t^*)$.
This way, we obtain:

\texactflow*
The remaining part of this section is devoted to proving Lemma~\ref{l:exact-piece}.

  Recall that so far, the only preprocessing required by the closest pair and near neighbor
  data structures was that of Lemma~\ref{l:ddg-ds}. Actually, the proof
  of Lemma~\ref{l:ddg-ds} goes through for $\DDG(G^*_\lambda[H](x^*,y^*))$ if we only assume the min-decomposition
  of $\DDG(G^*_\lambda[H](x^*,y^*))$ into Monge matrices with a total of $\Ot(|\bnd{H}|)$ rows and columns (as in Lemma~\ref{l:decomp})  and $\Ot(1)$-time
  access to the edge weights of $\DDG(G^*_\lambda[H](x^*,y^*))$. In the current setting, we need to access dense distance graphs
  and its decomposition into Monge matrices for an arbitrary (integer) parameter
  $\lambda$ between $0$ and $\lambda^*$, where $\lambda^*_H$ is the maximum value $\lambda$
  such that $G^*_\lambda[H](x^*,y^*)$ contains no negative cycle. Such 
  $\lambda^*_H$ can be computed via binary search or even in near-linear strongly polynomial time
  using the planar maximum flow algorithm of~\cite{BorradaileK09, Erickson10} run
  on the dual of $G^*_\lambda[H](x^*,y^*)$.
  Unfortunately, 
  we cannot afford to apply preprocessing of Lemma~\ref{l:ddg-ds} for all possible values $\lambda$
  so that all the required data is stored explicitly.
  Instead, we will show that after $\Ot(|H|^2)$ preprocessing we can access
  the required data in $\Ot(1)$ time.

  There are two challenges here. First, we need to be able to efficiently access the edges of
  the dense distance graph $\DDG(G^*_\lambda[H](x^*,y^*))$ for an arbitrary parameter $\lambda$.
  Second, the partition of $\DDG(G^*_\lambda[H](x^*,y^*))$ into
  Monge matrices might vary for different values of $\lambda$.
  If we used Lemma~\ref{l:decomp} in a black-box way for the latter task,
  we would need to spend a prohibitive time of $\widetilde{\Omega}(|H|)$ on computing the partition.
  We will show that in our case (assuming the $\Ot(|H|^2)$ preprocessing),
  such a partition can be computed faster, in $\Ot(|\bnd{H}|)$ time, which is sufficient for our needs.
  
Recall that the parameterized edges in $G^*_\lambda[H](x^*,y^*)$
form a path $Q$ in the \emph{dual} of $G^*_\lambda[H](x^*,y^*)$.
From the point of view of $G^*_\lambda[H](x^*,y^*)$, $Q$ is
a curve originating in some hole $h_Q^1$ and ending in some
hole $h_Q^2$ of $G^*_\lambda[H](x^*,y^*)$ (possibly $h_Q^1=h_Q^2$), and not crossing any other holes (in particular $h_Q^1,h_Q^2$)
of that graph.

For each of $O(1)$ pairs $h^1,h^2$ of holes of $H$ (possibly $h^1=h^2$), we will construct separate closest pair and near neighbor data structures
for a subgraph of $D(H,x^*,y^*)=\DDG(G^*_\lambda[H](x^*,y^*))$ containing the edges between $V(h^1)$ and $V(h^2)$
(where we denote by $V(h^i)$ the boundary vertices of $H$ lying on $h^i$)
with total update time $\Ot\left(|V(h^1)|+|V(h^2)|\right)$ and query time $\Ot(1)$.
By Lemmas~\ref{l:nn-sum}~and~\ref{l:ssp-sum}, this will yield the respective data structures
for $D(H,x^*,y^*)$.

  Roughly speaking, for a pair $h^1,h^2$, we will expose $O(1)$ partial\footnote{In a partial Monge matrix, the Monge property is only satisfied if all the four cells in the Monge property are defined, and the defined cells in every row (column) lie in some number of consecutive columns (rows, resp.).} Monge matrices,
  whose element-wise minimum
  encodes the distances from $V(h^1)$ to $V(h^2)$ 
  in $G^*_\lambda[H](x^*,y^*)$, just as the decomposition of Lemma~\ref{l:decomp} does.
  However, we will spend only $\Ot\left(|V(h^1)|+|V(h^2)|\right)$ time on this.

  \subsubsection{The single-hole case}\label{s:single-hole}
  Let us first consider the easier case $h^1=h^2=h$. 
  We will later reduce the more involved case $h^1\neq h^2$ to that case.

  Let us start with the following technical lemmas on the possible
    net number of crossings of simple paths wrt. simple dual paths or cycles.

  \newcommand{\curv}{\mathcal{C}}

  \begin{lemma}\label{l:crossing1}
    Let $G_0$ be a plane digraph.
    Let $P=u\to v$ be a simple path in $G_0$ such that both $u$ and $v$
    lie on a single face $h$ of $G$.
    Let $Q_0\subseteq G_0^*$ be either a simple path $f\to g$ such that
    $h\notin V(Q_0)\setminus \{f,g\}$, 
    or a simple cycle in the dual $G_0^*$.
    Then $\pi_{Q_0}(P)\in \{-1,0,1\}$.
  \end{lemma}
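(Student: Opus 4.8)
The plan is to reduce both cases to a single elementary observation: if $J\subseteq\mathbb{R}^2$ is a Jordan curve and $\gamma$ is a simple arc whose two endpoints do \emph{not} lie on $J$, then the net (signed) number of times $\gamma$ crosses $J$ is $-1$, $0$, or $+1$. This is exactly the reasoning already invoked for closed curves in the discussion preceding Fact~\ref{f:crossing}: orient $J$ so that its interior lies consistently on one side; then every crossing of $\gamma$ with $J$ that enters the interior carries the same sign $\epsilon\in\{-1,+1\}$, every crossing that leaves the interior carries sign $-\epsilon$, so the signed total is $\epsilon\cdot(\#\text{entries}-\#\text{exits})$, and $\#\text{entries}-\#\text{exits}$ is $-1$, $0$, or $1$ depending only on whether the region of $\mathbb{R}^2\setminus J$ containing the initial endpoint of $\gamma$ equals the region containing its final endpoint. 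It then remains to realize $\pi_{Q_0}(P)$ as such a quantity. Drawing everything in the plane in the standard way (each primal edge and its dual cross exactly once and transversally, distinct primal and dual edges are disjoint, primal vertices never lie on dual edges, and face-centre points never lie on primal edges), the curve $Q_0$ is a simple arc or a simple closed curve, $P$ is a simple arc, and by definition $\pi_{Q_0}(P)$ is, up to a fixed global sign, the signed count of transversal crossings between the curve $Q_0$ and the curve $P$, the sign at each crossing being dictated by the left/right convention identifying $e^*$ with one of the two orientations of the dual segment crossing $e$; since our target set $\{-1,0,1\}$ is symmetric, the global sign is immaterial.

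First suppose $Q_0$ is a simple cycle in $G_0^*$. Then, drawn in the plane, $Q_0$ is itself a Jordan curve $J$, and since $u$ and $v$ are primal vertices, neither lies on $J$. Taking $\gamma=P$ (the arc from $u$ to $v$) in the observation above immediately gives $\pi_{Q_0}(P)\in\{-1,0,1\}$. Note that the face $h$ is not used in this case; the claim holds for any $u,v$ that are not on $Q_0$.

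Now suppose $Q_0$ is a simple path $f\to g$ in $G_0^*$ with $h\notin V(Q_0)\setminus\{f,g\}$; if $f=g$ then $Q_0$ has no edges and $\pi_{Q_0}(P)=0$, so assume $f\neq g$, in which case $Q_0$ is a proper simple arc from a point $x_f\in\operatorname{int}(f)$ to a point $x_g\in\operatorname{int}(g)$. Here the trick is to close up $P$ rather than $Q_0$: choose an arc $A_h$ contained in the open face $h$ except for its endpoints and joining $v$ back to $u$; since $u,v$ lie on $\partial h$ and $P$ is simple, $J_P:=P\cup A_h$ is a Jordan curve. Because no \emph{interior} vertex of $Q_0$ is the face $h$, the only edges of $Q_0$ that can meet $\operatorname{int}(h)$ are the first one (if $f=h$) and the last one (if $g=h$), and such an edge meets $\operatorname{int}(h)$ only along a short sub-arc ending at a single point of $\partial h$ distinct from $u$ and $v$; hence, choosing the endpoints $x_f,x_g$ and the drawing of $Q_0$ near them suitably, we may route $A_h$ so that it is disjoint from $Q_0$ and misses $x_f$ and $x_g$. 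Consequently the signed crossings of $Q_0$ with $J_P$ equal the signed crossings of $Q_0$ with $P$, i.e. $\pm\pi_{Q_0}(P)$, while $x_f$ and $x_g$, being face-interior points off $A_h$, do not lie on $J_P$. Applying the observation with $J=J_P$ and $\gamma=Q_0$ yields $\pi_{Q_0}(P)\in\{-1,0,1\}$, as claimed. The only real work here is the topological bookkeeping of this last case — checking that $A_h$ can be drawn disjoint from $Q_0$ using the hypothesis on $h$, and that the entries-versus-exits sign pattern is legitimate (it is just the orientability of a Jordan curve) — together with making the identification of $\pi_{Q_0}(P)$ with the geometric signed crossing number precise; all of these are routine in this setting (compare the proof of Fact~\ref{f:crossing} and~\cite{Erickson10}), so the write-up is mostly a careful setup of the picture.
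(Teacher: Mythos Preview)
Your proposal is correct and follows essentially the same approach as the paper: in the cycle case you invoke the Jordan-curve crossing observation directly (the paper cites Fact~\ref{f:crossing}); in the path case you close $P$ into a Jordan curve by routing an arc through the face~$h$ and argue this arc avoids $Q_0$, which is exactly what the paper does by embedding a single edge $vu$ inside~$h$ avoiding the at most one dual edge $z$ of $Q_0$ incident to~$h$. The paper's write-up is shorter only because it treats Fact~\ref{f:crossing} as a black box and stays in the combinatorial model (adding a graph edge) rather than unpacking the underlying topology.
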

  \begin{proof}
    If $Q_0$ is a cycle, then by Fact~\ref{f:crossing} we have $\pi_{Q_0}(P)=\pi_{P}(Q_0)\in \{-1,0,1\}$.
    Otherwise, $Q_0$ is a simple path in $G_0^*$ with at most
    one edge $z$ incident to $h$. Note that one can embed
    an edge $vu$ inside $h$ so that it does not cross $z$.
    Since $P\cdot vu$ is a simple directed cycle in $G_0+vu$
    and $Q_0$ is a path in $(G_0+vu)^*$, we have
    $\pi_{Q_0}(P\cdot vu)\in \{-1,0,1\}$ by Fact~\ref{f:crossing}.
    But, by construction, $vu,(vu)^R\notin E(Q_0)$, so
    $\pi_{Q_0}(P)=\pi_{Q_0}(P\cdot vu)$.
  \end{proof}

\begin{lemma}\label{l:crossing2}

    For $u,v\in V(G_0)$, let $P=u\to v$ be a path in $G_0$.
    Then:
    \begin{enumerate}[(a)]
      \item If $u$ and $v$ lie on a single face $h$ of $G$, and
        $P$ is simple, then       \item If $f=g$, then $\pi(P)$ is uniquely determined by the respective sides of endpoints $u,v$ wrt. $\curv$.
    \end{enumerate}
  \end{lemma}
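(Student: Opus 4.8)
The plan is to derive both parts from a single topological fact about crossing numbers. For a closed curve $\curv$ in the plane (equivalently, the sphere) that is disjoint from the vertices of $G_0$ and meets $G_0$ only by crossing primal edges transversally, the signed crossing number $\pi_\curv(\cdot)$ of a walk in $G_0$ with $\curv$ is a ``potential difference'': one can label $V(G_0)$ by the two sides of $\curv$ with a function $\sigma$ so that for every walk $W=u\to v$ in $G_0$ one has $\pi_\curv(W)=\pm(\sigma(u)-\sigma(v))$, the global sign being fixed by the chosen orientation of $\curv$. In particular $\pi_\curv$ vanishes on closed walks and hence depends only on endpoints, and Lemma~\ref{l:crossing1} (together with Fact~\ref{f:crossing}) then pins the possible value down to $\{-1,0,1\}$.

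First I would prove part~(b). When $f=g$ the dual curve $\curv$ traced by $Q_0$ is closed (a simple closed curve when $Q_0$ is a simple cycle, and in all cases a closed curve disjoint from $V(G_0)$), so by the Jordan curve theorem its complement has two sides; set $\sigma(w)\in\{0,1\}$ according to which side contains $w$. Walking along any $u\to v$ path $P$ and tracking the side of $\curv$ that the current point lies on, $\sigma$ flips exactly at the edges $e$ with $e^*\in E(\curv)$, and the sign of each flip is precisely the sign with which that crossing is counted in $\pi_\curv$; summing telescopes to $\sigma(u)-\sigma(v)$. Hence $\pi(P)=\pi_{Q_0}(P)=\pm(\sigma(u)-\sigma(v))$ is uniquely determined by the sides of $u$ and $v$ with respect to $\curv$, with no need for $P$ simple.

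Next, part~(a). Here $Q_0$ may be an open arc $f\to g$, so there is no global notion of side for $Q_0$ itself; I would close it up \emph{through the face $h$}, exactly as in the proof of Lemma~\ref{l:crossing1}. Using that $h$ is disjoint from $Q_0$ away from $\{f,g\}$, embed a dual arc $\gamma$ from $g$ back to $f$ whose interior lies in the open face $h$, obtaining a closed curve $\curv:=Q_0\cup\gamma$ disjoint from $G_0$. Since $u,v$ lie on $\partial h$ while $P$ is simple and touches $\overline h$ only at its endpoints, one can route $\gamma$ so that $P$ does not cross it, and hence $\pi_{Q_0}(P)=\pi_\curv(P)$. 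Applying part~(b) to $\curv$, $\pi_\curv(P)$ is determined by the sides of $u,v$ with respect to $\curv$; combined with Lemma~\ref{l:crossing1}, which bounds $\pi_{Q_0}(P)\in\{-1,0,1\}$, we conclude that $\pi_{Q_0}(P)$ is $0$ when $u$ and $v$ lie on the same side of $\curv$ and $\pm1$ (sign fixed by the orientation of $Q_0$) otherwise.

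The main obstacle is the bookkeeping in the ``potential difference'' claim: making precise the side of $\curv$ containing a given point of the plane, verifying that $\curv$ meets $G_0$ only in transversal crossings of primal edges (so that $\sigma$ is well defined on $V(G_0)$ and each crossing flips $\sigma$ with the correct sign), and in part~(a) confirming that $\gamma$ can be routed inside $h$ so as to avoid $P$ and keep $\curv$ disjoint from $G_0$. For the case where $\curv$ is a non-simple closed curve one falls back on the homology identity $\pi_\curv(W)=\sum_F n_F\,\pi_\curv(\partial F)=0$, the sum ranging over bounded faces $F$ of $G_0$ with $n_F$ the winding number of $W$ about $F$, together with $\pi_\curv(\partial F)=0$ because $\curv$, being closed, enters and leaves each face equally often. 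Once this invariance is established, both parts follow as above.
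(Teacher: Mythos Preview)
Your argument for part~(b) is essentially the paper's: when $f=g$ the curve $\curv$ is closed, it partitions the graph into two sides, and the crossing number of any $u\to v$ walk is determined by the sides of $u$ and $v$. The paper states this more tersely (``if both $u,v$ are in $G_i$ then $\pi(P)=0$; if $u\in G_1$, $v\in G\setminus G_1$ then $\pi(P)=1$; \dots'') but the content is identical.

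For part~(a) there is nothing to compare against: the paper's own statement of~(a) is truncated (the conclusion after ``then'' is missing) and its proof explicitly treats only ``the latter claim'', i.e.~(b). That said, your proposed construction for~(a) does not work as written. You want to close the dual arc $Q_0=f\to g$ by an arc $\gamma$ ``whose interior lies in the open face $h$'', but $f$, $g$, $h$ are faces of $G_0$---in general three distinct regions of the plane---and a curve starting in the interior of $g$ and ending in the interior of $f$ cannot have its interior confined to a third face $h$. You say this is ``exactly as in the proof of Lemma~\ref{l:crossing1}'', but that proof does the opposite: it closes the \emph{primal} path $P$ by a primal edge $vu$ drawn inside $h$ (which is possible because $u,v\in\partial h$) and then applies Fact~\ref{f:crossing} to the resulting primal cycle against the unchanged dual path $Q_0$. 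Swapping the roles of $P$ and $Q_0$ in that construction is what breaks your argument.
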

 
\begin{proof}
 Now let us consider the latter claim. $\curv$ partitions $G$ into two connected parts $G_1,G_2$ such that $G_1\cap G_2=\curv$,
    and $G_1$ is to the left of $\curv$.
    Note that if both $u,v$ are in $G_i$ for $i\in \{1,2\}$, then the crossing
    number of $P$ is $0$. If $u$ is in $G_1$, and $v$ is in $G\setminus G_1$, then
    the crossing number $\pi(P)$ is necessarily $1$.
    Similarly, if $u$ is in $G_2$, and $v$ is in $G\setminus G_2$, $\pi(P)\equiv -1$.
      \end{proof}

  The next lemma provides a construction of a certain extension $G'$ of a plane digraph $G$, such that
  the shortest paths in $G'$ encode the shortest paths $R\subseteq G$ with a given crossing number wrt. a simple dual path.

  \begin{lemma}\label{l:k-paths}
    Let $G$ be a plane digraph with no negative cycles, and let $n=|V(G)|$. Let $P^*=a\to b$ be some simple path in $G^*$ (in particular, $a\neq b$).
    Then, in $O(n^2)$ time one can construct another plane digraph $G'$ with vertex set
    $V(G')=\{v_j:v\in V(G), j\in \{-n,\ldots,n\}\}$ such
    that:
    \begin{enumerate}
      \item For any $u,v\in V$, $j\in \{-n,\ldots,n\}$, $\dist_{G'}(u_0,v_j)$ equals the length
        of some $u\to v$ path $R\subseteq G$ with $\pi_{P^*}(R)=j$,
      \item For any $u,v\in V$, $j\in \{-n,\ldots,n\}$, $\dist_{G'}(u_0,v_j)$ is at most the
        length of the shortest \emph{simple} $u\to v$ path $R\subseteq G$ with $\pi_{P^*}(R)=j$. 
      \item For any $S\subseteq V(G)$ such that the vertices $S$ lie on a single face $f\notin V(P^*)\setminus\{a,b\}$ of $G$,
        and for any $j\in \{-n,\ldots,n\}$ the vertices $S_j:=\{s_j:s\in S\}$ lie on a single face of $G'$.
        Moreover, the faces containing $S_j$ for distinct $j$ are either equal or vertex-disjoint.
    \end{enumerate}
  \end{lemma}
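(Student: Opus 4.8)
The plan is to build $G'$ as a finite "unrolled cover" of $G$ cut along $P^*$, closely following the construction of the universal cover in~\cite[Section~2.4]{Erickson10} but truncated to $2n+1$ layers, which suffices because crossing numbers of simple paths cannot exceed $n$ in absolute value. First I would form the cut graph $\widehat{G}$ by slicing $G$ open along the curve $P^*$ (viewed as a curve in the plane passing through the primal faces dual to the vertices of $P^*$): every primal edge $e$ with $e^*\in E(P^*)$ is "doubled" in the sense that traversing it corresponds to moving between consecutive layers. Concretely, $V(G')=\{v_j : v\in V(G),\ j\in\{-n,\dots,n\}\}$, and for each primal edge $uv=e\in E(G)$ I add, for every layer index $j$ with $-n\le j\le n$ (and the shifted index still in range), an edge $u_j v_{j+\pi_{P^*}(\{e\})}$ of the same weight $w_G(e)$, where $\pi_{P^*}(\{e\})\in\{-1,0,1\}$ records whether $e$ crosses $P^*$ left-to-right, right-to-left, or not at all. (Edges that would leave the layer range $[-n,n]$ are simply omitted; this is harmless for the statements because no simple path accumulates crossing number outside that range.) This graph has $O(n^2)$ vertices and $O(n^2)$ edges and is constructed in $O(n^2)$ time; it is planar because stacking $2n+1$ copies of the cut (planar) graph $\widehat G$ and gluing consecutive copies along the cut is a planar operation — this is exactly the truncated universal cover.

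Next I would verify the three claimed properties. For property~1: any $u_0\to v_j$ walk in $G'$ projects (forgetting layer indices) to a $u\to v$ walk $R$ in $G$, and by telescoping the per-edge layer shifts along $R$ the total shift equals $\pi_{P^*}(R)$, which must equal $j$ since the walk starts in layer $0$ and ends in layer $j$; a shortest such walk is a path in $G$ (no negative cycles in $G$, hence none in $G'$ either since any cycle in $G'$ projects to a closed walk in $G$ of the same weight), giving the $u\to v$ path $R$ of the required length with $\pi_{P^*}(R)=j$. For property~2: given any simple $u\to v$ path $R\subseteq G$ with $\pi_{P^*}(R)=j$, lift it to $G'$ starting at $u_0$; since $|\pi_{P^*}(R')|\le n$ for every prefix $R'$ of a simple path (each prefix is itself a simple path, and Fact~\ref{f:crossing}/Lemma~\ref{l:crossing1}-type bounds — or simply the fact that a simple path has at most $n-1$ edges each contributing at most $1$ — keep the running index within $\{-n,\dots,n\}$), the lift stays within the layer range and is a genuine $u_0\to v_j$ walk in $G'$ of weight $\len(R)$; hence $\dist_{G'}(u_0,v_j)\le\len(R)$.

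For property~3: if $S$ lies on a single face $f$ of $G$ with $f\notin V(P^*)\setminus\{a,b\}$, then in the cut graph $\widehat G$ the face $f$ is not split by the cut (the cut curve $P^*$ only passes through the faces dual to interior vertices of $P^*$, and — being a simple path, not a cycle — it does not separate $f$'s incident region), so $f$ survives as a face of $\widehat G$ and therefore appears as a face $f_j$ in each layer $j$ of $G'$, carrying exactly the vertices $S_j$; distinct copies $f_j, f_{j'}$ live in different layers and are glued only along cut-edges, so they are either the same face (when $f$ happens to be incident to the cut and gets identified across layers — but then it would contain $\partial P^*$ only, contradicting $f\notin V(P^*)\setminus\{a,b\}$ unless $f\in\{a,b\}$, which is excluded) or vertex-disjoint. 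I would phrase the disjointness by noting that $V(f_j)\cap V(f_{j'})$ can only contain vertices $v_\ell$ that are shared between layers, i.e.\ vertices incident to cut-edges, i.e.\ vertices whose primal counterpart lies on $\partial P^*$; since $f$ avoids those (as $f\notin V(P^*)\setminus\{a,b\}$ and we may also assume $f\ne a,b$ — the endpoints can be handled by a tiny perturbation of the cut curve's endpoints, exactly as in the $vu$-edge trick of Lemma~\ref{l:crossing1}), the intersection is empty.

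The main obstacle I expect is the careful topological bookkeeping in property~3 — precisely pinning down which faces of $G$ survive intact in the cut graph $\widehat G$ and hence get cleanly replicated across all $2n+1$ layers, versus which faces touch the cut curve $P^*$ and get split or identified. The crux is that $P^*$ is a simple \emph{path} (not a cycle) with endpoints $a,b$, so cutting along it does not disconnect the surface and the only faces affected are those incident to edges of $P^*$; making this rigorous requires being explicit about the embedding of the cut and the gluing of consecutive layers, which is the one place where appealing to the universal-cover construction of~\cite{Erickson10} and truncating it to finitely many layers does most of the work but still needs a short verification that truncation does not disturb the face structure for the indices we care about.
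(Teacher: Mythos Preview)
Your construction is essentially identical to the paper's: build the truncated universal cover by cutting $G$ along $P^*$, taking $2n+1$ copies, and gluing consecutive copies along the cut. Your arguments for properties~1 and~2 (projection and lifting, with the prefix-crossing bound $|c_i|\le i\le n$ for simple paths) match the paper's.

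Your handling of property~3 has a small but concrete slip. The hypothesis $f\notin V(P^*)\setminus\{a,b\}$ \emph{allows} $f\in\{a,b\}$; this case is not ``excluded'' and should not be handled by perturbing the cut curve's endpoints. Rather, when $f\in\{a,b\}$, all the copies $S_j$ land on a \emph{single common} face of $G'$ (the big face created from the endpoints of the cut after gluing all layers), and this is exactly the ``equal'' alternative in the lemma's conclusion. The ``vertex-disjoint'' alternative is for $f\notin V(P^*)$, where $f$ is untouched by the cut and gets a separate copy in each layer. So the dichotomy in the statement is precisely: $f\in\{a,b\}$ gives equal faces, $f\notin V(P^*)$ gives vertex-disjoint faces. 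Once you state it this way, no perturbation trick is needed and the bookkeeping you flagged as the main obstacle becomes a one-line case split.
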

  \begin{proof}
    The graph $G'$ is defined similarly to the infinite \emph{universal cover graph} from~\cite[Section~2.4]{Erickson10}, 
    albeit $G'$ is finite.
    It is obtained as follows: first obtain a graph $\tilde{G}$ by introducing
    an auxiliary vertex wherever the curve $P^*$ crosses the embedding of $G$,
    and connecting the neighboring auxiliary vertices on $P^*$ by auxiliary edges
    embedded in a way that there is a path in $\tilde{G}$ embedded exactly as $P^*$.
    Next, cut $\tilde{G}$ along the path $P^*$, so that the obtained
    graph $\tilde{G}'$ has two copies $P^*_L$ and $P^*_R$ of $P^*$.
    Make $2n+1$ copies $\tilde{G}'_{-n},\ldots,\tilde{G}'_{n}$
    of the graph $\tilde{G}'$.
    Label the copy of a vertex $v$ in $\tilde{G}'_i$ with $v_i$.
    Connect the copies into a chain by gluing the path
    $P^*_L$ of the copy $\tilde{G}'_{i}$ with the path $P^*_R$
    of the copy $\tilde{G}'_{i+1}$ for all $i=-n,\ldots,n-1$.
    Finally, remove from the obtained graph all the (copies of)
    auxiliary vertices and edges except those appearing on $P^*_L$ in $\tilde{G}'_{-n}$
    and on $P^*_R$ in $\tilde{G}'_n$.
    See Figure~\ref{fig:cover}.

\begin{figure}[ht!]
    \centering
    \includegraphics[scale=0.7]{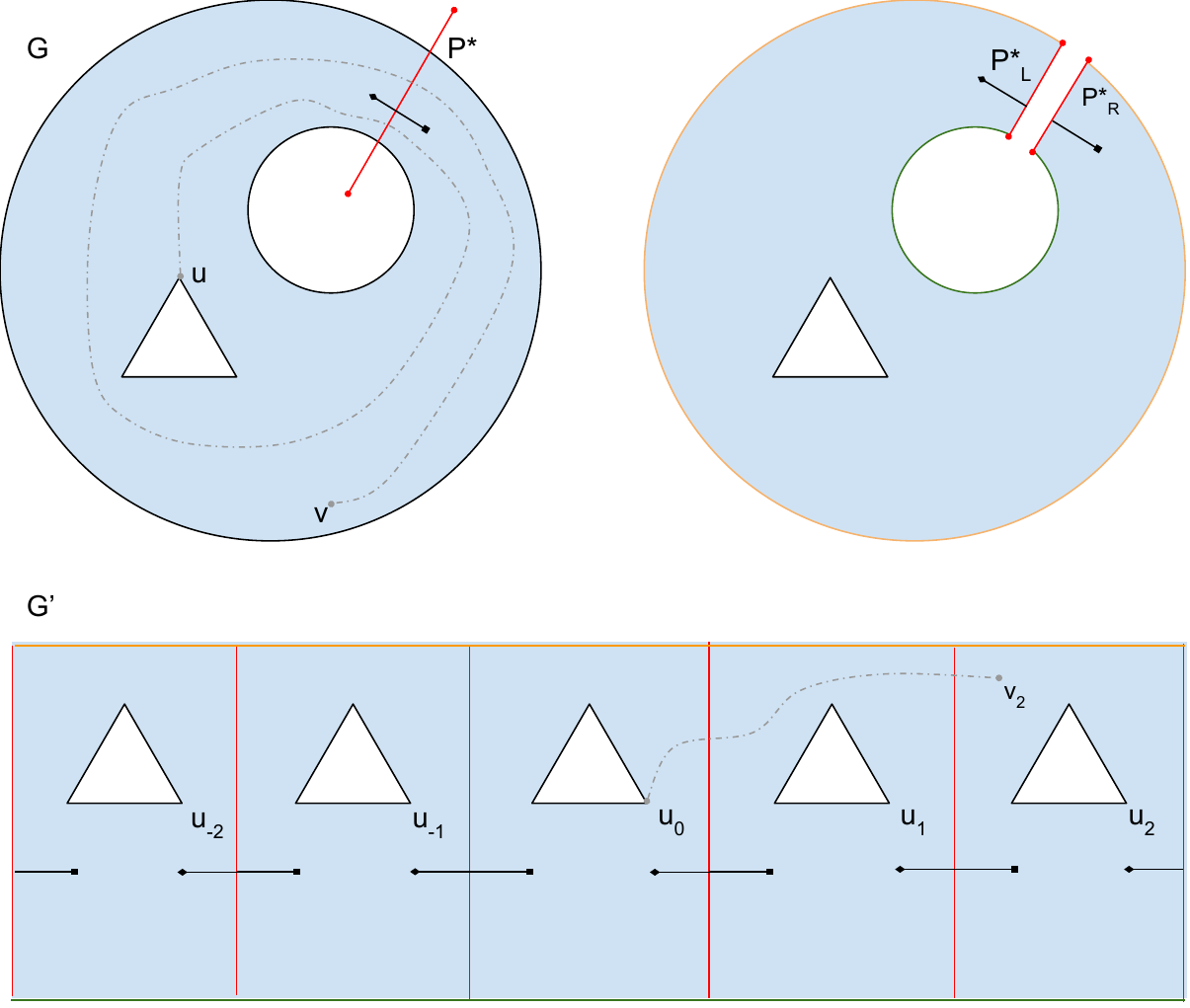}
    \caption{Construction on the graph $G'$ from Lemma~\ref{l:k-paths}. The gray $u\to v$ path in $G$
    is preserved in the graph $G'$ in the bottom}\label{fig:cover}

  \end{figure}

    By construction, every path $R'=u_0\to v_j$ in $G'$ corresponds to some path $R=u\to v$ in $G$ with $\pi_{P^*}(R)=j$. This implies property~$1$.

    For a \emph{simple} path $R=u\to v$ in $G$, $R$ is preserved in $G'$ for the following reason.
    Suppose $R=e_1\ldots e_k$, where $e_i=x_iy_i$, and denote by $c_i$, $i=0,\ldots,k$, the crossing number $\pi_{P^*}(\{e_1,\ldots,e_i\})$.
    That is, the $c_i$ is the number of times $e_1\ldots e_i$ crosses
    $P^*$ left to right, minus the number of times it crosses $P^*$ right to left. Note that we have $|c_i|\leq i$ and $k\leq n$ since $R$ is simple.
    As a result, we have $|c_i|\leq n$ for all $i$.
    It follows that a path $R'=u_0(y_1)_{c_1}(y_2)_{c_2}\ldots (y_k)_{c_k}=u_0\to v_{c_k}$ exists in $G'$
    and has the same length as $R$ in $G$. This proves property 2.

    Note that all the copies of vertices lying on faces $a$ or $b$ in $G$
    lie on a single face of $G'$.
    If some face $f\notin\{a,b\}$ of $G$ is not an intermediate
    vertex of $P^*$, then there are vertex-disjoint copies of $f$
    in all parts $\tilde{G}'_i$ of $G'$. This proves property~3.
  \end{proof}

    If $h^1_Q=h^2_Q$, then $Q$ is a simple directed cycle in the dual. As discussed earlier, the crossing number $\pi_{u,v}$ of \emph{any}
  path $u\to v$ path in $G^*_\lambda[H](x^*,y^*)$ wrt. $Q$ is in $\{-1,0,1\}$ and is uniquely determined by the positions (i.e., exterior or interior) of the vertices $u,v$ wrt. $Q$.
  As a result,
  we have $\dist_{G^*_\lambda[H](x^*,y^*)}(u,v)=\dist_{G^*_0[H](x^*,y^*)}(u,v)+\pi_{u,v}\cdot \lambda$.
  Consequently, the edge weights of $uv$ in $D(H^*,x^*,y^*)$ for $u,v\in V(h)$
  can be accessed in $O(1)$ time by shifting the respective weight
  of $\DDG(G^*_0[H](x^*,y^*))$ by $\lambda\cdot \pi_{u,v}$.
  To decompose $\DDG(G^*_\lambda[H](x^*,y^*))[V(h)]$ into Monge matrices, it is enough
  to use the decomposition of Lemma~\ref{l:decomp} $O(1)$ times, as follows.
  Express $V(h)=A\cup B$ so that the sets $A$ and $B$ lie on different sides of $Q$.
  Then, for each pair $(I,J)\in \{A,B\}^2$, we include in
  the desired decomposition 
  the decomposition of $\DDG(G^*_0[H](x^*,y^*))$ into Monge matrices
  but limited to rows $I$ and columns $J$. For each
  such combination $(I,J)$, the value $\pi_{u,v}$ is equal for all $(u,v)\in I\times J$.
  Therefore, for each Monge matrix included in our decomposition
  all its entries get shifted by the same value, which clearly preserves the Monge property.
  Observe that if $h^1_Q=h^2_Q$, this strategy works even for the more involved case of two distinct holes $h^1\neq h^2$, so
  we will neglect it later on in Section~\ref{s:distinct-holes}.

  Let us now assume $h^1_Q\neq h^2_Q$.
  Since $h^1_Q\neq h^2_Q$, we can build a graph $Z'$ from Lemma~\ref{l:k-paths} for the graph $G:=Z=G^*_0[H](x^*,y^*)$
  and the dual path~$P^*:=Q$.
  Using one of the well-known methods, e.g.~\cite{EricksonFL18}, we enforce uniqueness of shortest paths on the graph $Z'$
  by only manipulating its edge weights and so that the original path
  weight can be recovered in $\Ot(1)$ time from the transformed graph.
  For any face $f$ of $Z$, denote by $f_j$ the face of $Z'$ that
  contains the vertices $\{v_j:v\in V(f)\}$, where $v_j\in V(Z')$ is defined as in Lemma~\ref{l:k-paths}.

  Next, we construct an MSSP data structure on $Z'$ with a distinguished face $h_0$ (that is, the $0$-th copy of
  the face $h$ whose vertices' pairwise distance are of our interest) in a way that allows accessing
the shortest path trees from the vertices on $h_0$ via a dynamic tree interface. 
Note that computing $Z'$ and the associated data structures takes 
$\Ot(|Z'|)=\Ot(|H|^2)$ time.

In order to construct the part of $D(H,x^*,y^*)$ that
is responsible for encoding the distances between the vertices of $V(h)$, we use
an auxiliary graph $D_h=\left(V(h),\bigcup_{j\in\{-1,0,1\}}E_{h,j}\right)$,
satisfying the following.
Each $E_{h,j}$ contains for every $u,v\in V(h)$ an edge $uv$ with
weight $\dist_{Z'}(u_0,v_j)+j\cdot \lambda$.

\begin{lemma}\label{l:dhdist}
  For any $u,v\in V(h)$, $\dist_{G^*_\lambda[H](x^*,y^*)}(u,v)=\dist_{D_h}(u,v)$.
\end{lemma}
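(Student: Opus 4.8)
The plan is to prove the two inequalities $\dist_{D_h}(u,v)\le \dist_{G^*_\lambda[H](x^*,y^*)}(u,v)$ and $\dist_{D_h}(u,v)\ge \dist_{G^*_\lambda[H](x^*,y^*)}(u,v)$ separately, using the crossing-number bookkeeping from Lemmas~\ref{l:crossing1} and~\ref{l:k-paths}. The key observation throughout is that for a path $R=u\to v$ in $H$ (equivalently in $Z=G^*_0[H](x^*,y^*)$ as an edge set), its weight in $G^*_\lambda[H](x^*,y^*)$ equals its weight in $Z$ shifted by $-\lambda\cdot\pi_Q(R)$, because the only reweighted edges are those whose duals lie on $Q$, contributing $-\lambda$ when crossed left-to-right and $+\lambda$ when crossed right-to-left; the net contribution is $-\lambda\cdot\pi_Q(R)$. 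Since $u,v\in V(h)$ both lie on the hole $h$ of $H$, and $Q$ is a curve from $h_Q^1$ to $h_Q^2$ with $h\notin\{h_Q^1,h_Q^2\}$ (we are in the case $h_Q^1\ne h_Q^2$, and $h$ is some other hole, or if $h$ equals one of them the single-hole cyclic argument already handled it), Lemma~\ref{l:crossing1} applied with $G_0:=Z$, the face $h$, and the dual path $Q$ gives $\pi_Q(R)\in\{-1,0,1\}$ for every \emph{simple} $u\to v$ path $R$. This is why it suffices to keep only the three layers $j\in\{-1,0,1\}$ in $D_h$.

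For the direction $\dist_{D_h}(u,v)\ge \dist_{G^*_\lambda[H](x^*,y^*)}(u,v)$: take the edge $uv\in E_{h,j}$ realizing $\dist_{D_h}(u,v)$ as a single edge (note $D_h$ has an edge for every ordered pair in each layer, so the shortest path in $D_h$ is realized by one edge — or I should first argue that, since for each fixed pair $(u,v)$ the three candidate edge-weights already dominate any two-hop combination; actually the cleanest route is: $D_h$ is a complete digraph on $V(h)$ in each of the three layers, and I claim $\dist_{D_h}(u,v)=\min_{j\in\{-1,0,1\}}\big(\dist_{Z'}(u_0,v_j)+j\lambda\big)$, which follows because any path in $D_h$ of length $\ge 2$ can be short-cut: if $u\to w$ uses layer $j_1$ and $w\to v$ uses layer $j_2$, concatenating the corresponding $Z'$-paths gives a $u_0\to w_{j_1}\to v_{j_1+j_2}$ walk, hence $\dist_{Z'}(u_0,v_{j_1+j_2})\le \dist_{Z'}(u_0,w_{j_1})+\dist_{Z'}(w_{j_1},v_{j_1+j_2})$, and here I need $\dist_{Z'}(w_{j_1},v_{j_1+j_2})=\dist_{Z'}(w_0,v_{j_2})$ by the translational symmetry of the cover construction $Z'$, together with clamping $j_1+j_2$ back into $\{-1,0,1\}$ using that the optimum is attained by a simple path). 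Then $\dist_{Z'}(u_0,v_j)$, by property~1 of Lemma~\ref{l:k-paths}, equals the length in $Z$ of some $u\to v$ path $R$ with $\pi_Q(R)=j$; this same $R$, viewed in $G^*_\lambda[H](x^*,y^*)$, has weight $\ell_Z(R)-\lambda j=\dist_{Z'}(u_0,v_j)-\lambda j$. Hmm — sign: $D_h$ stores $\dist_{Z'}(u_0,v_j)+j\lambda$, and the $G^*_\lambda$-weight of $R$ is $\ell_Z(R)-\lambda j$. These differ; so I must double-check the sign convention in the definition of $G^*_\lambda[H](x^*,y^*)$ and of $\pi_Q$: the weights of $(\pth)^*$ are increased by $-\lambda$, so crossing one such dual edge left-to-right subtracts $\lambda$, and $\pi_Q(R)$ counts left-to-right minus right-to-left along $Q$ — but what enters the path weight is crossings of $R$ over $Q$, i.e. $\pi_R(Q)$ vs $\pi_Q(R)$; by the skew-symmetry these are equal up to sign. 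The resolution is that the construction of $Z'$ in Lemma~\ref{l:k-paths} is set up so that $\dist_{Z'}(u_0,v_j)$ corresponds to crossing number exactly $j$ with the sign chosen so that $\dist_{Z'}(u_0,v_j)+j\lambda$ is the $G^*_\lambda$-length; I will state this matching of conventions explicitly and then the inequality is immediate: $R$ is a $u\to v$ path in $G^*_\lambda[H](x^*,y^*)$ of weight $\dist_{D_h}(u,v)$, so $\dist_{G^*_\lambda[H](x^*,y^*)}(u,v)\le\dist_{D_h}(u,v)$.

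For the reverse direction $\dist_{D_h}(u,v)\le \dist_{G^*_\lambda[H](x^*,y^*)}(u,v)$: let $R^\ast$ be a shortest $u\to v$ path in $G^*_\lambda[H](x^*,y^*)$; since that graph has no negative cycle (we are in case~2 of $\detect$, having either failed to detect one or with $\lambda\le\lambda^*_H$), we may take $R^\ast$ simple. Let $j:=\pi_Q(R^\ast)$; by Lemma~\ref{l:crossing1}, $j\in\{-1,0,1\}$. Viewing $R^\ast$ as an edge set of $Z$, its $Z$-length is $\ell_{Z}(R^\ast)=\big(\text{weight of }R^\ast\text{ in }G^*_\lambda[H](x^*,y^*)\big)+\lambda j$ (undoing the reweighting). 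Since $R^\ast$ is a simple $u\to v$ path in $Z$ with crossing number $j$ wrt $Q$, property~2 of Lemma~\ref{l:k-paths} gives $\dist_{Z'}(u_0,v_j)\le \ell_Z(R^\ast)$. Therefore $\dist_{D_h}(u,v)\le \dist_{Z'}(u_0,v_j)+\lambda j\le \ell_Z(R^\ast)+\lambda j = \big(\text{weight of }R^\ast\text{ in }G^*_\lambda[H](x^*,y^*)\big)+2\lambda j$ — and again I see a sign mismatch of $2\lambda j$, which tells me that the correct bookkeeping must put the $\lambda j$ term with the opposite sign on one side; after fixing conventions consistently (the universal-cover construction is symmetric in $j\leftrightarrow -j$ up to relabeling, so I am free to define it so that the $+j\lambda$ in $D_h$ exactly cancels the $-j\lambda$ correction), the terms cancel and we get $\dist_{D_h}(u,v)\le \dist_{G^*_\lambda[H](x^*,y^*)}(u,v)$, completing the proof. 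The main obstacle I anticipate is precisely this sign/convention synchronization among three places — the definition of $\pi_Q$, the reweighting defining $G^*_\lambda[H](x^*,y^*)$, and the copy-index orientation in Lemma~\ref{l:k-paths}'s $Z'$ — together with justifying the claim that $D_h$'s three complete layers suffice (i.e. that longer walks in $D_h$ never beat single edges), which rests on the translational symmetry of $Z'$ and on the clamping $j\in\{-1,0,1\}$ for simple optima; everything else is routine.
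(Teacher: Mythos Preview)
Your approach is essentially the paper's—two inequalities via properties~1 and~2 of Lemma~\ref{l:k-paths} together with Lemma~\ref{l:crossing1}—but the execution has two fixable problems.

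\textbf{Sign convention.} You never settle the sign, hedging twice with ``sign mismatch''. In the paper's convention the length of a path $P$ in $G^*_\lambda[H](x^*,y^*)$ equals its length in $G^*_0[H](x^*,y^*)$ \emph{plus} $\lambda\cdot\pi_Q(P)$; with this, the edge weights $\dist_{Z'}(u_0,v_j)+j\lambda$ in $E_{h,j}$ are exactly the $G^*_\lambda$-lengths of the corresponding walks, and both directions go through with no residual $2\lambda j$ term. Fix this once and drop the hedges.

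\textbf{The single-edge claim.} For the direction $\dist_{D_h}(u,v)\ge\dist_{G^*_\lambda[H](x^*,y^*)}(u,v)$ you try to argue that $\dist_{D_h}$ is realized by a single edge, via translational symmetry of $Z'$ and ``clamping $j_1+j_2$ back into $\{-1,0,1\}$''. The clamping step is circular: to show that $\dist_{Z'}(u_0,v_{j_1+j_2})+(j_1+j_2)\lambda$ with $|j_1+j_2|=2$ is no smaller than the minimum over $j\in\{-1,0,1\}$, you effectively need the lemma you are proving. The paper avoids this entirely. By property~1 of Lemma~\ref{l:k-paths}, every edge $wz\in E_{h,j}$ has weight equal to the $G^*_\lambda$-length of some (possibly non-simple) $w\to z$ walk, hence at least $\dist_{G^*_\lambda[H](x^*,y^*)}(w,z)$ since there are no negative cycles. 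Then any $D_h$-path from $u$ to $v$ has length at least $\dist_{G^*_\lambda[H](x^*,y^*)}(u,v)$ by the triangle inequality—no single-edge claim is needed.

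Minor: your parenthetical worry that $h\in\{h_Q^1,h_Q^2\}$ needs separate treatment is unfounded. Lemma~\ref{l:crossing1} only requires $h$ not be an \emph{intermediate} dual vertex of $Q$, and the intermediate faces of $Q$ are all natural faces of $H$, never holes.
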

\begin{proof}
Since $G^*_\lambda[H](x^*,y^*)$ has no negative cycles, for any $u,v\in V(h)$,
there exists a shortest $u\to v$ path $P$ in $G^*_\lambda[H](x^*,y^*)$ that
is simple.
  As a result, the length of $P$ equals its length in $G^*_0[H](x^*,y^*)$
  plus $\lambda\cdot \pi_Q(P)$.
  Recall that the intermediate faces of $Q$ are not holes of $H$.
  As a result, by Lemma~\ref{l:crossing1}, $P$ has crossing number $\pi_Q(P)$ in $\{-1,0,1\}$.
  Therefore, the distance from $u$ to $v$ in $D_h$ is no more than
  $\dist_{Z'}(u_0,v_{\pi_Q(P)})+\pi_Q(P)\cdot \lambda$.
  By Lemma~\ref{l:k-paths}, $\dist_{Z'}(u_0,v_{\pi_Q(P)})$ is no more
  than the length of any simple $u\to v$ path in $G^*_0[H](x^*,y^*)$
  with crossing number $\pi_Q(P)$, in particular, it is no more
  than the length of $P$ in $G^*_0[H](x^*,y^*)$.
  Hence, we indeed obtain  $\dist_{G^*_\lambda[H](x^*,y^*)}(u,v)\geq \dist_{D_h}(u,v)$.
  
  If we had $\dist_{G^*_\lambda[H](x^*,y^*)}(u,v)>\dist_{D_h}(u,v)$
  for some $u,v\in V(h)$ then there would exist an edge $wz$ in some $E_{h,j}$
  with $\dist_{G^*_\lambda[H](x^*,y^*)}(w,z)>\dist_{Z'}(w_0,z_j)+j\cdot \lambda$.
  But then there would exist a (not necessarily simple) $w\to z$ path $P\subseteq G^*_0[H](x^*,y^*)$ with $\pi_Q(P)=j$
  such that its cost in $G^*_\lambda[H](x^*,y^*)$ is less than 
  $\dist_{G^*_\lambda[H](x^*,y^*)}(w,z)$. This is a contradiction in
  a graph with no negative cycles.
\end{proof}

Of course, we cannot build the graph $D_h$ explicitly since its size
may be $\Theta(|\bnd{H}|^2)$.
Note that for any $u,v\in V(h)$, the (minimum) weight of an edge $uv$ in $D_h$
can be queried in $\Ot(1)$ time using the MSSP data structure built on $Z'$.
However, in order to allow building fast closest pair and near neighbor
data structures on $D_h$, we still need to efficiently express
the edge weights matrix of $D_h$ using Monge matrices.
Recall that these edge weights encode (shifted) distances between the face $h_0$ and
faces $h_{-1},h_0,h_1$ in the graph $Z'$.
The distances between vertices of $h_0$
are easily decomposed into Monge matrices with a total of $\Ot(|V(h_0)|)$
rows and columns (counted with multiplicity) based on the ordering of vertices
on $h_0$ exclusively, using the following fact.
\begin{fact}\label{f:monge-easy}
  Let  $A$ and $B$ be two non-interleaving (wrt. to a clockwise order) subsets of vertices lying on a single
  face of a plane digraph.
  Let $\mmat$ be a matrix encoding all-pairs distances from vertices $a\in A$ to vertices $b\in B$.
  Then, $\mmat$ is a Monge matrix.
\end{fact}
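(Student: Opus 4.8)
The plan is to prove this by the classical \emph{non-crossing shortest paths} (uncrossing) argument. Fix the embedding and note we may assume $G$ has no negative cycles, so that every entry $\mmat_{a,b}=\dist_G(a,b)$ is well defined; entries equal to $+\infty$ only make the Monge inequalities below easier, since an infinite right-hand side is vacuous. Using the clockwise cyclic order on the common face $f$ fixed in the hypothesis, and the assumption that $A$ and $B$ do not interleave, list $A=(a_1,\dots,a_k)$ in clockwise order along $\partial f$ and then $B=(b_1,\dots,b_l)$ so that, continuing clockwise past $a_k$, one meets $b_l,b_{l-1},\dots,b_1$ before returning to $a_1$ (equivalently, $A$ in clockwise and $B$ in counterclockwise order). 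I take these as the row and column orders of $\mmat$. It then remains to verify $\mmat_{a_i,b_c}+\mmat_{a_j,b_d}\le\mmat_{a_i,b_d}+\mmat_{a_j,b_c}$ for all $i<j$ and $c<d$; matrices with fewer than two rows or columns are vacuously Monge.

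Fix such indices $i<j$ and $c<d$. If $\mmat_{a_i,b_d}$ or $\mmat_{a_j,b_c}$ equals $+\infty$ there is nothing to prove, so assume both are finite and pick a simple shortest $a_i\to b_d$ path $P_1$ and a simple shortest $a_j\to b_c$ path $P_2$ in $G$. By the chosen orderings, the four endpoints occur along $\partial f$ in the cyclic order $a_i,a_j,b_d,b_c$, so the pairs $\{a_i,b_d\}$ and $\{a_j,b_c\}$ interleave. I will invoke the standard topological fact that two paths of a plane graph whose four endpoints lie on a single face and interleave along that face must share a vertex: closing $P_1$ into a Jordan curve $\widehat P_1$ by a simple arc drawn from $b_d$ to $a_i$ through the interior of the open face $f$ separates $a_j$ from $b_c$, and since $P_2$ is disjoint from the open face $f$ (it consists only of vertices and edges of $G$), $P_2$ must meet $\widehat P_1$, hence must meet $P_1$; because edges of a plane graph do not cross, this meeting occurs at a common vertex $w$ of $P_1$ and $P_2$ (the cases where $P_2$ runs through an endpoint of $P_1$ are immediate).

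Given such a $w$, the prefix of $P_1$ from $a_i$ to $w$ followed by the suffix of $P_2$ from $w$ to $b_c$ is an $a_i\to b_c$ walk, and the prefix of $P_2$ from $a_j$ to $w$ followed by the suffix of $P_1$ from $w$ to $b_d$ is an $a_j\to b_d$ walk; the lengths of these two walks sum to $\len(P_1)+\len(P_2)=\mmat_{a_i,b_d}+\mmat_{a_j,b_c}$. Since $\mmat_{a_i,b_c}$ and $\mmat_{a_j,b_d}$ are at most the respective walk lengths, we get $\mmat_{a_i,b_c}+\mmat_{a_j,b_d}\le \mmat_{a_i,b_d}+\mmat_{a_j,b_c}$, i.e.\ $\mmat$ is Monge. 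The only genuine subtlety is the topological crossing claim, and even there the single point needing care is that a shortest path may run along boundary edges of $f$; this is handled by the routine Jordan-curve / planarity argument used in the standard treatments (see, e.g.,~\cite{FR06,MozesW10}), so I do not expect it to constitute a real obstacle.
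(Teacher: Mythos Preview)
Your proof is correct and is exactly the standard uncrossing argument for this fact. Note that the paper does not actually supply a proof of Fact~\ref{f:monge-easy}: it is stated without proof as a well-known result and implicitly deferred to the references (e.g.,~\cite{FR06}), so there is no ``paper's own proof'' to compare against; your write-up simply spells out the classical argument those references contain.
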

\cite{FR06} apply the above fact a logarithmic number of times to decompose all-pairs
distances between $k$ vertices lying on a single face of a plane digraph into Monge matrices
with a total of $\Ot(k)$ rows/columns. These rows/columns depend only on the ordering
of the vertices on the face.

The distances between $h_0$ and $h_j$, if $h_0\neq h_j$, are more tricky to decompose into
an element-wise minimum of Monge matrices.
  \cite{MozesW10} deal with this problem by observing that from an arbitrary
  shortest paths tree~$T$ rooted at a vertex $s$ of $h_0$, one can choose two paths
  such that for every two vertices $v_1,v_2$ on the holes $h_0,h_j$ respectively,
  some shortest $v_1\to v_2$ path in $Z'$ either does not cross the former path or does not cross the latter.
  Cutting $Z'$ along each of these paths (separately) yields two graphs
  where  $V(h_0)$ and $V(h_j)$ together lie on
  \emph{a single face}  (and do not interleave) of the respective graph. As a result, one can view
  distances between them as a single (full) Monge matrix. The element-wise minimum of distances
  in the two graphs represents distances between $h_0$ and $h_j$ in the original graph.
  Unfortunately, finding the paths used for cutting, actually cutting the graph explicitly, and preprocessing the obtained cut graphs
  would be too costly in our case.

  The following lemma shows a more efficient way to compute the partition
  of distances between $h_0$ and $h_j$ into \emph{partial} Monge matrices,
  and thus also an efficient construction of closest pair and near neighbor
  data structures for a graph induced on the edges $E_{h,j}$.
  Recall that in a partial Monge matrix the elements in each row and each columns are defined
  only in a consecutive segment of columns or rows, respectively.
  As a result, for a partial Monge matrix with rows $R$ and columns $C$, it takes
  only $O(|R|+|C|)$ space to encode which elements are defined: one only
  needs to encode the ordering of rows and columns, and a single segment in each row.
  Below, when talking about decomposing into partial Monge matrices, we
  will mean computing an encoding as defined above.

  \begin{lemma}\label{l:partial-monge}
    Suppose $h_0\neq h_j$. There exist closest pair and near neighbor data structures
    for $D_h[E_{h,j}]$ with $T_\cp(D_h[E_{h,j}]),T_\nn(D_h[E_{h,j}])\in \Ot(|V(h_0)|+|V(h_j)|)$, 
    and $Q_\nn(D_h[E_{h,j}])=\Ot(1)$.
  \end{lemma}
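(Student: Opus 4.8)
The plan is to reduce, exactly as in the proof of Lemma~\ref{l:ddg-ds}, the construction of the two required data structures to building near neighbor and closest pair structures on a constant number of \emph{full} Monge matrices; the only new difficulty is producing a suitable Monge decomposition of the weight matrix of $D_h[E_{h,j}]$ in time $\Ot(|V(h_0)|+|V(h_j)|)$ rather than in time proportional to $|Z'|=\Theta(|H|^2)$. Recall that, by Lemma~\ref{l:k-paths}(3) together with the assumption $h_0\neq h_j$, the faces $h_0$ and $h_j$ of $Z'$ are vertex-disjoint, that $|V(h_0)|=|V(h_j)|=|V(h)|\le|\bnd{H}|$, and that the $\Ot(|H|^2)$-time preprocessing has already built an MSSP data structure on $Z'$ with source face $h_0$, answering queries $\dist_{Z'}(u_0,v)$ for $u_0\in V(h_0)$, $v\in V(Z')$ in $\Ot(1)$ time and giving dynamic-tree access to each shortest-path tree $T_{u_0}$. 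Since every edge of $D_h[E_{h,j}]$ is an edge $uv$ of weight $\dist_{Z'}(u_0,v_j)+j\lambda$ with the \emph{same} value of $j$, the term $j\lambda$ is a uniform additive shift that preserves the Monge property and is computable in $O(1)$ time per entry; hence it suffices to decompose the matrix $M$ with rows $V(h_0)$, columns $V(h_j)$, and $M[u,v]=\dist_{Z'}(u_0,v_j)$.

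First I would invoke the Mozes--Wulff-Nilsen structure~\cite{MozesW10} in a \emph{non-materialized} form. After enforcing uniqueness of shortest paths in $Z'$ (already arranged via~\cite{EricksonFL18}), fix a base vertex $s^\dagger\in V(h_0)$ and, from $T_{s^\dagger}$, identify two root-to-node paths $P_1,P_2$ with the property that every canonical shortest $u\to v$ path with $u\in V(h_0)$, $v\in V(h_j)$ fails to cross $P_1$ or fails to cross $P_2$, using that $h_0$ and $h_j$ are disjoint faces. Conceptually cutting $Z'$ along $P_r$ places $V(h_0)$ and $V(h_j)$ on a single face of the cut graph in a fixed cyclic order, so by the Fact~\ref{f:monge-easy}-style argument the submatrix of $M$ consisting of those pairs $(u,v)$ realized by a $P_r$-non-crossing shortest path is a \emph{partial} Monge matrix: within each row the set of such columns is a contiguous cyclic segment, and dually within each column. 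Taking $r\in\{1,2\}$ and the entrywise minimum recovers $M$, so $M$ is an entrywise minimum of $O(1)$ partial Monge matrices of total dimension $\Ot(|V(h)|)$. (The case $h_0=h_j$, where no such cutting is needed, was already handled before the lemma.)

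The crux is to compute the \emph{combinatorial encoding} of these partial Monge matrices within the target budget, never forming any cut graph. This is possible because entry access to $M$ is served directly by the MSSP query on $Z'$, so all that is needed is: the cyclic orders of $V(h_0)$ and $V(h_j)$ (available for free); the data describing $P_1$ and $P_2$ relative to these two faces (each is essentially determined by one ``extreme'' vertex of $h_j$ and is handled through the dynamic tree maintained on $T_{s^\dagger}$); and, for each row $u$, the contiguous segments of defined columns in the two matrices. The key point is that the endpoints of these segments are \emph{monotone} as $u$ sweeps clockwise around $h_0$ --- this is inherited from the same crossing/non-crossing structure, since the ``first'' and ``last'' canonical shortest paths from $u$ to $h_j$ rotate monotonically with $u$ --- so a two-pointer sweep over $V(h_0)\times V(h_j)$ produces all segments using $O(|V(h_0)|+|V(h_j)|)$ MSSP and dynamic-tree operations, each of cost $\Ot(1)$; the total is $\Ot(|V(h)|)$.

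Finally I would apply~\cite{GawrychowskiMW20} to convert each of the $O(1)$ partial Monge matrices into full Monge matrices with $\Ot(|V(h)|)$ total rows and columns, the conversion running within the same time bound and preserving $\Ot(1)$-time entry access when composed with the MSSP query and the $j\lambda$ shift. From here the argument is verbatim as in Lemma~\ref{l:ddg-ds}: for each full Monge matrix the data structure of~\cite{MozesNW18} gives a near neighbor structure with total update time near-linear in its dimensions and $\Ot(1)$ query time, and a Monge heap in the style of~\cite{FR06} gives a closest pair structure with total update time near-linear in its dimensions; Lemmas~\ref{l:nn-sum} and~\ref{l:ssp-sum} then combine these into data structures for $D_h[E_{h,j}]$ with $T_\cp,T_\nn=\Ot(|V(h_0)|+|V(h_j)|)$ and $Q_\nn=\Ot(1)$, as required. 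I expect the third paragraph to be the real obstacle: the original Mozes--Wulff-Nilsen construction spends $\Theta(|Z'|)=\Theta(|H|^2)$ time actually cutting and re-preprocessing $Z'$, which is unaffordable per value of $\lambda$ (and, in the distinct-holes variant of this argument, per target hole), so the work lies in isolating precisely which combinatorial data determines the partial Monge matrices, proving that the defined-column segments are contiguous and monotone so that the two-pointer sweep is correct, and keeping entry access at $\Ot(1)$ by routing through the MSSP data structure on $Z'$ instead of through an explicit cut graph.
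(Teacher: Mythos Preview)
Your high-level plan is right, but the route you take is not the paper's, and the step you yourself flag as ``the real obstacle'' is a genuine gap. You follow the Mozes--Wulff-Nilsen blueprint literally: fix one base tree $T_{s^\dagger}$, take its two extreme root-to-$h_j$ paths $P_1,P_2$, and then---to retain $\Ot(1)$ entry access through MSSP on $Z'$---restrict each of the two (conceptually full) Monge matrices to those pairs $(u,v)$ whose canonical shortest path does not cross $P_r$. For this to yield \emph{partial} Monge matrices you need row and column contiguity of the restricted entries, and for your two-pointer scan you further need monotonicity of the segment endpoints in $u$; you assert all three but prove none. The difficulty is that $P_1,P_2$ live in $T_{s^\dagger}$ while the shortest paths whose crossing behavior you must classify live in the trees $T_u$ for varying $u$, so the ``first/last canonical path rotates monotonically with $u$'' heuristic is not self-evident.

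The paper sidesteps this by a different cut. It first chooses $s\in V(h_0)$ and $t\in V(h_j)$ so that the unique shortest path $Y=s\to t$ meets $V(h_0)\cup V(h_j)$ only at its endpoints (found in $\Ot(|V(h_0)|+|V(h_j)|)$ time via MSSP depth queries), and conceptually forms $Z''_{s,t}$ by gluing two copies of $Z'$ cut along $Y$, so that $V(h_0)_L,V(h_0)_R,V(h_j)_L,V(h_j)_R$ all lie on a single face. Crucially, the per-row classification is then done in the tree \emph{from $u$}, not from a global base: $u$ is put in $A$ or its complement according to the side on which the path $u\to t$ merges into $Y$, and for $u\in A$ the columns $x\in V(h_j)$ are split into $\mathcal{L}_u$ and $\mathcal{R}_u$ by whether $u\to x$ emerges left or right of $u\to t$ (decided by an LCA plus a level-ancestor query on the MSSP dynamic tree for $T_u$). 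Row contiguity now follows directly because the paths $u\to x$ are pairwise non-crossing inside $T_u$; column contiguity is proved by a separate enclosure argument using $Y$. No monotonicity across rows is ever claimed---the $\mathcal{L}_u/\mathcal{R}_u$ border is located by a per-row binary search, still $\Ot(1)$ per row. One more difference: because the resulting partial Monge matrix has \emph{doubled} columns $V(h_j)_R\cup V(h_j)_L$, the paper afterwards adds $2|V(h_j)|$ zero-weight auxiliary edges $v_Lv,\,v_Rv$ and applies Lemmas~\ref{l:nn-sum} and~\ref{l:ssp-sum} once more to collapse the two copies back to $V(h_j)$; your outline has no counterpart to this merging step.
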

  \begin{proof}
    Recall from Lemma~\ref{l:k-paths} that the faces $h_0$ and $h_j$ in $Z'$ are vertex-disjoint. First, pick
    such $s\in V(h_0)$ and $t\in V(h_j)$ that the unique $s\to t$ shortest
    path $Y$ in $Z'$ does not contain any other vertices of $V(h_0)\cup V(h_j)$.
    This can be done in $\Ot(|V(h_0)|+|V(h_j)|)$ time as follows. First pick any $s^*\in V(h_0)$. Then,
    pick such a $t\in V(h_j)$ that has the lowest depth in the shortest paths tree from $s^*$
    stored in the MSSP data structure -- this can be done in $\Ot(|V(h_j)|)$ time.
    Finally, on the shortest $s^*\to t$ path, pick the deepest vertex $s\in V(h_0)$ in a similar way.
    By construction, the $s\to t$ subpath of the $s^*\to t$ path contains
    no other vertices from $V(h_0)\cup V(h_j)$.

    Fix some vertex $u\in V(h_0)$. The shortest path $P_t=u\to t$ shares some suffix $S$
    with $Y$ (possibly only the vertex $t$) due to uniqueness and cannot cross $Y$.
    Suppose wlog. that $P_t$ merges with $Y$ from the left side of $Y$ (the other case is symmetric), and
    denote by $A$ the set of vertices $u$ with this property (so $u\in A$).
    
    Let us observe that
    for any $v\in V(h_j)$, the shortest $u\to v$ path $P_v$ \emph{does not cross} $Y$
    right to left: indeed, if the crossing vertex $w$ was at the subpath~$S$,
    this would contradict uniqueness of the $u\to w$ subpath of $P_t$.
    On the other hand, if the path $P_v$ crosses $Y$ at vertex $w$ before the subpath~$S$, 
    then it would need to cross either $P_t\setminus S$ or $Y\setminus S$ once more on the
    way to $w$, and thus this would violate the uniqueness of $P_t$ or $Y$ respectively.

    Let us embed an auxiliary vertex $u'$ and an edge $u'u$ inside the face $h_0$ of $Z'$
    and call that edge the $0$-th \emph{sentinel} edge of any shortest
    path $P_x=u\to x$.
    If $x\in V(h_j)\setminus\{t\}$, then let $e_x$ ($e_x'$) be the earliest edge
    of the path $P_x$ ($P_t$, resp.) that does not belong to $P_t$ ($P_x$, resp.).
    Note that if $e_x$ emerges from the path $P_t$ to the left (that is,
    $e_x=ab$ lies between $e_x'$ and and the preceding edge of $e_x'$ on $P_t$ in the clockwise edge ring of $a$),
    then $P_x$ does not cross $P_t$ and thus also does not cross $Y$ (it might still share
    some prefix with $P_t$, and some subpath with $Y$, though).
    On the other hand, if $e_x$ emerges from the path $P_t$ to the right,
    then the shortest $u\to x$ path has to cross $Y$ once.
    Let $\mathcal{L}_u\subseteq V(h_j)$ contain $t$ and the vertices $x\in V(h_j)\setminus\{t\}$ such
    that $e_x$ emerges from $P_t$ to the left.
    Similarly, let $\mathcal{R}_u\subseteq V(h_j)$ contain the vertices $x$ of $h_j$ such
    that $e_x$ emerges from $P_t$ strictly to the right.
    Since the paths $P_x$ for $x\in V(h_j)$ are pairwise non-crossing,
    $\mathcal{L}_u$ and $\mathcal{R}_u$ constitute consecutive fragments of $V(h_j)$ as ordered
    on the face $h_j$.
    Moreover, the border between $\mathcal{L}_u$ and $\mathcal{R}_u$ can be computed via binary search
    on the vertices $x\in V(h_j)$: whether $x\in \mathcal{L}_u$ or $x\in \mathcal{R}_u$
    can be decided using an LCA query and a level ancestor query on the shortest path tree from $u$.
    If we make MSSP use the top tree~\cite{AlstrupHLT05} for dynamic tree operations,
    these queries are supported in polylogarithmic time.

    As $Y$ is a unique $s\to t$ shortest path, it is a simple path.
    Consider a graph $Z'_{s,t}$ obtained from $Z'$ by cutting $Z$ along the path $Y$ (that is,
    the vertices and edges of $Y$ appear in $Z'_{s,t}$ in two copies). 
    Now, let $Z''_{s,t}$ be obtained by gluing two copies of $Z'_{s,t}$ along one of the
    copies of $Y$ (that is, $Z''_{s,t}$ contains three copies of $Y$).
    Denote by $Z''_{s,t,L}$ ($Z''_{s,t,R}$) the copy of $Z'_{s,t}$ to the left (to the right, resp.) of the path $Y$
    used for gluing the two copies into $Z''_{s,t}$.
    Note that the graph $Z''_{s,t}$ contains two copies of each vertex
    in $V(h_0)\setminus \{s\}$ and $V(h_j)\setminus \{t\}$.
    For each such vertex $x$, denote by $x_L,x_R$ the copies
    in the respective parts $Z''_{s,t,L}$, $Z''_{s,t,R}$.
    Moreover, $Z''_{s,t}$ contains three copies of $s$ and $t$.
    For $x\in \{s,t\}$, set both $x_L,x_R$ to be the ``middle''
    copy on the copy of path~$Y$ used for gluing.
    
    Impose a natural clockwise order starting at $s$ of $V(h_0)$ on the face $h_0$ of $Z'$. Similarly, impose
    a counter-clockwise order on $V(h_j)$ starting at $t$ on the face $h_j$ of $Z'$.
    For $W\subseteq V(Z')$, let $W_L=\{w_L:w\in W\}$ and similarly define $W_R$.
    Note that the vertices $V(h_0)_L$, $V(h_0)_R$, $V(h_j)_L$, $V(h_j)_R$ all lie
    on a \emph{single face} $f$ of $Z''_{s,t}$ in the clockwise
    order $V(h_0)_L, V(h_0)_R, V(h_j)_R, V(h_j)_L$, where
    each of this sets is ordered using the order imposed on the respective
    original set $V(h_0)$, $V(h_j)$.

    Observe that every path in $Z''_{s,t}$ naturally corresponds to some path
    in $Z'$ (after projecting the copies of vertices to respective original vertices of $Z'$).
    As a result, if some vertices $u',v'$ of $Z''_{s,t}$ are copies of $u,v$ respectively in $Z'$,
    then $\dist_{Z''_{s,t}}(u',v')\geq \dist_{Z'}(u,v)$.
    Moreover, for a shortest path $P_{u,v}=u\to v$ in $Z'$, if $u\in A$ and $v\in \mathcal{L}_u$, there
    is a path $u_L\to v_L$ in $Z''_{s,t,L}$ corresponding to $P_{u,v}$.
    Consequently, $\dist_{Z'}(u,v)=\dist_{Z''_{s,t}}(u_L,v_L)$.
    Similarly, if $u\in A$ and $v\in \mathcal{R}_u$, there is a path
    $u_L\to v_R$ in $Z''_{s,t}$ corresponding to $P_{u,v}$,
    and we obtain $\dist_{Z'}(u,v)=\dist_{Z''_{s,t}}(u_L,v_R)$.

    Consider a matrix $\mmat$ with rows $A$ and columns $V(h_j)_R\cup V(h_j)_L$
    such that in a row $u\in A$, the elements are defined only for columns $(\mathcal{R}_u)_R\cup (\mathcal{L}_u)_L$.
    If $k'\in (\mathcal{R}_u)_R\cup (\mathcal{L}_u)_L$ is a copy of $k\in V(h_j)$,
    then we put $\mmat_{a,k'}=\dist_{Z'}(a,k)$.
    Note that if the columns of $\mmat$ are ordered as in the clockwise order on the
    face $f$ of $Z''_{s,t}$, then $(\mathcal{R}_u)_R\cup (\mathcal{L}_u)_L$ constitutes
    a contiguous segment of columns of $\mmat$.
    
    We now argue that for each column $k'$ (such that $k'$ is a copy of vertex $k$) of $\mmat$, the defined elements of that
    column lie in a consecutive segment of rows of $\mmat$.
    Suppose wlog. that $k'\in V(h_j)_L$.
    Then, for $a,b\in A$, the elements $\mmat_{a,k'}$ and $\mmat_{b,k'}$ are both defined
    iff the shortest paths $P_{a,k}$ and $P_{b,k}$ both do not cross $Y$ in $Z'$.
    Clearly, for any $c\in A$ that lies between $a$ and $b$ in the clockwise order of $V(h_0)$,
    the path $P_{c,k}$ is weakly contained in a cycle consisting of $P_{a,k},P_{b,k}$,
    and the $a\to b$ part of the enclosing cycle of $V(h_0)$ that does not contain $s$.
    As a result, since $P_{a,k}$ and $P_{b,k}$ do not cross $Y$, indeed $P_{c,k}$ does not cross $Y$ and we conclude
    that $\mmat_{c,k'}$ is defined as well.
    
    Since the rows and columns of $\mmat$ are non-interleaving subsets of vertices on a single
    face $f$ of $Z_{s,t}''$, from Fact~\ref{f:monge-easy} we conclude that $\mmat$ is indeed a partial Monge matrix.
    Note that (values of) the defined elements of $\mmat$ do not depend on the choice
    of $s,t$ -- they constitute distances in $Z'$ and thus can be accessed in $\Ot(1)$ time
    using the MSSP data structure that we have built upon $Z'$.
    Any partial $m_1\times m_2$ Monge matrix can be decomposed into a collection $D$ of
    full Monge matrices with $\Ot(m_1+m_2)$ rows and columns (counted with multiplicities).
    For an intuition and proof, see~\cite[Lemma~5.1, Theorem~5.5, and Figures~3,4,5]{GawrychowskiMW20}.

    The defined entries of $\mmat$ do not quite correspond to the weights of edges $E_{h,j}\cap (A\times V(h_j))$:
    the columns of $\mmat$ include two copies of each vertex of $V(h_j)$.
    However, we can easily obtain a graph $G_{h,j}$ whose $A\times V(h_j)$ distances are equal to
    the weights of $E_{h,j}\cap (A\times V(h_j))$. 
    Let the vertices of $G_{h,j}$ be $A\cup V(h_j)_L\cup V(h_j)_R\cup V(h_j)$.
    $G_{h,j}$ contains a subgraph $G_\mmat$ with edges between $A$ and $V(h_j)_L\cup V(h_j)_R$ corresponding
    to the defined elements of $\mmat$, and $2|V(h_j)|$ auxiliary edges
    $u_Lu,u_Ru$ of weight $0$ for all $u\in V(h_j)$.
    Similarly as in the proof of Lemma~\ref{l:ddg-ds}, we can use the data structure of \cite[Lemma 1]{MozesNW18} and the Monge heap to obtain (via Lemmas~\ref{l:ssp-sum}~and~\ref{l:nn-sum} applied to the decomposition $D$ of $\mmat$ into full Monge matrices) near neighbor and closest pair data structures for $G_\mmat$ with total update time
    $\Ot(|V(h_0)|+|V(h_j)|)$ and query time $\Ot(1)$.
    To obtain the desired closest pair and near neighbor data structures for $G_{h,j}$ (which, as argued above,
    can serve as the respective data structures for $D_h[E_{h,j}]$) we simply apply
    Lemmas~\ref{l:nn-sum}~and~\ref{l:ssp-sum} once again to the data structures for $G_{\mmat}$ and the
    trivial data structures for $G_{h,j}\setminus G_{\mmat}$ which contain
    $\Ot(|V(h_0)|+|V(h_j)|)$ auxiliary non-structured edges.
  
    We proceed symmetrically with the distances from $V\setminus A$ to $V(h_j)$.
    Finally, we merge the obtained data structure for graphs representing distances
    $A\times V(h_j)$ and $(V\setminus A)\times V(h_j)$, resp. into a single one
    with asymptotically same time bounds using Lemmas~\ref{l:nn-sum}~and~\ref{l:ssp-sum}.
\end{proof}

  With the data structures of Lemma~\ref{l:partial-monge} in hand, we can again apply Lemmas~\ref{l:ssp-sum}~and~\ref{l:nn-sum}
  to get the desired data structures for $D_h=D_h[E_{h,-1}]\cup D_h[E_{h,0}]\cup D_h[E_{h,1}]$
  with total update time $\Ot\left(|V(h)|\right)=\Ot(|\bnd{H}|)$ and query time $\Ot(1)$.
  This way, we achieve our goal for the case $h^1=h^2$.

  \subsubsection{The case of two distinct holes}\label{s:distinct-holes}
  The primary difference between the case $h^1\neq h^2$ and the single-hole case
  is that it is no longer the case that the crossing number of a path
  from $h^1$ to $h^2$ has wrt. $Q$ is necessarily in $\{-1,0,1\}$.
  We will nevertheless prove the following. Recall that we have
  already handled the case $h^1_Q=h^2_Q$, so below we assume $h^1_Q\neq h^2_Q$.

  \begin{lemma}\label{l:distinct-holes}
    Let $s\in V(h^1)$, $t\in V(h^2)$, where $h^1\neq h^2$. Suppose
    there exists a \emph{simple} shortest ${s\to t}$ path $P$ in 
    $G^*_\lambda[H](x^*,y^*)$ such that $V(P)\cap (V(h^1)\cup V(h^2))=\{s,t\}$ with crossing number $\pi_Q(P)$.
    Then, for any $(u,v)\in V(h^1)\times V(h^2)$, there exists
    some shortest $u\to v$ path in $G^*_\lambda[H](x^*,y^*)$
    with crossing number wrt. $Q$ in $[\pi_Q(P)-4,\pi_Q(P)+4]$.
  \end{lemma}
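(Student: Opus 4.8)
The plan is to bound how much the crossing number of a shortest $u\to v$ path (for arbitrary $u\in V(h^1)$, $v\in V(h^2)$) can differ from that of the ``canonical'' path $P$ from the distinguished pair $(s,t)$. The main tool is the same kind of planarity/crossing argument already used in Lemmas~\ref{l:crossing1} and~\ref{l:crossing2}: if two simple curves in the plane do not cross, then their net numbers of crossings with a fixed curve $Q$ must be close, and every extra unit of difference forces an extra genuine crossing somewhere. So the strategy is to route from $u$ to $v$ by first walking along $h^1$ from $u$ to $s$, then following (a shortest path close to) $P$ from $s$ to $t$, then walking along $h^2$ from $t$ to $v$, and to argue that each of these three legs only changes the crossing number by $O(1)$.

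First I would fix a shortest $u\to v$ path $P_{u,v}$ in $G^*_\lambda[H](x^*,y^*)$ that is simple, which exists because the graph has no negative cycles. Consider the curve formed by $P_{u,v}$ together with $P$. Since both holes $h^1,h^2$ are simple and the relevant endpoints lie on them, I can close up each of $P_{u,v}$ and $P$ into a cycle (resp. a single closed curve) by adding a short arc inside $h^1$ and a short arc inside $h^2$, exactly as in the proof of Lemma~\ref{l:crossing1}; this does not change the crossing number with respect to $Q$ because $Q$'s intermediate faces are not holes of $H$ (so $Q$ enters $h^1$ and $h^2$ at most through its endpoint edges $h^1_Q,h^2_Q$, and we are in the case $h^1_Q\neq h^2_Q$, so at most one endpoint of $Q$ lies on each of $h^1,h^2$). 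Now $P_{u,v}$ and $P$, suitably closed up, are two closed curves in the plane; by a Jordan-curve counting argument the number of times one crosses $Q$ and the number of times the other crosses $Q$ differ by at most (a constant multiple of) the number of times the two closed curves cross each other, which is small because each pair $P_{u,v},P$ can be made to cross only $O(1)$ times after the closing arcs are chosen inside the holes (the only ``forced'' crossings come from the two added arcs and the at-most-once that $Q$ itself can enter a hole). The hardest part of this is bookkeeping the contribution of the connector arcs inside $h^1$ and $h^2$ and verifying that each connector contributes at most $\pm 1$ to the net crossing count with $Q$ — this is where the constant $4$ will come from: roughly $2$ from the two holes' connector arcs plus $2$ from the $O(1)$ crossings between the bodies of $P_{u,v}$ and $P$.

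Concretely, I expect the clean way to organize the argument is: (i) using Lemma~\ref{l:crossing1}, the concatenation $(\text{arc in }h^1)\cdot P\cdot(\text{arc in }h^2)$ read as a closed curve has crossing number with $Q$ equal to $\pi_Q(P)+\varepsilon_1+\varepsilon_2$ where $\varepsilon_1,\varepsilon_2\in\{-1,0,1\}$ account for the two arcs; (ii) the analogous closed curve built from $P_{u,v}$ has crossing number $\pi_Q(P_{u,v})+\varepsilon_1'+\varepsilon_2'$ with $\varepsilon_i'\in\{-1,0,1\}$; (iii) these two closed curves are homotopic in the plane punctured at the ``inside'' and ``outside'' of $Q$ — or more elementarily, their difference as a $1$-cycle bounds, so it crosses $Q$ a net zero number of times — hence $\pi_Q(P)+\varepsilon_1+\varepsilon_2=\pi_Q(P_{u,v})+\varepsilon_1'+\varepsilon_2'$ up to the at most $O(1)$ genuine crossings between the two curve-bodies; (iv) rearranging gives $|\pi_Q(P_{u,v})-\pi_Q(P)|\le|\varepsilon_1|+|\varepsilon_2|+|\varepsilon_1'|+|\varepsilon_2'|+O(1)\le 4$ once the constants are tallied. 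Finally, since $P_{u,v}$ is itself a shortest $u\to v$ path in $G^*_\lambda[H](x^*,y^*)$, it is the witness required by the statement, completing the proof.

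I expect the main obstacle to be purely a careful planar-topology accounting: making precise the claim that the ``difference'' of two closed curves in the plane has net crossing number zero with the fixed curve $Q$, and pinning down that the connector arcs inside each hole — which can be chosen freely up to homotopy rel endpoints within that hole's disk — each contribute at most one unit of crossing with $Q$ because $Q$ meets the closure of each hole in at most one edge (here the hypothesis $h^1_Q\neq h^2_Q$ and the fact that $Q$'s interior faces are non-holes is exactly what is needed). Everything else is a routine reduction to Fact~\ref{f:crossing}/Lemma~\ref{l:crossing1}.
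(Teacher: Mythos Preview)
Your proposal has a genuine gap at step~(iii). You assert that the closed curve obtained by concatenating (the arc-extended) $P$ with $P_{u,v}^{-1}$ has net crossing number zero with $Q$, because ``their difference as a $1$-cycle bounds''. But $Q$ is a \emph{path}, not a cycle: its endpoints sit in the holes $h^1_Q,h^2_Q$ of $H$, and the net crossing number of a closed curve $C$ with $Q$ is essentially the difference of the winding numbers of $C$ around those two endpoints. This need not be zero unless $C$ is simple (Fact~\ref{f:crossing}), and your $C$ is not simple in general. Put differently: your argument as written never uses that $P_{u,v}$ is a \emph{shortest} path---it would apply verbatim to any simple $u\to v$ path---yet the conclusion is plainly false for arbitrary simple paths, which can wind around the holes and accumulate unbounded crossing number with~$Q$.

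The paper's proof pinpoints exactly where shortestness enters: one first chooses a simple shortest $u\to v$ path $S$ that crosses $P$ \emph{at most once} (a standard uncrossing property of shortest paths). It then cuts the piece open along $P$, so that $V(h^1)\cup V(h^2)$ land on a single face $f$, and decomposes $Q$ as $Q_1\cdot e_1\cdot Q_2\cdots e_{k-1}\cdot Q_k$ along its crossings with $P$. In the cut graph, each middle piece $Q_i^+$ becomes a curve with both endpoints on $f$, so every path between fixed regions of $f$ has the \emph{same} crossing number with $Q_i^+$; hence the middle pieces contribute identically to $P$ and to $S$. Only the two end pieces $Q_1,Q_k$ can differ, and Lemma~\ref{l:crossing1} bounds each such discrepancy by $1$ per half of $P$ and $S$, giving the $4$. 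Your ``$O(1)$ crossings between the bodies'' is the right instinct, but it needs the at-most-once crossing of shortest paths as input and then the cut-along-$P$ decomposition (or an equivalent) to convert that into a crossing-number bound; the homotopy shortcut you propose does not do this.
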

  \begin{proof}
    Since $P$ is a shortest path that is simple, there exists some simple shortest path $S=u\to v$ that does not cross~$P$ more than once (either from left to right or right to left).
    Let us focus on the case when $P$ and $S$ have at least one vertex in common, and $S$ indeed crosses
    $P$ once, wlog. left to right. The other
    cases, when $P$ and $S$ are completely disjoint or when they have a common part, but do not cross,
    are easier to handle, and can by considered analogously.

    Since $P$ and $S$ have a vertex in common, one can write $P=P_1RP_2$, $S=S_1RS_2$, where $x\to y=R\subseteq P$ is a possibly empty (i.e., containing zero edges) maximal common subpath of $P$ and $S$,
    and $S_1,S_2$ can share only endpoints with $V(P)$.

    Let $G^+$ be the graph obtained from $G^*_\lambda[H](x^*,y^*)$
    by cutting it along $P$, so that $G^+$ has two copies of the vertices
    and edges of $P$.
    Denote by $P_L,P_R$ the left and right copies of $P$ respectively,
    and by $w_L,w_R$ the respective left and right copies of a vertex $w\in V(P)$.
    For all $w\in V(H)\setminus V(P)$, set $w_L=w_R=v$.
    In this graph, the vertices $(V(h^1)\setminus \{s\})\cup (V(h^2)\setminus\{t\})$ lie on a single face~$f$,
    and each of the two sets comprising this union forms a segment of consecutive vertices on $f$.
    The other vertices on $f$ come from the two copies of~$P$.
    These sets also form contiguous segments on $f$, and one can easily
    see that the ordering of vertices on $f$ can be wlog. assumed to be $V(h^1),V(P_L),V(h^2),V(P_R)$,
    where $V(P_L)$ starts with $s_L$, and $V(P_R)$ starts with $t_R$.
    
    Note that for $S_1=u\to x$, there exists a \emph{unique} corresponding path $S_1^+=u_L\to x_L$ in $G^+$.
    Similarly, for $S_2=y\to v$, there exists a unique corresponding path $S_2^+=y_R\to v_R$ in $G^+$.
    Let $P_1^+$ be the $s_L\to x_L$ subpath of $P_L$, and let $P_2^+$ be the $y_R\to t_R$ subpath of $P_R$ in $G^+$.

    Consider the effect that cutting along $P$ has on the dual path $Q$. Express $Q$ as\linebreak
    $Q_1\cdot e_1 \cdot Q_2 \cdot e_2 \cdot \ldots \cdot e_{k-1}\cdot Q_k$ such that $P$ and $Q$ cross
    at edges $e_1,\ldots,e_{k-1}$.
    Let $i\in \{2,\ldots,k-1\}$.
    In $G^+$, each subpath $e_{i-1}\cdot Q_i \cdot e_i$ of $Q$ corresponds to a simple curve $Q_i^+$ with
    both its endpoints inside the face $f$, and intersecting the bounding cycle 
    of $f$ at copies of edges $e_{i-1},e_i$.
    As a result, each such $Q_i^+$ crosses $f$ at some two points of $f$
    that lie on the segments spanned by either $V(P_L)$ or $V(P_R)$ on $f$.
    See Figure~\ref{fig:split} for better understanding.
    Regardless of which segments $Q_i^+$ crosses, every path from $V(h^1)\setminus \{s\}\cup\{s_L\}$ to some other fixed vertex $w$ on $f$ in $G^+$ has the same crossing number wrt. $Q_i^+$
    because $V(h^1)$ lies entirely on one side of the dual cycle $Q_i^+\subseteq G^+$.
    For an analogous reason, every path from some fixed vertex $w$ on $f$ to a vertex of $V(h^2)\setminus\{t\}\cup\{t_R\}$ has the same crossing number wrt. $Q_i^+$.
    So the crossing numbers of $P_j^+$ and $S_j^+$ (for $j=1,2$) wrt. $Q_i^+$ are equal.

  \begin{figure}[ht!]
    \centering
    \includegraphics[scale=0.7]{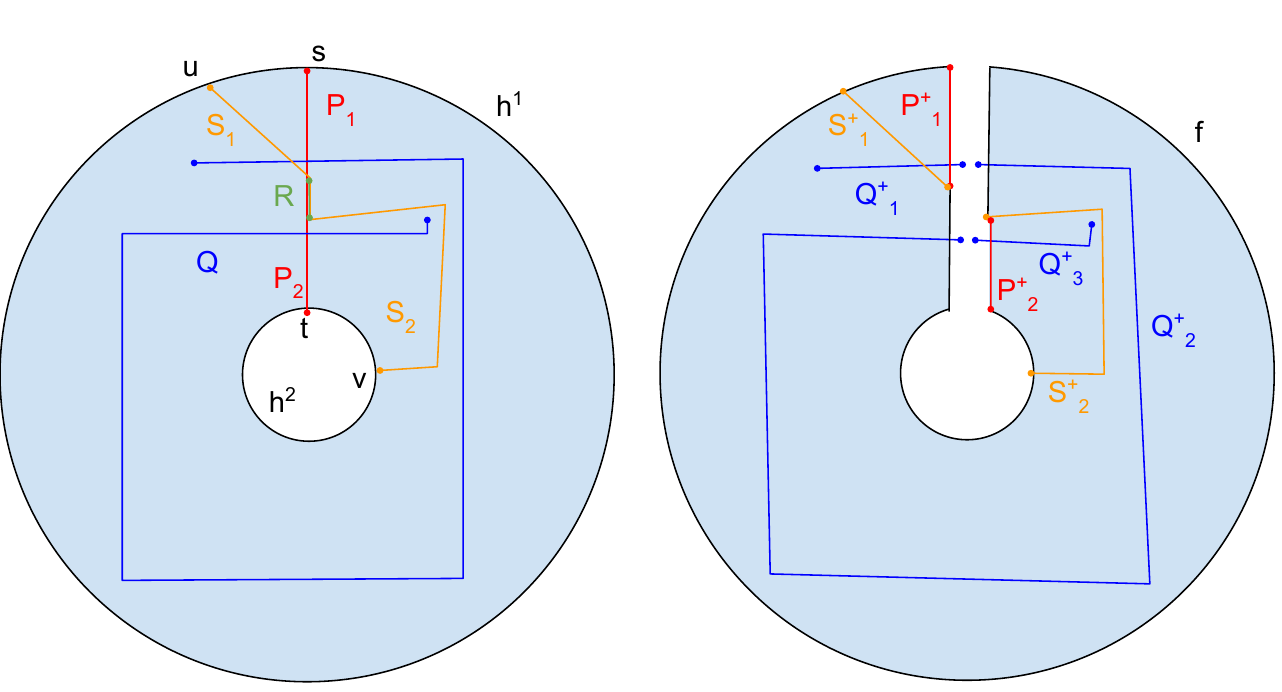}
    \caption{Illustration for the proof of Lemma~\ref{l:distinct-holes}.}\label{fig:split}
  \end{figure}

    On the other hand, observe that 
    the crossing number of $P_j$ wrt. $Q$, for $j=1,2$, equals the sum of crossing numbers of $P_j^+$
    wrt. $Q_2^+,\ldots,Q_{k-1}^+$. This is because the crossing number of $P_j$ wrt. each $Q_l$,
    where $l=1,\ldots,k$ is $0$ (by the definition of $Q_l$), and the crossing number
    of $P_j$ wrt. some $e_l$, where $l=1,\ldots,k-1$, contributes to precisely
    one of the crossing numbers of $P_j^+$ wrt.~$Q^+_i$.
    Moreover, the crossing number of $S_j$ wrt. $e_1\cdot Q_2\cdot e_2\cdot \ldots\cdot Q_{k-1}\cdot e_{k-1}$, equals the sum of crossing numbers of $S_j^+$
    wrt. $Q_2^+,\ldots,Q_{k-1}^+$. To see this,
    recall that $S_j$, disjoint with $P$, has crossing number~$0$ with all $e_l$.
    But we have proved that the crossing numbers of $P_j^+$ and $S_j^+$
    wrt. each $Q_i^+$ are equal,
    and thus we conclude that the crossing
    number of $P_j$ wrt. $Q$ equals the crossing number of $S_j$ wrt. $Q\setminus Q_1\setminus Q_k$.

    We now argue that the crossing number of $S_j$ wrt. $Q_1$ or $Q_{k}$ is in $\{-1,0,1\}$.
    It will follow that the crossing number of $P_j$ wrt. $Q$ can deviate by at most $2$ from
    the crossing number of $S_j$ wrt. $Q$.
    Let us focus on the crossing number wrt. $Q_1$, as the other case is analogous.
    Since $Q_1$ is preserved in $G^+$, and $S_j$ is preserved in $G^+$ as $S_j^+$,
    the crossing number of $S_j$ wrt. $Q_1$ in $G^*_\lambda[H](x^*,y^*)$ equals the crossing number of $S_j^+$ wrt. $Q_1$ in $G^+$.
    But the endpoints of the simple path $S_j^+$ lie on a single
    face $f$ of $G^+$, and no intermediate vertex of $Q_j$ (seen as a dual path in $G^+$) equals $f$, so the crossing number of $Q_j$ wrt. $S_j^+$ is in $\{-1,0,1\}$ by Lemma~\ref{l:crossing1}.

    To summarize, the crossing numbers of $P_j$ and $S_j$ wrt. $Q$ may differ by at most $2$.
    We conclude that the crossing numbers of $P=P_1RP_2$ and $S=S_1RS_2$ wrt. $Q$ can differ by at most $4$.
  \end{proof}
\begin{lemma}\label{l:distinct-prep}
  Let the holes $h^1,h^2$, where $h^1\neq h^2$, be fixed.
  In $\Ot(|H|^2)$ time one
  can build a data structure answering queries of the following form in $\Ot(1)$ time.
  Given $\lambda\in [0,\lambda^*]$,
  find some two vertices $s\in V(h^1)$, $t\in V(h^2)$ and a number $k$ such that there
  exists a \emph{simple} shortest $s\to t$ path $P$ in 
  $G^*_\lambda[H](x^*,y^*)$ with $V(P)\cap (V(h^1)\cup V(h^2))=\{s,t\}$ and the crossing
  number of $P$ wrt. $Q$ is $k$.
\end{lemma}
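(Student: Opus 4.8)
The plan is to build, once per pair of holes $(h^1,h^2)$, an auxiliary data structure over the graph $Z'$ (the ``universal cover'' from Lemma~\ref{l:k-paths}, applied to $Z=G^*_0[H](x^*,y^*)$ with dual path $P^*:=Q$), so that for any queried $\lambda$ we can reconstruct in $\Ot(1)$ time a good pair $(s,t)$ together with the crossing number $k$ of a simple shortest $s\to t$ path that stays off $V(h^1)\cup V(h^2)$ except at its endpoints. First I would recall that $\dist_{G^*_\lambda[H](x^*,y^*)}(s,t)=\min_{j}\bigl(\dist_{Z'}((s)_0,(t)_j)+j\lambda\bigr)$, where the minimum is over $j\in\{-n,\ldots,n\}$; this follows from Lemma~\ref{l:k-paths} (properties 1 and 2) and the fact that $G^*_\lambda[H](x^*,y^*)$ has no negative cycles, so some shortest $s\to t$ path is simple and hence realized in $Z'$. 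The key observation is that for fixed $s,t$ the distance in $Z'$ is a fixed (finite) sequence of values indexed by $j$, so after replacing each $\dist_{Z'}((s)_0,(t)_j)$ by $+\infty$ whenever $(s)_0$ cannot reach $(t)_j$, the function $\lambda\mapsto \argmin_j(\dist_{Z'}((s)_0,(t)_j)+j\lambda)$ is the lower envelope of at most $2n+1$ lines and hence a piecewise-constant step function of $\lambda$ with $O(n)$ breakpoints, computable in $O(n\log n)$ time by a convex-hull / lower-envelope routine.

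The construction I would carry out is: (i) build the MSSP data structure on $Z'$ with source face $h^1_0$ (the $0$-th copy of $h^1$); this takes $\Ot(|Z'|)=\Ot(|H|^2)$ time and lets us read $\dist_{Z'}((s)_0,(t)_j)$ for any $s\in V(h^1)$, any copy $(t)_j$ of $t\in V(h^2)$, in $\Ot(1)$ time. (ii) Pick the special pair $(s,t)$ exactly as in the proof of Lemma~\ref{l:partial-monge}: choose any $s^\circ\in V(h^1)$, let $t$ be the vertex of $V(h^2)$ of minimum depth in the shortest-path tree rooted at $(s^\circ)_0$ (over all copies $(t')_j$, take the globally shallowest), and on the tree path from $(s^\circ)_0$ to that copy let $s$ be the deepest vertex whose projection lies in $V(h^1)$; by construction the $s\to t$ subpath contains no other vertex of $V(h^1)\cup V(h^2)$, so it certifies the ``$V(P)\cap(V(h^1)\cup V(h^2))=\{s,t\}$'' requirement, and the copy index $j^\star$ it lands in is the crossing number. (iii) For that fixed $(s,t)$, read off the $O(n)$ distances $\dist_{Z'}((s)_0,(t)_j)$, build the lower envelope of the lines $\ell_j(\lambda)=\dist_{Z'}((s)_0,(t)_j)+j\lambda$ restricted to $\lambda\in[0,\lambda^\star]$, and store the breakpoints in a sorted array. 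A query for $\lambda$ then binary-searches this array in $\Ot(1)$ time and returns $(s,t)$ together with the optimal $j$, which is the desired crossing number $k$.

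The step I expect to be the main obstacle is verifying that the pair $(s,t)$ produced by step (ii) is valid \emph{for every} $\lambda$ simultaneously, i.e. that for each $\lambda$ some shortest $s\to t$ path in $G^*_\lambda[H](x^*,y^*)$ is simple \emph{and} avoids $V(h^1)\cup V(h^2)$ in its interior. The ``some shortest path is simple'' part is immediate (no negative cycles); the ``avoids the holes internally'' part is where I would argue that the MSSP shortest-path tree on $Z'$, together with the choice of $s$ as the deepest $h^1$-vertex on the $s^\circ\!\to t$ tree path and $t$ as a shallowest $h^2$-copy, forces the tree path $(s)_0\to(t)_{j^\star}$ to contain no interior copy of a vertex of $h^1$ or $h^2$ — otherwise a shallower or deeper choice would have been available, contradicting extremality — and then note that projecting this $Z'$-path back to $G^*_0[H](x^*,y^*)$ and reinterpreting it in $G^*_\lambda[H](x^*,y^*)$ yields a shortest path of the claimed crossing number by the distance identity above. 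A secondary subtlety is handling copies $(t)_j$ that are unreachable from $(s)_0$ in $Z'$ (set $\ell_j\equiv+\infty$ and drop them from the envelope) and confirming that the optimal $j$ never needs $|j|$ larger than $n$, which is exactly property~2 of Lemma~\ref{l:k-paths} since the optimal simple path has at most $n$ edges and hence crossing number in $[-n,n]$.
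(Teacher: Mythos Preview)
Your approach has a genuine gap at precisely the place you flag as the main obstacle, and the argument you sketch there does not close it. You fix a single pair $(s,t)$ during preprocessing, based on one tree path in $Z'$ (hence at $\lambda=0$), and then for each query $\lambda$ return the envelope-minimizing index $j$. But your extremality argument only shows that the tree path to the \emph{particular} copy $(t)_{j^\star}$ used to select $(s,t)$ avoids interior copies of $V(h^1)\cup V(h^2)$. For a general $\lambda$ the minimizer $j$ will differ from $j^\star$, and the shortest $(s)_0\to(t)_j$ path in $Z'$ can very well pass through copies of other hole vertices; moreover its projection to $G^*_0[H](x^*,y^*)$ need not be simple at all (distinct copies $(v)_i,(v)_{i'}$ of the same $v$ may both occur on a tree path in $Z'$). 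Consequently you cannot conclude that a simple shortest $s\to t$ path in $G^*_\lambda$ with crossing number $j$ and no interior hole vertices exists---and that is exactly what the returned triple must certify for Lemma~\ref{l:distinct-holes} to apply. There is no reason a single $(s,t)$ should work for all $\lambda$: if for some $\lambda$ every shortest $s\to t$ path passes through another vertex of $V(h^1)$, no valid $k$ can be reported for that pair.

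The paper's proof avoids this by \emph{not} fixing $(s,t)$. It picks arbitrary $s',t'$, builds a single shortest-path tree $T$ from $(s')_0$ in $Z'$, and forms the lower envelope of the lines $\ell(T[t'_j])+j\lambda$ exactly as you propose. The extra ingredient is an $\Ot(|Z'|)$ preprocessing pass over $T$ in preorder: while maintaining (in arrays) a \emph{simple} path $P_v\subseteq G^*_0[H](x^*,y^*)$ that is the projection of a subpath of the current root-to-node tree path (shortening whenever the next projected vertex already appears on $P_v$), it records for every node $t'_j$ the endpoints $s_v\in V(h^1)$, $t_v\in V(h^2)$ and the crossing number of the subpath of $P_v$ between the last $h^1$-vertex and the first $h^2$-vertex. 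At query time one locates the minimizer $j^*$ in the envelope and returns the stored $(s_v,t_v,k)$ for $t'_{j^*}$; since any subpath of a shortest path is itself shortest, this subpath is a shortest path in $G^*_\lambda$, and by construction it is simple and meets $V(h^1)\cup V(h^2)$ only at its endpoints. The crucial difference from your plan is that the reported pair $(s,t)$ is allowed to depend on $\lambda$ through $j^*$.
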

\begin{proof}
  Consider the graph $Z'$ (built using Lemma~\ref{l:k-paths}) from Section~\ref{s:single-hole}.
  Recall that for each vertex $v$ of $G^*_0[H](x^*,y^*)$, $Z'$ has $2n+1$ copies
  $v_{-n},\ldots,v_n$ of $v$ so that a path from any $u_0$ to $v_j$
  corresponds to an $u\to v$ path in $G^*_0[H](x^*,y^*)$ with crossing number
  $j$ wrt. $Q$.
  
  Pick any $s'\in V(h^1)$ and $t'\in V(h^2)$.
  Compute a shortest paths tree $T$ from $s'_0$ in $Z'$.
  This takes $O(|Z'|)=O(|H|^2)$ time.
  Denote by $T[v_j]$ the 
  $s'_0\to v_j$ path in that tree, and by $\len(T[v_j])$ its length.
  
  Similarly as in Lemma~\ref{l:k-paths}, one can prove that $T[t'_j]$ corresponds to some
  $s'\to t'$ path in $G^*_0[H](x^*,y^*)$ with crossing number wrt. $Q$
  exactly $j$ and not longer than
  the shortest \emph{simple} $s'\to t'$ path in that graph with crossing number
  $j$. Consequently, by proceeding as in the proof
  of Lemma~\ref{l:dhdist}, we obtain that the length of the shortest
  $s'\to t'$ path in $G^*_\lambda[H](x^*,y^*)$ equals
  \begin{equation*}
    \min_{j\in \{-n,\ldots,n\}}\{\len(T[t'_j])+\lambda\cdot j\},
  \end{equation*}
  and if some $j$ minimizes the above for a given $\lambda$, then
  the path $T[t_j']$ can be projected to an $s'\to t'$ path in $G^*_\lambda[H](x^*,y^*)$
  with cost precisely $\len(T[t'_j])+\lambda\cdot j$, that is, a shortest
  $s'\to t'$ path.

  In order to be able to compute a minimizer $j$ given some $\lambda$, we apply
  the following preprocessing.
  We want to compute a minimum of $2n+1=O(|H|)$ linear functions at coordinate~$\lambda$.
  It is well-known that such a minimum is the lower envelope of these functions
  and can be described using $k\leq 2n$ breakpoints $\lambda_1,\ldots,\lambda_{k}$
  and numbers $j_1,\ldots,j_k$, $j_1\geq \ldots \geq j_k$,
  such that for every $i=0,\ldots,k$, the minimum at $\lambda\in [\lambda_i,\lambda_{i+1})$
  equals $\len(T[t'_{j_i}])+\lambda\cdot j_i$ (where $\lambda_0=-\infty$ and $\lambda_{k+1}=\infty$).
  The lower envelope can be easily computed in linear time.
  With the breakpoints, locating the path $T[t'_{j^*}]$ that achieves the minimum
  for a given $\lambda$ can be done using binary search in $\Ot(1)$ time.

  The last problem is that the projection of $T[t'_{j^*}]$ might not correspond to a simple path
  in $G^*_\lambda[H](x^*,y^*)$, and it may contain more than two vertices of $V(h^1)\cup V(h^2)$.
  To deal with this problem, we make another preprocessing step on the tree $T$
  so that to any $T[v_j]$ with $v\in V(h^2)$ we assign a simple path $P^*_{v_j}\subseteq G^*_0[H](x^*,y^*)$ from $V(h^1)$ to $V(h^2)$
  such that $P^*_{v_j}$ is a projection of a subpath of $T[v_j]\subseteq Z'$
  onto  $G^*_0[H](x^*,y^*)$.
  Note that if (the projection of) $T[t'_{j^*}]$ is a shortest $s'\to t'$ path in 
  $G^*_\lambda[H](x^*,y^*)$, then $P^*_{t'_{j^*}}$, being a projection of a subpath
  of $T[t'_{j^*}]$, is also a shortest path in $G^*_\lambda[H](x^*,y^*)$
  that meets the desired
  requirements. Recall that we are only asked to return the endpoints
  of such a path, and its crossing number wrt.~$Q$.

  \newcommand{\dstr}{\mathcal{D}}

  We process the vertices $T$ in pre-order. On the way, we maintain a data structure $\dstr$.
  We maintain the following invariant: before the processing of $v_j$ (whose parent is $p_l$)
  starts, and also after it finishes, $\dstr$ stores some simple $s'\to p$ path $P_p$ in 
  $G^*_0[H](x^*,y^*)$ that is a projection of a subpath
  of $T[p_l]$.
  We will first describe how the data structure is used, and
  explain how its operations are implemented later.

  Let $e_{v_j}$ be the parent edge of $v_j$ in $T$, and let $e_v$ be the corresponding edge in $G^*_0[H](x^*,y^*)$.
  To process the vertex $v_j$, we first check whether the path $P_p$ contains $v$.
  Since $P_p$ is simple, it might contain at most one occurrence of $v$.
  Suppose $P_p=R_1R_2$, where $R_1=s'\to v$.
  In this case, the path $R_1$ forms a simple $s'\to v$ path $P_v$ that
  is a projection of a subpath of $T[p_l]$, so we may set $P_v:=R_1$.
  On the other hand, if $P_p$ does not contain $v$, 
  we may simply set $P_v:=P_p\cdot e_v$ and achieve the same.
  We update the data structure $\dstr$ so that it stores $P_v$.
  If $v\in V(h^2)$, 
  we use $\dstr$ to find the first vertex $t_v\in V(h^2)$ on $P_v$ (if exists),
  and then the latest vertex $s_v\in V(h^1)$ on $P_v$ that precedes the occurrence $t_v$.
  Note that the $s_v\to t_v$ subpath of $P_v$ is simple, does not contain
  any other vertices of $V(h^1)\cup V(h^2)$ apart from the endpoints $s_v,t_v$,
  and is a projection of a subpath of $T[v_j]$.
  We thus let $P^*_{v_j}$ equal that $s_v\to t_v$ subpath of $P_v$.
  We then record the endpoints $s_v,t_v$ of $P^*_{v_j}$ and its crossing number wrt. $Q$.
  Afterwards, we process the children of $v$ recursively. Note that the invariant posed
  on $\dstr$ is met. When all the children of $v$ are processed,
  we revert changes to $\dstr$ so that it again stores $P_p$ instead of $P_v$.
  
  We now explain how to implement the data structure $\dstr$ storing
  a simple path $P$ in $G^*_0[H](x^*,y^*)$.
  First of all, $\dstr$ stores two arrays $A,I$, and the number of edges $q$ of the stored path $P$.
  The array $A$ stores the subsequent vertices
  of $P$ at positions $1,\ldots,q$.
  $I$ is indexed with vertices of $H$:
  the vertex $v$ might appear on $P$ only at position $I[v]$ (if such a position exists).
  It might still hold that $A[I[v]]\neq v$; then $P$ does not contain the vertex $v$.

  Moreover, we store the positions of vertices from $V(h^1)$ on the path $P$ in an array $L$.
  More precisely, if the vertices of $V(h^1)$ appear on positions $l_1,\ldots,l_g$
  on $P$, then $L[i]=l_i$ for all $i=1,\ldots,g$.
  We also maintain the first occurrence of a vertex from $V(h^2)$ on the path $P$.

  Finally, we also store the crossing numbers of all prefixes of $P$ in
  an array $S$. That is, if $i\leq q$, then $S[i]$ is the crossing number
  of the subpath $s'\to A[i]$ of $P$.
  Storing prefix crossing numbers allows us to compute the crossing
  number of any subpath of $P$ in $O(1)$ time by subtracting
  two crossing numbers of path prefixes.

  Suppose once again we start processing a vertex $v_j$ with parent $p_l$.
  The array $I$ can locate the occurrence of $v$ on $P_p$, if it exists.
  To make $\dstr$ store $P_v$, we either simply decrease $q$ (to $I[v]$) if $P_p$ contains~$v$
  or increase $q$ by one and update $A[q]$, $S[q],I[v]$ in $O(1)$ time.
  If $v\in V(h^1)$, we also update $L$ if no occurrence of $v$ has been found in $P_p$,
  as follows.
  We find the first $L[i]$ with $L[i]\geq \ell$ (via binary search) and set $L[i]:=v$.
  Updating the first occurrence of a vertex from $V(h^2)$ is trivial.
  It is easy to check that this update process is
  correct and all the invariants are satisfied.
  Moreover, it takes $O(\log{n})$ time, and involves updating $O(1)$ values in the worst case.
  As a result, after descending down the tree recursively, we can efficiently revert all the updates to $\dstr$ in $O(1)$ time
  so that after the processing of a vertex $v_j$ ends, $\dstr$ stores $P_p$ again.

  To find the subpath $s_v\to t_v$ of $P_v$, it is enough to binary search
  for the latest position in $L$ that is smaller than $I[t_v]$. Note
  that this only happens if $v\in V(h^2)$, so $t_v$ is well-defined here,
  and the requested position can be found since $L[1]=1$.
  The crossing number of the $P^*_{v_j}=s_v\to t_v$ can be read from the array $S$.

  We conclude that finding the required endpoints and crossing numbers of $P_{v_j}^*$
  for all $v_j$ can be done using additional $\Ot(|T|)=\Ot(|Z'|)=\Ot(|H|^2)$ preprocessing time, as desired.
\end{proof}

Equipped with Lemmas~\ref{l:distinct-holes}~and~\ref{l:distinct-prep},
obtaining the closest pair and near neighbor data structures
for the subgraph of $D(H,x^*,y^*)$ encoding the distance between $V(h^1)$
and $V(h^2)$ is relatively easy.
First, during preprocessing, we compute the data structure of Lemma~\ref{l:distinct-prep}.
Given some $\lambda$, in $\Ot(1)$ time we can find such $s\in V(h^1)$,
$t\in V(h^2)$ that some simple shortest $s\to t$ path $P$ in $G^*_\lambda[H](x^*,y^*)$
has crossing number $k$, and $V(P)\cap (V(h^1)\cup V(h^2))=\{s,t\}$.
By Lemma~\ref{l:distinct-holes}, for every $(u',v')\in V(h^1)\times V(h^2)$,
some shortest $u'\to v'$ path in $G^*_\lambda[H](x^*,y^*)$
has crossing number between $k-4$ and $k+4$.

Similarly as in Section~\ref{s:single-hole}, we consider an auxiliary graph
\begin{equation*}
  D_{h^1,h^2}=\left(V(h^1)\cup V(h^2),\bigcup_{j\in [k-4,k+4]} E_{h^1,h^2,j}\right),
\end{equation*}
where each $E_{h^1,h^2,j}$ contains, for each $(u,v)\in V(h^1)\times V(h^2)$, an edge $uv$
with weight $\dist_{Z'}(u_0,v_j)+j\cdot \lambda$.

Analogously as in Lemma~\ref{l:dhdist}, from the definition of $k$, one can prove
that  for all \linebreak
${(u,v)\in V(h^1)\times V(h^2)}$, $\dist_{G^*_\lambda[H](x^*,y^*)}(u,v)=\dist_{D_{h^1,h^2}}(u,v)$.
Finally, observe that the proof of \linebreak
Lemma~\ref{l:partial-monge}
is valid not only for the graphs $D_h[E_{h,j}]$ from Section~\ref{s:single-hole}, but also for each of the graphs $D_{h^1,h^2}[E_{h^1,h^2,j}]$.
Indeed, there, the goal was to decompose distances between the distinct faces $h_0$ and $h_j$
of $Z'$. The proof goes through without any changes if we replace
$h_0$ with $h^1_0$, and $h_j$ with $h^2_j$.
Consequently, we obtain data structures with
 $T_\cp(D_{h^1,h^2}[E_{h^1,h^2,j}]),T_\nn(D_{h^1,h^2}[E_{h^1,h^2,j}])\in \Ot(|V(h^1)|+|V(h^2)|)=\Ot(|\bnd{H}|)$, 
    and $Q_\nn(D_{h^1,h^2}[E_{h^1,h^2,j}])=\Ot(1)$.
By combining those for $O(1)$ required values $j\in [k-4,k+4]$ as previously,
we obtain that 
 $T_\cp(D_{h^1,h^2}),T_\nn(D_{h^1,h^2})\in \Ot(|V(h^1)|+|V(h^2)|)=\Ot(|\bnd{H}|)$, 
  and $Q_\nn(D_{h^1,h^2})=\Ot(1)$.
  This concludes the proof of Lemma~\ref{l:exact-piece}.

\section{Flows, circulations, and negative cycle detection}\label{s:circ0}
Let $G_0=(V,E_0)$ be the input directed graph.
  Let $n=|V|$ and $m=|E_0|$.
  Define $G=(V,E)$ to be a \emph{flow network}, i.e., a multigraph such that $E=E_0\cup\rev{E_0}$, $E_0\cap \rev{E_0}=\emptyset$, where $\rev{E_0}$ is the set of \emph{reverse edges}.
  For any $uv=e\in E$, there is an edge $\rev{e}\in E$ such that $\rev{e}=vu$ and $\rev{(\rev{e})}=e$.
  We have $e\in E_0$ iff $\rev{e}\in \rev{E_0}$.

  Let $u:E_0\to \mathbb{Z}_+\cup\{\infty\}$ be an integral \emph{capacity function}.
  A \emph{flow} is a function $f:E\to \mathbb{R}$ such that for any $e\in E$
  $f(e)=-f(\rev{e})$ and for each $e\in E_0$, $0\leq f(e)\leq u(e)$.
  These conditions imply that for $e\in E_0$,
  $-u(e)\leq f(\rev{e})\leq 0$.
  We extend the function $u$ to $E$ by setting $u(\rev{e})=0$ for all $e\in E_0$.
  Then, for all edges $e\in E$ we have $-u(\rev{e})\leq f(e)\leq u(e)$.

  The \emph{excess} $\exc_f(v)$ of a vertex $v\in V$ is defined as $\sum_{uv=e\in E} f(e)$. Due
  to anti-symmetry of $f$, $\exc_f(v)$ is equal to the amount of flow going into $v$ by the
  edges of $E_0$ minus the amount of flow going out of $v$ by the edges of $E_0$.
  The vertex $v\in V$ is called an \emph{excess vertex} if $\exc_f(v)>0$ and a \emph{deficit vertex} if
  $\exc_f(v)<0$.   Let $X$ be the set of excess vertices of $G$ and
  let $D$ be the set of deficit vertices.
  Define the \emph{total excess} $\totexc_f$ as the sum of excesses of the excess vertices, i.e.,
  $\totexc_f=\sum_{v\in X} \exc_f(v)=\sum_{v\in D}-\exc_f(v).$

  A flow $f$ is called a \emph{circulation} if there are no excess vertices,
  or equivalently, $\totexc_f=0$.

  Let $c:E_0\to \mathbb{R}$ be the input \emph{cost} function.
  We extend $c$ to $E$ by setting $c(\rev{e})=-c(e)$ for all $e\in E_0$.
  The \emph{cost} $c(f)$ of a flow $f$ is defined as $\frac{1}{2}\sum_{e\in E} f(e)c(e)=\sum_{e\in E_0} f(e)c(e)$.

  \emph{To send a unit of flow} through $e\in E$ means to increase $f(e)$ by $1$ and simultaneously
  decrease $f(\rev{e})$ by $1$.
  By sending a unit of flow through $e$ we increase the cost of flow by $c(e)$.
  \emph{To send a unit of flow through a path $P$} means to send a unit of flow
  through each edge of $P$. In this case
  we also say that we \emph{augment flow $f$ along path $P$}.

  The \emph{residual network} $G_f$ of $f$ is defined as $(V,E_f)$, where $E_f=\{e\in E: f(e)<u(e)\}$.
  The capacity of an edge $e\in E_f$ in the residual network $G_f$ is defined as $u(e)-f(e)$.

For any $v\in V$, let us define:
\begin{equation*}
  \lambda_v=\min\left(\sum_{uv=e\in E} u(e),\sum_{vw=e\in E} u(e)\right).
\end{equation*}
For example, if $G$ is a unit-capacitated network, then $\lambda_v$ equals the minimum of the indegree and
the outdegree of $v$ in the underlying network $G_0$. Let us also put
\begin{equation*}
  \Lambda:=\sum_{v\in V}\lambda_v.
\end{equation*}
In the following we will assume that $\Lambda\geq n$, as vertices with $\lambda_v=0$
are of no use in the minimum-cost circulation problem.

Moreover, we will assume that $\Lambda$ is finite.
Observe that finite $\Lambda$ implies that any circulation has finite flow values.
As a result, the cost of a minimum-cost circulation is finite as well.

Let us partition $E_0$ into $E_\fin\cup E_\infty$
so that for all $e\in E_\infty$, $u(e)=\infty$, and $E_\fin=E_0\setminus E_\infty$.
Let us put $G_\infty:=G\setminus E_\fin$ and $m_\fin=|E_\fin|$.
Let $C=\max(2,\max\{-c(e):e\in E_0\})$. 

In Section~\ref{s:circ} we prove:

\begin{theorem}\label{t:circ}
  Suppose closest pair and near neighbor data structures have been preprocessed for $G_\infty$.
  Let $f^*$ be a min-cost
  circulation in $G$. Then, an \emph{integral} circulation $f_\delta$ satisfying
  \begin{equation*}
    c(f^*)\leq c(f_\delta)\leq c(f^*)+\delta
  \end{equation*}
  can be computed in time:
  \begin{equation*}
    \Ot\left(\sqrt{\Lambda}\log{(\Lambda C/\delta)}\cdot \left(T_\cp(G_\infty)+T_\nn(G_\infty)+m_\fin+\Lambda\cdot Q_\nn(G_\infty)\right) \right).
  \end{equation*}
  Within the same bounds, a price function $\pi$ satisfying
  $c_\pi(e)\geq -\delta/\Lambda$ for all $e\in E(G_{f_\delta})$ is produced.
\end{theorem}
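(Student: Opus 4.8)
The plan is to follow the cost-scaling (successive-approximation) paradigm of Goldberg and Tarjan, instantiated with a blocking-flow-style inner loop whose only two bottleneck operations are a single-source shortest-path computation and a maximal-edge-disjoint-paths computation, so that the assumed closest pair and near neighbor data structures for $G_\infty$ can be plugged in through Lemmas~\ref{l:dijkstra-add},~\ref{l:ssp-sum} and~\ref{l:nn-sum}. Call an integral pseudoflow $f$ (a function obeying the capacity constraints but possibly violating conservation) \emph{$\eps$-optimal} if there is a price function $p$ with $c_p(e)\geq-\eps$ for every residual edge $e\in E(G_f)$. The zero circulation is $C$-optimal (take $p\equiv 0$ and use $c(e)\geq -C$), and a standard flow-difference argument shows that an $\eps$-optimal \emph{circulation} costs at most $c(f^*)+O(\Lambda)\cdot\eps$: decompose $f^*-f$ into residual cycles, note that each has reduced cost $\geq-\eps$ times its length, and bound the total length by $O(\Lambda)$ since any circulation pushes at most $\lambda_v$ units through each vertex $v$ and hence at most $\Lambda$ units in total. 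Consequently it suffices to start from the zero circulation and perform $O(\log(\Lambda C/\delta))$ \emph{refinement} steps, each converting a $2\eps$-optimal circulation into an $\eps$-optimal one, until $\eps\le\delta/\Lambda$; the final witness price function is the desired $\pi$, and integrality is preserved because every augmentation pushes an integral amount of flow along tight edges with integral residual capacities.

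A single refinement, given a current price function $p$ witnessing $2\eps$-optimality, first \emph{resets}: it cancels every residual edge of (mildly) negative reduced cost, producing an $\eps$-optimal pseudoflow $f$ whose total excess is $O(\Lambda)$ — I will need to check that the residual infinite-capacity edges can be handled here without pushing unbounded flow, which is where the finiteness of $\Lambda$ (and the resulting structure of $E_\infty$) is used. It then repeats the following \emph{augmentation phase} until $\totexc_f=0$: (1) using Lemma~\ref{l:dijkstra-add} on a closest pair data structure for $G_f$ — obtained by composing, via Lemma~\ref{l:ssp-sum}, the preprocessed structure for $G_\infty$ with a trivial one for the $\Ot(m_\fin+\Lambda)$ residual edges of $G_f\setminus G_\infty$, giving $T_\cp(G_f)=T_\cp(G_\infty)+\Ot(m_\fin+\Lambda)$ — compute single-source distances from the set of excess vertices and replace $p$ by the corresponding reduced-cost potentials; (2) push flow along a maximal set of edge-disjoint paths from excess to deficit vertices using only \emph{nearly-tight} edges (reduced cost $<\eps/2$), found by a DFS-like search whose running time is dominated by the total update time $T_\nn(G_f)$ of a near neighbor data structure on $G_f$ together with $O(\Lambda)$ near neighbor queries; by Lemma~\ref{l:nn-sum} and the trivial near neighbor structure on the extra edges this is $T_\nn(G_\infty)+\Lambda\cdot Q_\nn(G_\infty)+\Ot(m_\fin+\Lambda)$. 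To compensate for not rounding edge costs (rounding is what one would normally use to create a discretization gap, but it would perturb $G_\infty$), the refinement additionally (i) allows the $\eps/2$ slack in step (2) and (ii) increases by $\eps$ the cost of the $O(\Lambda)$ residual edges lying outside $G_\infty$ before each distance computation; one then verifies that $f$ remains $\eps$-optimal after every phase and that, at the end, $f$ is an $\eps$-optimal circulation. So a phase costs $\Ot(T_\cp(G_\infty)+T_\nn(G_\infty)+m_\fin+\Lambda\cdot Q_\nn(G_\infty))$.

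The crux, and the step I expect to be the main obstacle, is proving that each refinement needs only $O(\sqrt\Lambda)$ augmentation phases. The template is the Even--Tarjan/Hopcroft--Karp argument: after $k$ phases the shortest nearly-tight distance from an excess vertex to a deficit vertex has grown to $\Omega(k\eps)$, whence the remaining total excess is $O(\Lambda/k)$ (each remaining unit must still travel $\Omega(k)$ edges while the total ``flow $\times$ length'' of the rerouting from the reset pseudoflow stays $O(\Lambda)$ by the per-vertex bound $\lambda_v$), so at most $O(\Lambda/k)$ further phases are needed; balancing $k+\Lambda/k$ at $k=\Theta(\sqrt\Lambda)$ gives the bound. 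Making this go through without rounding is exactly where modifications (i)--(ii) enter: I would argue that a phase augmenting through a \emph{maximal} set of $(<\eps/2)$-tight paths and then recomputing distances raises the excess-to-deficit distance by a genuine $\Omega(\eps)$ (the $\eps/2$ margin, rather than exact tightness, is what prevents zero progress), while the $+\eps$ perturbation on the $O(\Lambda)$ non-$G_\infty$ residual edges blocks near-zero-reduced-cost residual structure from accumulating through those edges and thereby keeps the monotone-distance invariant — all without touching the costs inside $G_\infty$, so its data structures stay valid. With this phase bound in hand, the total time is $O(\log(\Lambda C/\delta))$ refinements times $O(\sqrt\Lambda)$ phases times the per-phase cost above, i.e. exactly $\Ot(\sqrt\Lambda\log(\Lambda C/\delta)\cdot(T_\cp(G_\infty)+T_\nn(G_\infty)+m_\fin+\Lambda\cdot Q_\nn(G_\infty)))$; the output $f_\delta$ is integral, and its final witness price function $\pi$ satisfies $c_\pi(e)\ge-\eps\ge-\delta/\Lambda$ on every residual edge.
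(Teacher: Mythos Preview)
Your proposal is correct and follows the paper's approach (Section~\ref{s:circ}) essentially step for step: the scaling loop with $O(\log(\Lambda C/\delta))$ refinements, a reset producing an $O(\Lambda)$-excess pseudoflow, augmentation phases that compute distances via the closest-pair Dijkstra of Lemma~\ref{l:dijkstra-add} and then push along nearly-tight ($<\eps/2$) edges via the near-neighbor structure, the $+\eps$ on reverse edges in place of rounding, and the $O(\sqrt\Lambda)$ phase bound via the excess--distance tradeoff $\totexc_f\cdot\Delta=O(\eps\Lambda)$ of Lemma~\ref{l:mainbound}. The one step you flagged but left open---preventing the reset from pushing unbounded flow through $E_\infty$---is handled in the paper by a preliminary price shift (Lemma~\ref{l:relabel}): each $p_0(v)$ is moved by $\pm\eps$ according to whether $\lambda^\tin_v\le\lambda^\tout_v$, which (since finite $\Lambda$ forces each endpoint of an infinite-capacity edge to have its $\lambda$ determined by the finite side) makes every edge of $E_\infty$ acquire nonnegative reduced cost and bounds the total residual capacity of negative-reduced-cost edges by~$3\Lambda$.
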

Before proving Theorem~\ref{t:circ}, let us deduce Theorem~\ref{t:negcyc} from it.
\tnegcyc*
\begin{proof}
We use a reduction similar to that of~\cite{Gabow85, GabowT89}.
Consider a flow network $G'$ obtained from $G$ by converting weights to costs, setting all capacities to $\infty$, vertex splitting,
and subsequently adding, for each $u\in V$,
an edge $u_\tin u_\tout$ of capacity $1$ and cost $0$. Note that for $G'$, $\Lambda=n$ holds.

  Observe that the subgraph of infinite-capacity edges $G_\infty$ of $G'$ satisfies
  $T_\cp(G_\infty)=\Ot(T_\cp(G))$ by Lemma~\ref{l:ssp-split}.
  Moreover, we have $T_\nn(G_\infty)=O(T_\nn(G))$ and
  $Q_\nn(G_\infty)=O(Q_\nn(G))$ by Observation~\ref{l:nn-split}.

  Clearly, $G$ has a negative cycle if and only if $G'$ has a negative cycle.
Let us now argue that $G'$ has a negative cycle if and only if the minimum-cost
circulation $f^*$ in $G'$ has negative cost.
Observe that a zero circulation (i.e., where
the flow on each edge is $0$) is a valid circulation of cost $0$.
  As a result, we have $c(f^*)\leq 0$.

  Suppose $G'$ has a negative cycle. Then, by sending a unit of flow through that
  cycle, we obtain a valid circulation of negative cost, i.e., we conclude $c(f^*)<0$.

  Now suppose that $c(f^*)<0$. It is well-known (see e.g.~\cite{networkflows}) that any circulation
  -- in particular $f^*$ --
  can be decomposed into cycles in $G'$ of positive flow.
  Since the sum of costs of these cycles is negative, at least one of them
  has to have a negative cost.
  
  Since the costs and capacities are integral or infinite, and $\Lambda$ is finite for $G'$, $c(f^*)$ is integral.
  As a result, $c(f^*)<0$ implies $c(f^*)\leq -1$.
  Hence, if we invoke Theorem~\ref{t:circ} to 
  compute an integer circulation $f_\delta$ with $\delta\leq\frac{1}{2}$, to be set later,
  then $c(f_\delta)\leq -1+\delta\leq -\frac{1}{2}<0$.
  We conclude that $c(f^*)<0$ implies $c(f_\delta)<0$.
  Since $f^*$ is an optimal circulation, $c(f_\delta)<0$ trivially implies $c(f^*)<0$.
  As a result, $G$ has a negative cycle iff $c(f_\delta)<0$.
  For the graph $G'$ we have $\Lambda=n$, and $G'$ has $O(n)$ finite-capacity edges, so computing $f_\delta$
  takes
  \begin{equation*}
    \Ot\left(\sqrt{n}\cdot (T_\cp(G_\infty)+T_\nn(G_\infty)+n\cdot Q_\nn(G_\infty))\cdot \log(C/\delta)\right)
  \end{equation*}
  time by Theorem~\ref{t:circ}.
  Note that if one decomposes the circulation $f_\delta$ into cycles of positive flow,
  at least one of them will be negative. Such a cycle in~$G'$ can be easily converted
  to a simple negative cycle in $G$.

  However, if $c(f^*)=0$, i.e., $G$ has no negative cycle, we still need to produce a feasible price function of $G$.
  To this end we need the feasible price function $\pi$
  produced by Theorem~\ref{t:circ} when computing $f_\delta$.
  Note that $c(f_\delta)\leq \delta\leq \frac{1}{2}$ implies $c(f_\delta)=0$ by integrality of $f_\delta$.

  Since $f_\delta$ is a circulation, it can be decomposed into a collection of $j\geq 0$
  simple cycles $C_1,\ldots,C_j$,
  of positive flow (in fact, a unit flow).
  In fact, each of these cycles has to have $0$ cost, as otherwise at least
  one of them would have negative cost.
  Consider one of the cycles $C_i$. Since $C_i$ is a simple cycle,
  it has at most $2n$ edges (recall that vertex splitting doubles the number of vertices).
  Consider any edge $uv\in C_i$. Clearly, since $f_\delta(uv)=1$, 
  $vu\in G'_{f_\delta}$ and thus
  \begin{equation*}
    -c(uv)-\pi(v)+\pi(u)=c(vu)-\pi(v)+\pi(u)=c_\pi(vu)\geq -\frac{\delta}{n}
  \end{equation*}
  by Theorem~\ref{t:circ}.
  Equivalently, $c(uv)-\pi(u)+\pi(v)\leq \frac{\delta}{n}$.
  Suppose we have $c(xy)-\pi(x)+\pi(y)<-2\delta$ for some edge $xy\in C_i$.
  Then the sum
  \begin{equation*}
    \sum_{uv\in C_i}(c(uv)-\pi(u)+\pi(v))
  \end{equation*}
  has terms no more than $\frac{\delta}{n}$, and at least one term less than $-2\delta$,
  so it is less than ${\frac{(2n-1)\delta}{n}-2\delta<0}$.
  But since $C_i$ is a cycle, the prices $\pi(\cdot)$
  in the above sum cancel out, so we actually obtain
  $\sum_{uv\in C_i}c(uv)<0$ which contradicts that $C_i$ has zero cost.
  Therefore, we conclude that for any edge $uv\in C_i$ we have:
  \begin{equation}\label{eq:tight}
    c(uv)-\pi(u)+\pi(v)\geq -2\delta.
  \end{equation}
  Suppose that for some $v$ we have $f_\delta(v_\tin v_\tout)=1$. Then $v_\tin v_\tout$
  lies on some cycle $C_i$ of the decomposition and thus by~\eqref{eq:tight} we have:
  \begin{equation*}
  -\pi(v_\tin)+\pi(v_\tout)=c(v_\tin v_\tout)-\pi(v_\tin)+\pi(v_\tout)\geq -2\delta.
  \end{equation*}
  As a result $\pi(v_\tin)\leq \pi(v_\tout)+2\delta$ in this case.
  
  If $f_\delta(v_\tin v_\tout)=0$, then $v_\tin v_\tout\in G_{f_\delta}'$, so we can also conclude $\pi(v_\tin)\leq \pi(v_\tout)+2\delta$
  from $c(v_\tin v_\tout)-\pi(v_\tin)+\pi(v_\tout)\geq -\delta/n$.

  Now consider some original edge $uv$ of $G$. There are two cases.
  If $f_\delta(u_\tout v_\tin)=1$, then
  $u_\tout v_\tin$ lies on some cycle $C_i$ and therefore by~\eqref{eq:tight} we know that
  \begin{equation*}
    c(u_\tout v_\tin)-\pi(u_\tout)+\pi(v_\tin)\geq -2\delta
  \end{equation*}
  
  Observe that in this case the edge $u_\tin u_\tout$ also necessarily lies on $C_i$, so again by~\eqref{eq:tight}
  we have $-\pi(u_\tout)-2\delta\leq -\pi(u_\tin)$. From that, we conclude:
  \begin{equation*}
    \wei_G(uv)-\pi(u_\tin)+\pi(v_\tin)=c(u_\tout v_\tin)-\pi(u_\tin)+\pi(v_\tin)\geq c(u_\tout v_\tin)-\pi(u_\tout)+\pi(v_\tin)-2\delta\geq -4\delta.
  \end{equation*}

  Now assume $f_\delta(u_\tout v_\tin)=0$. Then $u_\tout v_\tin\in G_{f_\delta}'$ and by $\pi(u_\tout)\geq \pi(u_\tin)-2\delta$
  we have:
  \begin{equation*}
    \wei_G(uv)-\pi(u_\tin)+\pi(v_\tin)\geq c(u_\tout v_\tin)-\pi(u_\tout)+\pi(v_\tin)-2\delta\geq -\frac{\delta}{n}-2\delta\geq -4\delta.
  \end{equation*}
  We conclude that a price function $p(v):=\pi(v_\tin)$ satisfies $w_G(uv)-p(u)+p(v)\geq -4\delta$ for all
  original edges $uv\in E(G)$.

  Finally, consider the graph $G''$ obtained from $G$ by  adding,
for an arbitrarily chosen vertex $s\in V(G)$,
  the $n-1$ edges $st$, for all $t\in V\setminus\{s\}$, of weight $\lceil\max(nC,p(s)-p(t))\rceil$. 
  We have $T_\cp(G'')=\Ot(T_\cp(G))$ by Fact~\ref{l:ssp-dummy-clo} and Lemma~\ref{l:ssp-sum}.
  Now, let $G^+$ be obtained from $G''$ by increasing all the edge weights by $4\delta$.
  Note that $p$ is a feasible price function of $G^+$.
  As a result, by an easy modification of Dijkstra's algorithm implementation behind Lemma~\ref{l:dijkstra-add}
  (it is enough to add $4\delta$ to the right-hand side of~\eqref{eq:dijkstra-min}), one can compute distances $\dist_{G^+}(s,\cdot)$ in
  $\Ot(T_\cp(G''))=\Ot(T_\cp(G))$ time.
  Since $G^+$ was obtained by increasing edge weights of $G''$ by $4\delta$, for any $t\in V(G)$ we have:
  \begin{equation*}
    \dist_{G''}(s,t)\leq \dist_{G^+}(s,t)< \dist_{G''}(s,t)+4n\delta.
  \end{equation*}
  In particular, if $\delta=\frac{1}{4n}$, then we have $\dist_{G''}(s,t)\leq \dist_{G^+}(s,t)<\dist_{G''}(s,t)+1$.
  But since $G$ has integral weights, all $\dist_{G''}(s,t)$ are integral.
  As a result, for $\delta=\frac{1}{4n}$ we can obtain $\dist_{G''}(s,\cdot)$ by simply
  rounding the respective distances in $G^+$ down to the nearest integers.
  To finish the proof, recall that $-\dist_{G''}(s,\cdot)$ is a feasible price function
  of $G''$ (and thus of $G\subseteq G''$) if $G''$ has no negative cycles.

  We conclude that the running time of the algorithm is indeed
  \begin{equation*}
    \Ot\left(\sqrt{n}\cdot (T_\cp(G)+T_\nn(G)+n\cdot Q_\nn(G))\cdot \log{C}\right).\qedhere
  \end{equation*}
\end{proof}

\section{Min-cost circulation algorithm}\label{s:circ}
  
  In this section we prove Theorem~\ref{t:circ}. We first need the following technical lemma.
  
\begin{lemma}\label{l:nonzeroedges}
  Let $f$ be any flow. For any $v\in V$, the sum of $|f(e)|$ over incoming edges
  $e=zv$ (or outgoing edges $e=vz$) is at most $|\exc_{f}(v)|+2\lambda_v$.
\end{lemma}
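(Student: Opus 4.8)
The plan is to fix a vertex $v$ and work only with the edges incident to it. Write $\mathrm{In}(v)$ for the set of edges of $E$ entering $v$ and $\mathrm{Out}(v)$ for those leaving $v$; since $e\mapsto\rev{e}$ is a bijection $\mathrm{In}(v)\to\mathrm{Out}(v)$ with $|f(e)|=|f(\rev{e})|$, the sums $\sum_{e\in\mathrm{In}(v)}|f(e)|$ and $\sum_{e\in\mathrm{Out}(v)}|f(e)|$ coincide, so it suffices to bound the incoming one (this disposes of the parenthetical ``outgoing'' case for free). All quantities are finite since $f$ takes real values and $\mathrm{In}(v)$ is finite. Next I would partition $\mathrm{In}(v)=I^+\cup I^-$ by the sign of $f$, i.e. $I^+=\{e\in\mathrm{In}(v):f(e)\ge 0\}$ and $I^-=\{e\in\mathrm{In}(v):f(e)<0\}$. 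Combining $\exc_f(v)=\sum_{e\in I^+}f(e)+\sum_{e\in I^-}f(e)$ with $\sum_{e\in\mathrm{In}(v)}|f(e)|=\sum_{e\in I^+}f(e)-\sum_{e\in I^-}f(e)$ gives the two identities
\[
  \sum_{e\in\mathrm{In}(v)}|f(e)| \;=\; \exc_f(v)+2\sum_{e\in I^-}|f(e)| \;=\; -\exc_f(v)+2\sum_{e\in I^+}f(e).
\]

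The second step bounds the two correction terms by capacities. For $e\in I^-$ we have $|f(e)|=-f(e)=f(\rev{e})\le u(\rev{e})$, and since $\{\rev{e}:e\in I^-\}\subseteq\mathrm{Out}(v)$ and $u\ge 0$, summing yields $\sum_{e\in I^-}|f(e)|\le\sum_{vw=e\in E}u(e)$. Symmetrically, for $e\in I^+$ we have $f(e)\le u(e)$, hence $\sum_{e\in I^+}f(e)\le\sum_{uv=e\in E}u(e)$. Now recall $\lambda_v=\min\big(\sum_{uv=e\in E}u(e),\,\sum_{vw=e\in E}u(e)\big)$ (if $\lambda_v=\infty$ the claimed bound is vacuous, so assume it is finite); then at least one of the two capacity sums equals $\lambda_v$. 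If $\lambda_v=\sum_{vw=e\in E}u(e)$, use the first identity together with $\exc_f(v)\le|\exc_f(v)|$ to get $\sum_{e\in\mathrm{In}(v)}|f(e)|\le\exc_f(v)+2\lambda_v\le|\exc_f(v)|+2\lambda_v$; otherwise $\lambda_v=\sum_{uv=e\in E}u(e)$ and the second identity with $-\exc_f(v)\le|\exc_f(v)|$ gives the same bound. That finishes the proof.

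I do not expect a genuine obstacle here: the only point requiring care is selecting which of the two identities to apply so that the capacity sum realizing the minimum in $\lambda_v$ — the side that is finite — is the one that appears on the right-hand side, and the case split on $\lambda_v$ does exactly that. Everything else is a single sign manipulation, and the outgoing-edge statement needs no separate argument thanks to the $e\mapsto\rev{e}$ symmetry observed at the outset.
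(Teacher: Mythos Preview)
Your proof is correct and follows essentially the same approach as the paper: derive the two identities expressing $\sum|f(e)|$ as $\pm\exc_f(v)$ plus twice a one-sided sum, bound each one-sided sum by the appropriate total capacity, and then pick whichever bound realizes the minimum in $\lambda_v$. The one pleasant difference is that you dispose of the outgoing-edge case upfront via the bijection $e\mapsto\rev{e}$ (so the two sums are literally equal), whereas the paper just remarks that the outgoing case is ``completely analogous''; your way is tidier.
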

\begin{proof}
  For each $v$, let $\lambda^\tin_v=\sum_{xv=e\in E}u(e)$, and $\lambda^\tout_v=\sum_{vy=e\in E}u(e)$.
  Recall that $\lambda_v=\min(\lambda^\tin_v,\lambda^\tout_v)$.

  Let $p_v=\sum_{zv=e\in E}|f(e)|$.
  We have
  \begin{equation*}
    \exc_{f}(v)=\sum_{\substack{e=zv\\ f(e)\neq 0}} f(e)=\
    \sum_{\substack{e=zv\\ f(e)>0}}|f(e)|-\sum_{\substack{e=zv\\ f(e)<0}}|f(e)|=p_v-2\cdot \sum_{\substack{e=zv\\ f(e)<0}}|f(e)|=-p_v+2\cdot \sum_{\substack{e=zv\\ f(e)>0}}|f(e)|,
  \end{equation*}
  and thus
  \begin{align*}
    p_v&=\exc_{f}(v)+2\cdot \sum_{\substack{e=zv\\ f(e)<0}}|f(e)|=\exc_{f}(v)+2\cdot \sum_{\substack{e=vz\\ f(e)>0}}f(e)\leq \exc_{f}(v)+2\lambda^\tout_v.\\
    p_v&=-\exc_{f}(v)+2\cdot \sum_{\substack{e=zv\\ f(e)>0}}|f(e)|\leq -\exc_{f}(v)+2\lambda^\tin_v.
  \end{align*}
  Hence for any $v\in V$ we have:
  \begin{align*}
    p_v&\leq\min(-\exc_{f}(v)+2\lambda^\tin_v,\exc_{f}(v)+2\lambda^\tout_v)\\
    &\leq \min(|\exc_{f}(v)|+2\lambda^\tin_v,|\exc_{f}(v)|+2\lambda^\tout_v)\\
    &=|\exc_{f}(v)|+2\lambda_v.
  \end{align*}
  The proof that the sum of $|f(e)|$ over edges $e=vz$
  is at most $|\exc_{f}(v)|+2\lambda_v$ is completely analogous.
\end{proof}
Since in circulations, all vertex excesses are zero, we obtain the following.
\begin{corollary}\label{c:sumabscirc}
  If $f$ is a circulation, then $\sum_{e\in E}|f(e)|\leq 2\Lambda$.
\end{corollary}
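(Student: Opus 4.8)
The plan is to derive the corollary directly from Lemma~\ref{l:nonzeroedges} by summing the per-vertex estimate over all vertices. First I would specialize Lemma~\ref{l:nonzeroedges} to the case at hand: since $f$ is a circulation, we have $\exc_f(v)=0$ for every $v\in V$, so the lemma's bound on the sum of $|f(e)|$ taken over the incoming edges $e=zv$ of a fixed vertex $v$ collapses to $|\exc_f(v)|+2\lambda_v=2\lambda_v$.

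Next I would sum this inequality over all $v\in V$. The key bookkeeping observation is that in the multigraph $G=(V,E)$ — where $E=E_0\cup\rev{E_0}$ — every edge $e\in E$ (reverse edges included) has a unique head, so when we form $\sum_{v\in V}\sum_{zv=e\in E}|f(e)|$, the term $|f(e)|$ is counted exactly once, namely for the vertex $v$ equal to the head of $e$. Hence
\begin{equation*}
  \sum_{e\in E}|f(e)|=\sum_{v\in V}\ \sum_{zv=e\in E}|f(e)|\ \le\ \sum_{v\in V}2\lambda_v\ =\ 2\Lambda,
\end{equation*}
where the final equality is just the definition $\Lambda=\sum_{v\in V}\lambda_v$.

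I do not expect any real obstacle here; the statement is essentially an immediate aggregation of Lemma~\ref{l:nonzeroedges}. The only point requiring a little care is making sure that summing the ``incoming'' version of the lemma over all vertices charges each directed edge to its head exactly once, so that no double counting inflates the bound (one could equally well use the ``outgoing'' version and charge to tails). This completes the proof of Corollary~\ref{c:sumabscirc}.
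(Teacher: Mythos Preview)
Your proposal is correct and matches the paper's approach exactly: the paper simply notes that circulations have zero excesses and leaves the summation of Lemma~\ref{l:nonzeroedges} over all vertices implicit, which is precisely what you spell out.
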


  \subsection{Scaling framework for minimum-cost circulation.}

  \begin{fact}[\cite{circcycle}]\label{f:negcycle}
    Let $f$ be a circulation. Then $c(f)$ is minimum iff $G_f$ has no negative cycles.
  \end{fact}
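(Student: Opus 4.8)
The plan is to prove both implications by relating an arbitrary circulation $f'$ in $G$ to one supported on the residual network $G_f$, and then invoking flow decomposition into cycles. The key observation is that if $f$ and $f'$ are circulations in $G$, then their difference $g := f' - f$ (again extended anti-symmetrically, $g(\rev{e}) = -g(e)$) is itself a circulation, since $\exc_g(v) = \exc_{f'}(v) - \exc_f(v) = 0$ for all $v$, and moreover $g$ is \emph{feasible} in $G_f$: whenever $g(e) > 0$ we have $f'(e) > f(e)$, so $f(e) < u(e)$ and hence $e \in E_f$, with $g(e) \le u(e) - f(e)$. Since $c$ is extended anti-symmetrically as well, $c(f') - c(f) = c(g) = \sum_{e \in E_0} g(e)\,c(e)$. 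Decomposing $g$ into directed cycles of positive flow in $G_f$ (standard flow decomposition, as already used in the proof of Theorem~\ref{t:negcyc}) writes $c(g)$ as a non-negative combination of the costs of cycles of $G_f$.

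For the direction ``$G_f$ has no negative cycle $\Rightarrow$ $c(f)$ is minimum'', I would take an arbitrary circulation $f'$, form $g = f' - f$, decompose it into cycles of $G_f$ of positive flow, and note that each such cycle has non-negative cost by hypothesis. Hence $c(f') - c(f) = c(g) \ge 0$, so $c(f) \le c(f')$, and as $f'$ was arbitrary, $f$ is a minimum-cost circulation.

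For the converse I would argue by contraposition: suppose $G_f$ contains a cycle $C$ with $\sum_{e \in C} c(e) < 0$. Every edge $e$ of $C$ lies in $E_f$, so its residual capacity $u(e) - f(e)$ is positive; let $\eps > 0$ be the minimum residual capacity along $C$ (taking $\eps = 1$ if this minimum is $+\infty$). Augmenting $f$ along $C$ by $\eps$ units — i.e.\ setting $f'(e) = f(e) + \eps$, $f'(\rev{e}) = f(\rev{e}) - \eps$ for $e \in C$ and $f' = f$ elsewhere — yields a valid flow (the bounds $-u(\rev{e}) \le f'(e) \le u(e)$ and anti-symmetry are preserved by the choice of $\eps$) that is still a circulation, since augmenting along a cycle changes no vertex excess, and whose cost is $c(f') = c(f) + \eps \cdot \sum_{e \in C} c(e) < c(f)$. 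Thus $f$ is not of minimum cost. The argument is entirely classical; the only point needing a little care is the bookkeeping around the anti-symmetric extension of $f$, $u$, and $c$ to the reverse edges $\rev{E_0}$, so that ``$g$ feasible in $G_f$'' and ``augmenting along $C$ preserves feasibility'' are phrased consistently with the conventions of this section. No genuine obstacle is expected.
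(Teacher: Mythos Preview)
Your proof is correct and is the standard textbook argument for this classical result. The paper does not actually prove this statement: it is recorded as a \emph{Fact} with a citation to the literature and used as a black box, so there is no ``paper's own proof'' to compare against. Your write-up matches the usual proof one finds in references on network flows.

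One minor remark on exposition: when you decompose $g = f' - f$ into cycles in $G_f$, you might want to note explicitly that although $f'$ and $f$ need not be integral in general, the flow decomposition into cycles still goes through for real-valued circulations (each cycle carries a positive real amount of flow), so the argument does not silently rely on integrality. This is not a gap, just a point of clarity.
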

  It follows that a circulation $f$ is minimum if there exists
  a feasible price function of the residual network $G_f$.

  \begin{definition}[\cite{BertsekasT88, GoldbergT90, Tardos85}]
    A flow $f$ is $\eps$-optimal wrt. price function $p$ if
    for any $uv=e\in E_f$, $c(e)-p(u)+p(v)\geq -\eps.$
  \end{definition}
  The above notion of $\eps$-optimality allows us, in a sense,
  to measure the optimality of a circulation: the smaller $\eps$,
  the closer to the optimum a circulation $f$ is.
  In particular, we have the following:

\begin{lemma}\label{l:circdiff}
  Let $f^*$ be a min-cost circulation. Let $f$ be an $\eps$-optimal circulation wrt $p$.
  Then:
  \begin{equation*}
    c(f)-c(f^*)\leq 2\eps\cdot \Lambda.
  \end{equation*}
\end{lemma}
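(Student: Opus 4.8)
The plan is to compare $f$ and $f^*$ by decomposing their difference into residual cycles and charging each cycle against $\eps$-optimality. Set $g := f^* - f$. Since both $f$ and $f^*$ are circulations and the excess operator is linear, $g$ has zero excess at every vertex, and by antisymmetry of flows $g(e) = -g(\rev{e})$ for all $e \in E$. The first step is to observe that the positive support $E_g^+ := \{ e \in E : g(e) > 0\}$ is contained in $E_f$: if $g(e) > 0$ then $f(e) < f^*(e) \le u(e)$, hence $e \in E_f$. A routine vertex-by-vertex count (at each $v$, pair every incoming edge $e$ with $g(e) < 0$ against the outgoing edge $\rev{e}$, which then carries $g(\rev{e}) > 0$, and use $\exc_g(v) = 0$) shows that the restriction of $g$ to $E_g^+$ is a nonnegative conservative flow. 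By standard flow decomposition (see e.g.~\cite{networkflows}) it therefore decomposes into directed cycles $C_1,\ldots,C_k \subseteq E_g^+ \subseteq E_f$ with positive multiplicities $\gamma_1,\ldots,\gamma_k$ such that $\sum_{i\,:\,e \in C_i}\gamma_i = g(e)$ for every $e \in E_g^+$.

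Next I would evaluate $c(g)$ through this decomposition. Using $c(g)=\tfrac12\sum_{e\in E}g(e)c(e)$ together with $g(e)=-g(\rev{e})$ and $c(e)=-c(\rev{e})$, one gets $c(g) = \sum_{e \in E_g^+} g(e)c(e) = \sum_{i=1}^k \gamma_i\, c(C_i)$, where $c(C_i)=\sum_{e\in C_i}c(e)$. Since the price terms $p(\cdot)$ telescope around a cycle, $c(C_i)=\sum_{e\in C_i} c_p(e) \ge -\eps\,|C_i|$, because every edge of $C_i$ lies in $E_f$ and $f$ is $\eps$-optimal wrt.\ $p$. Summing and recalling $c(g)=c(f^*)-c(f)$ gives
\[
  c(f^*) - c(f) \;=\; c(g) \;\ge\; -\eps\sum_{i=1}^k \gamma_i |C_i| \;=\; -\eps\sum_{e \in E_g^+} g(e).
\]

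It then remains to bound $\sum_{e \in E_g^+} g(e)$. Here I would use Corollary~\ref{c:sumabscirc}: since $g = f^*-f$,
\[
  \sum_{e \in E_g^+} g(e) \;\le\; \sum_{e \in E_0} |g(e)| \;\le\; \sum_{e \in E_0}\big(|f^*(e)| + |f(e)|\big),
\]
and because $\sum_{e \in E}|h(e)| = 2\sum_{e \in E_0}|h(e)|$ for any flow $h$ by antisymmetry, Corollary~\ref{c:sumabscirc} applied to the circulations $f$ and $f^*$ yields $\sum_{e \in E_0}|f(e)|,\ \sum_{e \in E_0}|f^*(e)| \le \Lambda$. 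Hence $\sum_{e \in E_g^+} g(e) \le 2\Lambda$, and combining with the previous display gives $c(f) - c(f^*) \le 2\eps\,\Lambda$, as claimed. (Note minimality of $f^*$ is not actually needed; the bound holds for any circulation $f^*$.)

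I do not expect a serious obstacle here. The only mildly delicate point is justifying that the positive part of the circulation $g$ is itself conservative, so that the textbook decomposition into directed cycles applies and all those cycles live in the residual network $G_f$; the rest is bookkeeping with the definition of $\eps$-optimality and Corollary~\ref{c:sumabscirc}.
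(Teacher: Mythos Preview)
Your argument is correct. The cycle decomposition of $g=f^*-f$ on its positive support is standard, the containment $E_g^+\subseteq E_f$ is verified exactly as you say, and the final bound $\sum_{e\in E_g^+}g(e)\le 2\Lambda$ via Corollary~\ref{c:sumabscirc} is fine (in fact $\sum_{e\in E_g^+}g(e)=\sum_{e\in E_0}|g(e)|$, so your inequality is an equality there).

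The paper takes a somewhat different and slightly more direct route: it works edge-by-edge with the reduced costs rather than going through a cycle decomposition. Writing $c_p(f)-c_p(f^*)=\tfrac12\sum_{e\in E}(f(e)-f^*(e))c_p(e)$, it observes that if $f(e)>f^*(e)$ then $\rev{e}\in E_f$ forces $c_p(e)\le\eps$, while if $f(e)<f^*(e)$ then $e\in E_f$ forces $c_p(e)\ge-\eps$; in either case the term is at most $|f(e)-f^*(e)|\,\eps\le(|f(e)|+|f^*(e)|)\,\eps$, and then Corollary~\ref{c:sumabscirc} finishes. Your approach buys a cleaner conceptual picture (the gap is a nonnegative combination of residual-cycle costs, each $\ge-\eps\cdot\text{length}$), at the price of invoking the flow-decomposition theorem; the paper's approach avoids that appeal entirely. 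Both land on the same use of Corollary~\ref{c:sumabscirc} and the same constant $2$.
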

\begin{proof}
    We have
  \begin{equation*}
    c_p(f)-c_p(f^*)=\frac{1}{2}\sum_{e\in E} (f(e)-f^*(e))c_p(e).
  \end{equation*}
  If $f(e)>f^*(e)$, then $\rev{e}\in E_{f}$ and hence $c_p(\rev{e})\geq -\eps$, and
  thus $c_p(e)\leq \eps$.
  Otherwise, if $f(e)<f^*(e)$ then $e\in E_{f}$ and $c_p(e)\geq -\eps$.
  In both cases $(f(e)-f^*(e))c_p(e)\leq |f(e)-f^*(e)|\cdot \eps\leq (|f(e)|+|f^*(e)|)\cdot \eps$.
  This ultimate inequality also holds if $f(e)=f^*(e)$.
  As a result, by Lemma~\ref{l:nonzeroedges}, we have
  \begin{align*}
    c_p(f)-c_p(f^*)&\leq \frac{1}{2}\eps \cdot \left(\sum_{e\in E}|f(e)|+\sum_{e\in E}|f^*(e)|\right)\\
    &\leq\frac{1}{2}\eps \cdot \left(\sum_{v\in V}\sum_{zv=e\in E}|f(e)|+\sum_{v\in V}\sum_{zv=e\in E}|f^*(e)|\right)\\
    &\leq\frac{1}{2}\eps \cdot \left(\sum_{v\in V}|\exc_f(v)|+|\exc_{f^*}(v)|+4\lambda_v\right).
  \end{align*}
  But for circulation all the excesses are zero, and thus we obtain $c_p(f)-c_p(f^*)\leq 2\eps\sum_{v\in V}\lambda_v=2\eps\Lambda.$

Finally, observe that the price functions do not influence the costs of circulations. 
  As a result, $c(f)=c_p(f)$ and $c(f^*)=c_p(f^*)$ and thus $c(f)-c(f^*)\leq 2\eps\Lambda$.
\end{proof}

  Recall that $C$ equals minus the most negative edge cost. Observe that a zero circulation is $C$-optimal wrt. a zero price function.
  Suppose we have a procedure $\textsc{Refine}(G,f_0,p_0,\eps)$ that, given an \emph{integral} circulation $f_0$ in $G$ that is $2\eps$-optimal wrt. $p_0$,
  computes a pair $(f',p')$ such that $f'$ is an \emph{integral} circulation in $G$, and it is $\eps$-optimal wrt.~$p'$.
  We use the general \emph{successive approximation} framework, due to Goldberg and Tarjan~\cite{GoldbergT90}, as given in Algorithm~\ref{alg:scaling}.
  Therefore, if we implement $\textsc{Refine}$ to run in $T(n,m)$ time,
  by Lemma~\ref{l:circdiff} we can compute a $\delta$-additive approximation
  of the minimum cost circulation in $G$ in $O(T(n,m)\log{(C\Lambda/\delta)})$ time.

\begin{algorithm}[]
  \begin{algorithmic}[1]
    \Procedure{MinimumCostCirculation}{$G$}
  \State $f(e):=0$ for all $e\in G$
  \State $p(v):=0$ for all $v\in V$
  \State $\eps:=C/2$
  \While{$\eps>\frac{\delta}{2\Lambda}$}\Comment{$f$ is $2\eps$-optimal wrt. $p$}
    \State $(f,p):=\Call{Refine}{G,f,p,\eps}$
    \State $\eps:=\eps/2$
  \EndWhile
    \State \Return $f$ \Comment{$f$ is circulation $\frac{\delta}{2\Lambda}$-optimal wrt. $p$, i.e., a $\delta$-additive approximation}
  \EndProcedure
\end{algorithmic}
  \caption{Scaling framework for min-cost circulation.\label{alg:scaling}}
\end{algorithm}
  \subsection{Refinement}\label{s:general}
  We now describe our implementation of $\textsc{Refine}(G,f_0,p_0,\eps)$. A significant part of the analysis
  is based on the successive shortest paths-based unit-capacity min-cost flow algorithm described in~\cite{KarczmarzS19}, and
  is also inspired by the analysis in~\cite{GoldbergHKT17};
  however, the details of the refinement procedure are different.

  The first step is to adjust $p_0$ as follows:
  \begin{itemize}
    \item for each $v\in V$ with $\lambda^\tin_v\leq \lambda^\tout_v$, decrease $p_0(v)$ by $\eps$,
    \item for each $v\in V$ with $\lambda^\tin_v>\lambda^\tout_v$, increase $p_0(v)$ by $\eps$.
  \end{itemize}
  \begin{lemma}\label{l:relabel}
    After the adjustment, $p_0$ satisfies the following:
    \begin{enumerate}
      \item $f_0$ is $4\eps$-optimal wrt. $p_0$.
      \item The total capacity of edges $e\in G_{f_0}$ with $c_{p_0}(e)<0$ is no more than $3\Lambda$.
    \end{enumerate}
  \end{lemma}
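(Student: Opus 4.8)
\textbf{Proof plan for Lemma~\ref{l:relabel}.} The plan is to verify the two claims separately, in each case unpacking the definition of $\lambda^\tin_v,\lambda^\tout_v$ and of $\eps$-optimality, and using the hypothesis that $f_0$ is $2\eps$-optimal wrt.\ the original $p_0$ (before the adjustment). Throughout I write $p_0$ for the price function \emph{after} the adjustment and $\tilde p_0$ for the original one; note that for each $v$ we have $p_0(v)=\tilde p_0(v)-\eps\cdot\sigma(v)$, where $\sigma(v)=+1$ if $\lambda^\tin_v\le\lambda^\tout_v$ and $\sigma(v)=-1$ otherwise.

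For the first claim, fix any residual edge $uv=e\in G_{f_0}$. By $2\eps$-optimality of $f_0$ wrt.\ $\tilde p_0$ we have $c(e)-\tilde p_0(u)+\tilde p_0(v)\ge -2\eps$. Substituting $\tilde p_0(w)=p_0(w)+\eps\cdot\sigma(w)$ gives
\[
  c_{p_0}(e)=c(e)-p_0(u)+p_0(v)\ge -2\eps-\eps\cdot\sigma(u)+\eps\cdot\sigma(v)\ge -2\eps-\eps-\eps=-4\eps,
\]
since $\sigma(u),\sigma(v)\in\{-1,+1\}$. So $f_0$ is $4\eps$-optimal wrt.\ the adjusted $p_0$, as claimed.

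For the second claim I would argue that an edge can become ``newly bad'' (i.e.\ acquire $c_{p_0}(e)<0$ while $c_{\tilde p_0}(e)\ge 0$) only by having its reduced cost dropped, which happens exactly when $-\sigma(u)+\sigma(v)<0$, i.e.\ $\sigma(u)=+1$ and $\sigma(v)=-1$; in that case the drop is exactly $2\eps$, and since $c_{\tilde p_0}(e)\ge -2\eps$ already (and $\ge 0$ if it was not bad before), after the shift $c_{p_0}(e)\ge -2\eps-0 = -2\eps$ or, more usefully, any residual edge with $c_{p_0}(e)<0$ had $c_{\tilde p_0}(e)<2\eps$. The cleaner route, and the one I would actually pursue, is: a residual edge $uv=e$ with $c_{p_0}(e)<0$ must have $\sigma(u)=+1$ and $\sigma(v)=-1$ \emph{unless} it was already bad wrt.\ $\tilde p_0$ — but $f_0$ being $2\eps$-optimal wrt.\ $\tilde p_0$ bounds the latter contribution, while for the former, $\sigma(u)=+1$ means $\lambda^\tin_u\le\lambda^\tout_u$ so $\lambda_u=\lambda^\tin_u$, and we only need to charge $e$'s capacity to something of total size $O(\Lambda)$. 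Here one must be careful: $e$ might be a reverse edge carrying residual capacity $-f_0(e)=f_0(\rev e)$, not the original $u(e)$. The total capacity of residual edges out of a vertex splits into original forward capacities $\sum_{vy=e\in E}u(e)=\lambda^\tout_v$ contributions and residual-of-reverse contributions $\sum |f_0(\cdot)|$, the latter bounded by $2\lambda_v$ via Corollary~\ref{c:sumabscirc} / Lemma~\ref{l:nonzeroedges} since $f_0$ is a circulation. Summing the relevant per-vertex bounds over the $v$'s flagged by $\sigma$ and comparing with $\Lambda=\sum_v\lambda_v$ should yield the constant $3$: roughly one $\Lambda$ from the original $2\eps$-optimality deficit edges, and two $\Lambda$'s from the $|f_0|$-mass of the circulation.

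\textbf{Main obstacle.} The delicate point is the bookkeeping in claim~2: correctly attributing each negative-reduced-cost residual edge either to the pre-existing slack (controlled by the $2\eps$-optimality of $f_0$) or to the adjustment, and in the latter case realizing that the adjustment only turns $e$ bad when $\sigma(u)=+1,\sigma(v)=-1$, i.e.\ when $\lambda_u=\lambda^\tin_u$ on the tail side and $\lambda_v=\lambda^\tout_v$ on the head side, so that the edge's capacity can be charged against $\lambda_u$ or $\lambda_v$ (using Lemma~\ref{l:nonzeroedges} to handle residual capacities of reverse edges) without double-counting. Getting the constant exactly $3$ rather than some larger $O(1)$ requires choosing the charging scheme carefully — but since only an $O(1)$ bound is actually needed downstream, even a looser argument suffices and the constant can be tuned afterward.
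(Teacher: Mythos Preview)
Your argument for claim~(1) is correct, though your displayed formula has the signs on $\sigma(u)$ and $\sigma(v)$ reversed: from $\tilde p_0(w)=p_0(w)+\eps\sigma(w)$ one gets $c_{p_0}(e)=c_{\tilde p_0}(e)+\eps\sigma(u)-\eps\sigma(v)$, not $-\eps\sigma(u)+\eps\sigma(v)$. This is harmless for (1) since only $|\sigma|=1$ is used.

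For claim~(2) the sign error becomes a real problem, and there is a separate logical gap. First, the reduced cost drops exactly when $\sigma(u)-\sigma(v)<0$, i.e.\ $\sigma(u)=-1$ and $\sigma(v)=+1$; this gives $\lambda_u=\lambda^\tout_u$ and $\lambda_v=\lambda^\tin_v$, the opposite of what you wrote. With your identification $\lambda_u=\lambda^\tin_u$ you cannot charge the capacity $u(e)$ of an outgoing forward edge against $\lambda_u$. Second, and more seriously, your split into ``already bad'' versus ``newly bad'' leaves the former case unjustified: $2\eps$-optimality of $f_0$ wrt.\ $\tilde p_0$ only says $c_{\tilde p_0}(e)\geq -2\eps$; it gives no bound at all on the total capacity of edges with $c_{\tilde p_0}(e)\in[-2\eps,0)$. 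Your ``roughly one $\Lambda$ from the original $2\eps$-optimality deficit edges'' has no source.

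The paper sidesteps both issues with one observation: if $c_{p_0}(e)<0$ then, since $c_{\tilde p_0}(e)\geq -2\eps$, the shift $\eps(\sigma(u)-\sigma(v))$ was \emph{not} $+2\eps$; equivalently $\sigma(u)=-1$ \emph{or} $\sigma(v)=+1$ (an OR, not an AND). This disjunction covers \emph{all} bad residual edges at once --- no ``already bad'' case is needed. The paper then splits by $E_0$ versus $\rev{E_0}$: for $e\in\rev{E_0}$ the residual capacity is $f_0(\rev e)$, and summing over all such edges gives at most $2\Lambda$ by Corollary~\ref{c:sumabscirc} regardless of sign; for a forward edge $e=uv\in E_0$ with $c_{p_0}(e)<0$, charge $u(e)$ to $\lambda_u=\lambda^\tout_u$ if $\sigma(u)=-1$, else to $\lambda_v=\lambda^\tin_v$ (where $\sigma(v)=+1$). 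Since each vertex has a single $\sigma$-value it is charged in only one role, and the total charge is at most $\sum_w\lambda_w=\Lambda$. Summing gives $3\Lambda$.
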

  \begin{proof}
    First, for each $uv=e\in G_{f_0}$, $c_{p_0}(e)$ can decrease by at most $2\eps$,
    so item (1) follows by $2\eps$-optimality of the initial price function $p_0$.
  
    Now consider item (2). If $e\in E_0$, then the capacity of $\rev{e}$ in $G_{f_0}$ is $u(\rev{e})-f_0(\rev{e})=-f_0(\rev{e})=f_0(e)=|f_0(e)|$.
    As a result, the total capacity of the edges from $\rev{E_0}\cap E_{f_0}$ is
    no more than $\sum_{e\in E_0}|f_0(e)|\leq \sum_{e\in E}|f_0(e)|\leq 2\Lambda$, by Corollary~\ref{c:sumabscirc}.
    
    Now suppose $e\in E_0$. An edge $uv=e\in G_{f_0}$ can only satisfy $c_{p_0}(e)<0$
    after the adjustment if either $p_0(u)$ was increased or $p_0(v)$ was decreased,
    since otherwise $c_{p_0}(e)$ grows by $2\eps$.

    The capacity of $e$ in $G_{f_0}$ is at most $u(e)$.
    If $p_0(u)$ was increased, then, $\lambda_u=\lambda^\tout_u$, so we can
    charge the capacity of $e$ to $\lambda_u$.
    If $p_0(v)$ was decreased, then $\lambda_v=\lambda^\tin_v$,
    so we can charge the capacity of $e$ to $\lambda_v$.
    Observe that this way, the total charge is $\sum_{v\in V}\lambda_v=\Lambda$,
    so the total capacity of edges $E_0\cap E_{f_0}$ in $G_{f_0}$ is $\leq \Lambda$.
    Finally, the total capacity of $(E_0\cup \rev{E_0})\cap E_{f_0}=E_{f_0}$ is at most $3\Lambda$.
  \end{proof}
  The second step is setting $c(e):=c(e)-p_0(u)+p_0(v)$ for all $e\in E$ for further convenience.
  After the refinement completes, i.e., when we have a circulation $f'$ that is $\eps$-optimal wrt. some $p'$,
  (assuming costs reduced with $p_0$, as above), we will return $(f',p'+p_0)$ instead.
  Therefore, from now on we assume that $c(e)\geq -4\eps$ for all $e\in E_{f_0}$.

  Let $f_1$ be the flow initially obtained from $f_0$ by sending a unit of
  flow through each edge $e\in E_{f_0}$ such that $c(e)<0$.
  Note that $f_1$ is $\eps$-optimal (in fact $0$-optimal), but
  it need not be a circulation.
  Note that by Lemma~\ref{l:relabel}, the total excess $\totexc_{f_1}$ after this step is integral and
  bounded by $3\Lambda$.

  We denote by $f$ the \emph{current flow}, initialized with $f_1$, which we will gradually
  transform into a circulation.
  Recall that $X$ is the set of excess vertices of $G$ and
  $D$ is the set of deficit vertices (wrt. to the current flow $f$).
  A well-known method of finding the min-cost circulation exactly~\cite{succsp1, succsp2, succsp3}
  is to repeatedly send flow through shortest $X\to D$ paths in $G_f$.
  The sets $X$ and $D$ only shrink in time.
  However, doing this on $G_f$ exactly would be too costly.
  Instead,
  we will gradually convert $f$ into a circulation,
  by sending flow from vertices of $X$ to vertices of $D$
  but only using approximate (in a sense) shortest paths.

  For any $e\in E$, let us set
  \begin{equation*}
    c'(e)=\begin{cases}c(e) & \text{ if }e\in E_0,\\
      c(e)+\eps & \text{ if }e\in \rev{E_0}.
    \end{cases}
  \end{equation*}
  Note that for any $e\in E$ we have $c'(e)+c'(\rev{e})=c(e)+c(\rev{e})+\eps=\eps$.

  We define $G_f'$ to be the ``approximate'' graph $G_f$ with the costs given by $c'$ instead of $c$.
  Observe that $c(e)\geq c'(e)-\eps$, so if $p$ is a feasible price function of $G_f'$, then $f$ is $\eps$-optimal
  wrt. $p$ (in $G$).

  For convenience, let us also define an extended version $G_f''$ of $G_f'$ to be
  $G_f'$ with two additional vertices~$\source$ (a super-excess-vertex) and $\sink$ (a super-deficit-vertex) added.
  
  \newcommand{\aux}{{\mathrm{aux}}}
  We also add to  $G_f''$ the following \emph{auxiliary} edges $E_\aux$:
  \begin{enumerate}
    \item an edge $\source x$ with $c'(\source x)=0$ and capacity $\exc_{f_1}(x)$ for all $x\in X$.
    \item an edge $d \sink$ with $c'(d \sink)=$ and capacity $-\exc_{f_1}(d)$ for all $d\in D$.
  \end{enumerate}
  Clearly, $\dist_{G_f''}(\source,\sink)=\dist_{G_f'}(X,D)$.
  
  Set $M$ to be larger than the cost of any simple path in $G_f'$, e.g., $M=n\cdot (C+2\max_{v\in V}|p_0(v)|+\eps)$.
  Let $A$ be a graph on $V$ with an edge $\source v$ with $c'(\source v)=M$ for all $v\in V$.

  Now we describe the \emph{main loop} of the refinement procedure.
  Start with $f=f_1$ and zero price function $p$ on $G_f''$. Maintain an invariant the $p$ is a feasible price function
  on $G_f''$.
  While $X\neq \emptyset$:
  \begin{enumerate}[label={(\arabic*)}]
    \item Compute distances $d(v)=\dist_{G_f''\cup A}(s,v)$ from $s$ in $G_f''\cup A$ with the help of $p$.
    \item Set $p(v):=-d(v)$ for all $v\in V(G_f'')$.
    \item Repeatedly send unit flow through $X\to D$ paths $P\subseteq G_f$
      consisting solely of \emph{nearly tight} edges $uv=e$ satisfying $c'(e)-p(u)+p(v)<\eps/2$,
      until $\totexc_f=0$ or there are no such paths left.
  \end{enumerate}

  Whenever $X\neq\emptyset$, let us denote by $\Delta=\dist_{G_f''}(s,t)=\dist_{G_f}(X,D)$.

  We now state some crucial properties of the refinement procedure that hold regardless
  of how the steps (1) and (3) are implemented.
  The proofs of Lemmas~\ref{l:finite-delta} and~\ref{l:mainbound} are deferred to Section~\ref{s:mainbound}.
  \begin{restatable}{lemma}{lfinitedelta}\label{l:finite-delta}
    If $X\neq\emptyset$, then there exists an $X\to D$ path in $G_f$ (or equiv., in $G_f'$), i.e., $\Delta<\infty$.
  \end{restatable}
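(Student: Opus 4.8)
The plan is to argue that a total excess still present in $f$ forces an augmenting $X\to D$ path in the residual network $G_f$, purely from flow-conservation bookkeeping, without reference to how steps (1)--(3) are implemented. First I would recall that the current flow $f$ always satisfies $0\le f(e)\le u(e)$ for $e\in E_0$ (steps (3) only send unit flow along residual edges, which preserves this), and that by construction $\sum_{v\in V}\exc_f(v)=0$ since $\exc_f$ is derived from an antisymmetric flow. Hence if $X\ne\emptyset$, i.e.\ some vertex has positive excess, then $D\ne\emptyset$ as well, because the excesses sum to zero.

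Next I would consider the set $S\subseteq V$ of vertices reachable from $X$ in $G_f$ (equivalently in $G_f'$, since $G_f$ and $G_f'$ have the same edge set, differing only in costs). Suppose for contradiction that $S\cap D=\emptyset$, so $\Delta=\dist_{G_f}(X,D)=\infty$. By definition of $S$, no residual edge leaves $S$: for every $uv=e\in E$ with $u\in S$, $v\notin S$ we have $e\notin E_f$, i.e.\ $f(e)=u(e)$, which (using antisymmetry) forces $f(\rev e)=-u(e)\le 0$, so the ``net flow out of $S$'' $\sum_{u\in S,\,v\notin S,\,uv=e\in E_0} f(e)-\sum_{u\in S,\,v\notin S,\,vu=e\in E_0}f(e)$ is at least the sum of $u(e)$ over such saturated forward edges minus $0$, hence $\ge 0$; more carefully, every edge crossing the cut $(S,V\setminus S)$ contributes nonnegatively to the flow leaving $S$. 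Summing the excess over $S$ gives $\sum_{v\in S}\exc_f(v) = -(\text{net flow leaving }S)\le 0$. But $X\subseteq S$ and every $x\in X$ has $\exc_f(x)>0$, while no vertex of $S$ has negative excess (as $S\cap D=\emptyset$), so $\sum_{v\in S}\exc_f(v)>0$, a contradiction. Therefore $S$ must contain a deficit vertex, i.e.\ there is an $X\to D$ path in $G_f$, so $\Delta<\infty$.

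The main obstacle I anticipate is getting the sign conventions exactly right in the cut argument, because the paper's $\exc_f(v)=\sum_{uv=e\in E}f(e)$ is a signed sum over antisymmetric flow values and the residual capacities are $u(e)-f(e)$ with $u(\rev e)=0$; one must be careful that ``no residual edge leaves $S$'' really translates into ``net flow across the cut out of $S$ is $\ge 0$'' rather than the reverse. A clean way to sidestep subtleties is to phrase it as a standard max-flow/residual-reachability fact applied to the auxiliary network $G_f''$ with source $\source$ and sink $\sink$: if $\sink$ is unreachable from $\source$ in the residual network, the set $S$ of vertices reachable from $\source$ is an $\source$--$\sink$ cut all of whose forward edges are saturated, and summing the (auxiliary) conservation constraints over $S$ shows all the $\source x$ edges must be saturated too while none of the $d\sink$ edges carries flow out of $S$; but then all of $\totexc_{f_1}$ has been pushed out of $X$ yet none absorbed at $D$, contradicting antisymmetry-induced global conservation. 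Once the bookkeeping is pinned down this is routine, and $\Delta=\dist_{G_f}(X,D)<\infty$ follows immediately.
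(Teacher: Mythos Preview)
Your argument is correct, and it is the classical residual-cut argument: if $S$ is the set of vertices reachable from $X$ in $G_f$ and $S\cap D=\emptyset$, then every edge $e\in E$ leaving $S$ is saturated, i.e.\ $f(e)=u(e)\ge 0$, and summing the antisymmetric flow over $S$ gives $\sum_{v\in S}\exc_f(v)=-\sum_{e\text{ leaves }S}f(e)\le 0$, contradicting $X\subseteq S$ and $S\cap D=\emptyset$. Your worry about sign conventions is unfounded; the computation goes through cleanly once you note that the sum in the paper's definition of $\exc_f$ ranges over \emph{all} of $E=E_0\cup\rev{E_0}$ and that $u(e)\ge 0$ for every $e\in E$.

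The paper takes a slightly different route: instead of working in $G_f$, it works in the auxiliary graph $G_f^+=\{e:f(e)<f_0(e)\}$ (which records where the current flow is below the starting circulation $f_0$), applies the cut bound of Lemma~\ref{l:cutbound} to the reachable set $Z$ in $G_f^+$, and then uses $G_f^+\subseteq G_f$. Your proof is more elementary and self-contained---it makes no reference to $f_0$ at all. The paper's detour through $G_f^+$ and Lemma~\ref{l:cutbound} is not needed here per se; it is chosen because both ingredients are required anyway for the harder Lemma~\ref{l:mainbound}, so reusing them keeps the presentation uniform.
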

  Equivalently, if $X\neq\emptyset$, an $s\to t$ path exists in $G_f''$.

  \begin{lemma}\label{l:p-feasible}
    At all times, $p$ remains a feasible price function of $G_f''$.
  \end{lemma}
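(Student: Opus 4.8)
Lemma \ref{l:p-feasible} asserts that the price function $p$ maintained by the refinement loop stays feasible on $G_f''$ at all times. The plan is to prove this by induction on the steps of the main loop, tracking the two ways $p$ and $G_f''$ change: step (2), which overwrites $p$ with $-d(\cdot)$, and step (3), which pushes flow along nearly-tight paths and thereby alters the residual graph $G_f''$ (and hence $E(G_f'')$).

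First I would establish the base case: before the first iteration $f=f_1$ and $p\equiv 0$. Since $f_1$ is $0$-optimal in $G$ with respect to the (already $p_0$-reduced) costs $c$, we have $c(e)\geq 0$ for every $e\in E_{f_1}$, hence $c'(e)\geq c(e)\geq 0$ for all residual edges of $G$ (recall $c'(e)\geq c(e)$ since $c'$ only adds $\eps\geq 0$ on reverse edges). For the auxiliary edges incident to $s$ and $t$ the cost $c'$ is $0$ by construction, so $c'_p(e)=c'(e)\geq 0$ there as well. Thus $p\equiv 0$ is feasible for $G_{f_1}''$. For the inductive step on step (2): after computing $d(v)=\dist_{G_f''\cup A}(s,v)$ using the previously feasible $p$ (valid by the inductive hypothesis, with the extra $A$-edges of cost $M$ being trivially consistent since $M$ is huge and $d(v)\le M$ always), the standard fact is that $-d(\cdot)$ is a feasible price function of any graph in which the distances $d$ were computed — here $G_f''\cup A\supseteq G_f''$ — because for every edge $uv$ one has $d(v)\le d(u)+c'(uv)$, i.e. $c'(uv)-(-d(u))+(-d(v))=c'(uv)-d(v)+d(u)\ge 0$. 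This is exactly Fact~\ref{f:distanceto} in reduced-cost form. So after step (2), $p$ is feasible for $G_f''$.

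The substantive part is showing feasibility is preserved through step (3), where each unit of flow is sent along an $X\to D$ path $P$ consisting solely of \emph{nearly tight} edges $uv=e$ with $c'_p(e)=c'(e)-p(u)+p(v)<\eps/2$. Sending a unit through $e$ can create the reverse residual edge $\rev e$ (if it was not present before). I would argue that $\rev e$ satisfies $c'_p(\rev e)\ge 0$: indeed $c'(e)+c'(\rev e)=\eps$, so $c'_p(\rev e)=c'(\rev e)-p(v)+p(u)=\eps-c'(e)+(p(u)-p(v))\cdot(-1)$... more carefully, $c'_p(e)+c'_p(\rev e)=c'(e)+c'(\rev e)=\eps$, hence $c'_p(\rev e)=\eps-c'_p(e)>\eps-\eps/2=\eps/2>0$. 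So every newly-created residual edge has strictly positive reduced cost, and edges that remain or disappear keep feasibility for free. One also needs to handle the auxiliary edges $\source x$, $d\sink$ whose capacities shrink as excesses are reduced — but shrinking a capacity only removes an edge from $E_{f}''$ when it saturates, which cannot violate feasibility. The main obstacle I anticipate is a bookkeeping subtlety: along the path $P$, an edge could be traversed whose reverse was already residual, so that "sending flow" doesn't create a new edge but merely changes residual capacities; and we must be sure that the nearly-tight condition $c'_p(e)<\eps/2$ applies to the edges actually used, together with the invariant that $p$ was feasible \emph{before} this augmentation (so all of $G_f''$'s edges, in particular those of $P$, already have $c'_p\ge 0$, giving $c'_p(e)\in[0,\eps/2)$ on path edges). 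Combining $c'_p(e)\in[0,\eps/2)$ with $c'_p(e)+c'_p(\rev e)=\eps$ yields $c'_p(\rev e)\in(\eps/2,\eps]\subseteq[0,\infty)$ for every edge used, which closes the induction. I would finish by remarking that since step (3) performs only such augmentations until $\totexc_f=0$ or no nearly-tight path remains, $p$ is still feasible when the loop returns to step (1), completing the inductive argument.
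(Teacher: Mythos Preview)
Your proof is correct and follows essentially the same approach as the paper: induction over the loop, with step~(2) handled via the ``minus-distance is a feasible price function'' fact and step~(3) handled via the identity $c'_p(e)+c'_p(\rev e)=\eps$ together with near-tightness $c'_p(e)<\eps/2$. You spell out the base case and the auxiliary-edge bookkeeping more explicitly than the paper does, but the argument is the same.
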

  \begin{proof}
    This is clearly satisfied at the very beginning. If $p$ is changed to $-\dist_{G_f''\cup A}(s,\cdot)$ in step (2),
    then it is a feasible price function of $G_f''$ since $G_f''$ has no negative cycles
    (by the feasibility of $p$ before step~(2)), and minus distance from $s$ function is a feasible price function.

    Now let us argue that $p$ remains a feasible price function after sending flow through
    some path $P$ in $G_f$ consisting of nearly tight edges. Let $f_2$ be the flow $f$ after augmenting along $P$.
    We clearly have $c'_p(e)\geq 0$ for all edges in $E(G_f'')\cap E(G_{f_2}'')$.
    Each edge $uv=e\in E(G_{f_2}'')\setminus E(G_f')$ is such that $e\notin E_\aux$ and
    $\rev{e}=vu$ belongs to the path~$P$.
    As a result, we have
    \begin{equation*}
      c'(e)-p(u)+p(v)=(-c'(\rev{e})+\eps)-p(u)+p(v)=\eps-(c'(\rev{e})-p(v)+p(u))>\eps-\eps/2=\eps/2,
    \end{equation*}
    so indeed $p$ is a feasible price function after augmenting the flow.
  \end{proof}

  \begin{lemma}\label{l:delta-growth}
    $\Delta$ increases by at least $\eps/2$ as a result of completing step (3) of the main loop.
  \end{lemma}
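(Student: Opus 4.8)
The plan is to run the classical successive-shortest-paths (Hungarian-method) potential argument, adapted to the ``nearly tight'' relaxation used in step~(3). Fix an iteration of the main loop that begins with $X\neq\emptyset$, let $f$ be the flow at its start --- equivalently, the flow right before step~(3), since steps~(1) and~(2) do not change the flow --- write $\Delta=\dist_{G_f''}(s,t)$ for the quantity to be increased, and let $p$ be the price function installed in step~(2), i.e.\ $p(v)=-d(v)$ where $d(v)=\dist_{G_f''\cup A}(s,v)$. I would first record two facts. \emph{(i)} Because $M$ is strictly larger than the cost of any simple path in $G_f'$, no edge of $A$ can shortcut a genuine $s\to t$ walk in $G_f''$; together with Lemma~\ref{l:finite-delta} (which gives that $t$ is reachable from $s$ in $G_f''$, as $X\neq\emptyset$) this yields $d(s)=0$ and $d(t)=\dist_{G_f''}(s,t)=\Delta$, hence $p(s)=0$ and $p(t)=-\Delta$. \emph{(ii)} By Lemma~\ref{l:p-feasible}, $p$ is a feasible price function of the current residual network throughout step~(3); in particular, once step~(3) finishes with the new flow $f'$, $p$ is a feasible price function of $G_{f'}''$, so $c'_p(e)\ge 0$ for every edge $e$ of $G_{f'}''$. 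If step~(3) ends with $\totexc_{f'}=0$ then afterwards $X=\emptyset$ and $\Delta=\infty$, so there is nothing to prove; I may therefore assume $X\neq\emptyset$ at the end of step~(3), and then Lemma~\ref{l:finite-delta} again shows $\Delta':=\dist_{G_{f'}''}(s,t)<\infty$.

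The heart of the proof is the claim that \emph{every $s\to t$ path $P$ in $G_{f'}''$ satisfies $\sum_{e\in P}c'_p(e)\ge\eps/2$}. To prove it I would take $P$ simple; since $s$ has no incoming and $t$ no outgoing edges, $P$ begins with an auxiliary edge $sx$ (so $x$ is still an excess vertex of $f'$), ends with an auxiliary edge $dt$ (so $d$ is still a deficit vertex of $f'$), and its interior is a simple $x\to d$ path $P_0$ in $G_{f'}$. By fact~(ii) all edges of $P$ have $c'_p\ge 0$, so $\sum_{e\in P}c'_p(e)\ge 0$; it remains to locate one edge with $c'_p\ge\eps/2$. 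Suppose no edge of $P$ has $c'_p\ge\eps/2$. Every \emph{newly created} residual edge $e$ of $G_{f'}$ is the reverse of an edge $\rev e$ that carried flow pushed during step~(3) along a nearly tight path, so $c'_p(\rev e)<\eps/2$, and the identity $c'(e)+c'(\rev e)=\eps$ gives $c'_p(e)=\eps-c'_p(\rev e)>\eps/2$; hence under our assumption $P_0$ contains no new edge, so $P_0\subseteq G_{f'}$ consists entirely of nearly tight edges and connects a current excess vertex to a current deficit vertex. This is exactly an augmenting path of the type step~(3) keeps using, contradicting its termination condition (``there are no such paths left''). The claim follows.

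To conclude, I would shift costs by the potential $p$: for any $s\to t$ path $P$ in $G_{f'}''$ one has $\sum_{e\in P}c'(e)=\sum_{e\in P}c'_p(e)+p(s)-p(t)=\sum_{e\in P}c'_p(e)+\Delta$, so minimizing over $P$ and invoking the claim yields $\Delta'=\dist_{G_{f'}''}(s,t)\ge\Delta+\eps/2$, which is the assertion. The step I expect to be the main obstacle is fact~(i): carefully verifying that the auxiliary graph $A$ and the size of $M$ force $d(t)$ to equal $\dist_{G_f''}(s,t)$ exactly (rather than some smaller value reached through an $A$-shortcut), together with the accompanying bookkeeping in fact~(ii) that the only residual edges created during step~(3) are reverses of edges pushed along nearly tight augmenting paths. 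Once these are settled, everything else is the routine potential calculation above.
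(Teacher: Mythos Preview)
Your proof is correct and follows essentially the same route as the paper's: both telescope the reduced costs $c'_p$ along a shortest $s\to t$ path in $G_{f'}''$ using the potential $p=-d(\cdot)$ installed in step~(2), invoke feasibility of $p$ (Lemma~\ref{l:p-feasible}) to make every term nonnegative, and then observe that at least one edge on the path must have $c'_p\ge\eps/2$, since otherwise the interior $x\to d$ subpath would be a nearly-tight augmenting path still available in the current residual graph, contradicting the termination criterion of step~(3). Your intermediate observation that newly created residual edges satisfy $c'_p>\eps/2$ is valid (it is exactly the computation inside the proof of Lemma~\ref{l:p-feasible}) but not strictly required for the contradiction, because step~(3) terminates with respect to the \emph{current} residual $G_{f'}$; the paper's proof accordingly just writes ``otherwise more flow could be sent through~$P$.''
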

  \begin{proof}
    Note that immediately before some fixed step (3), $G_f''$ has no negative cycles by~Lemma~\ref{l:p-feasible},
    so $p(s)$ is reset to $0$ and $p(t)$ is reset to $-\Delta$.
    Suppose $X\neq\emptyset$ immediately after that step (3) and let $f_2$ be the flow at that time.
    Let $\Delta_2=\dist_{G_{f_2}'}(X,D)=\dist_{G_{f_2}''}(s,t)$ and consider 
    some path $s\to t=P=e_1\ldots e_k$ in $G_{f_2}''$ with minimum cost $\Delta_2$.
    We have:
    \begin{equation*}
      \Delta_2=\dist_{G_{f_2}''}(s,t)=\sum_{i=1}^k c'(e_i)=p(s)-p(t)+\sum_{i=1}^k c'_p(e_i)=\Delta+\sum_{i=1}^k c'_p(e_i).
    \end{equation*}
    Since $p$ is a feasible price function of $G_{f_2}''$, every term $c_p'(e_i)$ in the sum above is non-negative.
    However, the step (3) has completed, so for at least one edge $e_j$ we have $c_p'(e_j)\geq \eps/2$,
    as otherwise more flow could be sent through $P$ in that step (3).
    As a result, $\sum_{i=1}^k c'_p(e_i)\geq \eps/2$ and thus $\Delta_2\geq \Delta+\eps/2$, as desired.
  \end{proof}

  \begin{restatable}{lemma}{lmainbound}\label{l:mainbound}
    At all times, we have $\totexc_f\cdot \Delta\leq 48\eps \cdot \Lambda$.
  \end{restatable}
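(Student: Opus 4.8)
The plan is to bound $\Delta$ by comparing the current flow $f$ with the circulation $f_0$ with which $\textsc{Refine}$ was called. Recall that after the relabeling of step~1 and the cost reduction of step~2 we may assume $c(e)\ge-4\eps$ for every $e\in E_{f_0}$, and that $f_0$ is a circulation, so by Corollary~\ref{c:sumabscirc} it satisfies $\sum_{e\in E}|f_0(e)|\le 2\Lambda$. If $X=\emptyset$ then $\totexc_f=0$ and the claim is vacuous, so assume $X\ne\emptyset$. I would then set $g:=f-f_0$ and decompose its nonnegative part $g^+(e):=\max(g(e),0)$ into a flow along cycles plus a family $\mathcal{P}$ of paths. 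Because $f_0$ is a circulation, $\exc_g(v)=\exc_f(v)$ for every $v$, so every path in $\mathcal{P}$ runs from a deficit vertex of $f$ to an excess vertex of $f$, and the paths of $\mathcal{P}$ carry a total of $\totexc_f$ units of flow.

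The crux is to turn each path of $\mathcal{P}$ into an $X\to D$ walk in the residual graph. Fix $P=v_0\to\cdots\to v_\ell\in\mathcal{P}$ with $v_0\in D$ and $v_\ell\in X$. Each edge $e=v_{i-1}v_i$ of $P$ has $g(e)>0$, hence $f_0(e)<f(e)\le u(e)$, which gives both $e\in E_{f_0}$ and $f(\rev{e})=-f(e)<-f_0(e)=f_0(\rev{e})\le u(\rev{e})$, i.e.\ $\rev{e}\in E_f$. Thus the reversal $\rev{P}=v_\ell\to\cdots\to v_0$ is a walk in $G_f$ from $X$ to $D$, and since $p$ is a feasible price function of $G_f''$ (Lemma~\ref{l:p-feasible}) the graph $G_f'$ has no negative cycle, so
\[
  \Delta=\dist_{G_f'}(X,D)\le c'(\rev{P})=\sum_{e\in P}c'(\rev{e}).
\]
Now for each edge $e$ of $P$ we have $e\in E_{f_0}$, hence $c(e)\ge-4\eps$, and therefore $c'(\rev{e})\le c(\rev{e})+\eps=-c(e)+\eps\le 5\eps$. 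This yields $\Delta\le 5\eps\,|P|$ for every $P\in\mathcal{P}$.

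Finally I would multiply by the flow on each path and sum: $\totexc_f\cdot\Delta=\Delta\sum_{P\in\mathcal{P}}\operatorname{flow}(P)\le 5\eps\sum_{P\in\mathcal{P}}\operatorname{flow}(P)\,|P|\le 5\eps\sum_{e\in E}g^+(e)=5\eps\sum_{e\in E_0}|g(e)|$. Bounding $\sum_{e\in E_0}|g(e)|\le\tfrac12\bigl(\sum_{e\in E}|f(e)|+\sum_{e\in E}|f_0(e)|\bigr)$, using the circulation bound above for $f_0$, and using Lemma~\ref{l:nonzeroedges} summed over all vertices to get $\sum_{e\in E}|f(e)|\le 2\totexc_f+2\Lambda\le 8\Lambda$ (here $\totexc_f\le\totexc_{f_1}\le 3\Lambda$ by Lemma~\ref{l:relabel}, since $\totexc$ only decreases during the main loop), I would arrive at $\totexc_f\cdot\Delta\le\tfrac{5}{2}\eps(8\Lambda+2\Lambda)=25\eps\Lambda\le 48\eps\Lambda$, which even leaves slack in the stated constant.

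The delicate point is the residual-capacity bookkeeping in the middle paragraph: one must verify carefully that the reversed decomposition paths genuinely live in $G_f$ and, just as importantly, that their endpoints are the \emph{current} excess and deficit vertices (so that the estimate really bounds $\Delta=\dist_{G_f'}(X,D)$), and one must confirm that $\totexc_f\le 3\Lambda$ holds at every moment of the main loop and not only right after $f_1$ is formed. Everything else --- the per-edge cost estimate and the concluding summation --- is routine. Note that this argument also re-proves Lemma~\ref{l:finite-delta}, since $\mathcal{P}\ne\emptyset$ whenever $X\ne\emptyset$.
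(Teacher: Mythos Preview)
Your argument is correct and in fact tighter than the paper's (you obtain $25\eps\Lambda$ rather than $48\eps\Lambda$). The two proofs proceed by genuinely different routes. The paper defines the auxiliary residual graph $G_f^+=\{e:f(e)<f_0(e)\}$ with capacities $u^+(e)=f_0(e)-f(e)$, proves a cut inequality $\sum_{v\in S}\exc_f(v)\le u^+(S,V\setminus S)$, observes that every edge of $G_f^+$ spans at most $5\eps$ in the distance labeling $\delta=-p$, and then pigeonholes over $\Theta(\Delta/\eps)$ level slabs $S(j\eps)$ to find one whose $G_f^+$-capacity is $O(\eps\Lambda/\Delta)$; the bound follows from the cut inequality applied to a carefully chosen set $A\supseteq X$. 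Your approach is the primal counterpart: you decompose the flow difference $g=f-f_0$ directly into $D\to X$ paths, reverse each to an $X\to D$ walk in $G_f$, and use the same per-edge estimate $c'(\rev e)\le 5\eps$ to bound $\Delta$ by $5\eps$ times the path length; summing against the carried flow gives the result with no layering or pigeonhole. Your route is shorter, delivers a smaller constant, and (as you note) subsumes Lemma~\ref{l:finite-delta}. The paper's cut-and-layer argument, on the other hand, is closer to the style of~\cite{GoldbergHKT17} and makes the ``thin layer'' structure of $G_f^+$ explicit, which can be useful if one later wants to reason about blocking cuts rather than individual augmenting paths. The residual-capacity checks you flagged as delicate are exactly right and go through: $g(e)>0$ gives $f_0(e)<f(e)\le u(e)$ so $e\in E_{f_0}$ (hence $c(e)\ge -4\eps$), and $f(\rev e)<f_0(\rev e)\le u(\rev e)$ so $\rev e\in E_f$; the endpoints of the decomposition paths are the current excess/deficit vertices because $\exc_g=\exc_f$; and $\totexc_f\le\totexc_{f_1}\le 3\Lambda$ holds throughout the main loop since sending flow along $X\to D$ paths or along cycles never increases total excess.
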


\begin{lemma}\label{l:main-loop-count}
  The main loop is performed $O(\sqrt{\Lambda})$ times.
\end{lemma}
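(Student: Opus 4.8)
The goal is to bound the number $k$ of iterations of the main loop by $O(\sqrt{\Lambda})$. The plan is a Hopcroft--Karp-type argument that plays two monotone quantities against each other across iterations: the total excess $\totexc_f$, which only decreases during the refinement, and the distance $\Delta=\dist_{G_f''}(s,t)$, which only increases. The inputs I would rely on are Lemma~\ref{l:finite-delta} (so $\Delta<\infty$ whenever $X\neq\emptyset$), Lemma~\ref{l:p-feasible} (feasibility of $p$), Lemma~\ref{l:delta-growth} (each completed step~(3) raises $\Delta$ by at least $\eps/2$) and Lemma~\ref{l:mainbound} (the product bound $\totexc_f\cdot\Delta\le 48\eps\Lambda$, valid at all times). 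Write $\totexc^{(i)}$ and $\Delta^{(i)}$ for the values of $\totexc_f$ and $\Delta$ at the start of the $i$-th iteration, and recall that the loop halts precisely when $\totexc_f=0$.

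First I would argue that every iteration pushes at least one unit of flow, hence decreases $\totexc_f$ by at least one. Whenever $X\neq\emptyset$ we have $\Delta<\infty$ (Lemma~\ref{l:finite-delta}), so after step~(2) sets $p:=-d$ with $d(v)=\dist_{G_f''\cup A}(s,v)$, the $s\to t$ path in a shortest-path tree consists of edges $uv=e$ with $c'(e)-p(u)+p(v)=c'(e)+d(u)-d(v)=0<\eps/2$, i.e.\ nearly tight; moreover that path uses no edge of $A$, since those are too costly by the choice of $M$ (equivalently, $d(t)=\Delta$, the identity already used to prove Lemma~\ref{l:delta-growth}). Hence it is eligible in step~(3), and sending a unit of flow along its $G_f$-part lowers the excess of some $x\in X$ by one and raises the (negative) excess of some $d\in D$ by one, so $\totexc_f$ drops by one. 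Since each of the iterations $i,i+1,\dots,k$ drops $\totexc_f$ by at least one and $\totexc_f=0$ after iteration $k$, this gives $\totexc^{(i)}\ge k-i+1$.

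Next I would lower-bound $\Delta^{(i)}$. At the very start, $p\equiv0$ is feasible for $G_{f_1}''$ (Lemma~\ref{l:p-feasible}), so every $s$--$t$ path has nonnegative $c'$-cost and $\Delta^{(1)}\ge0$. For each $j<k$ the loop continues, so $X\neq\emptyset$ immediately after step~(3) of iteration $j$ and Lemma~\ref{l:delta-growth} applies --- note that step~(3) leaves $p$ unchanged and $p$ stays feasible by Lemma~\ref{l:p-feasible} --- giving $\Delta^{(j+1)}\ge\Delta^{(j)}+\eps/2$. Telescoping, $\Delta^{(i)}\ge(i-1)\eps/2$ for all $1\le i\le k$. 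Combining with $\totexc^{(i)}\ge k-i+1$ and Lemma~\ref{l:mainbound} (invoked at the start of iteration $i$) gives, for every $2\le i\le k$,
\begin{equation*}
  (k-i+1)(i-1)\cdot\frac{\eps}{2}\ \le\ \totexc^{(i)}\cdot\Delta^{(i)}\ \le\ 48\eps\Lambda,
\end{equation*}
so $(k-i+1)(i-1)\le 96\Lambda$. Setting $i=\lfloor k/2\rfloor+1$ (which lies in $[2,k]$ once $k\ge2$) makes the left side at least $\frac{1}{4}k(k-1)$, whence $k(k-1)\le 384\Lambda$ and $k=O(\sqrt{\Lambda})$ (recall $\Lambda\ge n$).

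The main obstacle is the first step: verifying that step~(3) is never vacuous, i.e.\ that after the price update a nearly tight $X\to D$ path always exists. This boils down to the identity $d(t)=\Delta$ --- that the auxiliary graph $A$, present only so that distances from $s$ are finite on vertices unreachable in $G_f''$, does not shortcut the $s$--$t$ distance --- plus the routine fact that the edges of a shortest-path tree are tight under the updated prices. The remainder is just the two-invariant bookkeeping carried out above.
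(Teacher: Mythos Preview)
Your proposal is correct and follows essentially the same approach as the paper: both combine Lemma~\ref{l:delta-growth} (growth of $\Delta$), Lemma~\ref{l:mainbound} (the product bound), and the fact that each iteration reduces $\totexc_f$ by at least one into a Hopcroft--Karp-style count. The paper phrases the final step as a two-phase argument (at most $O(\sqrt{\Lambda})$ iterations until $\Delta\ge\eps\sqrt{\Lambda}$, then at most $48\sqrt{\Lambda}$ more since $\totexc_f\le 48\eps\Lambda/\Delta$), while you plug both monotone bounds into the product inequality at a middle index; these are equivalent. Your justification that step~(3) is never vacuous is actually more detailed than what the paper spells out.
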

\begin{proof}
  Since $G_{f_1}$ has only non-negative-cost edges, $\Delta$ is initially non-negative.
  By Lemma~\ref{l:finite-delta}, $\Delta$ remains finite as long as $\totexc_f>0$. Note that $\Delta$ can only change
  when $f$ evolves, i.e., in step~(3). By Lemma~\ref{l:delta-growth}, it grows by at least $\eps/2$ after each
  iteration of the main loop.
  Thus, there can be no more than $O(\sqrt{\Lambda})$ augmentations until $\Delta\geq \eps\cdot\sqrt{\Lambda}$.
  But then, by Lemma~\ref{l:mainbound}, we have $\totexc_f\leq 48\eps\Lambda/\Delta\leq 48\sqrt{\Lambda}$.
  Each iteration of step~(3) decreases the total excess by at least one, so there can be at most $O(\sqrt{\Lambda})$
  additional iterations of that step, and thus also of the main loop.
\end{proof}

\subsection{Computing distances}
In this section we discuss how step~(1) is implemented.
Note that $G_\infty\subseteq G_f''$ at all times, since for all $e\in E_0\cap E_\infty$
we have $c'(e)=c(e)$ and we always have $f(e)<u(e)=\infty$.

\begin{lemma}\label{l:non-infty}
  The number of edges in $E(G_f'')\setminus E_\infty$ is $O(m_\fin+\Lambda)$.
\end{lemma}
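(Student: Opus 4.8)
```latex
The plan is to bound the edges of $G_f''$ that are \emph{not} infinite-capacity edges of $G$ by accounting for them according to their origin. Recall that $G_f''$ consists of: the residual edges of $G_f$ (which is a sub-multigraph of $E = E_0 \cup \rev{E_0}$), plus the auxiliary edges $E_\aux$ (the edges $sx$ for $x \in X$ and $d t$ for $d \in D$). The infinite-capacity edges $E_\infty \subseteq E_0$ are always present in $G_f''$ (as observed just before the lemma), so what remains to count is (a) residual edges coming from $E_\fin$ and their reverses, and (b) the auxiliary edges $E_\aux$.

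First I would handle the auxiliary edges: $|E_\aux| \le |X| + |D| \le n \le \Lambda$, since we assumed $\Lambda \ge n$. This contributes $O(\Lambda)$. Next I would bound the residual edges not in $E_\infty$. Every such residual edge $e$ lies in $E_f$ and is either in $E_\fin$ or is the reverse $\rev{e'}$ of some $e' \in E_0$ (possibly $e' \in E_\infty$). The edges of $E_\fin$ themselves number exactly $m_\fin$, and each such edge contributes at most itself and its reverse to $E_f$, giving $O(m_\fin)$. The only subtle case is a reverse edge $\rev{e'}$ with $e' \in E_\infty$: such $\rev{e'} = vu$ has capacity $u(\rev{e'}) = 0$, so $\rev{e'} \in E_f$ only if $f(\rev{e'}) < 0$, i.e., $f(e') > 0$. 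I would argue these are few using the flow decomposition bounds already available: by Lemma~\ref{l:nonzeroedges} (or Corollary~\ref{c:sumabscirc} applied to the current flow together with the bound $\totexc_f \le 3\Lambda$ from Lemma~\ref{l:relabel}), the number of edges $e'$ with $f(e') \ne 0$ is $O(\Lambda)$, since $\sum_{v} (|\exc_f(v)| + 2\lambda_v) = \totexc_f \cdot 2 + 2\Lambda = O(\Lambda)$ bounds the total number of nonzero-flow edge-endpoints (as the flow values are integers, so $|f(e')| \ge 1$ whenever nonzero). Hence there are $O(\Lambda)$ such reverse-of-infinite residual edges as well.

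Summing the three contributions, $|E(G_f'') \setminus E_\infty| = O(m_\fin + \Lambda + \Lambda) = O(m_\fin + \Lambda)$, which is the claim. The main thing to be careful about is the case analysis on residual edges originating from infinite-capacity edges: one must observe that the \emph{forward} infinite-capacity edges are exactly $E_\infty$ (already excluded from the count), while their \emph{reverses} can only appear in $G_f$ when the corresponding forward flow is strictly positive, and then invoke integrality of $f$ together with the $O(\Lambda)$ bound on total absolute flow to control their number. Everything else is bookkeeping against the already-established quantities $m_\fin$, $\Lambda$, and $n \le \Lambda$.
```
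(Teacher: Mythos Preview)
Your proof is correct and follows essentially the same approach as the paper: the same three-way case split into $E_\fin\cup\rev{E_\fin}$, $\rev{E_\infty}$, and $E_\aux$, with the reverse-of-infinite edges bounded via Lemma~\ref{l:nonzeroedges} and the inequality $\totexc_f\le 3\Lambda$. One small remark: your parenthetical mention of Corollary~\ref{c:sumabscirc} is slightly off since $f$ need not be a circulation here, but your main argument via Lemma~\ref{l:nonzeroedges} and $\sum_v|\exc_f(v)|=2\totexc_f$ is exactly right and matches the paper.
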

\begin{proof}
Consider any edge $e\in E(G_f'')\setminus E_\infty$.
  Then we either have (1) $e\in E_\fin\cup \rev{E_\fin}$ or (2) $e\in \rev{E_\infty}$,
or (3) $e\in E_\aux$.
There are clearly only $O(m_\fin)$ edges of the first kind.
On the other hand, if $e\in \rev{E_\infty}$, then $f(\rev{e})\geq 1$.
But by Lemma~\ref{l:nonzeroedges}, the number of
such edges is at most $\sum_{e\in \rev{E_\infty}}f(\rev{e})\leq \sum_{e\in E}|f(e)|\leq 2(\totexc_f +\Lambda)$.
However, as argued before, $\totexc_f\leq \totexc_{f_1}\leq 3\Lambda$.
As a result, the number of edges of type (2) is no more than $8\Lambda$.
Finally, there are clearly $O(n)=O(\Lambda)$ edges of the third type.
\end{proof}
From the above we conclude that $G_f''$ can be seen as $G_\infty$ augmented with $O(m_\fin+\Lambda)$ edges.

Note that by combining the closest pair data structure for $G_\infty$ that we already
have preprocessed, with $O(m_\fin+n+\Lambda)$ single-edge data structures
of Fact~\ref{l:ssp-dummy-clo}, by Lemma~\ref{l:ssp-sum} we obtain that
$T_\cp(G_f'')=\Ot(T_\cp(G_\infty)+m_\fin+\Lambda)$ without any additional preprocessing.

Since $p$ is a feasible price function of $G_f''$, by using Dijkstra's algorithm implementation of Lemma~\ref{l:dijkstra-add}, we obtain:
\begin{lemma}\label{l:distance-time}
  Step~(1) of the main loop takes $\Ot(T_\cp(G_\infty)+\Lambda+m_\fin)$ time.
\end{lemma}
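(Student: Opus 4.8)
The plan is to realize step~(1) as a single call to the Dijkstra implementation of Lemma~\ref{l:dijkstra-add} on the graph $G_f''\cup A$, using the price function $p$ that the main loop maintains, and to control the running time by bounding $T_\cp(G_f''\cup A)$ via the composition lemma for closest pair data structures. Recall that we are given a preprocessed closest pair data structure for $G_\infty$ (the hypothesis of Theorem~\ref{t:circ}) and that $p$ is a feasible price function of $G_f''$ at all times by Lemma~\ref{l:p-feasible}.

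First I would account for the edges of $G_f''\cup A$ that are not edges of $G_\infty$. By Lemma~\ref{l:non-infty} the residual network $G_f''$ is $G_\infty$ together with $O(m_\fin+\Lambda)$ additional edges, and the auxiliary star $A$ contributes a further $n=O(\Lambda)$ edges. Each such additional edge, on its own, is a single-edge digraph, for which Fact~\ref{l:ssp-dummy-clo} supplies a closest pair data structure with $O(1)$ total update time (and trivially $O(1)$ preprocessing). Feeding the preprocessed data structure for $G_\infty$ together with these $O(m_\fin+\Lambda)$ trivial single-edge data structures into Lemma~\ref{l:ssp-sum}, we obtain, with no further preprocessing, a closest pair data structure for $G_f''\cup A$ with $T_\cp(G_f''\cup A)=\Ot\!\left(T_\cp(G_\infty)+m_\fin+\Lambda\right)$.

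It then remains to run Dijkstra. Lemma~\ref{l:dijkstra-add} needs a feasible price function of the \emph{entire} graph $G_f''\cup A$; we already have one on $G_f''$ (Lemma~\ref{l:p-feasible}), and it extends to $G_f''\cup A$ because every edge $sv$ of $A$ has cost $M$ and hence reduced cost $M-p(s)+p(v)$, which is non-negative: the main loop keeps $p(s)=0$ (as $d(s)=0$ in every recomputation in step~(2), and $p\equiv 0$ initially) and keeps $p(v)=-d(v)\ge -M$ at the moment step~(1) is invoked, since the $A$-edge $sv$ already forces $\dist_{G_f''\cup A}(s,v)\le M$ whenever $p(v)$ was last set. (Alternatively, one may run the Dijkstra of Lemma~\ref{l:dijkstra-add} on $G_f''$ alone with every initial distance label set to $M$, which simulates $A$ and needs feasibility only on $G_f''$.) Applying Lemma~\ref{l:dijkstra-add} with source $s$ thus computes all $d(v)=\dist_{G_f''\cup A}(s,v)$ in time $\Ot(T_\cp(G_f''\cup A))=\Ot\!\left(T_\cp(G_\infty)+\Lambda+m_\fin\right)$, which is the claimed bound. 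The only steps needing any care — the mild obstacle — are exactly the two just described: invoking Lemma~\ref{l:non-infty} to certify that at most $O(m_\fin+\Lambda)$ residual edges ever lie outside $G_\infty$, and checking that gluing on the auxiliary star $A$ preserves feasibility of $p$ so that the fast Dijkstra of Lemma~\ref{l:dijkstra-add} is applicable; everything else is a direct application of the union lemma, Lemma~\ref{l:ssp-sum}.
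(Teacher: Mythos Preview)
Your proposal is correct and follows essentially the same approach as the paper: bound the number of non-$G_\infty$ edges via Lemma~\ref{l:non-infty}, compose with the preprocessed closest pair data structure for $G_\infty$ using Fact~\ref{l:ssp-dummy-clo} and Lemma~\ref{l:ssp-sum}, and then invoke the Dijkstra of Lemma~\ref{l:dijkstra-add} with the maintained feasible price function~$p$. You are in fact slightly more careful than the paper in explicitly treating the auxiliary star $A$ and verifying that $p$ remains feasible on its edges, a point the paper glosses over.
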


\subsection{Sending flow efficiently}
  In this section we discuss how step~(3) is implemented.

\begin{observation}
  During step~(3), nearly tight edges are only deleted from $G_f''$, and never added.
  Moreover, only the edges $E(G_f'')\setminus E_\infty$ can be deleted.
\end{observation}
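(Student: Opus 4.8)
The plan is to follow how a single augmentation in step~(3) changes $G_f''$, and then note that step~(3) only ever performs such augmentations. Throughout step~(3) the price function $p$ is frozen (it is recomputed only in step~(2)) and the discretized costs $c'$ are fixed, so the reduced cost $c'(e)-p(u)+p(v)$ of any edge $uv=e$ that survives an augmentation is unchanged. Hence an edge that is not nearly tight cannot become nearly tight, and the set of nearly tight edges can grow only if a genuinely new edge appears in $G_f''$.

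First I would check that no new nearly tight edge is created. Augmenting a unit of flow along a path $P\subseteq G_f$ changes the residual capacity only on the edges of $P$ and on their reverses; the edges of $P$ lie in $G_f$ and do not touch $\source$ or $\sink$, so they are not auxiliary, and the only edges that can newly appear are the reverses $\rev{e}=vu$ of edges $e=uv\in P$. Each such $e$ was nearly tight, i.e.\ $c'(e)-p(u)+p(v)<\eps/2$, so using $c'(e)+c'(\rev{e})=\eps$ we get
\[
  c'(\rev{e})-p(v)+p(u)=\eps-(c'(e)-p(u)+p(v))>\eps-\eps/2=\eps/2,
\]
which is exactly the computation already carried out in the proof of Lemma~\ref{l:p-feasible}. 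Thus every newly appearing edge has reduced cost exceeding $\eps/2$ and is not nearly tight, establishing the ``never added'' part.

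Next I would verify the restriction on deletions. An edge leaves $G_f''$ only when an augmentation saturates it, i.e.\ drives its residual capacity $u(e)-f(e)$ down to $0$. If $e\in E_\infty$ then $u(e)=\infty$, so its residual capacity stays infinite and $e$ is never deleted; indeed $G_\infty\subseteq G_f''$ holds at all times. The only edges that can be saturated are therefore the finite-capacity ones, which by definition lie in $E(G_f'')\setminus E_\infty$ (this set contains the edges of $E_\fin\cup\rev{E_\fin}$, of $\rev{E_\infty}$, and the auxiliary edges $E_\aux$). Chaining this reasoning over the successive augmentations performed within a single execution of step~(3) yields both claims of the observation.

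I do not expect a genuine obstacle: the statement is bookkeeping on top of three already-available facts — $p$ and $c'$ are static within step~(3), new residual edges are reverses of nearly tight path edges and hence have reduced cost above $\eps/2$ by the identity $c'(e)+c'(\rev{e})=\eps$ (as in Lemma~\ref{l:p-feasible}), and infinite-capacity edges cannot be saturated. The only mild care needed is to observe that the reduced-cost inequality used here is precisely the one derived in the proof of Lemma~\ref{l:p-feasible}, so nothing new has to be proven.
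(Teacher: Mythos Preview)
Your proposal is correct and follows essentially the same approach as the paper's proof: the same reduced-cost computation using $c'(e)+c'(\rev{e})=\eps$ shows newly appearing reverse edges are not nearly tight, and the same observation that $E_\infty$ edges cannot be saturated (the paper phrases this as $f(e)=u(e)$ being impossible since $\sum_{e\in E}|f(e)|$ is $O(\Lambda)$, which is equivalent to your ``residual capacity stays infinite''). Your explicit remark that $p$ and $c'$ are frozen during step~(3) is a useful clarification that the paper leaves implicit.
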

\begin{proof}
  An edge $e=uv$ can only be added to $G_f'$ if flow is sent through $\rev{e}$.
  But then $\rev{e}$ has to be nearly tight, so $c'(\rev{e})-p(v)+p(u)<\eps/2$.
  We get:
  \begin{equation*}
    c'(e)-p(u)+p(v)=c'(e)+c'(\rev{e})-(c'(\rev{e})-p(v)+p(u))>c'(e)+c'(\rev{e})-\eps/2=\eps/2,
  \end{equation*}
  so $e$
  is not nearly tight.

  An edge $e$ can only be deleted from $E_f$ if $f(e)=u(e)$, but this never happens for the edges
  $e\in E_\infty$ since $\sum_{e\in E}|f(e)|$ is $O(\Lambda)$.
\end{proof}
Denote by $H$ the subgraph of $G_f''$ consisting of nearly tight edges.
By the above observation, during step~(3) $H$ is decremental in time, and the edges of $E(H)\cap E_\infty$
are never deleted from $H$.

We will actually not only send flow through paths, but also through cycles. 
It is only important that we do not introduce more excess, which sending through cycles
satisfies. Note that
our analysis up to this point did not really depend on the fact
that the flow is pushed through a path. We only used the fact that
flow is pushed through nearly-tight edges.

In order to efficiently find paths/cycles, we will reuse some
of the information from previous searches.
We will maintain a set $S$ of vertices $v\in V(G_f'')$ for which we are still unsure
whether no more $v\to t$ paths consisting solely of nearly-tight edges exist in $G_f''$.
Initially, $S=V(G_f'')$.
The set $S$ will only shrink.
Moreover, for each vertex $v\in V(G_f'')$ we will maintain a set
\linebreak $F_v=\{vz=e\in E(H)\setminus E_\infty:z\in S\}$.

By Lemma~\ref{l:non-infty}, the initial size of sets $F_v$
is $O(m_\fin+\Lambda)$ and these sets can be easily
initialized within this
time bound. Moreover, since both sets $S$ and $E(H)\setminus E_\infty$ only shrink, so do the sets~$F_v$.
Observe that the sets $F_v$ can be also easily maintained subject to
deletions issued to the sets $S$ and $E(H)\setminus E_\infty$ in $O(m_\fin+\Lambda)$ total time.

Finally, whenever we enter step~(3), we initialize a near neighbor data structure
for $G_\infty$ with price function $p$ and thresholds $\tau(u)=p(u)+\eps/2$,
so that any returned edge $e=uv$ satisfies $\wei(e)+p(v)<\tau(u)=p(u)+\eps/2$, that
is $\wei(e)-p(u)+p(v)<\eps/2$.
The set $T$ of that data structure is identified with $S\cap V$ and all deletions to $S\cap V$
are passed to that data structure immediately. The total update time
of this data structure is $T_\nn(G_\infty)$ per single step~(3) run.

On the way to finding new paths/cycles, we store a (possibly empty) simple \emph{incomplete path}
$Q=s\to q$ in $H[S]$ for some $q\in V(G_f'')\setminus\{t\}$. Initially $Q=\emptyset$.

Given all the required data structures, the implementation of step~(3) is very simple.
While $s\in S$, we do the following.
While $q\notin t$, we take any edge $e=qz\in E(H)$ with $z\in S$.
If $F_q$ is non-empty, $e$ can be extracted from $F_q$ in $O(1)$ time.
Otherwise, such an edge has to come from $G_\infty$.
But then we can find such edge using a single query to the near neighbor data structure
in $Q_\nn(G_\infty)$ time.
If this fails as well, by the definition of $S$, the nearly tight edges originating in $q$
only lead to vertices $V\setminus S$ for which we already know that
cannot reach $t$ in $H$. So if finding a desired $e$ fails, we remove $q$
from $S$ and also remove the last edge of $Q$.

However, if $e$ has been found, we append $e$ to $Q$.
Since $Q$ was a simple path, this might cause some suffix $Q'$ of $Q$ now form a cycle of nearly tight edges.
In such a case we send a unit of flow through the cycle $Q'$ and
cut the suffix $Q'$ off $Q$.
On the other hand, if $z\notin V(Q)$, we extend $Q$ with $e$.
If it happens that $z=t$, we send flow through $Q[V]$ and reset $Q$ to be an empty path.

In the above, whenever flow is sent through a path or a cycle, $H$ (and thus the
sets $F_v$) are updated accordingly due to some edges whose flow has reached their capacity
being dropped from $G_f$.
This update time can be charged
to the number of edges of that path/cycle.

Note that when the entire procedure ends we have $Q=\emptyset$ since
an invariant $Q\subseteq H[S]$ is maintained.
To analyze the total time, we first observe that every insertion of an edge to the end of~$Q$
can be charged to its later deletion.
Moreover, every deletion can be charged to either removing some vertex
from $S$ or increasing a flow on an edge in $H$.
As a result, we can bound the total number of insertions or deletions to $Q$ by $O(n+\Phi)$, where
$\Phi$ is the total flow pushed \emph{through all edges} of~$H$ in the process. 
Since every step of the algorithm can be charged to either
a deletion or insertion of an edge to $Q$, the total time needed to execute
step~(3) can
be bounded by
\begin{equation*}
  O(T_\nn(G_\infty)+m_\fin+\Lambda+Q_\nn(G_\infty)\cdot (n+\Phi)).
\end{equation*}

\begin{lemma}
  We have $\Phi=O(\Lambda)$.
\end{lemma}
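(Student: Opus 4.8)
The plan is to bound $\Phi$, for a single execution of step~(3), by the $\ell_1$-change of the current flow during that step, and then to control this change via Lemma~\ref{l:nonzeroedges} together with the already-established bound $\totexc_f\le 3\Lambda$ that holds throughout the refinement.

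First I would fix one run of step~(3) and write $f_{\mathrm{beg}}$ and $f_{\mathrm{end}}$ for the current flow $f$ immediately before and immediately after it. The key structural observation is that \emph{within} step~(3) no pair $\{e,\rev{e}\}$ ever carries flow in both directions: the prices $p$ are frozen during step~(3), and for every edge $c'(e)+c'(\rev{e})=\eps$ (since $c(e)+c(\rev{e})=0$ after reducing the costs by $p_0$), so if $e=uv$ is nearly tight, i.e.\ $c'(e)-p(u)+p(v)<\eps/2$, then $c'(\rev{e})-p(v)+p(u)=\eps-(c'(e)-p(u)+p(v))>\eps/2$, and thus $\rev{e}$ can never be nearly tight, hence never carries flow during this step. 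Because all flow pushed in step~(3) traverses nearly tight edges, i.e.\ edges of $H$, $\Phi$ equals the total number of edge traversals, and by the observation the traversals of a pair $\{e,\rev{e}\}$ telescope: the total flow pushed through $\{e,\rev{e}\}$ equals $|f_{\mathrm{end}}(e)-f_{\mathrm{beg}}(e)|$. Therefore
\[
  \Phi=\tfrac12\sum_{e\in E}\bigl|f_{\mathrm{end}}(e)-f_{\mathrm{beg}}(e)\bigr|\le\tfrac12\Bigl(\sum_{e\in E}|f_{\mathrm{end}}(e)|+\sum_{e\in E}|f_{\mathrm{beg}}(e)|\Bigr).
\]

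Next I would bound $\sum_{e\in E}|f(e)|$ for an arbitrary flow $f$ arising in the refinement. Summing Lemma~\ref{l:nonzeroedges} over all $v\in V$ (each edge counted exactly once, by its head) gives $\sum_{e\in E}|f(e)|\le\sum_{v\in V}|\exc_f(v)|+2\Lambda=2\totexc_f+2\Lambda$. Since the total excess never increases once $f_1$ is formed (a path augmentation decreases $\totexc_f$ by one, a cycle augmentation leaves it unchanged) and $\totexc_{f_1}\le 3\Lambda$ by Lemma~\ref{l:relabel}, this is at most $8\Lambda$, for both $f=f_{\mathrm{beg}}$ and $f=f_{\mathrm{end}}$. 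Plugging these into the displayed inequality yields $\Phi\le 8\Lambda=O(\Lambda)$, as claimed.

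The only delicate point — and the place to be careful — is the structural observation in the first step: it is precisely the combination of the prices being frozen during step~(3) and the identity $c'(e)+c'(\rev{e})=\eps$ for every edge which forbids pushing flow back and forth and lets the total pushed flow collapse to $\sum_{e\in E}|f_{\mathrm{end}}(e)-f_{\mathrm{beg}}(e)|$. Everything afterwards is a routine application of Lemma~\ref{l:nonzeroedges} and the bound $\totexc_f\le 3\Lambda$, which is already invoked elsewhere in this section (e.g.\ in the proof of Lemma~\ref{l:non-infty}).
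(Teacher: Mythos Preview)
Your proof is correct and follows essentially the same approach as the paper: both use the observation that $c'(e)+c'(\rev{e})=\eps$ forces at most one of $e,\rev{e}$ to be nearly tight (so flow on each edge is monotone during step~(3)), then bound $\Phi$ by $\sum_{e\in E}|f_{\mathrm{end}}(e)-f_{\mathrm{beg}}(e)|$, split via the triangle inequality, and apply Lemma~\ref{l:nonzeroedges} together with $\totexc_f\le 3\Lambda$. Your extra factor $\tfrac12$ (from counting each antisymmetric pair once) yields a slightly sharper constant, but the argument is otherwise identical.
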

\begin{proof}
  Let $f'$ be the flow $f$ at the beginning of the run of step~(3) of our interest.
  Similarly, let $f''$ be the flow $f$ at the end of that run.

  Consider some edge $e$ of that undergoes a unit flow increase
  during the run of step~(3).
  Recall that $f(e)$ can only increase during that run since
  $\rev{e}$ is not nearly tight if $e$ is.
  As a result, the number of times $f(e)$ increases is exactly $f''(e)-f'(e)$.
  The total flow $\Phi$ sent through all edges in that run of step~(3) 
  can be thus bounded using Lemmas~\ref{l:nonzeroedges}~and~\ref{l:relabel} as follows:
  \begin{equation*}
    \Phi\leq \sum_{e\in E}|f''(e)-f'(e)|\leq \sum_{e\in E}|f''(e)|+\sum_{e\in E}|f'(e)|\leq (\totexc_{f'}+2\Lambda)+(\totexc_{f''}+2\Lambda)\leq 10\Lambda.\qedhere
  \end{equation*}
\end{proof}

Since $\Lambda\geq n$, we can thus conclude the following.
\begin{lemma}\label{l:send-time}
  Step~(3) of the main loop takes  $O(T_\nn(G_\infty)+m_\fin+Q_\nn(G_\infty)\cdot \Lambda)$ time.
\end{lemma}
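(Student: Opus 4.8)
The plan is to assemble the claimed bound directly from the running-time estimate derived immediately above the lemma statement, combined with the bound $\Phi=O(\Lambda)$ proved just before. Recall that the charging argument shows that a single execution of step~(3) runs in
\[
  O\bigl(T_\nn(G_\infty)+m_\fin+\Lambda+Q_\nn(G_\infty)\cdot (n+\Phi)\bigr)
\]
time: the term $T_\nn(G_\infty)$ accounts for the total update time of the decremental near neighbor data structure on $G_\infty$ that is re-initialized at the start of the step; the term $O(m_\fin+\Lambda)$ pays for building and maintaining the sets $F_v$ of non-infinite nearly tight out-edges, of total initial size $O(m_\fin+\Lambda)$ by Lemma~\ref{l:non-infty}; and all remaining work is charged to the $O(n+\Phi)$ insertions and deletions of edges on the incomplete path $Q$, where a failed probe for an extension of $Q$ is either an $O(1)$ extraction from some $F_q$ (charged to the later deletion of that edge from $Q$) or a single near neighbor query costing $Q_\nn(G_\infty)$, the latter charged to the vertex $q$ that is then permanently removed from $S$, of which there are at most $n$ many.

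Given this, I would first invoke the lemma $\Phi=O(\Lambda)$, so that $n+\Phi=O(n+\Lambda)$. Since $\Lambda\geq n$ is assumed throughout, this simplifies to $n+\Phi=O(\Lambda)$, hence $Q_\nn(G_\infty)\cdot(n+\Phi)=O\bigl(Q_\nn(G_\infty)\cdot\Lambda\bigr)$. Finally, because $Q_\nn(G_\infty)=\Omega(1)$, the stray additive term $\Lambda$ in the displayed bound is dominated by $Q_\nn(G_\infty)\cdot\Lambda$. Substituting these two simplifications yields
\[
  O\bigl(T_\nn(G_\infty)+m_\fin+Q_\nn(G_\infty)\cdot\Lambda\bigr),
\]
which is exactly the asserted running time.

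There is essentially no genuine obstacle remaining: the two substantive ingredients — the detailed path/cycle-maintenance implementation with its charging scheme, and the $\Phi=O(\Lambda)$ flow bound (which itself rests on Lemmas~\ref{l:nonzeroedges} and~\ref{l:relabel} and on $\totexc_f\leq\totexc_{f_1}\leq 3\Lambda$) — have already been carried out. The only point deserving a careful second look is that the bookkeeping is exhaustive: every atomic operation performed inside the two nested while-loops of step~(3) must be attributable to an edge insertion/deletion on $Q$, to a vertex leaving $S$, or to a unit of flow pushed along some path or cycle. In particular, the updates to $H$ (and hence to the sets $F_v$) triggered when an edge saturates are charged to the edges of the augmenting path/cycle, which is legitimate precisely because no edge of $G_\infty$ ever saturates (as $\sum_{e\in E}|f(e)|=O(\Lambda)$), so such updates touch only edges that genuinely disappear. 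Once this accounting is confirmed, the lemma follows by the short computation above.
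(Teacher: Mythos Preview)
Your proposal is correct and follows exactly the paper's approach: the paper derives the bound $O(T_\nn(G_\infty)+m_\fin+\Lambda+Q_\nn(G_\infty)\cdot (n+\Phi))$, proves $\Phi=O(\Lambda)$, and then simply notes ``Since $\Lambda\geq n$'' before stating the lemma. Your write-up makes explicit the absorption of the stray $\Lambda$ term via $Q_\nn(G_\infty)=\Omega(1)$, which the paper leaves implicit, but otherwise the argument is identical.
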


By combining Lemmas~\ref{l:main-loop-count},~\ref{l:distance-time}~and~\ref{l:send-time},
we obtain that the refinement procedure can be implemented to run
in  $\Ot\left(\sqrt{\Lambda}\cdot \left(T_\cp(G_\infty)+T_\nn(G_\infty)+m_\fin+\Lambda\cdot Q_\nn(G_\infty)\right)\right)$
time, which immediately yields Theorem~\ref{t:circ}.

  \subsection{Proof of Lemma~\ref{l:mainbound}}\label{s:mainbound}
  In this section we prove the following key bound. On the way, we also prove Lemma~\ref{l:finite-delta}.
  \lmainbound*

  First suppose that $\Delta\leq 12\eps$. Since we also have $\totexc_f\leq \totexc_{f_1}\leq 3\Lambda$,
  the desired inequality follows immediately.
  So in the following let us assume $\Delta>12\eps$.

  Inspired by the unit-capacity min-cost flow analysis of~\cite{GoldbergHKT17}, we will consider the following auxiliary graph.
  Define $G^+_f=(V,E^+_f)$, where $E^+_f=\{e\in E: f(e)<f_0(e)\}.$
  For $e\in E^+_f$, set $u^+(e)=f_0(e)-f(e)$ and let us treat $u^+(e)$ as the capacity of $e$ in $G_f^+$.
  \begin{lemma}\label{l:cutbound}
    For any $S\subseteq V$ we have:
  \begin{equation*}
    \sum_{v\in S} \exc_f(v)\leq \sum\{u^+(e):ab=e\in E^+_f,a\in S, b\notin S\}.
  \end{equation*}
  \end{lemma}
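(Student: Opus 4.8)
\textbf{Proof plan for Lemma~\ref{l:cutbound}.}
The plan is to interpret $G_f^+$ as the residual-style network recording precisely how $f$ differs from the initial circulation $f_0$, and to express $f_0-f$ as a flow in it so that a standard flow-across-a-cut argument applies. First I would observe that $g := f_0 - f$, viewed as a function on $E$, is anti-symmetric ($g(e)=-g(\rev{e})$) since both $f_0$ and $f$ are, and that $g(e)>0$ holds exactly when $f(e)<f_0(e)$, i.e.\ exactly on $E_f^+$, with $g(e)=u^+(e)$ there. Thus $g$ restricted to $E_f^+$ is a nonnegative flow on $G_f^+$ respecting the capacities $u^+$ (with equality, in fact). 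The key point is to compute the excess of $g$ at each vertex: since $\exc_{f_0}(v)=0$ for the circulation $f_0$, we get $\exc_g(v) = \exc_{f_0}(v) - \exc_f(v) = -\exc_f(v)$ for every $v\in V$.

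Next I would apply the elementary cut inequality for flows bounded by capacities. For any $S\subseteq V$, summing the conservation/excess identity over $v\in S$ and cancelling the contributions of edges internal to $S$ (they appear once with each sign by anti-symmetry), the net flow of $g$ leaving $S$ equals $\sum_{v\in S}\exc_g(v)$ with the appropriate sign convention; concretely, $\sum_{v\in S}(-\exc_g(v))$ equals the net amount of $g$-flow crossing from $S$ to $V\setminus S$, which is $\sum\{g(e): ab=e\in E,\ a\in S,\ b\notin S\}$. Since $g(e)\le u^+(e)$ when $g(e)>0$ and the positive contributions come only from edges $e\in E_f^+$, the right-hand side is at most $\sum\{u^+(e): ab=e\in E_f^+,\ a\in S,\ b\notin S\}$. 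Combining with $\exc_g(v)=-\exc_f(v)$ gives $\sum_{v\in S}\exc_f(v)\le \sum\{u^+(e): ab=e\in E_f^+,\ a\in S,\ b\notin S\}$, as claimed.

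The only mild subtlety — and the step I would be most careful with — is bookkeeping the sign conventions: ensuring that ``excess at $v$'' as defined in the paper ($\exc_f(v)=\sum_{uv=e\in E}f(e)$, flow \emph{into} $v$) matches the direction in which the cut sum is written (edges $e=ab$ with $a\in S$, $b\notin S$, i.e.\ leaving $S$), so that no spurious sign flip appears. Once the identity $\exc_g = -\exc_f$ is in hand and one writes out $\sum_{v\in S}\exc_g(v)$ carefully using anti-symmetry to kill internal edges, everything is routine. No deeper combinatorial input is needed; this is just the max-flow-style ``flow across a cut is bounded by capacity across the cut'' specialized to the difference $f_0-f$.
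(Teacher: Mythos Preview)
Your proposal is correct and is essentially the same argument as the paper's: both compute the net flow of $f_0-f$ across the cut $(S,V\setminus S)$ using anti-symmetry to cancel internal edges, identify this with $\sum_{v\in S}\exc_f(v)$ via $\exc_{f_0}\equiv 0$, and then drop the nonpositive terms. The only cosmetic difference is that you package $g=f_0-f$ as a single object and apply the cut identity once, whereas the paper writes out the identity for $f$ and for $f_0$ separately and subtracts.
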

  \begin{proof}
  Indeed, by antisymmetry we have
  \begin{equation*}
    \sum_{v\in S}\sum_{\substack{wv=e\in E\\w\in S}}f(e)=0,
  \end{equation*}
  and thus
  \begin{align*}
    \sum_{v\in S} \exc_f(v)&=\sum_{v\in S}\sum_{wv=e\in E}f(e)\\
    &=\sum_{v\in S}\sum_{\substack{wv=e\in E\\w\in S}}f(e)+\sum_{v\in S}\sum_{\substack{wv=e\in E\\w\in V\setminus S}}f(e)\\
    &=-\sum_{v\in S}\sum_{\substack{vw=e\in E\\w\in V\setminus S}}f(e).
  \end{align*}
  Applying the same transformation to $f_0$, and by the fact that circulations have zero vertex excesses:
  \begin{equation*}
    0=\sum_{v\in V}\exc_{f_0}(v)=-\sum_{v\in S}\sum_{\substack{vw=e\in E\\w\in V\setminus S}}f_0(e)
  \end{equation*}
  By subtracting the obtained equations, and skipping non-positive terms, we get:
 \begin{equation*}
   \sum_{v\in S}\exc_{f}(v)=\sum_{v\in S}\sum_{\substack{vw=e\in E\\w\in V\setminus S}}(f_0(e)-f(e))\leq \sum_{v\in S}\sum_{\substack{vw=e\in E^+_f\\w\in V\setminus S}}(f_0(e)-f(e))=\sum_{v\in S}\sum_{\substack{vw=e\in E^+_f\\w\in V\setminus S}}u^+(e).
  \end{equation*}
  \end{proof}
  Using Lemma~\ref{l:cutbound}, we can prove:
  \lfinitedelta*
  \begin{proof}
    Note that $E_f^+\subseteq E_f$, so $G_f^+\subseteq G_f$.
    As a result, it is enough to prove that a path $X\to D$ exists in $G_f^+$.
    Consider the set $Z$ of vertices reachable from $X$ in $G_f^+$ (via edges with $u^+(e)>0$),
    and suppose, for contradiction, that $Z\cap D=\emptyset$.
    By the definition of $Z$ and Lemma~\ref{l:cutbound}, the total capacity of edges $ab=e\in E^+_f\cap Z\times (V\setminus Z)$
    is at least $\sum_{z\in Z}\exc_f(z)\geq \totexc_f>0$.
    So there exists an edge $ab=e\in E^+_f\cap Z\times (V\setminus Z)$ with $u^+(e)>0$.
    But $b\notin Z$ contradicts the definition of $Z$.
  \end{proof}
  Now define, for any $v\in V$, $\lambda^{+\tout}_v=\sum_{vw=e\in E^+}u^+(e)$ and, symmetrically, \linebreak
  $\lambda^{+\tin}_v=\sum_{wv=e\in E^+}u^+(e)$.
  Set $\lambda^+_v=\min(\lambda^{+\tout}_v,\lambda^{+\tin}_v)$.
  
  \begin{lemma}\label{l:lambdaplus}
    For any $v\in V$, $\lambda^+_v\leq 2\lambda_v$.
  \end{lemma}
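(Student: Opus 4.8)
The statement to prove is Lemma~\ref{l:lambdaplus}: for every $v\in V$, $\lambda^+_v\le 2\lambda_v$. Recall the definitions: $\lambda_v=\min(\lambda^\tin_v,\lambda^\tout_v)$ where $\lambda^\tin_v=\sum_{xv=e\in E}u(e)$ and $\lambda^\tout_v=\sum_{vy=e\in E}u(e)$; and $\lambda^+_v=\min(\lambda^{+\tin}_v,\lambda^{+\tout}_v)$ where $\lambda^{+\tout}_v=\sum_{vw=e\in E^+_f}u^+(e)$, $\lambda^{+\tin}_v=\sum_{wv=e\in E^+_f}u^+(e)$, and $E^+_f=\{e\in E: f(e)<f_0(e)\}$ with $u^+(e)=f_0(e)-f(e)$. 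Since both $\lambda_v$ and $\lambda^+_v$ are defined as a minimum over in/out directions, it suffices to bound one of the two quantities $\lambda^{+\tin}_v,\lambda^{+\tout}_v$ by $2\lambda_v$; I will bound the outgoing one, $\lambda^{+\tout}_v\le 2\lambda_v$, and the incoming case is symmetric.

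The key idea is to split the sum $\lambda^{+\tout}_v=\sum_{vw=e\in E^+_f}(f_0(e)-f(e))$ according to the sign of $f_0(e)$. For an edge $e=vw\in E^+_f$ we have $f(e)<f_0(e)$, so $0<f_0(e)-f(e)\le f_0(e)-(-u(\rev e))= f_0(e)+u(\rev e)$. If $f_0(e)>0$, then $f_0(e)\le u(e)$ (since $f_0$ is a flow), and moreover $u(\rev e)$ is either $0$ (when $e\in E_0$) or, when $e\in\rev{E_0}$, $f_0(e)>0$ is impossible because flows on reverse edges are $\le 0$ — so in fact $e\in E_0$ here and the contribution is at most $u(e)$. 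Summing these contributions over outgoing $e=vw\in E_0$ gives at most $\sum_{vy=e\in E_0}u(e)\le\lambda^\tout_v$. For the remaining edges, where $f_0(e)\le 0$, we use $f_0(e)-f(e)\le -f(e)=f(\rev e)$, and $f(\rev e)>0$ forces $\rev e\in E_0$ (flows on reverse edges are nonpositive), i.e. $e\in\rev{E_0}$, so $f(\rev e)\le u(\rev e)$; summing over these outgoing $e=vw$ (equivalently over incoming $\rev e=wv\in E_0$) gives at most $\sum_{wv=e\in E_0}u(e)=\lambda^\tin_v$. Combining, $\lambda^{+\tout}_v\le\lambda^\tout_v+\lambda^\tin_v$. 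Since $\lambda^\tin_v,\lambda^\tout_v\ge\lambda_v$ is false in general — rather $\min(\lambda^\tin_v,\lambda^\tout_v)=\lambda_v$ — I actually need the cruder but correct bound $\lambda^\tout_v+\lambda^\tin_v\le 2\max(\lambda^\tin_v,\lambda^\tout_v)$, which is not what I want either. The fix: bound each of the two partial sums by $\lambda_v$ separately. The first partial sum (edges with $f_0(e)>0$) is at most $\sum_{vy=e\in E_0}f_0(e)$, and since $f_0$ is a \emph{circulation}, by Lemma~\ref{l:nonzeroedges} applied to $f_0$ with $\exc_{f_0}(v)=0$ this is at most $2\lambda_v$; likewise the second partial sum is at most $\sum_{wv=e\in E_0, f(\rev e)>0}$-type flow mass, which I should instead route through the circulation $f_0$ as well. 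Let me reconsider: actually the cleanest route is to observe $f_0(e)-f(e)\le |f_0(e)|+|f(e)|$ is too lossy; the right statement uses that $f_0$ is a circulation so $\sum_{vw=e\in E}|f_0(e)|\le 2\lambda_v$ locally — but that is exactly Lemma~\ref{l:nonzeroedges} with $\exc_{f_0}(v)=0$. So the main obstacle is to also control the $f(e)$ terms; here I will use that $-f(e)\le u(\rev e)$ and that $u(\rev e)=0$ unless $e\in\rev{E_0}$, so $\sum_{vw=e\in\rev{E_0}}u(\rev e)=\sum_{wv=\rev e\in E_0}u(\rev e)=\lambda^\tin_v$, which together with $\lambda^\tout_v\ge$ nothing useful...

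Given the delicacy, the plan I will actually execute is: write $\lambda^{+\tout}_v=\sum_{vw=e\in E^+_f}(f_0(e)-f(e))\le \sum_{vw=e\in E}\max(f_0(e)-f(e),0)$. Split as $\sum_{vw=e\in E}\max(f_0(e),0) + \sum_{vw=e\in E}\max(-f(e),0)\ge$ the target, more precisely use $\max(f_0(e)-f(e),0)\le\max(f_0(e),0)+\max(-f(e),0)$. The first sum $\sum_{vw=e\in E}\max(f_0(e),0)=\sum_{vw=e\in E_0}f_0(e)$ (reverse edges carry nonpositive $f_0$), and by Lemma~\ref{l:nonzeroedges} applied to the \emph{circulation} $f_0$ (so $\exc_{f_0}(v)=0$), this is at most $\lambda_v$ — indeed the lemma gives $\sum_{vz=e}|f_0(e)|\le|\exc_{f_0}(v)|+2\lambda_v=2\lambda_v$, and restricting to positive-flow forward edges halves the bound since each unit of positive outgoing flow on $E_0$ is matched; actually the clean bound is $\sum_{vz=e\in E, f_0(e)>0}f_0(e)=\tfrac12\big(\sum_{vz=e}|f_0(e)|+\exc_{f_0}(v)\big)=\tfrac12\sum_{vz=e}|f_0(e)|\le\lambda_v$, reusing the identity inside the proof of Lemma~\ref{l:nonzeroedges}. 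The second sum $\sum_{vw=e\in E}\max(-f(e),0)=\sum_{vw=e\in E}\max(f(\rev e),0)=\sum_{wv=e'\in E, f(e')>0}f(e')=\tfrac12\big(\sum_{wv=e'}|f(e')|-\exc_f(v)\big)\le\tfrac12\sum_{wv=e'}|f(e')|\le|\exc_f(v)|/2+\lambda_v$ by Lemma~\ref{l:nonzeroedges} applied to $f$ — but this leaves an unwanted $|\exc_f(v)|$ term. The resolution: when $v$ is not a deficit vertex, $\exc_f(v)\ge 0$ and the term drops; when $v$ is a deficit vertex, a separate argument using that the flow on $E^+_f$ is bounded componentwise is needed, or one observes $E^+_f\subseteq E_f$ and the deficit case is handled because $\max(-f(e),0)$ for outgoing $e$ corresponds to incoming positive flow which is exactly canceled by the deficit accounting. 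I will therefore present the case split on the sign of $\exc_f(v)$, use the two halved bounds above in the "good" cases, and in the remaining case invoke $\tfrac12(\sum_{wv}|f(e')|-\exc_f(v))\le\lambda_v$ directly since $-\exc_f(v)=|\exc_f(v)|$ and the identity $\sum_{wv=e'}|f(e')|=-\exc_f(v)+2\sum_{wv, f(e')>0}f(e')$ gives $\sum_{wv,f(e')>0}f(e')=\tfrac12(\sum|f(e')|+\exc_f(v))\le\tfrac12\sum|f(e')|\le\lambda^\tin_v$ when we bound by capacities, hence $\le\lambda_v$ only if $\lambda^\tin_v=\lambda_v$; the other subcase $\lambda^\tout_v=\lambda_v$ is handled symmetrically by working with the incoming quantity $\lambda^{+\tin}_v$ instead. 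The \textbf{main obstacle} is precisely this bookkeeping: ensuring that the non-circulation flow $f$'s excess/deficit does not spoil the factor $2$, which is resolved by exploiting the freedom to bound whichever of $\lambda^{+\tin}_v,\lambda^{+\tout}_v$ is more convenient, mirroring how $\lambda_v$ itself is a minimum of two directional quantities.
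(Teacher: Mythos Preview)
Your opening move---``I will bound $\lambda^{+\tout}_v\le 2\lambda_v$, and the incoming case is symmetric''---does not work as stated. The bound $\lambda^{+\tout}_v\le 2\lambda_v$ can fail. Take a vertex $v$ with a single incoming $E_0$-edge of capacity $100$ and a single outgoing $E_0$-edge of capacity $1$, so $\lambda_v=1$. With $f_0\equiv 0$ (a circulation) and $f$ pushing $100$ units into $v$ along the incoming edge, the reverse of that edge is an outgoing edge $e\in\rev{E_0}$ with $u^+(e)=f_0(e)-f(e)=0-(-100)=100$, hence $\lambda^{+\tout}_v\ge 100\gg 2\lambda_v$. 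So you cannot bound $\lambda^{+\tout}_v$ by $2\lambda_v$ without further information about which of $\lambda^\tin_v,\lambda^\tout_v$ realizes the minimum.

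You eventually arrive at a workable fix---bound the first sum by $\lambda_v$ via Lemma~\ref{l:nonzeroedges} and the second by $\lambda^\tin_v$ via capacities, then case-split on whether $\lambda_v=\lambda^\tin_v$ or $\lambda_v=\lambda^\tout_v$---but the route is circuitous and your intermediate identities contain a sign slip (the positive incoming flow is $\tfrac12(\sum_{wv}|f(e')|+\exc_f(v))$, not minus). The paper's proof is the clean version of what you are circling: it proves the \emph{cross-bounds} $\lambda^{+\tout}_v\le 2\lambda^\tin_v$ and $\lambda^{+\tin}_v\le 2\lambda^\tout_v$ directly, then takes the minimum. The key step you are missing is that, because $f_0$ is a circulation, flow-balance at $v$ gives
\[
\sum_{vw=e\in E_0}f_0(e)=\sum_{wv=e\in E_0}f_0(e)\le \lambda^\tin_v,
\]
which converts the outgoing-$f_0$ term into an incoming-capacity bound without any appeal to Lemma~\ref{l:nonzeroedges} or any case analysis. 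Combined with $\sum_{vw=e\in\rev{E_0}}(-f(e))=\sum_{wv=e\in E_0}f(e)\le\lambda^\tin_v$, this gives $\lambda^{+\tout}_v\le 2\lambda^\tin_v$ in two lines.
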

\begin{proof}
  We show that $\lambda^{+\tout}_v\leq 2\lambda^\tin_v$ and $\lambda^{+\tin}_v\leq 2\lambda^\tout_v$.
  This will give $\lambda^+_v=\min(\lambda^{+\tout}_v,\lambda^{+\tin}_v)\leq \min(2\lambda^{\tin}_v,2\lambda^{\tout}_v)=2\lambda_v$
  as desired.

  We have
  \begin{align*}
    \lambda^{+\tout}_v&=\sum_{vw=e\in E_f^+}u^+(e)\\
    &=\sum_{vw=e\in E_f^+\cap E_0}u^+(e)+\sum_{vw=e\in E_f^+\cap \rev{E_0}}u^+(e)\\
    &\leq\sum_{vw=e\in E_f^+\cap E_0}f_0(e)+\sum_{vw=e\in E_f^+\cap \rev{E_0}}(-f(e))\\
    &\leq\sum_{vw=e\in E_0}f_0(e)+\sum_{vw=e\in \rev{E_0}}(-f(e)).
  \end{align*}
  By the fact that $f_0$ is a circulation (and so $\exc_{f_0}(v)=0$) and antisymmetry:
 \begin{align*}
   \lambda^{+\tout}_v&\leq\sum_{vw=e\in \rev{E_0}}-f_0(e)+\sum_{wv=e\in E_0}f(e)\\
   &\leq\sum_{wv=e\in E_0}f_0(e)+\sum_{wv=e\in E_0}f(e)\\
   &\leq2\sum_{wv=e\in E_0}u(e)\\
   &=2\Lambda^\tin_v
  \end{align*}
  The proof that $\lambda^{+\tin}_v\leq 2\lambda^\tout_v$ is symmetric.
\end{proof}

  Consider some moment of the algorithm's runtime. For convenience, put $\delta(v):=\dist_{G_f''}(s,v)=-p(v)$.
  Recall that $\delta(x)\leq 0$ for all $x\in X$ and 
  and $\delta(d)\geq \Delta$ for all $d\in D$.

  For any $uv=e\in E_f^+\subseteq E_f$ we have
  \begin{equation*}
    \delta(v)-\delta(u)=p(u)-p(v)\leq c(e)+\eps=-c(\rev{e})+\eps,
  \end{equation*}
  but since $\rev{e}\in E_{f_0}$, $-c(\rev{e})\leq 4\eps$,
  and thus we have
  \begin{equation*}
  \delta(v)-\delta(u)\leq 5\eps.
  \end{equation*}
  For a number $z$, define $L(z)=\{w\in V: \delta(w)<z\}$.
  We have $X\subseteq L(z)$ for any $z>0$ and $D\cap L(z)=\emptyset$ for any $z\leq \Delta$.
  Consequently, for any $0<z\leq \Delta $
  we have 
  \begin{equation*}
    \sum_{v\in L(z)} \exc_f(v)=\sum_{v\in X} \exc_f(v)=\totexc_f.
  \end{equation*}

  Moreover, let $S(z)=\{w\in V: \delta(w)\in [z,z+5\eps]\}$ and $q=\left\lfloor\frac{\Delta-6\eps}{\eps}\right\rfloor\geq \lfloor \Delta/(2\eps)\rfloor\geq \Delta/(4\eps)$ (here, we have used the assumption $\Delta\geq 12\eps$).
  Consider the sets
  \begin{equation*}
    S(\eps),S(2\cdot \eps),\ldots,S(q\cdot \eps).
  \end{equation*}
  Note that each vertex $w\in V$ belongs to at most $6$ of the above sets.

  As a result, using Lemma~\ref{l:lambdaplus}, we get:
  \begin{equation*}
    \sum_{i=1}^q \sum_{w\in S(i\cdot \eps)} \lambda_w^+ \leq \sum_{w\in V} 6\lambda_w^+\leq \sum_{w\in V} 12\lambda_w\leq 12\Lambda.
  \end{equation*}
  Observe that there exists such an index $j\in\{1,\ldots,q\}$ that
  \begin{equation*}
    \sum_{w\in S(j\cdot \eps)} \lambda_w^+ \leq \frac{12\Lambda}{q}\leq \frac{48\eps\Lambda}{\Delta}.
  \end{equation*}
  Now consider the set
  \begin{equation*}
    A=L(j\cdot \eps)\cup \{w\in S(j\cdot \eps):\lambda_w^+=\lambda_w^{+\tout}\}.
  \end{equation*}
  Observe that $X\subseteq L(\eps)\subseteq L(j\cdot\eps)\subseteq A$,
  and so $\sum_{w\in A}\exc_f(w)=\totexc_f$.
  We now show that $\sum_{w\in A}\exc_f(w)$ can be bounded 
  by $\sum_{w\in S(j\cdot \eps)} \lambda_w^+\leq \frac{48\eps\Lambda}{\Delta}$.
  By Lemma~\ref{l:cutbound}, it is enough to show that
  \begin{equation}\label{eq:bb}
    \sum\{u^+(e):e=ab\in E^+_f, a\in A, b\in V\setminus A\}\leq \sum_{w\in S(j\cdot \eps)} \lambda_w^+.
  \end{equation}
  Take any edge $e=ab\in E_f^+$, $a\in A, b\in V\setminus A$.
  Recall that $\delta(b)-\delta(a)\leq 5\eps$, so if $a\in L(j\cdot \eps)$, then $b\in (V\setminus A)\cap S(j\cdot\eps)=\{y\in S(j\cdot \eps):\lambda_y^+<\lambda^{+\tout}_y\}$.
  In particular, we have $\lambda_b^+=\lambda_b^{+\tin}$.
  As a result, the total sum of $u^+(e)$ over all considered edges $e$ with $a\in L(j\cdot\eps)$ is
  no more than $\sum_{b\in (V\setminus A)\cap S(j\cdot \eps)}\lambda_b^+$.

  Note that the total sum of $u^+(e)$ over all edges $e=ab$ with $a\in A\cap S(j\cdot\eps)$
  is clearly no more than $\sum_{a\in A\cap S(j\cdot\eps)}\lambda_a^{+\tout}=\sum_{a\in A\cap S(j\cdot\eps)}\lambda_a^+$.
  Combining the two cases yields the desired bound~\eqref{eq:bb}.

\section{Omitted proofs for graph near-neighbor and closest-pair data structures}\label{s:omitted}

\sspdummyclo*
\begin{proof}
  For each~$v$ we simply maintain
  a pointer to the first edge $e=vu$ in the list of incident edges of $v$ that we have
  not yet detected that either $u\notin T$ or $\wei(e)+p(u)\geq \tau(v)$.
  When answering a query about $v$, we possibly advance the pointer at $v$
  in the list until we reach an edge that can be returned. 
  Since all the pointers only move forward, the total update time is $O(m+n)$.
  The amortized query time is clearly $O(1)$.
\end{proof}

\nnsum*
\begin{proof}
  To initialize the data structure, initialize individual near neighbor data structures $\ds_1,\ldots,\ds_k$.
  Denote by $T_1,\ldots,T_k$ the respective sets $T$ of the data structures.
  We maintain the invariant that $T_i=T\cap V_i$:
  upon deactivation of $v\in T$, it is passed to all $\ds_i$, where $v\in V_i$ (equiv. $v\in T_i$).

  Note that if one issues a query about vertex $x$ to $\ds_i$ and no edge within the threshold $\tau(v)$ is found,
  every subsequent query to $\ds_i$ about $x$ will fail as well.

  Let $X_v:=\{i:v\in V_i\}$. To handle queries, for each vertex $u\in V$ we maintain an index $z_u\in X_u\cup\{\infty\}$,
  satisfying the following invariant:
  for all $i\in X_u$ such that $i<z_u$ we have $w_{G_i}(e)+p(v)\geq \tau(v)$
  for all edges $uv=e\in E_i\cap (V_i\times T_i)$. Initially, we set $z_u=\min X_u$.

  Upon query, we first check whether $z_u=\infty$. If so, by the invariant, 
  no such ``near neighbor'' edge exists and we return $\perp$.
  Otherwise, we repeat the following until $z_u=\infty$ or we find some desired
  edge $uv$. Issue a query to $\ds_{z_u}$. If that query fails (i.e., returns $\perp$),
  advance $z_u$ to the succeeding element of $X_u\cup \{\infty\}$.
  If the query produces an edge, return that edge and stop.

  Note that immediately before $z_u$ advances, a query to $\ds_{z_u}$ is
  made. The total cost of such queries is $\sum_{i\in X_u}Q_\nn(G_i)$.
  Through all $u$, we have
  \begin{equation*}
    \sum_{u\in V}\sum_{i\in X_u}Q_\nn(G_i)=\sum_{i=1}^k |V_i|\cdot Q_\nn(G_i).
  \end{equation*}
  We charge the above queries issued to the ``local'' data structures to the total update time
  of the ``global'' data structure.
  Each query $u$ to the data structure for $\bigcup_{i=1}^k G_i$ results in
  at most one query to some $\ds_i$ that does not advance $z_u$.
  We charge this query to the query time of the global data structure which leads to
  the bound
  $Q_\nn(\bigcup_{i=1}^k G_i)=\max_{i=1}^k\{Q_\nn(G_i)\}+O(1)$.
  All the work that has not been charged can be easily seen to be $\sum_{i=1}^k O(|V_i|)$.
\end{proof}

\sspsum*
\begin{proof}
  Let $S,T\subseteq \bigcup_{i=1}^k V_i$.
  The combined data structure $\ds$ for $G_1\cup\ldots \cup G_k$ does no additional preprocessing,
  and, upon initialization, sets up
  separate closest pair data structures $\ds_i$ for each of the graphs $G_i$.
  Denote by $e_i^*,S_i,T_i,\alpha_i,\beta_i,d_i$ the respective parameters of $\ds_i$.
  In the $i$-th data structure we maintain that $S_i=S\cap V$, $T_i=T\cap V$,
  and $\alpha_i,\beta_i$ naturally match $\alpha,\beta$ respectively.

We also store the indices $i=1,\ldots,k$ in a priority queue $H$ so that
  the key of $i$ equals $d_i(e_i^*)$ and $H$ prioritizes smallest keys.
  Whenever some $\ds_i$ changes $e_i^*$, which happens at most $O(|V_i|)$ times, 
  the key of $i$ is updated in $H$ accordingly. 
  As a result, maintaining $H$ requires $O\left(\sum_{i=1}^k |V_i|\log{k}\right)$ time.
  Moreover, since each edge of $G$ is contained in some $G_i$, if $j$ is the top key
  of $H$, then $e_j^*=st$ minimizes $\wei_G(e_j^*)+\alpha(s)+\beta(t)$
  over all edges of $G$.
  
  The activations and extractions of vertices $v$ are simply passed to the relevant
  data structures $\ds$ such that $v\in V_i$. Consequently, the total update time
  is $\sum_{i=1}^k T_\cp(G_i)+\sum_{i=1}^k O(|V_i|\log{k})$.
\end{proof}

Below, for any $X\subseteq V$, we use the notation $X_\tin:=\{x_\tin:x\in X\}$, $X_\tout:=\{x_\tout:x\in X\}$.
\nnsplit*
\begin{proof}
  Suppose we need a near neighbor data structure $\ds'$ for $G'$ initialized with $T'$,
  price function~$p$, and thresholds $\tau'(u)$.
  Since the vertices $V_\tout$ have no incoming edges, we can put $T':=T'\cap V_\tin$.
  Moreover, since the vertices $V_\tin$ have no outgoing edges,
  the near neighbor queries for vertices in $V_\tin$ always fail.
  We thus focus on near neighbor queries for vertices in $V_\tout$.

  We simply use a near neighbor data structure $\ds$ for $G$ initialized with
  $T=\{v:v_\tout\in T\}$, price function $p(u):=p'(u_\tin)$, and thresholds
  $\tau(v):=\tau'(v_\tout)$.
  A query about vertex $x_\tout$ on $\ds'$ is translated to a query
  about $x$ in $\ds$.
\end{proof}

\sspsplit*
\begin{proof}
  Let us identify $V$ with the numbers $\{1,\ldots,n\}$.
  For each $b=1,\ldots,{\lfloor\log_2{n}\rfloor+1}$ and ${i\in\{0,1\}}$, denote by
  $A_{b,i}$ be the subset of numbers of $\{1,\ldots,n\}$ whose $b$-th bit in the binary representation equals $i$.
  
  Let $S,T,\alpha,\beta$ be the parameters of the desired data structure
  for $G'$. Let $T_0$ be the initial set~$T$.
  For each $b=1,\ldots,{\lfloor\log_2{n}\rfloor+1}$ and $i\in \{0,1\}$, we set up a closest pair data structure
  $\ds_{b,i}$ for $G$
  with the respective parameters $S_{b,i},T_{b,i},\alpha_{b,i},\beta_{b,i}$.  
  We maintain the invariants: $S_{b,i}=S\cap A_{b,i}$, $T_{b,i}=T\setminus A_{b,i}$, 
  $\alpha_{b,i}(u)=\alpha(u_\tout)$ for $u\in A_{b,i}$, and $\beta_{b,i}(v)=\beta(v_\tin)$ for
  $v\in T_0\setminus A_{b,i}$. 

  By the definition of vertex splitting, we can focus on activations
  of vertices $V_\tout$ and extractions of vertices $V_\tin$
  and assume $S\subseteq V_\tout$ and $T\subseteq V_\tin$.

  For each activation of some $u_\tout$ with $\alpha(u_\tout)$, we pass
  the activation of $u$ only to data structures $\ds_{b,i}$ 
  satisfying $u\in A_{b,i}$.
  Similarly, if some $v_\tin$ is extracted, we pass that extraction
  only to data structures $\ds_{b,i}$ satisfying $v\in T_{b,i}$.
  As a result, the data structure $\ds_{b,i}$ can be effectively seen
  as simulating a closest pair data structure of the graph
  \begin{equation*}
    G'[E'\cap ((A_{b,i})_\tout\times (V\setminus A_{b,i})_\tin)]=G'[E'\cap ((A_{b,i})_\tout\times (A_{b,1-i})_\tin)].
  \end{equation*}

  Note that for each edge $uv$ in $G$, $u\neq v$, there exists
  such $(b,i)$ that $u\in A_{b,i}$ and ${v\in V\setminus A_{b,i}}=A_{b,1-i}$:
  indeed, $u$ and $v$ differ at some bit in their binary representations.
  Therefore, the $O(\log{n})$ sets $E'\cap ((A_{b,i})_\tout\times (V\setminus A_{b,i})_\tin)]$
  actually cover the set $E'\cap (V_\tout\times V_\tin)=E'$,
  so by Lemma~\ref{l:ssp-sum} applied with $k=O(\log{n})$, there exists 
  a closest pair data structure for $G'$ with total update time $T_\cp(G')=O(T_\cp(G)\log{n}+|V|\log{n}\log\log{n})$
  that requires no additional preprocessing. The lemma follows by the assumption $T_\cp(G)=\Omega(|V|)$.
\end{proof}

\section{Dealing with non-simple or non-disjoint holes}\label{s:non-simple}

\subsection{Removing the assumption from Section~\ref{s:negcyc}}

If a piece has $P$ has non-simple holes, by repeatedly splitting its boundary vertices
that are repeated on some hole, one can convert it into a piece $P'$ 
with simple holes exclusively of size $|P|+O(|\bnd{P}|)=|P|+O(\sqrt{r})$,
plus a graph $P''$ consisting of $O(\sqrt{r})$ $0$-weight auxiliary edges with endpoints
in $\bnd{P'}$ such that (1) $\bnd{P}\subseteq \bnd{P'}$ and (2) for any
$u,v\in \bnd{P}$, $\dist_{P'\cup P''}(u,v)=\dist_{P}(u,v)$.
This is described in detail in~\cite[Section~5.1]{KaplanMNS17}.
Any feasible price function on $P$ can be trivially converted into a feasible
price function on $P'\cup P''$ by setting the price of every copy
to that of the original vertex.

By Lemmas~\ref{l:nn-sum}, \ref{l:ssp-sum} and~\ref{l:ddg-ds}, there exist
closest pair and near neighbor data structures with total update time $\Ot(\sqrt{r})$
for the graph $P'\cup P''$. This is asymptotically no worse than we could
have if $P$ had no non-simple holes.
The only property that we required from $P'\cup P''$
is that it preserves distances in $P$ between the vertices $\bnd{P}$.

\subsection{Removing the assumption from Section~\ref{s:flow-oracle}}

Recall that in Section~\ref{s:flow-oracle}, we require a recursive decomposition $\TG(G^*)$ of $G^*$ such that all pieces
have simple and pairwise vertex-disjoint holes.
Let us call such a decomposition \emph{simple}.

First of all, we can assume that $G^*$ is constant-degree: this can be achieved
easily by simply
triangulating the faces of $G$ using $0$-capacity edges at the very beginning.

In~\cite{ItalianoKLS17, ItalianoKLS17a} it is shown, for a constant-degree plane graph $G_0$,
how a fixed recursive decomposition using cycle separators $\TG(G_0)$,
as described n Section~\ref{s:flow-oracle},
can be converted into a simple recursive decomposition $\TG(G_1)$ of a certain augmented graph $G_1$
in $O(n\log{n})$ time.
The decomposition is of the same shape, i.e., there is a 1-1 correspondence between the pieces
of $\TG(G_0)$ and $\TG(G_1)$.
In the graph $G_1$, each vertex $v\in V(G_0)$ corresponds to a strongly connected induced
subgraph $G_1[S_v]$ of $G_1$, and each edge $e=uv$ of $G_0$ is replaced with a bunch of edges connecting
the subsets $S_u$ and $S_v$ in $G_1$.

For each piece $H_1\in \TG(G_1)$ corresponding to $H_0\in \TG(G_0)$, the faces
of $H_1$ can be split into ``natural'' faces, holes, and auxiliary faces.
The auxiliary faces are created when splitting original vertices of $G_0$ into subgraphs,
or edges into bunches of edges.
Importantly, there is a 1-1 correspondence between the holes of $H_0$ and the holes of $H_1$,
and a 1-1 correspondence between the natural faces of $H_0$ and ``natural'' faces of $H_1$.
Every two ``natural'' faces or holes of $H_1$ are fully separated (i.e., do not share vertices in $H_1$) with auxiliary faces and
that's why they are all simple and vertex-disjoint.
The piece $H_1$ is only constant-factor larger than $H_0$,
and similarly the boundary size $|\bnd{H_1}|$ is only constant-factor larger than $|\bnd{H_0}|$.

The transformation that $G_0$ needs to undergo on the way to becoming $G_1$ can be described
using a sequence of either (a) vertex-splits (splitting a vertex $v$ into two adjacent vertices or into a triangle, with embedding-respecting partition
of edges $v$ among the resulting vertices)
or (b) parallel edge introductions. 

We apply the conversion procedure of~\cite{ItalianoKLS17, ItalianoKLS17a} to the decomposition $\TG(G^*)$
of our dual graph $G^*$. Vertex splits in $G^*$ corresponds to introducing auxiliary vertices
inside some of the faces of $G$ -- we set the capacities of newly created edges to $0$ so that the cut/flow values
are preserved. Introducing parallel edges in $G^*$, in turn, corresponds to splitting an edge
of $G$ into two sequentially connected edges -- to preserve the cut/flow values,
the parallel edge needs to inherit the capacity/weight of the original edge.

Since the transformations that $G^*$ needs to undergo to have a simple recursive decomposition
do not change the flow/cut values, and the original vertices of $G$ (original faces of $G^*$)
are preserved in a 1-1 correspondence after the transformations, we conclude
that indeed the assumptions about the holes of Section~\ref{s:flow-oracle} can be dropped.

\bibliographystyle{alpha}
\bibliography{references}

\end{document}